\newtheorem{theorem}{Theorem}[section]
\newtheorem{lemma}[theorem]{Lemma}
\newtheorem{proposition}[theorem]{Proposition}
\theoremstyle{definition}
\newtheorem{definition}[theorem]{Definition}
\newtheorem{example}[theorem]{Example}
\newtheorem{remark}[theorem]{Remark}
\numberwithin{equation}{section}
\definecolor{medgreen}{rgb}{0.0, 0.75, 0.0}
\definecolor{darkgreen}{rgb}{0.0, 0.4, 0.0}
\definecolor{darkyellow}{rgb}{1, 0.8, 0.0}
\newcommand{\wt}{T}
\newcommand{\wwt}{\mathcal{T}}
\newcommand{\prof}{\mathsf{prof}}
\newcommand{\anonprof}{\mathsf{anon}\mbox{-}\mathsf{prof}}
\newcommand{\gray}{gray!75}
\newcolumntype{M}{>{\centering\mbox{}\vfill\arraybackslash}m{50pt}<{\vfill}}
\begin{document}

\title[Weakenings of expansion consistency and resoluteness in voting]{Impossibility theorems involving weakenings of  \\ expansion consistency and resoluteness in voting\footnote{Forthcoming in \textit{Mathematical Analyses of Decisions, Voting, and Games}, eds. M. A. Jones, D. McCune, and J. Wilson, Contemporary Mathematics, American Mathematical Society, 2023.}}

\author{Wesley H. Holliday}

\address{Department of Philosophy and Group in Logic and the Methodology of Science, University of California, Berkeley}
\email{wesholliday@berkeley.edu}

\author{Chase Norman}

\address{Department of Electrical Engineering and Computer Science, University of California, Berkeley}
\email{c\_@berkeley.edu}

\author{Eric Pacuit}
\address{Department of Philosophy, University of Maryland}
\email{epacuit@umd.edu}

\author{Saam Zahedian}
\address{Stanford Institute for Economic Policy Research, Stanford University}
\email{zahedian@stanford.edu}

\subjclass{91B12, 91B14}

\keywords{Social choice theory, voting, expansion consistency, impossibility theorems, SAT solving, formal verification}

\begin{abstract} A fundamental principle of individual rational choice is Sen's $\gamma$ axiom, also known as \textit{expansion consistency}, stating that any alternative chosen from each of two menus must be chosen from the union of the menus. Expansion consistency can also be formulated in the setting of social choice. In voting theory, it states that any candidate chosen from two fields of candidates must be chosen from the combined field of candidates. An important special case of the axiom is \textit{binary} expansion consistency, which states that any candidate chosen from an initial field of candidates and chosen in a head-to-head match with a new candidate must also be chosen when the new candidate is added to the field, thereby ruling out \textit{spoiler effects}. In this paper, we study the tension between this weakening of expansion consistency and weakenings of \textit{resoluteness}, an axiom demanding the choice of a single candidate in any election. As is well known, resoluteness is inconsistent with basic fairness conditions on social choice, namely anonymity and neutrality. Here we prove that even significant weakenings of resoluteness, which are consistent with anonymity and neutrality, are inconsistent with binary expansion consistency. The proofs make use of SAT solving, with the correctness of a SAT encoding formally verified in the Lean Theorem Prover, as well as a strategy for generalizing impossibility theorems obtained for special types of voting methods (namely majoritarian and pairwise voting methods) to impossibility theorems for arbitrary voting methods. This proof strategy may  be of independent interest for its potential applicability to other impossibility theorems in social choice.\end{abstract}

\maketitle

\tableofcontents

\section{Introduction}\label{IntroSection}

Since the inception of modern social choice theory, a key question has been to what extent principles of individual rational choice can be imposed on social choice. For his famous Impossibility Theorem, Arrow \cite{Arrow1963} assumed that society's choice function obeys the same rationality constraints as a rational individual's choice function. In combination with his axioms of Independence of Irrelevant Alternatives (IIA) and universal domain, Arrow's social rationality assumptions leave room for only undemocratic social choice functions (see, e.g., \cite[Thm.~5]{Wilson1972}). In fact, in the presence of IIA and universal domain, even much weaker social rationality assumptions still lead to a slew of Arrow-style impossibility theorems (see, e.g., \cite[Ch.~3]{Suzumura1983}). More important for our purposes here is that even without IIA, principles of individual rational choice can cause problems in social choice. 

On the conception of individual rational choice as a matter of \textit{maximization}, a rational agent is representable as choosing from any menu of options the maximal elements according to some underlying relation of preference between options. As shown by Sen \cite{Sen1971}, assuming finitely many options, an individual's choices can be represented in this way if and only if their choices satisfy two axioms, known as $\alpha$ and $\gamma$, or \textit{contraction consistency} and \textit{expansion consistency}.

When applied to social choice---in particular to voting---contraction consistency states that a candidate chosen from some field of candidates must still be chosen in any smaller field in which the candidate remains, holding fixed the preferences of the voters. For example, if $b$ is among the winning candidates in an election when the field is $\{a,b,c\}$, then $b$ must still be among the winning candidates when the field is $\{a,b\}$. Yet this contradicts basic principles of social choice---in particular, anonymity and neutrality (see Section \ref{IndividualToSocial}) plus majority rule in two-candidate elections. Consider an election in which three voters submit the following rankings of the candidates: $abc$, $bca$, $cab$. In this case, anonymity and neutrality require that $a$, $b$, and $c$ are all tied for winning the election. Then by contraction consistency, $b$ must be among the winners when the field is $\{a,b\}$. But this contradicts majority rule in two-candidate elections, since two voters rank $a$ over $b$, while only one ranks $b$ above $a$.\footnote{\label{AlphaNote}In fact, the same style of reasoning shows that contraction consistency is inconsistent with anonymity, neutrality, and the requirement that a single winner be chosen in any two-candidate election with an odd number of voters. For anonymity, neutrality, and the requirement together imply that for the election described in the main text, either majority rule is applied to its two-candidate subelections or minority rule is applied to its two-candidate subelections. In the first case, we are done by the argument in the main text. In the second case, contraction consistency implies that $b$ must be among the winners when the field is $\{b,c\}$, which contradicts minority rule.} Thus, for one of the central problems in voting theory---namely, devising anonymous and neutral voting methods that agree with majority-rule in two candidate elections---contraction consistency is unacceptable. 

By contrast, expansion consistency appears more benign. When applied to voting, expansion consistency states that any candidate chosen from two fields of candidates must be chosen from the combined field of candidates, holding fixed the preferences of the voters. For example, if $a$ is chosen from $\{a,b,c\}$ and from $\{a,d,e\}$, then $a$ must also be chosen from $\{a,b,c,d,e\}$. Now there are anonymous and neutral voting methods that agree with majority-rule in two candidate elections while also satisfying expansion consistency, as studied systematically in \cite{Brandt2022}. Examples include Split Cycle \cite{HP2020b,HP2021}, Top Cycle \cite{Smith1973,Schwartz1986}, Uncovered Set \cite{Miller1980,Duggan2013}, and Weighted Covering \cite{Dutta1999,Fernandez2018}. 

However, we will show that expansion consistency still imposes significant constraints on social choice. A striking feature of the voting methods satisfying expansion consistency listed above is that they are ``less resolute'' than some other well-known voting methods that violate expansion consistency, such as  Beat Path \cite{Schulze2011,Schulze2018}, Minimax \cite{Simpson1969,Kramer1977}, and Ranked Pairs \cite{Tideman1987,ZavistTideman1989}. A voting method is \textit{resolute} if it selects a unique winner in any election. While resoluteness is inconsistent with anonymity and neutrality, as shown by considering the three-voter election above, one can weaken resoluteness to avoid such an inconsistency. For example, one may require a unique winner only in elections without the kind of symmetries that force a tie in the election above (technically, in any election that is \textit{uniquely weighted}, as defined in Section \ref{IndividualToSocial}). Such weaker forms of resoluteness are satisfied by some of the voting methods  that violate expansion consistency but violated by the voting methods  that satisfy expansion consistency.\footnote{The two-stage majoritarian rule proposed by Horan and Spurmont \cite{Horan2022} satisfies expansion consistency and resoluteness but at the expense of violating neutrality. In this paper, we are interested only in anonymous and neutral methods.} In this paper, we will prove that this is no coincidence: we identify weakenings of resoluteness that are inconsistent with expansion consistency.

In response, one may turn to weaker forms of expansion consistency, such as \textit{binary} expansion consistency. Binary expansion consistency states that any candidate chosen from an initial field of candidates and (uniquely) chosen in a head-to-head match with a new candidate must also be chosen when the new candidate is added to the field. Some voting methods that violate expansion consistency still satisfy binary expansion consistency, which is good enough to rule out \textit{spoiler effects} in elections  (see Section \ref{IndividualToSocial}). However, we will prove that the relevant weakenings of resoluteness are even inconsistent with binary expansion consistency. Thus, there appears to be a basic tension between axioms in the spirit of expansion consistency and axioms in the spirit of resoluteness, assuming anonymity and neutrality.

The rest of the paper is organized as follows. In Section \ref{ImpossSection}, we motivate and precisely state our main impossibility theorems, and we prove special cases of the theorems. The rest of the paper is devoted to proving the theorems in full generality. Our strategy to do so is outlined in Section \ref{ProofStrategy} and carried out in Sections \ref{RepresentationSection}--\ref{PreservationSection}. We conclude in Section \ref{Conclusion} with some suggestions for future research. Appendices \ref{SATmajoritarian}-\ref{Verification} contain details of computer-aided techniques used in the proofs of our theorems.

\section{The impossibility theorems}\label{ImpossSection}

\subsection{Origins in Individual Choice}\label{Origins} In this paper, we consider impossibility theorems arising from the attempt to transfer axioms on individual choice to social choice. We briefly recall the necessary background on individual choice. Given a finite set $X$, a \textit{choice function} on $X$ is a map $C:\wp(X)\setminus\{\varnothing\}\to\wp(X) \setminus\{\varnothing\}$ such that for all $S\in \wp(X)\setminus\{\varnothing\}$, we have $\varnothing\neq C(S)\subseteq S$. The intuitive interpretation is that when offered a choice from the menu $S$, the agent considers all the elements in $C(S)$ choiceworthy. To force the choice of a single alternative, one can impose the following axiom on $C$:
\[\mbox{\textbf{resoluteness}: $|C(S)|=1$ for all $S\in \wp(X)\setminus\{\varnothing\}$}.\]

 The classic question concerning choice functions is a representation question: under what conditions can we view the agent as if she chooses from any given menu the greatest elements according to a \textit{preference relation}? Where $R$ is a binary relation on $X$, we  say that $C$ is \textit{represented by $R$} if for all $S\in \wp(X)\setminus\{\varnothing\}$,
\[C(S)=\{x\in S\mid \mbox{for all }y\in S,\, xRy\}.\]

\noindent Sen \cite{Sen1971} identified the following axioms on $C$ that are necessary and sufficient for representability by a binary relation:
\begin{eqnarray*} 
&&\mbox{$\alpha$: if $S\subseteq S'$, then $S\cap C(S')\subseteq C(S)$};\\ 
&&\mbox{$\gamma$: $C(S)\cap C(S')\subseteq C(S\cup S')$}.
\end{eqnarray*}

\begin{theorem}[Sen] Let $C$ be a choice function on a finite set $X$.
\begin{enumerate}
\item $C$ is representable by a binary relation on $X$ iff $C$ satisfies $\alpha$ and $\gamma$.
\item if $C$ is resolute, then the following are equivalent:
\begin{enumerate}
\item $C$ is representable by a binary relation on $X$;
\item $C$ is representable by a linear order on $X$;
\item $C$ satisfies $\alpha$.
\end{enumerate}
\end{enumerate}
\end{theorem}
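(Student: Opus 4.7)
I would use the \emph{base relation} $R$ defined by $xRy$ iff $x \in C(\{x,y\})$ as the candidate representing relation throughout. Part (1) forward is a direct unpacking of representability: if $R$ represents $C$ and $S \subseteq S'$, any $x \in S \cap C(S')$ satisfies $xRy$ for all $y \in S' \supseteq S$, yielding $\alpha$; and if $x \in C(S) \cap C(S')$, then $xRy$ holds for every $y \in S \cup S'$, yielding $\gamma$. In Part (2), (b)$\Rightarrow$(a) is trivial since a linear order is a binary relation, and (a)$\Rightarrow$(c) follows from the forward direction of Part (1). The substantive work lies in the backward direction of Part (1) and in (c)$\Rightarrow$(b).

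\textbf{Backward direction of Part (1).} Given $\alpha$ and $\gamma$, define $R$ as above. For $C(S) \subseteq \{x \in S : \forall y \in S,\, xRy\}$, I apply $\alpha$ with $\{x,y\} \subseteq S$: if $x \in C(S)$, then $x \in C(\{x,y\})$, so $xRy$. For the reverse inclusion, enumerate $S \setminus \{x\} = \{y_1, \dots, y_n\}$; from $x \in C(\{x,y_i\})$ for each $i$, iterate $\gamma$ to obtain $x \in C(\{x, y_1, y_2\})$, then $x \in C(\{x, y_1, y_2, y_3\})$, and so on, concluding $x \in C(S)$; the base case $S = \{x\}$ is handled by $C(\{x\}) = \{x\}$.

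\textbf{(c)$\Rightarrow$(b) in Part (2), and the main obstacle.} Assuming resoluteness and $\alpha$, I verify that $R$ is a linear order and represents $C$. Reflexivity comes from $C(\{x\}) = \{x\}$; totality and antisymmetry follow from resoluteness, since $C(\{x,y\})$ is a singleton equal to exactly one of $\{x\}, \{y\}$ when $x \neq y$. Transitivity is the delicate step and the main obstacle, as $\alpha$ alone does not directly provide it. Given $xRy$ and $yRz$ with $x,y,z$ pairwise distinct (the degenerate cases reduce to reflexivity or antisymmetry), let $C(\{x,y,z\}) = \{w\}$ by resoluteness. Applying $\alpha$ to $\{x,y\} \subseteq \{x,y,z\}$ forces $w \neq y$, since otherwise $y \in C(\{x,y\}) = \{x\}$; similarly $\alpha$ on $\{y,z\}$ forces $w \neq z$; hence $w = x$, and $\alpha$ on $\{x,z\}$ yields $x \in C(\{x,z\})$, i.e., $xRz$. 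The trick is that resoluteness narrows $C(\{x,y,z\})$ to a single element and $\alpha$ rules out two of the three possibilities. Finally, $R$ represents $C$: exactly as in the backward direction of Part (1), $\alpha$ gives $C(S) \subseteq M := \{x \in S : \forall y \in S,\, xRy\}$; antisymmetry forces $|M| \leq 1$, and since $C(S)$ is nonempty, $C(S) = M$.
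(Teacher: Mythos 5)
Your proof is correct. Note that the paper itself does not prove this theorem: it states it with a citation to Sen \cite{Sen1971}, so there is no in-paper argument to compare against. Your argument via the base relation $xRy$ iff $x \in C(\{x,y\})$ is the standard one: the iterated use of $\gamma$ for the reverse inclusion in Part (1), and in Part (2) the pinching argument where resoluteness makes $C(\{x,y,z\})$ a singleton and $\alpha$ eliminates $y$ and $z$ to yield transitivity, are both sound, and you correctly avoid invoking $\gamma$ anywhere in (c)$\Rightarrow$(b), instead getting $|M|\leq 1$ from antisymmetry so that nonemptiness of $C(S)$ closes the representation.
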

\noindent One can also characterize choice functions that satisfy only $\alpha$ (resp.~$\gamma$) in terms of different notions of representability by a family of relations instead of a single relation (see \cite[Thm.~6]{Duggan2019} for $\alpha$ and \cite[Thm.~6B]{Tyson2008}, \cite[Thm.~7]{Duggan2019}, and \cite[Thm.~1]{Brandt2022} for $\gamma$).\footnote{One can also consider a notion of representability by a binary relation between subsets of $X$, rather than between elements of $X$, and there are analogues of $\alpha$ and $\gamma$ in this setting of set-based rationalizability \cite{Brandt2011}.}

In this paper, we shall see that there are difficulties transferring even weaker versions of the above axioms from individual choice to social choice. First, consider the following weakening of $\gamma$:
\[\mbox{\textbf{binary $\boldsymbol{\gamma}$}: if $x\in C(A)$ and $C(\{x,y\})=\{x\}$, then $x\in C(A\cup\{y\})$.}\] 

\noindent As Bordes \cite{Bordes1983} puts it (in the context of choice between candidates in voting, to which we turn in Section \ref{IndividualToSocial}), ``if $x$ is a winner in $A$\dots one cannot turn $x$ into a loser by introducing new alternatives to which $x$ does not lose in duels'' (p.~125). In fact, Bordes' formulation suggests replacing $C(\{x,y\})=\{x\}$ by $x\in  C(\{x,y\})$, resulting in a slightly stronger axiom, but we will work with the weaker version, as impossibility theorems with weaker axioms are stronger results. Binary $\gamma$ is an obvious consequence of $\gamma$, setting $S=A$ and $S'=\{x,y\}$.

We will also consider a weakening of $\alpha$ with the same antecedent as binary $\gamma$:
\[\mbox{\textbf{binary $\boldsymbol{\alpha}$}: if $x\in C(A)$ and $C(\{x,y\})=\{x\}$, then $C(A\cup\{y\})\subseteq C(A)$.}\]

\noindent In other words, adding to the menu an alternative that loses to an initially chosen alternative does not lead to any new choices from the menu. That binary $\alpha$ follows from $\alpha$ is not quite as immediate as the analogous implication for $\gamma$---but almost.

\begin{proposition} Any choice function satisfying $\alpha$ also satisfies binary $\alpha$.
\end{proposition}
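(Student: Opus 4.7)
The plan is to reduce binary $\alpha$ to two applications of $\alpha$: one to pull chosen elements from $A\cup\{y\}$ down to $A$, and another, applied to the pair $\{x,y\}\subseteq A\cup\{y\}$, to rule out the only problematic case, namely that $y$ itself becomes chosen after the expansion.

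More concretely, suppose $C$ satisfies $\alpha$, and let $x,y,A$ witness the antecedent of binary $\alpha$, so $x\in C(A)$ (hence $x\in A$) and $C(\{x,y\})=\{x\}$. Pick any $z\in C(A\cup\{y\})$; I would split into two subcases. If $z\in A$, then since $A\subseteq A\cup\{y\}$, applying $\alpha$ gives $A\cap C(A\cup\{y\})\subseteq C(A)$, so $z\in C(A)$. If instead $z=y$, I would apply $\alpha$ to the inclusion $\{x,y\}\subseteq A\cup\{y\}$ (valid because $x\in A$) to conclude $\{x,y\}\cap C(A\cup\{y\})\subseteq C(\{x,y\})=\{x\}$; since $z=y$ lies on the left, this forces $y=x$, but then $A\cup\{y\}=A$ and $z\in C(A)$ trivially.

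Combining the two subcases shows every $z\in C(A\cup\{y\})$ lies in $C(A)$, which is binary $\alpha$. There is no substantive obstacle: the only subtlety is remembering that $x\in A$ so the inclusion $\{x,y\}\subseteq A\cup\{y\}$ is legitimate, and handling the degenerate $y=x$ case cleanly. Both are routine.
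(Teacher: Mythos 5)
Your proof is correct and takes essentially the same route as the paper's: the same two applications of $\alpha$, to the inclusions $\{x,y\}\subseteq A\cup\{y\}$ (to exclude $y$ from the new choice set) and $A\subseteq A\cup\{y\}$ (to pull the rest down into $C(A)$), merely organized element-wise via a case split on $z\in C(A\cup\{y\})$ rather than as set inclusions. A minor bonus: you explicitly handle the degenerate case $y=x$, which the paper's step ``$y\notin C(\{x,y\})$'' tacitly assumes away.
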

\begin{proof}Suppose $x\in C(A)$ and $C(\{x,y\})=\{x\}$. Since $\{x,y\}\subseteq A\cup \{y\}$, we have $\{x,y\}\cap C(A\cup \{y\})\subseteq C(\{x,y\})$ by $\alpha$. The since $y\not\in C(\{x,y\})$, we conclude $y\not\in C(A\cup \{y\})$. Hence $C(A\cup\{y\})\subseteq A$. By $\alpha$ again, $A\cap C(A\cup \{y\})\subseteq C(A)$, which with the previous sentence implies $C(A\cup \{y\})\subseteq C(A)$.
\end{proof}

Finally, we will consider a weakening of resoluteness. This weakening says that while some non-singleton choice sets may be allowed, we should never enlarge the choice set when adding an alternative that loses to some initially chosen alternative: 
\[\mbox{\textbf{$\boldsymbol{\alpha}$-resoluteness}: if $x\in C(A)$ and $C(\{x,y\})=\{x\}$, then $|C(A\cup\{y\})|\leq |C(A)|$}.\]

\noindent Clearly $\alpha$-resoluteness is implied by binary $\alpha$ and by resoluteness.

\subsection{From Individual Choice to Social Choice}\label{IndividualToSocial} We now explain how the axioms on individual choice from Section \ref{Origins} can be applied to social choice.

Fix infinite sets $\mathcal{X}$ and $\mathcal{V}$ of \textit{candidates} and \textit{voters}, respectively. A \textit{profile} $\mathbf{P}$ is a function $\mathbf{P}:V\to\mathcal{L}(X)$ where $V$ is a finite subset of $\mathcal{V}$, $X$ is a finite subset of $\mathcal{X}$, and $\mathcal{L}(X)$ is the set of all strict linear orders on $X$.\footnote{This linearity assumption strengthens our impossibility theorems. That is, our impossibility theorems for voting methods defined only on profiles of linear orders immediately imply corresponding impossibility theorems for voting methods defined on profiles of weak orders; for if there were a voting method satisfying the relevant axioms on profiles of weak orders, then restricting the voting method to profiles of linear orders would yield the kind of voting method we will prove does not exist. By contrast, proving that no voting method defined on profiles of weak orders satisfies certain axioms does not immediately imply that no voting method defined only on linear orders satisfies the axioms. Selecting winners in profiles of weak orders may be more difficult.} We write $V(\mathbf{P})$ for $V$ and $X(\mathbf{P})$ for $X$.  Let $\prof$ be the set of all profiles. Given a set $A$ of candidates, the profile $\mathbf{P}|_{A}$ assigns to each voter their ranking of the candidates in $\mathbf{P}$ restricted to $A$, and for a particular candidate $y$, we write $\mathbf{P}_{-y}=\mathbf{P}|_{X(\mathbf{P})\setminus\{y\}}$. 

Given two candidates $x,y$ in a profile $\mathbf{P}$, the \textit{margin of $x$ over $y$ in $\mathbf{P}$} is the number of voters who rank $x$ above $y$ in $\mathbf{P}$ minus the number who rank $y$ above $x$ in $\mathbf{P}$; if this margin is positive, we say that $x$ is \textit{majority preferred to $y$ in $\mathbf{P}$}. The \textit{majority graph of $\mathbf{P}$}, denoted $M(\mathbf{P})$, is the directed graph whose set of vertices is $X(\mathbf{P})$ with an edge from $x$ to $y$ if $x$ is majority preferred to $y$. The \textit{margin graph of $\mathbf{P}$}, denoted $\mathcal{M}(\mathbf{P})$, is the weighted directed graph obtained from $M(\mathbf{P})$ by weighting the edge from $x$ to $y$ by the margin of $x$ over $y$ in $\mathbf{P}$. See Figure~\ref{GraphFig}.

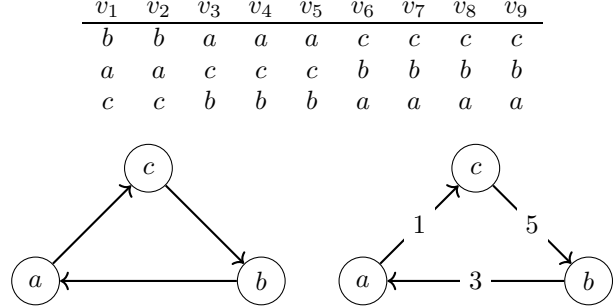
\begin{figure}[ht]
\begin{center}
\begin{tabular}{ccccccccc}
$v_1$ & $v_2$ & $v_3$ & $v_4$ & $v_5$ & $v_6$ & $v_7$ & $v_8$ & $v_9$  \\\hline
$b$ & $b$ & $a$ & $a$ & $a$ &  $c$ &  $c$&  $c$&  $c$  \\
$a$ & $a$ &  $c$ & $c$ &$c$ & $b$& $b$& $b$& $b$ \\
$c$ & $c$ &  $b$ & $b$ &$b$ &  $a$&  $a$&  $a$&  $a$ \\
\end{tabular}
\end{center}
\medskip
\begin{center}
\begin{tikzpicture}

\node[circle,draw, minimum width=0.25in] at (0,0) (a) {$a$}; 
\node[circle,draw,minimum width=0.25in] at (3,0) (b) {$b$}; 
\node[circle,draw,minimum width=0.25in] at (1.5,1.5) (c) {$c$};

\path[->,draw,thick] (c) to node {$$} (b);
\path[->,draw,thick] (b) to node {$$} (a);
\path[->,draw,thick] (a) to node {$$} (c);

\end{tikzpicture}\qquad \begin{tikzpicture}

\node[circle,draw, minimum width=0.25in] at (0,0) (a) {$a$}; 
\node[circle,draw,minimum width=0.25in] at (3,0) (b) {$b$}; 
\node[circle,draw,minimum width=0.25in] at (1.5,1.5) (c) {$c$};

\path[->,draw,thick] (c) to node[fill=white] {$5$} (b);
\path[->,draw,thick] (b) to node[fill=white] {$3$} (a);
\path[->,draw,thick] (a) to node[fill=white] {$1$} (c);

\end{tikzpicture}
\end{center}
\caption{A preference profile $\mathbf{P}$ (top), its majority graph $M(\mathbf{P})$ (bottom left), and its margin graph $\mathcal{M}(\mathbf{P})$ (bottom right).}\label{GraphFig}
\end{figure}

\begin{remark}\label{Parity}Since voters submit linear orders of the candidates, it follows that all weights in $\mathcal{M}(\mathbf{P})$ must have the same parity, which is even if there are any distinct candidates with a margin of zero between them.\end{remark}

\noindent McGarvey \cite{McGarvey1953} proved that any asymmetric directed graph is isomorphic to the majority graph of a profile. Debord \cite{Debord1987} proved that any asymmetric directed graph with weighted edges satisfying the parity constraint in Remark \ref{Parity} is isomorphic to the margin graph of a profile. We will review the proofs of these results in Section~\ref{McGarveyDebord}.

A \textit{voting method} $F$ assigns to each profile $\mathbf{P}$ in some nonempty set $\mathrm{dom}(F)$ of profiles a nonempty set $F(\mathbf{P})$ of candidates who are tied for winning the election. All of the axioms on individual choice functions from the previous section corresponds to axioms on voting methods as follows, where all profiles are assumed to belong to $\mathrm{dom}(F)$:\\

\noindent\textbf{resoluteness}: $|F(\mathbf{P})| = 1$ for all profiles $\mathbf{P}\in \mathrm{dom}(F)$;\\

\noindent\textbf{$\alpha$}: if $Y\subseteq X(\mathbf{P})$, then $Y\cap F(\mathbf{P})\subseteq F(\mathbf{P}|_{Y})$;\\

\noindent\textbf{$\gamma$}: if $X(\mathbf{P}) =Y\cup Z$, then $ F(\mathbf{P}|_{Y})\cap F(\mathbf{P}|_{Z})\subseteq F(\mathbf{P})$;\\

\noindent \textbf{binary $\gamma$}: if $x\in F(\mathbf{P}_{-y})$ and $F(\mathbf{P}|_{ \{x,y\}})=\{x\}$, then $x\in F(\mathbf{P})$;\footnote{Note that binary $\gamma$ for choice functions can be written equivalently as: if $x\in C(B\setminus \{y\})$ and $C(\{x,y\})=\{x\}$, then $x\in C(B)$.}\\

\noindent \textbf{binary $\alpha$}: if $x\in F(\mathbf{P}_{-y})$ and $F(\mathbf{P}|_{ \{x,y\}})=\{x\}$, then $F(\mathbf{P})\subseteq F(\mathbf{P}_{-y})$;\\

\noindent \textbf{$\alpha$-resoluteness}:  if $x\in F(\mathbf{P}_{-y})$ and $F(\mathbf{P}|_{ \{x,y\}})=\{x\}$, then $|F(\mathbf{P})|\leq | F(\mathbf{P}_{-y})|$. \\

Let us consider three examples illustrating how  binary $\gamma$, binary $\alpha$, and $\alpha$-resoluteness compare in strength to $\gamma$ and $\alpha$. We begin by showing that binary $\gamma$ is weaker than $\gamma$.

\begin{example}\label{Banks} The Banks voting method \cite{Banks1985} is defined as follows.\footnote{We thank Felix Brandt for providing this example.} Say that a \textit{chain} in $M(\mathbf{P})$ is a subset of $X(\mathbf{P})$ linearly ordered by the edge relation of $M(\mathbf{P})$. Then $a\in Banks(\mathbf{P})$ if $a$ is the maximum element with respect to the edge relation of some maximal chain in $M(\mathbf{P})$. For convenience, we repeat here a proof that Banks satisfies binary $\gamma$ but not $\gamma$ from \cite{HP2020b}. 

For binary $\gamma$, suppose $a\in Banks(\mathbf{P}_{-b})$ and $a\in Banks(\mathbf{P}|_{ \{a,b\}})$. Hence  $a$ is the maximum element of some maximal chain $R$ in $M(\mathbf{P}_{-b})$. Let $R'$ be a maximal chain in $M(\mathbf{P})$ with $R\subseteq R'$. Since $a\in Banks(\mathbf{P}|_{ \{a,b\}})$, $b$ is not the maximum  of $R'$, so $a$ is the maximum  of $R'$, so $a\in Banks(\mathbf{P})$. Note this establishes the strengthening of binary $\gamma$ with $x\in F(\mathbf{P}|_{ \{x,y\}})$ in place of $F(\mathbf{P}|_{ \{x,y\}})=\{x\}$.

For a violation of $\gamma$, consider a profile $\mathbf{P}$ with the majority graph in Figure~\ref{BanksGamma}, which exists by McGarvey's \cite{McGarvey1953} theorem (Theorem \ref{McGarveyThm}.\ref{McGarveyThm1} below). The maximal chains of $M(\mathbf{P})$, ordered by the edge relation, are as follows:
\begin{eqnarray*}
&&\mbox{$(c, a, g, b)$\quad $(c, g, d, b)$\quad $(c, e, a, g)$\quad $(c, d, b, e)$\quad $(c, d, e, a)$}\\
&&\mbox{$(f,c,g)$\quad$(e, a, f, g)$\quad $(d, b, e, f)$\quad $(d, a, b, f)$\quad$(d, e, a, f)$.}
\end{eqnarray*}
Then  $Banks(\mathbf{P}|_{ \{a,b,c,e,f\}}) \cap Banks(\mathbf{P}|_{ \{a, d,f,g\}}) \ni a \not\in Banks(\mathbf{P})=\{c, d, e, f\}$.

\begin{figure}[h]
\begin{center}
\begin{minipage}{2.5in}
\begin{tikzpicture}
\node[circle,draw,minimum width=0.25in] at (-1,2)  (a) {$a$}; 
\node[circle,draw,minimum width=0.25in] at (2,2)  (b) {$b$}; 
\node[circle,draw,minimum width=0.25in] at (4,0)  (c) {$c$};
\node[circle,draw,minimum width=0.25in] at (2,-2) (d) {$d$}; 
\node[circle,draw,minimum width=0.25in] at (-1,-2) (e) {$e$}; 
\node[circle,draw,minimum width=0.25in] at (-3,-1)  (f) {$f$};
\node[circle,draw,minimum width=0.25in] at (-3,1)  (g) {$g$};

\path[->,draw,thick] (a) to node {} (b);
\path[<-,draw,thick] (a) to node {} (c);
\path[<-,draw,thick] (a) to node {} (d);
\path[<-,draw,thick] (a) to node {} (e);
\path[->,draw,thick] (a) to node {} (f);
\path[->,draw,thick] (a) to node {} (g);

\path[<-,draw,thick] (b) to node {} (c);
\path[<-,draw,thick] (b) to node {} (d);
\path[->,draw,thick] (b) to node {} (e);
\path[->,draw,thick] (b) to node {} (f);
\path[<-,draw,thick] (b) to node {} (g);

\path[->,draw,thick] (c) to node {} (d);
\path[->,draw,thick] (c) to node {} (e);
\path[<-,draw,thick] (c) to node {} (f);
\path[->,draw,thick] (c) to node {} (g);

\path[->,draw,thick] (d) to node {} (e);
\path[->,draw,thick] (d) to node {} (f);
\path[<-,draw,thick] (d) to node {} (g);

\path[->,draw,thick] (e) to node {} (f);
\path[->,draw,thick] (e) to node {} (g);

\path[->,draw,thick] (f) to node {} (g);

\end{tikzpicture}
\end{minipage}
\end{center}
\caption{}\label{BanksGamma}
\end{figure}

Yet we cannot use Banks to show that binary $\alpha$ is weaker than $\alpha$, as it violates even $\alpha$-resoluteness, as shown by a profile with the following majority graph in Figure \ref{ViolatesAlphaResoluteness}. Note that $Banks(\mathbf{P}_{-y})=\{x,z,w\}$, $Banks(\mathbf{P}|_{ \{x,y\}})=\{x\}$, and $Banks(\mathbf{P})=\{x,y,z,w\}$, violating $\alpha$-resoluteness.\end{example}

\begin{figure}[h]
\begin{center}
\begin{tikzpicture}

\node[circle,draw, minimum width=0.25in] at (0,0) (a) {$x$}; 
\node[circle,draw,minimum width=0.25in] at (1.5,1.5) (b) {$y$}; 
\node[circle,draw,minimum width=0.25in] at (3,0) (c) {$z$}; 
\node[circle,draw,minimum width=0.25in] at (1.5,-1.5) (d) {$w$}; 

\path[->,draw,thick,black] (b) to   (c);
\path[->,draw,thick,black] (c) to (a);
\path[->,draw,thick,black] (a) to    (b);
\path[<-,draw, thick,black] (b) to     (d);
\path[<-,draw, thick,black] (d) to    (a);
\path[->,draw,thick,black] (d) to    (c);

\end{tikzpicture}
\end{center}
\caption{}\label{ViolatesAlphaResoluteness}\end{figure}

The following example shows both that binary $\gamma$ is weaker than $\gamma$ and that binary $\alpha$ is weaker than $\alpha$ for voting methods.

\begin{example}\label{ParetoScoring} A candidate $x$ \textit{Pareto dominates} a candidate $y$ in profile $\mathbf{P}$ if all voters rank $x$ above $y$. Say that the \textit{Pareto score} of a candidate $x$ is the number of candidates whom $x$ Pareto dominates. Then define the Pareto scoring method as follows: $x\in PS(\mathbf{P})$ if $x$ is Pareto undominated and $x$ has maximal Pareto score among candidates who are Pareto undominated. To see that Pareto scoring satisfies binary $\gamma$ and binary $\alpha$, suppose $x\in PS(\mathbf{P}_{-y})$ and $PS(\mathbf{P}|_{ \{x,y\}})=\{x\}$. The latter implies that $x$ Pareto dominates $y$. Hence $y\not\in PS(\mathbf{P})$, and $x$'s Pareto score in $\mathbf{P}$ is one greater than $x$'s Pareto score in $\mathbf{P}_{-y}$. Since the most a candidate's Pareto score can increase from $\mathbf{P}_{-y}$ to $\mathbf{P}$ is one, it follows that $x\in PS(\mathbf{P})$ and $PS(\mathbf{P})\subseteq PS(\mathbf{P}_{-y})$. However, Pareto scoring violates $\gamma$ and $\alpha$; in fact, it violates even the weakenings of $\gamma$ and $\alpha$ obtained from binary $\gamma$ and binary $\alpha$ by replacing $F(\mathbf{P}|_{ \{x,y\}})=\{x\}$ with $x\in F(\mathbf{P}|_{ \{x,y\}})$, as shown by the profile in Figure \ref{ParetoProf}. Note that $x$ and $y$ are Pareto undominated, while $y$ (but not $x$) Pareto dominates $z$. Then $x\in PS(\mathbf{P}_{-y})$ and $x\in PS(\mathbf{P}|_{ \{x,y\}})$, but $PS(\mathbf{P})=\{y\}$.\end{example}
\begin{figure}[h]
\begin{center}
\begin{tabular}{ccc}
$1$ & $1$    &  $1$  \\\hline
$y$ & $y$  &  $x$  \\
$x$ & $z$  &  $y$  \\
$z$ & $x$  &  $z$  
\end{tabular}
\end{center}
\caption{}\label{ParetoProf}
\end{figure}

Finally, we show that $\alpha$-resoluteness is weaker than binary $\alpha$ over a restricted domain of profiles.

\begin{example}\label{CopelandEx} Consider profiles with an odd number of voters and at most four candidates. With respect to this restricted domain, the Copeland voting method \cite{Copeland1951} satisfies $\alpha$-resoluteness, as well as binary $\gamma$, but violates binary $\alpha$. The \textit{Copeland score} of a candidate $x$ in a profile $\mathbf{P}$ is defined using the majority graph $M(\mathbf{P})$ as the outdegree of $x$ minus the indegree of $x$, i.e., the number of head-to-head wins minus the number of head-to-head losses. The Copeland voting method selects as the winner in $\mathbf{P}$ the candidates with maximal Copeland score. Since this method is anonymous and neutral (see below for definitions), it violates $\alpha$ by the argument given in Section \ref{IntroSection}. To see that it satisfies $\alpha$-resoluteness, as well as binary $\gamma$,  over the restricted domain, suppose $x\in Copeland(\mathbf{P}_{-y})$ and $Copeland(\mathbf{P}|_{\{x,y\}})=\{x\}$. We consider two cases. First, suppose $x$ has an arrow to every other candidate in $M(\mathbf{P}_{-y})$. Then since $Copeland(\mathbf{P}|_{\{x,y\}})=\{x\}$, $x$ has an arrow to every other candidate in $M(\mathbf{P})$, so $Copeland(\mathbf{P})=\{x\}$. Now suppose that $x$ has an incoming arrow from some other candidate in $M(\mathbf{P}_{-y})$. Then since $x\in Copeland(\mathbf{P}_{-y})$, it follows that $M(\mathbf{P}_{-y})$ is a cycle $x\to a \to b \to x$, so each candidate has a Copeland score of 0 and hence $Copeland(\mathbf{P}_{-y})=\{x,a,b\}$.  Then $M(\mathbf{P})$ is of the form in Figure \ref{CopelandFig}, where the dashed lines represent arrows that could point in either direction.
\begin{figure}[h]
\begin{center}
\begin{tikzpicture}

\node[circle,fill = medgreen!50,draw, minimum width=0.25in] at (0,0) (a) {$x$}; 
\node[circle,draw,minimum width=0.25in] at (1.5,1.5) (b) {$a$}; 
\node[circle,draw,minimum width=0.25in] at (3,0) (c) {$b$}; 
\node[circle,draw,minimum width=0.25in] at (1.5,-1.5) (d) {$y$}; 

\path[->,draw,thick,black] (b) to   (c);
\path[->,draw,thick,black] (c) to   (a);
\path[->,draw,thick,black] (a) to (b);
\path[-,draw, dashed, thick,black] (b) to (d);
\path[<-,draw, thick,black] (d) to  (a);
\path[-,draw,dashed, thick,black] (d) to  (c);

\end{tikzpicture}
\end{center}
\caption{}\label{CopelandFig}
\end{figure}

\noindent No matter the choice of direction for the arrows replacing the dashed lines in Figure~\ref{CopelandFig}, $x\in Copeland(\mathbf{P})$ and $|Copeland(\mathbf{P})|\leq 3$. This shows that Copeland satisfies binary $\gamma$ and $\alpha$-resoluteness for profiles in the restricted domain. However, it is easy to see that Copeland violates these axioms when moving from 4 to 5 candidates.
\end{example}

We will be interested in the interaction of the above axioms borrowed from individual choice with some standard axioms on voting methods that are distinctively social-choice theoretic. In particular, we are interested in basic conditions of fairness with respect to voters and candidates. For the fairness condition with respect to voters, given $i,j\in\mathcal{V}$, let $\tau^{i\leftrightarrows j}$ be the permutation of $\mathcal{V}$ that swaps only $i$ and $j$. Given a profile  $\mathbf{P}$, let $\mathbf{P}^{i\leftrightarrows j}$ be the profile with $V(\mathbf{P}^{i\leftrightarrows j})=\tau^{i\leftrightarrows j} [V(\mathbf{P})]$ and $X(\mathbf{P}^{i\leftrightarrows j})=X(\mathbf{P})$ such that for $k\in V(\mathbf{P}^{i\leftrightarrows j})$, we have $\mathbf{P}^{i\leftrightarrows j}(k)= \mathbf{P}(\tau^{i\leftrightarrows j}(k))$.\\

\noindent \textbf{anonymity}: for any profile $\mathbf{P}\in\mathrm{dom}(F)$ and $i,j\in\mathcal{V}$, if $\mathbf{P}^{i\leftrightarrows j}\in\mathrm{dom}(F)$, then $F(\mathbf{P}^{i\leftrightarrows j})=F(\mathbf{P})$.\footnote{In our variable-election setting, this axiom could be called \textit{global anonymity} to distinguish it from an axiom of \textit{local anonymity} that only quantifies over $i,j\in V(\mathbf{P})$. Local anonymity permits swapping the ballots assigned to two voters in $\mathbf{P}$ but does not permit changing the set of voters.}\\

\noindent One can equivalently state anonymity with an arbitrary permutation $\tau$ of $\mathcal{V}$, since the transpositions $\tau^{i\leftrightarrows j}$ generate the full permutation group. Another equivalent statement is that if two profiles $\mathbf{P}$ and $\mathbf{P}'$ are such that for each linear order $L$, the same number of voters submit $L$ as their ballot in $\mathbf{P}$ as in $\mathbf{P}'$, then $F(\mathbf{P})=F(\mathbf{P}')$.

For the fairness condition with respect to candidates, given $a,b\in\mathcal{X}$, let $\pi_{a\leftrightarrows b}$ be the permutation of $\mathcal{X}$ that swaps only $a$ and $b$. Given a linear order $L$ of some finite subset of $\mathcal{X}$, define $L_{a\leftrightarrows b}$ such that for all $x,y\in\mathcal{X}$, $(x,y)\in L_{a\leftrightarrows b}$ iff  $(\pi_{a\leftrightarrows b}(x),\pi_{a\leftrightarrows b}(y))\in L$. Given $X\subseteq\mathcal{X}$, let $X_{a\leftrightarrows b}=\pi_{a\leftrightarrows b}[X]$. Given a profile  $\mathbf{P}$, let $\mathbf{P}_{a\leftrightarrows b}$ be the profile with $V(\mathbf{P}_{a\leftrightarrows b})=V(\mathbf{P})$ and $X(\mathbf{P}_{a\leftrightarrows b})= X(\mathbf{P})_{a\leftrightarrows b}$ such that for $i\in V(\mathbf{P}_{a\leftrightarrows b})$, we have $\mathbf{P}_{a\leftrightarrows b}(i)= \mathbf{P}(i)_{a\leftrightarrows b}$. \\

\noindent \textbf{neutrality}: for any profile $\mathbf{P}\in\mathrm{dom}(F)$ and $a,b\in\mathcal{X}$, if $\mathbf{P}_{a\leftrightarrows b}\in\mathrm{dom}(F)$, then $F(\mathbf{P}_{a\leftrightarrows b})=F(\mathbf{P})_{a\leftrightarrows b}$.\footnote{In our variable-election setting, this axiom could be called \textit{global neutrality} to distinguish it from an axiom of \textit{local neutrality} that only quantifies over $a,b\in X(\mathbf{P})$. Local neutrality permits swapping the positions of two candidates in $\mathbf{P}$ but does not permit changing the set of candidates.}\\

\noindent One can equivalently state neutrality with an arbitrary permutation $\pi$ of $\mathcal{X}$. 

One of the most basic facts in voting theory is that there is no voting method satisfying  anonymity, neutrality, and resoluteness on all profiles. This can be seen with the three-candidate profile in Section \ref{IntroSection}, for which anonymity and neutrality force the selection of all three candidates. In fact, the problem already arises with two candidates:\\

\noindent \textbf{binary resoluteness}: $|F(\mathbf{P})| = 1$ for all profiles $\mathbf{P}$ such that $|X(\mathbf{P})|=2$.\\

\noindent Consider a profile $\mathbf{P}$ with only two candidates $x,y$ and an even number of voters, half of whom rank $x$ above $y$ and half of whom rank $y$ above $x$. Then anonymity and neutrality imply $F(\mathbf{P})=\{x,y\}$, violating resoluteness.

 \begin{proposition}\label{NoResolute} There is no voting method defined on all two-candidate profiles satisfying anonymity, neutrality, and binary resoluteness.
 \end{proposition}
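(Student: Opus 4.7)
The plan is to exhibit a specific two-candidate, two-voter profile on which anonymity and neutrality force $F(\mathbf{P}) = \{x,y\}$, thereby contradicting binary resoluteness. This is just a clean formalization of the informal argument already sketched in the paragraph preceding the proposition.

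First I would fix two distinct candidates $x,y \in \mathcal{X}$ and two distinct voters $1,2 \in \mathcal{V}$, and let $\mathbf{P}$ be the profile with $V(\mathbf{P})=\{1,2\}$, $X(\mathbf{P})=\{x,y\}$, $\mathbf{P}(1)$ the linear order ranking $x$ over $y$, and $\mathbf{P}(2)$ the linear order ranking $y$ over $x$. Since $F$ is defined on all two-candidate profiles, $\mathbf{P} \in \mathrm{dom}(F)$. The next step is the key observation: unwinding the definitions shows that $\mathbf{P}^{1 \leftrightarrows 2}$ and $\mathbf{P}_{x \leftrightarrows y}$ are literally the \emph{same} profile, namely the profile in which voter $1$ ranks $y$ over $x$ and voter $2$ ranks $x$ over $y$.

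From this coincidence, anonymity yields $F(\mathbf{P}^{1 \leftrightarrows 2}) = F(\mathbf{P})$, while neutrality yields $F(\mathbf{P}_{x \leftrightarrows y}) = F(\mathbf{P})_{x \leftrightarrows y}$. Combining gives $F(\mathbf{P}) = F(\mathbf{P})_{x \leftrightarrows y}$, so the winning set is invariant under the swap $\pi_{x \leftrightarrows y}$. Since $F(\mathbf{P})$ is a nonempty subset of $\{x,y\}$ fixed by this swap, neither singleton $\{x\}$ nor $\{y\}$ is possible, so $F(\mathbf{P}) = \{x,y\}$. But then $|F(\mathbf{P})| = 2$, contradicting binary resoluteness.

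There is no real obstacle here; the only thing to be careful about is matching the formal definitions of $\mathbf{P}^{i \leftrightarrows j}$ and $\mathbf{P}_{a \leftrightarrows b}$ given in the paper so as to verify cleanly that the anonymity-permuted profile and the neutrality-permuted profile coincide. Once that identification is made explicit, the rest is a two-line case analysis on which nonempty subset of $\{x,y\}$ can equal its own image under $\pi_{x \leftrightarrows y}$.
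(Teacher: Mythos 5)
Your proof is correct and is essentially the paper's own argument: the paper justifies Proposition \ref{NoResolute} by the same symmetric two-candidate profile (half the voters ranking $x$ over $y$, half the reverse) and concludes $F(\mathbf{P})=\{x,y\}$ from anonymity and neutrality. You have simply instantiated the minimal two-voter case and made explicit the key identification $\mathbf{P}^{1\leftrightarrows 2}=\mathbf{P}_{x\leftrightarrows y}$ that the paper leaves implicit, which is exactly the right way to formalize it.
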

 
A natural response is to weaken binary resoluteness to the following:\\

\noindent \textbf{binary quasi-resoluteness}: $|F(\mathbf{P})| = 1$ for all profiles $\mathbf{P}$ such that $|X(\mathbf{P})|=2$ and for $x,y\in X(\mathbf{P})$ with $x\neq y$, the margin of $x$ over $y$ is nonzero.\\

\noindent Unlike binary resoluteness, binary quasi-resoluteness is consistent with anonymity and neutrality. Indeed, almost all standard voting methods satisfy binary quasi-resoluteness, including, e.g., the Plurality voting method that selects in a given profile the candidates with the most first-place votes. 

In this paper, we are interested in impossibility theorems that arise by adding axioms to the initial package of anonymity, neutrality, and binary quasi-resoluteness. First, adding binary $\alpha$ to this package results in an inconsistency by the argument in Section \ref{IntroSection} (see Footnote \ref{AlphaNote}). Thus, we move on to  binary $\gamma$. As argued in \cite{HP2020b}, binary $\gamma$ can be viewed as an axiom that mitigates \textit{spoiler effects} in elections.\footnote{The axiom called \textit{stability for winners} in \cite{HP2020b} is almost the same as binary $\gamma$, but with `$F(\mathbf{P}|_{ \{x,y\}})=\{x\}$' in the statement of binary $\gamma$ replaced by `the margin of $x$ over $y$ in $\mathbf{P}$ is positive'. Assuming $F$ reduces to majority voting in two-candidate profiles, $F$ satisfying binary~$\gamma$ is equivalent to $F$ satisfying stability for winners.} Consider one of the most famous U.S. elections involving a spoiler effect: the 2000 Presidential Election in Florida.  Though the election did not collect ranked ballots from voters, polling suggests \cite{Magee2003,Herron2007} that Gore would have beaten Bush in the two-candidate election $\mathbf{P}|_{ \{\mbox{\footnotesize{Bush, Gore}}\}}$, and Gore would have beaten Nader in the two-candidate election $\mathbf{P}|_{ \{\mbox{\footnotesize{Gore, Nader}}\}}$, yet Gore was not a winner in the three-candidate election $\mathbf{P}|_{ \{\mbox{\footnotesize{Bush, Gore, Nader}}\}}$ according to Plurality voting.\footnote{Of course there were more than three candidates in this election, but this simplified example contains the essential points.} Thus, Plurality voting violates binary $\gamma$. 

In the example above, Gore is the \textit{Condorcet winner}, a candidate who is majority preferred to every other candidate. Thus, the example also shows that Plurality voting violates the property of \textit{Condorcet consistency}, which requires that a method selects the Condorcet winner whenever one exists. However, binary $\gamma$ and Condorcet consistency are independent properties. For example, the Condorcet consistent methods Beat Path \cite{Schulze2011,Schulze2018}, Minimax \cite{Simpson1969,Kramer1977}, and Ranked Pairs \cite{Tideman1987,ZavistTideman1989} all violate binary $\gamma$. Conversely, the trivial voting method that always declares all candidates tied for the win satisfies binary $\gamma$ but not Condorcet consistency. However, there is a connection between binary $\gamma$ and the concept of the Condorcet winner: if a voting method satisfies binary $\gamma$ and reduces to majority voting in two-candidate elections, then the Condorcet winner, if one exists, is always \textit{among} the winners of the election. In fact, the examples of voting methods in the literature satisfying binary $\gamma$, including Banks \cite{Banks1985},  Split Cycle \cite{HP2020b,HP2021}, Top Cycle \cite{Smith1973,Schwartz1986}, Uncovered Set \cite{Miller1980,Duggan2013}, and Weighted Covering \cite{Dutta1999,Fernandez2018}, are all Condorcet consistent. Thus, Condorcet consistency and binary $\gamma$ are cross-cutting but related properties, as shown by the list of voting methods in Table~\ref{binaryGammaFig}. 

\begin{table}[h]
\begin{center}
\begin{tabular}{c|c}
methods violating binary $\gamma$ & methods satisfying binary $\gamma$ \\
\hline
Beat Path$^\star$ & Banks$^\star$ \\
Borda &  Split Cycle$^\star$ \\
Instant Runoff & Top Cycle$^\star$  \\
Minimax$^\star$ & Uncovered Set$^\star$\\
Plurality & Weighted Covering$^\star$ \\
Ranked Pairs$^\star$ & 
\end{tabular}
\end{center}
\caption{examples of methods violating or satisfying binary $\gamma$. Condorcet consistent voting methods are marked with a $\star$.}\label{binaryGammaFig}
\end{table}

Though we can consistently add binary $\gamma$ to our initial package of axioms, adding binary $\gamma$ together with even the weakest consequence of $\alpha$ that we have discussed---$\alpha$-resoluteness---leads to an impossibility theorem. Because axioms like binary $\gamma$ and $\alpha$-resoluteness are variable-candidate axioms, involving choosing winners from elections involving different sets of candidates, we are especially interested in exactly how many candidates are needed to trigger impossibility results.\footnote{Optimizing these impossibility results for the number of voters is an interesting subject for future research.} If there is no voting method satisfying the relevant axioms on all profiles with up to $n$ candidates, then of course there is no voting method satisfying the relevant axioms on all profiles with up to $m$ candidates for $m\geq n$; thus, we are interested in the least number of candidates for which an impossibility theorem holds. 

\begin{theorem}\label{Impossibility1} $\,$
\begin{enumerate}
\item\label{Impossibility1a} There is no voting method satisfying anonymity, neutrality, binary quasi-resoluteness, binary $\gamma$, and $\alpha$-resoluteness for profiles with up to 6 candidates. 
\item\label{Impossibility1b} There is a voting method satisfying the above axioms for profiles with up to 5 candidates.\footnote{Note this improves on the example of Copeland from Example \ref{CopelandEx}.}
\end{enumerate}
\end{theorem}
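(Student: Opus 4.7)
The plan follows the two-stage strategy announced in the abstract: first establish the impossibility (respectively, possibility) for \emph{pairwise} voting methods---those depending only on the margin graph---via a SAT encoding, and then lift to arbitrary voting methods using representation and preservation lemmas.

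For the impossibility in part (1), the first stage is to rephrase each of the five axioms in terms of margin graphs. Anonymity is automatic for pairwise methods; neutrality becomes graph-isomorphism invariance; binary quasi-resoluteness concerns two-vertex margin graphs with nonzero edge; and binary $\gamma$ and $\alpha$-resoluteness relate the winner set on an $n$-vertex margin graph to those on its $(n-1)$-vertex sub-margin-graphs. By Remark~\ref{Parity} together with Debord's theorem, the collection of valid margin graphs on at most six vertices is finite and each is realized by some profile. One introduces Boolean variables ``$x$ wins in margin graph $G$'' for each candidate $x$ and each valid $G$ of size at most six, encodes each axiom as a set of propositional clauses, and hands the resulting instance to a SAT solver, which reports unsatisfiability; the correctness of the encoding is formally verified in the Lean Theorem Prover. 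The second stage lifts this to arbitrary voting methods: assuming a hypothetical $F$ satisfies all five axioms, anonymity, neutrality, and the McGarvey--Debord constructions of Section~\ref{McGarveyDebord} associate to $F$ a pairwise method $F^\star$ on margin graphs of size at most six, and the preservation argument of Section~\ref{PreservationSection} shows that $F^\star$ inherits each of the five axioms, contradicting the pairwise impossibility.

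For the possibility in part (2), one runs the analogous SAT instance restricted to margin graphs on at most five vertices. A satisfying assignment exists and directly defines a pairwise method $F^\star$ on those graphs. One then sets $F(\mathbf{P}) := F^\star(\mathcal{M}(\mathbf{P}))$ for every profile $\mathbf{P}$ with at most five candidates. Anonymity and neutrality hold by construction, since $F$ depends only on the isomorphism class of $\mathcal{M}(\mathbf{P})$, while binary quasi-resoluteness, binary $\gamma$, and $\alpha$-resoluteness hold because the SAT-verified assignment enforces exactly the clauses expressing them on each $(n-1)$-vertex restriction for $n \leq 5$.

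The main obstacle is the lifting step in part (1). The five axioms do not force an arbitrary $F$ to be pairwise, so one cannot simply quote the pairwise impossibility; instead, one must show that $F$'s behavior on a sufficiently rich class of profiles is nevertheless controlled by margin-graph information alone---rich enough that the SAT-verified contradiction at the pairwise level can be replayed at the general level. This is where the representation theory of Section~\ref{RepresentationSection} and the axiom-preservation arguments of Section~\ref{PreservationSection} do the real work, and I anticipate that the most delicate point is handling the variable-candidate axioms binary $\gamma$ and $\alpha$-resoluteness under the passage between $F$ and $F^\star$, since these relate profiles with different candidate sets whose margin graphs must be coherently realized by a common McGarvey--Debord-style construction.
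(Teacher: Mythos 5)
There is a genuine gap, in fact two. First, your whole plan runs at the level of \emph{margin graphs} and \emph{pairwise} methods, and it rests on the claim that ``the collection of valid margin graphs on at most six vertices is finite.'' That is false: margins grow with the number of voters, so there are infinitely many margin graphs on at most six vertices, and your SAT instance for part (1) is infinite (and your satisfying assignment for part (2) would define $F^\star$ only on margin graphs drawn from whatever finite weight set you secretly fixed, so $F(\mathbf{P}):=F^\star(\mathcal{M}(\mathbf{P}))$ is not defined on all profiles with up to five candidates). This is exactly why the paper proves Theorem~\ref{Impossibility1} at the level of \emph{majority graphs}: it establishes the impossibility for \emph{majoritarian} methods via SAT over the finitely many canonical tournaments on up to six nodes (Proposition~\ref{FromSatThm1}; there are only 75 canonical tournaments of sizes two through six), and obtains part (2) from a weak tournament solution on the finitely many weak tournaments with at most five nodes, setting $F(\mathbf{P})=G(M(\mathbf{P}))$. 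The margin-graph/pairwise machinery, the restriction to a fixed finite weight set $\{2,4,\dots,12\}$ with unique weighting, and the Lean verification all belong to the proof of Theorem~\ref{Impossibility4} (where quasi-resoluteness genuinely depends on weights and is not majoritarian), not to Theorem~\ref{Impossibility1}; for the axioms in Theorem~\ref{Impossibility1}, binary quasi-resoluteness is a majoritarian axiom and binary $\gamma$ and $\alpha$-resoluteness are variable candidate axioms, so the majority graph suffices and the weight bookkeeping is avoidable.

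Second, your lifting step invokes ``the McGarvey--Debord constructions of Section~\ref{McGarveyDebord}'' to pass from a hypothetical $F$ to a graph-level method. The paper explicitly shows this cannot work: the McGarvey/Debord construction commutes with neither restriction to a candidate subset nor transposition of candidates (Figures~\ref{RestrictionFig} and \ref{TranspositionFig}), and these commutativity properties are precisely what is needed to transfer neutrality and the variable-candidate axioms binary $\gamma$ and $\alpha$-resoluteness, which compare winners across $\mathbf{P}$, $\mathbf{P}_{-y}$, and $\mathbf{P}|_{\{x,y\}}$. The real work of the paper is the alternative representation $\mathfrak{P}_Y$ of Definition~\ref{RepDef}, built from full blocks of linear orders over a fixed superset $Y$, for which commutation with restriction (Proposition~\ref{SubProp}) and transposition (Proposition~\ref{TransposeProp}) holds; these feed the projection $F_{maj,Y}$ and the preservation results (Lemmas~\ref{Ismajoritarian}, \ref{majoritarianpreserved}, Proposition~\ref{quasimajoritarianTransfer}) that yield the transfer Lemma~\ref{transferlemone}. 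You correctly sensed at the end that the variable-candidate axioms are the delicate point, but the construction you name is exactly the one the paper rules out; without $\mathfrak{P}_Y$ (or something with the same commutation properties), your $F^\star$ need not inherit neutrality, binary $\gamma$, or $\alpha$-resoluteness, and the contradiction with the graph-level impossibility does not go through.
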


\noindent We will prove Theorem \ref{Impossibility1} with the help of a SAT solver, as explained in Section~\ref{ProofStrategy}. First, however, we will give a human-readable proof of a weaker version of Theorem~\ref{Impossibility1}.\ref{Impossibility1a} that assumes  that the voting method satisfies the relevant axioms for profiles having up to 7 candidates. 

\begin{proposition}\label{Impossibility1seven} There is no voting method satisfying anonymity, neutrality, binary quasi-resoluteness, binary $\gamma$, and $\alpha$-resoluteness for profiles with up to 7 candidates. \end{proposition}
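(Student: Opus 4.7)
The plan is to exhibit a specific 7-candidate profile $\mathbf{P}$ with a highly symmetric margin graph, use anonymity, neutrality, and $\alpha$-resoluteness to pin down $F(\mathbf{P})$ and each $F(\mathbf{P}_{-c_i})$, and then use binary $\gamma$ together with $\alpha$-resoluteness on the resulting 5- and 4-candidate subprofiles to derive a contradiction. By Debord's theorem I would take $\mathbf{P}$ to realize the Paley tournament on $\mathbb{Z}/7\mathbb{Z}$: label the candidates $c_0, \ldots, c_6$ and impose a fixed positive margin of $c_i$ over $c_j$ whenever $j - i$ is a nonzero quadratic residue mod $7$, that is, $j - i \in \{1, 2, 4\}$. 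The automorphism group $\mathbb{Z}/7 \rtimes \mathbb{Z}/3$, generated by $x \mapsto x+1$ and $x \mapsto 2x$, has order $21$, acts transitively on the candidates, and has a stabilizer of each $c_i$ with exactly two size-three orbits on the remaining candidates: $\{c_{i+1}, c_{i+2}, c_{i+4}\}$ (the candidates $c_i$ beats) and its complement.

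By anonymity and neutrality, the nonempty set $F(\mathbf{P})$ is invariant under this transitive symmetry, so $F(\mathbf{P}) = \{c_0, \ldots, c_6\}$. Then $\alpha$-resoluteness applied to the pair $(\mathbf{P}_{-c_i}, \mathbf{P})$ forces no candidate in $F(\mathbf{P}_{-c_i})$ to defeat $c_i$ in head-to-head, since otherwise we would need $7 = |F(\mathbf{P})| \leq |F(\mathbf{P}_{-c_i})| \leq 6$; hence $F(\mathbf{P}_{-c_i}) \subseteq \{c_{i+1}, c_{i+2}, c_{i+4}\}$. Combining this with the stabilizer orbit structure via neutrality yields equality, so $F(\mathbf{P}_{-c_i}) = \{c_{i+1}, c_{i+2}, c_{i+4}\}$ for every $i$.

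The crux is then to descend further. Each 5-candidate subprofile $\mathbf{P}|_X$ with $X = \{c_0, \ldots, c_6\} \setminus \{c_i, c_j\}$ is reached in two ways, as $(\mathbf{P}_{-c_i})_{-c_j}$ and as $(\mathbf{P}_{-c_j})_{-c_i}$. Each decomposition yields two kinds of constraint on $F(\mathbf{P}|_X)$: binary $\gamma$ excludes any candidate that beats the removed vertex but lies outside the relevant size-three winner set, while $\alpha$-resoluteness forces $|F(\mathbf{P}|_X)| \geq 3$ whenever $F(\mathbf{P}|_X)$ contains some candidate that beats the removed vertex. Intersecting these constraints over a suitable family of 5-candidate subprofiles, and if needed pushing the same reasoning into 4- or 3-candidate subprofiles (where iterated binary $\gamma$ and $\alpha$-resoluteness force singleton outcomes whenever a Condorcet winner exists), should expose two constraints on some $F(\mathbf{P}|_X)$ that cannot simultaneously hold.

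The hard part is the combinatorial bookkeeping in the last step. Removing two or more vertices destroys most of the Paley symmetry, so the analysis must proceed by cases based on the concrete pairwise structure. Locating the specific subprofile whose two decompositions force incompatible values of $F$---or, equivalently, finding a short chain of subprofile transitions through which the constraints eventually collide---is where the real effort lies, and is likely what motivates the authors' use of computer assistance when they sharpen the theorem from 7 candidates down to 6.
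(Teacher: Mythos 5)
There is a genuine gap, in two places. First, and decisively, the contradiction itself is never derived: you explicitly defer the key step (``intersecting these constraints \ldots should expose two constraints \ldots that cannot simultaneously hold'') and acknowledge that locating the collision ``is where the real effort lies.'' That step is the entire content of the proposition; without it you have a plan, not a proof. Nor is it clear the plan can close: your constraint system lives entirely on the Paley tournament and its induced subtournaments, and you give no reason to believe those particular binary $\gamma$ and $\alpha$-resoluteness constraints are jointly unsatisfiable. (For comparison, the paper's 6-candidate majoritarian impossibility required an MUS with 20 binary-$\gamma$ and 30 $\alpha$-resoluteness clauses over nine canonical tournaments; there is no guarantee these embed in the one-parameter Paley family.) The paper's human-readable proof of Proposition \ref{Impossibility1seven} takes a different route: it hand-builds an 8-candidate profile (Figure \ref{ImpossProf}) whose margin graph contains a perfectly symmetric cloned 3-cycle $a,a',a''$ and a tied pair $c,c'$, pins down three possible values of $F(\mathbf{P}|_{\{a,a',a'',b,c\}})$ via anonymity and neutrality, and refutes each by short explicit chains of binary $\gamma$ and $\alpha$-resoluteness applications, never using more than 7 candidates at a time.

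Second, your symmetry step is unjustified as stated. Anonymity and neutrality constrain $F$ only through symmetries of the \emph{profile}, not of the margin graph: to conclude $F(\mathbf{P})=\{c_0,\dots,c_6\}$ you need a candidate permutation $\sigma$ in the Paley automorphism group such that $\mathbf{P}_\sigma$ agrees with $\mathbf{P}$ up to a voter permutation, and a Debord realization does not have this property---the paper stresses exactly this point (Figure \ref{TranspositionFig} and Section \ref{RepresentationSection}), namely that the McGarvey/Debord construction fails to commute with transpositions of candidates, which is why its transfer lemmas require a new profile representation. So invoking ``Debord's theorem'' and then reading off orbit structure for $F(\mathbf{P})$ and each $F(\mathbf{P}_{-c_i})$ is a non sequitur; you would have to construct a profile invariant (as an anonymous profile) under the full order-21 group, e.g.\ by summing a group orbit of ballots, and then verify the resulting margins are nonzero and realize the Paley orientation rather than cancelling---none of which is done. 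A smaller but related omission: ``defeats $c_i$ head-to-head'' must mean $F(\mathbf{P}|_{\{x,c_i\}})=\{x\}$, which requires the paper's opening dichotomy from binary quasi-resoluteness plus anonymity and neutrality (uniform majority rule or uniform minority rule at each fixed margin and electorate size); you assume majority rule tacitly, though this piece is easily repaired since the reversal of the Paley tournament is isomorphic to it.
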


\begin{proof} Suppose for contradiction that there is a voting method $F$ satisfying the axioms for profiles with up to 7 candidates. By anonymity, neutrality, and binary quasi-resoluteness, for every possible margin $m>0$ and number $k$ of voters, either (i) for every two-candidate profile with $k$ voters in which one candidate has a margin of $m$ over the other, the unique winner is the majority winner or (ii) for every two-candidate profile with $k$ voters in which one candidate has a margin of $m$ over the other, the unique winner is the minority winner. For $m=6$ and $k=18$, suppose that (i) holds; the proof in case (ii) is analogous. Now consider the profile $\mathbf{P}$ in Figure~\ref{ImpossProf}, whose margin graph is shown in Figure~\ref{ImpossGraph} (all arrows in/out of a rectangle go in/out of every node in the rectangle). By (i), if there is an arrow from $x$ to $y$ in Figure~\ref{ImpossGraph}, then $F(\mathbf{P}|_{ \{x,y\}})=\{x\}$.

\begin{figure}[h]

\setlength{\tabcolsep}{2pt}
\begin{center}
\begin{tabular}{c|c|c|c|c|c|c|c|c|c|c|c|c|c|c|c|c|c}
$1$ & $1$ &$1$ &$1$ &$1$ &$1$ &$1$ &$1$ &$1$ &$1$ &$1$ &$1$ &$1$ &$1$ &$1$ &$1$ &$1$ &$1$ \\
\hline
$e$ & $e$  &  $e$ & $e$ & $e$  &  $e$ & $d$ & $d$  &  $d$ & $d$ & $d$  &  $d$ & \cellcolor{green!25}$c$ & \cellcolor{green!25}$c'$  &  \cellcolor{green!25}$c$ & \cellcolor{green!25}$c'$ & \cellcolor{green!25}$c$  &  \cellcolor{green!25}$c'$  \\
 \cellcolor{blue!25}$a$ &  \cellcolor{blue!25}$a$ &  \cellcolor{blue!25}$a'$ &  \cellcolor{blue!25}$a'$ &  \cellcolor{blue!25}$a''$ &  \cellcolor{blue!25}$a''$ & $b$ & $b$& $b$& $b$& $b$& $b$ & \cellcolor{green!25}$c'$ & \cellcolor{green!25}$c$  &  \cellcolor{green!25}$c'$ & \cellcolor{green!25}$c$ & \cellcolor{green!25}$c'$  &  \cellcolor{green!25}$c$ \\
 \cellcolor{blue!25}$a'$ &  \cellcolor{blue!25}$a'$ &  \cellcolor{blue!25}$a'' $ &  \cellcolor{blue!25}$a''$&  \cellcolor{blue!25}$a$ &  \cellcolor{blue!25}$a$ & $e$ & $e$& $e$& $e$& $e$& $e$& $d$& $d$& $d$& $d$& $d$& $d$\\
 \cellcolor{blue!25}$a''$ &  \cellcolor{blue!25}$a''$ &  \cellcolor{blue!25}$a$ &  \cellcolor{blue!25}$a$ &  \cellcolor{blue!25}$a'$ &  \cellcolor{blue!25}$a'$ & \cellcolor{green!25}$c$ & \cellcolor{green!25}$c'$  &  \cellcolor{green!25}$c$ & \cellcolor{green!25}$c'$ & \cellcolor{green!25}$c$  &  \cellcolor{green!25}$c'$ & \cellcolor{blue!25}$a$ & \cellcolor{blue!25}$a$ & \cellcolor{blue!25}$a'$ & \cellcolor{blue!25}$a'$ & \cellcolor{blue!25}$a''$ & \cellcolor{blue!25}$a''$\\
$b$ & $b$& $b$& $b$& $b$& $b$ & \cellcolor{green!25}$c'$ & \cellcolor{green!25}$c$  &  \cellcolor{green!25}$c'$ & \cellcolor{green!25}$c$ & \cellcolor{green!25}$c'$  &  \cellcolor{green!25}$c$ & \cellcolor{blue!25}$a'$ & \cellcolor{blue!25}$a'$ & \cellcolor{blue!25}$a'' $ & \cellcolor{blue!25}$a''$& \cellcolor{blue!25}$a$ & \cellcolor{blue!25}$a$\\
 \cellcolor{green!25}$c$ &  \cellcolor{green!25}$c'$  &   \cellcolor{green!25}$c$ &  \cellcolor{green!25}$c'$ &  \cellcolor{green!25}$c$  &   \cellcolor{green!25}$c'$ &  \cellcolor{blue!25}$a$ & \cellcolor{blue!25}$a$ & \cellcolor{blue!25}$a'$ & \cellcolor{blue!25}$a'$ & \cellcolor{blue!25}$a''$ & \cellcolor{blue!25}$a''$ & \cellcolor{blue!25}$a''$ & \cellcolor{blue!25}$a''$ & \cellcolor{blue!25}$a$ & \cellcolor{blue!25}$a$ & \cellcolor{blue!25}$a'$ & \cellcolor{blue!25}$a'$\\
  \cellcolor{green!25}$c'$ &  \cellcolor{green!25}$c$  &   \cellcolor{green!25}$c'$ &  \cellcolor{green!25}$c$ &  \cellcolor{green!25}$c'$  &   \cellcolor{green!25}$c$ & \cellcolor{blue!25}$a'$ & \cellcolor{blue!25}$a'$ & \cellcolor{blue!25}$a'' $ & \cellcolor{blue!25}$a''$& \cellcolor{blue!25}$a$ & \cellcolor{blue!25}$a$ & $b$ & $b$& $b$& $b$& $b$& $b$\\
 $d$& $d$& $d$& $d$& $d$& $d$ & \cellcolor{blue!25}$a''$ & \cellcolor{blue!25}$a''$ & \cellcolor{blue!25}$a$ & \cellcolor{blue!25}$a$ & \cellcolor{blue!25}$a'$ & \cellcolor{blue!25}$a'$ & $e$ & $e$& $e$& $e$& $e$& $e$
\end{tabular}
\end{center}
\caption{A profile whose subprofiles (with 7 or fewer candidates) are used in the proof of Proposition \ref{Impossibility1seven}.}\label{ImpossProf}
\end{figure}

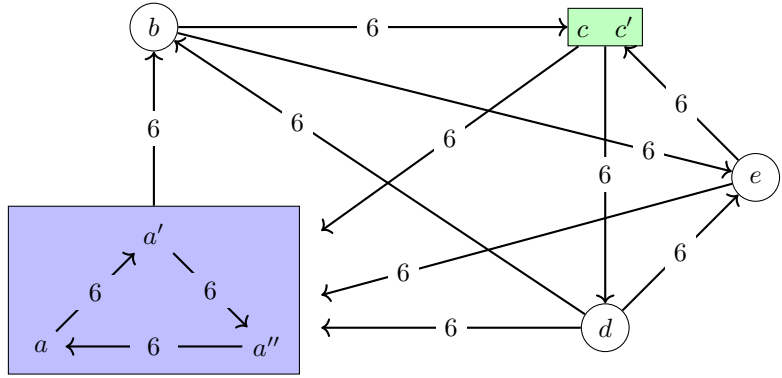
\begin{figure}[h]
\begin{center}
\begin{tikzpicture}

\node[draw, minimum width=0.25in,fill=blue!25] at (0,.5) (a) {\begin{tikzpicture}[scale=0.5]

\node  at (0,0) (a) {$a$}; 
\node at (3,3) (a') {$a'$};  
\node at (6,0) (a'') {$a''$}; 

\path[->,draw,thick] (a) to node[fill=blue!25] {$6$}  (a');
\path[->,draw,thick] (a') to node[fill=blue!25] {$6$} (a'');
\path[->,draw,thick] (a'') to node[fill=blue!25] {$6$} (a);
  \end{tikzpicture}}; 
  
\node[circle,draw,minimum width=0.25in] at (0,4) (b) {$b$}; 
\node[draw,minimum width=0.25in,fill=green!25] at (6,4) (c) {$c$\quad $c'$}; 
\node[circle,draw,minimum width=0.25in] at (6,0) (d) {$d$}; 
\node[circle,draw,minimum width=0.25in] at (8,2) (e) {$e$}; 

\node at (2.1,1.2) (a*) {};
\node at (2.1,.4) (a**) {};
\node at (2.1,0) (a***) {};

\path[->,draw,thick] (a) to node[fill=white] {$6$} (b);
\path[->,draw,thick] (b) to node[fill=white] {$6$}  (c);
\path[->,draw,thick] (c) to node[fill=white] {$6$} (a*);
\path[->,draw,thick] (c) to node[fill=white] {$6$}  (d);
\path[->,draw,thick] (d) to node[fill=white,pos=.7] {$6$}  (b);
\path[->,draw,thick] (d) to node[fill=white] {$6$}  (a***);
\path[->,draw,thick] (e) to node[fill=white,pos=.8] {$6$}  (a**);
\path[->,draw,thick] (b) to node[fill=white,pos=.85] {$6$} (e);
\path[->,draw,thick] (e) to node[fill=white] {$6$} (c);
\path[->,draw,thick] (d) to node[fill=white] {$6$} (e);

  \end{tikzpicture}
  \end{center}
  \caption{Margin graph of the profile in Figure \ref{ImpossProf}.} \label{ImpossGraph}
  \end{figure}

Although the profile in Figure \ref{ImpossProf} contains 8 candidates, in what follows we will only use the assumption that $F$ satisfies the relevant axioms on subprofiles of $\mathbf{P}$ with up to 7 candidates. We will repeatedly use the fact that in a perfectly symmetrical three-candidate profile realizing a majority cycle, such as $\mathbf{P}|_{ \{a,b,c\}}$,  $\mathbf{P}|_{ \{b,c,d\}}$, and $\mathbf{P}|_{ \{a,b,e\}}$, anonymity and neutrality imply that every candidate is chosen. Thus, ${c\in F(\mathbf{P}|_{ \{a,b,c\}})}$, so $c\in F(\mathbf{P}|_{ \{a,a',a'',b,c\}})$ by two applications of binary $\gamma$. By anonymity and neutrality we also have that \[\mbox{${\{a,a',a''\}\cap F(\mathbf{P}|_{ \{a,a',a'',b,c\}})\neq\varnothing}$ if and only if ${\{a,a',a''\}\subseteq F(\mathbf{P}|_{ \{a,a',a'',b,c\}})}$.}\] Thus, one of the following three cases holds.

Case 1: $F(\mathbf{P}|_{ \{a,a',a'',b,c\}})=\{c\}$. Hence $c\in F(\mathbf{P}|_{ \{a,a',a'',b,c,d\}})$ by binary $\gamma$. Moreover,  by anonymity and neutrality, ${d\in F(\mathbf{P}|_{ \{b,c,d\}})}$, so ${d\in F(\mathbf{P}|_{ \{a,a',a'',b,c,d\}})}$ by three applications of binary $\gamma$. But together $F(\mathbf{P}|_{ \{a,a',a'',b,c\}})=\{c\}$, $F(\mathbf{P}|_{ \{c,d\}})=\{c\}$, and ${c,d\in F(\mathbf{P}|_{ \{a,a',a'',b,c,d\}})}$ contradict $\alpha$-resoluteness.

Case 2: $F(\mathbf{P}|_{ \{a,a',a'',b,c\}})=\{b,c\}$. Hence $b\in F(\mathbf{P}|_{ \{a,a',a'',b,c,c'\}})$ by binary $\gamma$.  By anonymity and neutrality, \[\mbox{$\{c,c'\}\cap F(\mathbf{P}|_{ \{a,a',a'',b,c,c'\}})\neq\varnothing$ if and only if $\{c,c'\}\subseteq F(\mathbf{P}|_{ \{a,a',a'',b,c,c'\}})$,}\] and \[\mbox{$\{a,a',a''\}\cap F(\mathbf{P}|_{ \{a,a',a'',b,c,c'\}})\neq\varnothing$ if and only if $\{a,a',a''\}\subseteq F(\mathbf{P}|_{ \{a,a',a'',b,c,c'\}})$.}\] Then given $F(\mathbf{P}|_{ \{a,a',a'',b,c\}})=\{b,c\}$ and $F(\mathbf{P}|_{ \{b,c'\}})=\{b\}$, $\alpha$-resoluteness implies  $F(\mathbf{P}|_{ \{a,a',a'',b,c,c'\}})=\{b\}$. Next, by anonymity and neutrality, ${e\in F(\mathbf{P}|_{ \{a,b,e\}})}$ and hence $e\in F(\mathbf{P}|_{ \{a,a',a'',b,c,c',e\}})$ by four applications of binary $\gamma$. Moreover, given $F(\mathbf{P}|_{ \{a,a',a'',b,c,c'\}})=\{b\}$, we have $b\in  F(\mathbf{P}|_{ \{a,a',a'',b,c,c',e\}})$ by binary $\gamma$. But together $F(\mathbf{P}|_{  \{a,a',a'',b,c,c'\}})=\{b\}$, $F(\mathbf{P}|_{ \{b,e\}})=\{b\}$, and ${b,e\in F(\mathbf{P}|_{ \{a,a',a'',b,c,c',e\}})}$ contradict $\alpha$-resoluteness.

Case 3: $\{c,a,a',a''\}\subseteq F(\mathbf{P}|_{ \{a,a',a'',b,c\}})$. By anonymity and neutrality, we have $F(\mathbf{P}|_{ \{a,b,c\}})=\{a,b,c\}$. Then since $F(\mathbf{P}|_{ \{a',c\}}) = \{c\}$, it follows by binary~$\gamma$ and $\alpha$-resoluteness that $c\in F(\mathbf{P}|_{ \{a,a',b,c\}})$ and $|F(\mathbf{P}|_{ \{a,a',b,c\}})|\leq 3$, which with $F(\mathbf{P}|_{ \{a'',c\}}) = \{c\}$ and $\alpha$-resoluteness implies $|F(\mathbf{P}|_{ \{a,a',a'',b,c\}})|\leq 3$, which contradicts the assumption of the case.\end{proof}

\begin{remark} $\,$
\begin{enumerate}
\item If we drop either anonymity or neutrality, then there are voting methods satisfying the remaining axioms from Theorem \ref{Impossibility1}. For an anonymous (resp.~neutral) voting method $F$ satisfying the other axioms, fix a linear order $L$ of the set of all possible candidates (resp.~for each possible set $V$ of voters, fix a voter $i_V\in V$), and for any profile $\mathbf{P}$, let $F(\mathbf{P})=\{x\}$ where $x$ is the maximal element in the linear order $L$ restricted to $X(\mathbf{P})$ (resp.~in $i_{V(\mathbf{P})}$'s linear order of $X(\mathbf{P})$) such that no candidate ranked below $x$ in this linear order is majority preferred to $x$. Observe that $F$ satisfies binary $\gamma$. Then since $F$ always selects a unique winner, it also satisfies $\alpha$-resoluteness.
 
\item If we drop binary quasi-resoluteness, then there are voting methods satisfying the remaining axioms from Theorem \ref{Impossibility1}, e.g., the trivial voting method that always selects all candidates as winners (note that this method trivially satisfies binary $\gamma$ and $\alpha$-resoluteness, because the condition $F(\mathbf{P}|_{ \{x,y\}})=\{x\}$ is always false).
\item It is open whether if we drop binary $\gamma$, there are voting methods satisfying the remaining axioms from Theorem \ref{Impossibility1}. However, we do know from SAT solving (see Appendix \ref{SATmajoritarian}) that there are voting methods satisfying the remaining axioms at least for profiles having up to 7 candidates.
\item If we drop $\alpha$-resoluteness, then all the voting methods in the right column of Table \ref{binaryGammaFig} satisfy the remaining axioms from Theorem \ref{Impossibility1}.
\end{enumerate}
\end{remark}

While binary $\gamma$ and $\alpha$-resoluteness for \textit{individual choice functions} are easily satisfied---in particular by any choice function representable by a binary relation---Theorem \ref{Impossibility1}.\ref{Impossibility1a} shows that binary $\gamma$ and $\alpha$-resoluteness for \textit{voting methods} are together inconsistent with some of the most uncontroversial axioms for voting, namely anonymity, neutrality, and binary quasi-resoluteness.

A natural response to Proposition \ref{NoResolute} and Theorem \ref{Impossibility1}.\ref{Impossibility1a} is that asking for any kind of resoluteness property with respect to \textit{all} profiles, including highly symmetric profiles, is asking for too much. Thus, $\alpha$-resoluteness is asking for too much. This motivates a move to more qualified resoluteness properties that only apply to profiles with sufficiently broken symmetries. For example, call a profile $\mathbf{P}$ \textit{uniquely weighted} if for any two distinct pairs $a,b$ and $x,y$ of distinct candidates, the margin of $a$ over $b$ in $\mathbf{P}$ is not equal to the margin of $x$ over $y$ in $\mathbf{P}$. Thus, the profile in Figure~\ref{ImpossGraph} is \textit{not} uniquely weighted. By contrast, in elections with many voters, one would expect with high probability to obtain a uniquely weighted profile. Now consider restricting resoluteness only to uniquely weighted profiles:\\

\noindent\textbf{quasi-resoluteness}: $|F(\mathbf{P})| = 1$ for all \textit{uniquely weighted} profiles $\mathbf{P}$.\\

\noindent There are many voting methods satisfying the trio of anonymity, neutrality, and quasi-resoluteness, including Beat Path, Minimax,  and Ranked Pairs, as well as any voting method that uses a quasi-resolute voting method as a tiebreaker, e.g., Plurality with Minimax tiebreaking (in the event of a tie for the most first place votes). However, we will show that even this weakening of resoluteness is incompatible with basic principles of individual rational choice applied to voting---in particular, with the binary $\gamma$ axiom alone---assuming anonymity and neutrality.

\begin{theorem}\label{Impossibility4} There is no voting method satisfying anonymity, neutrality, binary $\gamma$, and quasi-resoluteness for profiles with up to 4 candidates.
\end{theorem}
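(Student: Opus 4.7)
\bigskip

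\noindent \textbf{Proof proposal.} The plan is to follow the two-stage strategy outlined in Section \ref{ProofStrategy}: first establish the impossibility for the restricted class of \emph{pairwise} voting methods (those whose output depends only on the margin graph, via the representation theory of Section \ref{RepresentationSection}), and then apply the preservation lemma from Section \ref{PreservationSection} to lift the impossibility to arbitrary voting methods. Since quasi-resoluteness, anonymity, neutrality, and binary $\gamma$ are all naturally formulated in margin-graph terms on uniquely weighted profiles, this reduction should go through cleanly.

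For the pairwise case, I would first derive a few hand-checkable constraints. A 2-candidate profile is uniquely weighted iff it has nonzero margin, so quasi-resoluteness combined with anonymity and neutrality forces $F$, restricted to 2-candidate profiles of a given (voter count, margin) type, to agree with either majority or minority rule; by a dualization argument (reversing every voter's ranking, which preserves the axioms) it suffices to consider the majority-rule case. A single application of binary $\gamma$ then forces the Condorcet winner to be the unique $F$-winner on any uniquely weighted 3-candidate transitive tournament: if $a\to b$ and $a\to c$ in the majority graph, then $F(\mathbf{P}|_{\{a,b\}})=\{a\}$ and $F(\mathbf{P}|_{\{a,c\}})=\{a\}$, so binary $\gamma$ together with quasi-resoluteness yields $F(\mathbf{P}|_{\{a,b,c\}})=\{a\}$. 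The remaining freedom in $F$ on uniquely weighted profiles of at most 4 candidates is which candidate wins on a uniquely weighted 3-cycle and which candidate wins on a uniquely weighted 4-candidate profile.

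With these observations in hand, the pairwise-case impossibility reduces to checking that no consistent assignment of singleton winners to uniquely weighted 4-candidate profiles (and their 3-candidate subprofiles) can simultaneously satisfy (i) the Condorcet-winner conclusion above on transitive 3-subprofiles and (ii) binary $\gamma$ applied across all subprofiles. Concretely, I would enumerate uniquely weighted 4-candidate margin graphs up to isomorphism---the four tournament types being transitive, Condorcet winner over a 3-cycle, Condorcet loser under a 3-cycle, and strongly connected---and for each check whether any choice of winners for the cyclic 3-subprofiles is compatible with binary $\gamma$ joining them to the 4-profile winner.

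The main obstacle is precisely that the winners of uniquely weighted 3-cycles are not fixed by the axioms and can in principle depend on the specific margin values (not merely on their cyclic arrangement), so a purely hand case analysis branches extensively as one varies over admissible margin assignments. This is where SAT solving (Appendix \ref{SATmajoritarian}, with encoding correctness formally verified in Lean per Appendix \ref{Verification}) earns its keep: a Boolean encoding of ``there is a pairwise $F$ on uniquely weighted profiles with up to 4 candidates satisfying anonymity, neutrality, binary $\gamma$, and quasi-resoluteness'' can be shown UNSAT, and the preservation lemma of Section \ref{PreservationSection} then completes the lift to arbitrary voting methods.
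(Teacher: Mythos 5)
Your proposal follows the paper's own two-step route essentially verbatim: Proposition~\ref{FromQMFinal} (the pairwise impossibility, obtained by SAT solving in Appendix~\ref{SATQM}---not Appendix~\ref{SATmajoritarian}, which handles the majoritarian case---with the encoding verified in Lean per Appendix~\ref{Verification}) combined with the transfer lemma, Lemma~\ref{transferlemtwo}, whose proof via the restriction- and transposition-commuting representation of Sections~\ref{RepresentationSection}--\ref{PreservationSection} is exactly what makes the lift work, since binary $\gamma$ is a variable-candidate axiom and neither it nor neutrality is homogeneously pairwise (Lemma~\ref{PSWquasi}, Example~\ref{majoritarianvsNonmajoritarian}), so the reduction is not as automatic as your phrase ``naturally formulated in margin-graph terms'' suggests. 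One small caveat on your hand-simplifications: a pairwise method's two-candidate behavior may mix majority and minority rule across different margin values, so your dualization argument does not reduce to a single majority-rule case (it only pairs up the $2^6$ majority/minority combinations over the weights $\{2,4,6,8,10,12\}$), but this is harmless because your final SAT encoding, like the paper's, leaves the two-candidate winners as propositional variables rather than relying on that reduction.
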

\noindent Theorem \ref{Impossibility4} explains the observation from the literature that voting methods satisfying binary $\gamma$, listed in Table \ref{binaryGammaFig}, all violate quasi-resoluteness, while the quasi-resolute methods mentioned above all violate binary $\gamma$.  Note that 4 candidates is optimal for the impossibility, since Beat Path, Minimax, Ranked Pairs, and Split Cycle satisfy the axioms in Theorem \ref{Impossibility4} for profiles up to 3 candidates.

We will prove Theorem \ref{Impossibility4} with the help of a SAT solver, as explained in Section \ref{ProofStrategy}. First, however, we will give a human-readable proof of a variation on Theorem  \ref{Impossibility4}. For the variation, we add the following assumptions. Given a profile $\mathbf{P}$, the profile $n\mathbf{P}$ replaces each voter from $\mathbf{P}$ with $n$ copies of that voter with the same ballot.\footnote{I.e., $V(n\mathbf{P})= \{(i,m)\mid i\in V(\mathbf{P}),m\in \mathbb{Z}^+,m\leq n\}$ and $(a,b)\in n\mathbf{P}((i,m))$ if $(a,b)\in \mathbf{P}(i)$.}

\begin{definition}\label{Invariance} Let $F$ be a voting method.
\begin{enumerate}
\item $F$ is  \textit{homogeneous} if  for all $\mathbf{P}\in\mathrm{dom}(F)$ and $n\in\mathbb{Z}^+$, if $n\mathbf{P}\in \mathrm{dom}(F)$, then $F(\mathbf{P})=F(n\mathbf{P})$ \cite{Smith1973}.
\item $F$ is \textit{cancellative} if for all $\mathbf{P},\mathbf{P}'\in\mathrm{dom}(F)$, if $\mathbf{P}'$ is obtained from $\mathbf{P}$ by deleting  one copy of each linear order of $X(\mathbf{P})$, then $F(\mathbf{P})=F(\mathbf{P'})$.
\end{enumerate}
\end{definition}

Adding these assumptions alongside those of Theorem \ref{Impossibility4} would make for a weaker impossibility theorem. However, because we will be able to construct profiles by hand in our proof, we can weaken quasi-resoluteness to the following property.

\begin{definition} A profile $\mathbf{P}$ is \textit{uniquely ranked} if for every $Y\subseteq X(\mathbf{P})$ with $|Y|\geq 2$:
\begin{enumerate}
\item for each $n \in \{1,\dots, |Y|\}$, no two candidates have the same number of $n$-th place votes in $\mathbf{P}|_{ Y}$; 
\item no two candidates have the same \textit{Borda score} in $\mathbf{P}\mathord{\mid}_{Y}$, where the Borda score of $x$ in a profile $\mathbf{Q}$ is $\underset{i\in V(\mathbf{Q})}{\sum} |\{y\in X(\mathbf{Q})\mid (x,y)\in \mathbf{Q}(i) \}|$.
\end{enumerate}
A voting method $F$ satisfies \textit{weak quasi-resoluteness} if $|F(\mathbf{P})|=1$ for every profile $\mathbf{P}$ that is both uniquely weighted and uniquely ranked.
\end{definition}

Examples of voting methods that are weakly quasi-resolute but not quasi-resolute include Plurality, Borda, Instant Runoff, Coombs \cite{Coombs1964,Grofman2004}, Baldwin \cite{Baldwin1926}, Nanson \cite{Nanson1882,Niou1987}, and Bucklin \cite{Hoag1926}. In the following proof, it is tedious to check by hand that all the profiles are uniquely ranked (for code to check this, see \href{https://github.com/szahedian/binary-gamma}{https://github.com/szahedian/binary-gamma}). However, it is easy to check that the profiles are uniquely weighted, which suffices for the version of Proposition \ref{Impossibility4b} with quasi-resoluteness in place of weak quasi-resoluteness.

\begin{proposition}\label{Impossibility4b} There is no anonymous, neutral, homogeneous, and cancellative voting method on the domain of profiles with up to 4 candidates  satisfying binary $\gamma$ and weak quasi-resoluteness.
\end{proposition}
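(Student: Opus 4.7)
The plan is to mirror the cascade argument of Proposition \ref{Impossibility1seven} but adapted to uniquely weighted and uniquely ranked profiles, using homogeneity and cancellativeness to pin down two-candidate behavior and using Debord's theorem to engineer the required four-candidate profiles.

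\textbf{Stage 1 (two-candidate behaviour).} First I would show that $F$ acts as either pure majority rule or pure minority rule on every two-candidate profile. On any two-candidate profile with a nonzero margin, weak quasi-resoluteness yields a single winner, and anonymity together with neutrality determine that winner purely as a function of the number of voters and the margin. Homogeneity identifies the parameter pair $(n,m)$ with $(dn,dm)$ for every integer $d\ge 1$, and cancellativeness identifies $(n,m)$ with $(n-2,m)$ (deleting one copy of each of the two linear orders changes $n$ by $2$ and leaves $m$ fixed), so the majority/minority dichotomy is constant across all admissible $(n,m)$. Applying neutrality to swap the two candidates globally lets us assume without loss of generality that $F$ implements majority rule on two-candidate profiles.

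\textbf{Stage 2 (Condorcet consistency).} I would next show that if $\mathbf{P}$ is a uniquely weighted, uniquely ranked profile on at most $4$ candidates with a Condorcet winner $a$, then $F(\mathbf{P})=\{a\}$. The argument is an induction on the size of subsets containing $a$: by Stage 1, $F(\mathbf{P}|_{\{a,b\}})=\{a\}$ for each opponent $b$; binary $\gamma$ then keeps $a$ in $F(\mathbf{P}|_Y)$ as candidates are added one at a time; and weak quasi-resoluteness (inherited by restrictions, since both uniqueness properties pass to subsets) forces each $F(\mathbf{P}|_Y)$ to equal $\{a\}$.

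\textbf{Stage 3 (construction and contradiction).} Finally I would construct a $4$-candidate profile $\mathbf{P}$ on $\{a,b,c,d\}$, uniquely weighted and uniquely ranked on every subset of size at least two, whose majority graph has no Condorcet winner yet whose three-candidate subprofiles include several subprofiles with Condorcet winners pinned down by Stage 2. The margin graph is realized via Debord's theorem, with the slack from homogeneity used to scale up and cancellativeness used to add or remove neutral blocks of ballots so that unique rankedness holds on every subset. The graph is designed so that the Condorcet winners of two distinct three-candidate Condorcet subprofiles both beat head-to-head the candidate missing from the other subprofile; binary $\gamma$ (applied in the three-to-four direction) then deposits two distinct candidates into $F(\mathbf{P})$, violating weak quasi-resoluteness.

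\textbf{Main obstacle.} The conceptual crux lies entirely in Stage 3: the winner of a cyclic three-candidate subprofile is not determined by Stage 2 and is only weakly constrained by binary $\gamma$ from two-candidate subprofiles. The margin graph must therefore be chosen so that the contradiction arises from the Condorcet three-subprofiles alone, independently of how $F$ resolves any majority cycle; this constrains both the combinatorial shape of the graph on $\{a,b,c,d\}$ and the relative directions of the head-to-head edges between each would-be Condorcet winner and the candidate omitted in the opposing subprofile. Verifying that an explicit ballot realization is uniquely ranked on \emph{every} subset of two or more candidates---checking distinctness of $n$-th place vote counts and Borda scores in each restriction---is the laborious combinatorial step that the authors explicitly delegate to code; it may require iterating the construction across several related profiles and using neutrality to align them.
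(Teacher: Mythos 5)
Your Stages 1 and 2 are sound (Stage 1 is essentially the paper's fact $(\star)$, and Stage 2 is a correct cascade), but Stage 3 fails for a structural reason, and it is precisely the point you flagged as the ``main obstacle.'' Suppose $x$ is the Condorcet winner of a three-candidate restriction $\mathbf{P}_{-u}$ of a four-candidate profile $\mathbf{P}$. For binary $\gamma$ to deposit $x$ into $F(\mathbf{P})$, you need $F(\mathbf{P}|_{\{x,u\}})=\{x\}$, i.e., under the majority-rule normalization of Stage 1, $x$ must beat $u$ head-to-head. But then $x$ beats both of its companions in $\mathbf{P}_{-u}$ \emph{and} beats $u$, so $x$ is the Condorcet winner of the full four-candidate profile. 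Since Condorcet winners are unique, two distinct candidates can never both be injected into $F(\mathbf{P})$ via Condorcet three-candidate subprofiles: your requirement that the contradiction ``arise from the Condorcet three-subprofiles alone, independently of how $F$ resolves any majority cycle'' is unsatisfiable on four candidates. (And if $\mathbf{P}$ does have a global Condorcet winner $x$, no second candidate can be forced in at all: every triple containing $x$ has winner $\{x\}$ by your Stage 2, and a candidate entering from the unique triple omitting $x$ would have to beat $x$ head-to-head, which is impossible.) So any contradiction must route through a majority-cycle triple, whose winner your Stages 1--2 leave completely undetermined.

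The paper's proof supplies exactly the idea your plan forgoes: it embraces the cycle rather than avoiding it. It fixes a cyclic three-candidate profile $\mathbf{P}_0$ on $\{a,b,c\}$ (margins $64$, $16$, $8$) and case-splits on which candidate lies in $F(\mathbf{P}_0)$ --- some case must hold since $F(\mathbf{P}_0)\neq\varnothing$. In each case, neutrality and homogeneity transfer this \emph{unknown} winner to an isomorphic rescaled cyclic profile $\mathbf{Q}_i$ on a triple containing the fresh candidate $d$, with $d$ placed in the winner's role; cancellativity (adding equal blocks of all linear orders on the triple) equalizes voter counts so that $\mathbf{P}_0$ (or $\mathbf{P}_0^+$) and $\mathbf{Q}_i^+$ (or $\mathbf{Q}_i$) both appear as restrictions of a single uniquely weighted, uniquely ranked four-candidate profile $\mathbf{P}_i$ in which the original winner and its clone $d$ each beat, head-to-head, the candidate omitted from their own triple. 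Binary $\gamma$ then puts two distinct candidates into $F(\mathbf{P}_i)$, contradicting weak quasi-resoluteness. In particular, homogeneity and cancellativity are not mere bookkeeping for unique rankedness, as in your outline; they are what makes this cloning of the cyclic triple's unknown winner possible, and without some such device the case explosion over cyclic winners cannot be tamed.
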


\begin{proof} Suppose for contradiction there is such a voting method $F$.

\textbf{Case A}: in a two-candidate profile $\mathbf{P}_{a,b}$ with 4 voters all of whom rank $a$ above $b$, $a$ is a winner. Then by anonymity, neutrality, homogeneity, and cancellativity, we have the following  fact $(\star)$: for any even $m\geq 4$ and $n\in\mathbb{Z}^+$, in any two-candidate $\mathbf{P}'$ profile with $m$ voters and a margin of $4n$ for some candidate $x$ over another, $y$, $x$ is a winner. For by neutrality, $x$ wins in the profile $\mathbf{P}_{x,y}$ obtained from $\mathbf{P}_{a,b}$ by swapping $x$ for $a$ and $y$ for $b$. Then by homogeneity, $x$ wins in $n\mathbf{P}_{x,y}$, wherein the margin of $x$ over $y$ is $4n$. Let $\mathbf{P}'_0$ be obtained from $n\mathbf{P}_{x,y}$ by adding $(m-4n)/2$ voters with the linear order $xy$ and  $(m-4n)/2$ voters with the linear order $yx$, so the number of voters is $m$ and the margin of $x$ over $y$ is still $4n$. It follows by cancellativity that $x$ wins in $\mathbf{P}'_0$. Then by anonymity, $x$ wins in $\mathbf{P}'$.

Now consider the following $96$-voter profile $\mathbf{P}_0$:
\begin{center}
\begin{minipage}{2in}\begin{center}
\begin{tabular}{cccc}
$44$ & $12$    &  $36$ & $4$  \\\hline
$a$ & $b$  &  $c$ & $c$  \\
$b$ & $c$  &  $a$ & $b$ \\
$c$ & $a$  &  $b$ &  $a$
\end{tabular}\vspace{.1in}

$\mathbf{P}_0$\end{center}\end{minipage}
\begin{minipage}{2in} \begin{tikzpicture}

\node[circle,draw,minimum width=0.25in] at (0,0) (a) {$a$}; 
\node[circle,draw,minimum width=0.25in] at (1.5,1.5) (b) {$b$}; 
\node[circle,draw,minimum width=0.25in] at (3,0) (c) {$c$}; 

\path[->,draw,thick, blue] (a) to node[fill=white] {$64$} (b);
\path[->,draw,thick, blue] (b) to node[fill=white] {$16$} (c);
\path[->,draw,thick, blue] (c) to node[fill=white] {$8$} (a);

\node  at (1.5,-.75) () {$\mathcal{M}(\mathbf{P}_0)$}; 

\end{tikzpicture}
\end{minipage}
\end{center}

\noindent \textbf{Case A.1}: $a\in F(\mathbf{P}_0)$. Then by neutrality and homogeneity, we have $d\in F(\mathbf{Q}_1)$ for the following $48$-voter profile $\mathbf{Q}_1$:

\begin{center}\begin{minipage}{2in}\begin{center}
\begin{tabular}{cccc}
$22$ & $6$    &  $18$ & $2$  \\\hline
$d$ & $c$  &  $a$ & $a$  \\
$c$ & $a$  &  $d$ & $c$ \\
$a$ & $d$  &  $c$ &  $d$
\end{tabular}\vspace{.1in}

$\mathbf{Q}_1$\end{center}\end{minipage}\begin{minipage}{2in}\begin{tikzpicture}

\node[circle,fill = medgreen!50, draw,minimum width=0.25in] at (0,0) (a) {$d$}; 
\node[circle,draw,minimum width=0.25in] at (1.5,1.5) (b) {$c$}; 
\node[circle,draw,minimum width=0.25in] at (3,0) (c) {$a$}; 

\path[->,draw,thick, red] (a) to node[fill=white] {$32$} (b);
\path[->,draw,thick, red] (b) to node[fill=white] {$8$} (c);
\path[->,draw,thick, red] (c) to node[fill=white] {$4$} (a);

\node  at (1.5,-.75) () {$\mathcal{M}(\mathbf{Q}_1)$}; 

\end{tikzpicture}\end{minipage}
\end{center}
Then by cancellativity, $d\in F(\mathbf{Q}_1^+)$ where $\mathbf{Q}_1^+$ is the 96-voter profile obtained from $\mathbf{Q}_1$ by adding 8 copies of each linear order of $\{a,c,d\}$:

\begin{center}\begin{minipage}{1.5in}\begin{center}
\begin{tabular}{cccc | cc}
$30$ & $14$    &  $26$ & $10$ & $8$ & $8$  \\\hline
$d$ & $c$  &  $a$ & $a$ & $c$  & $d$  \\
$c$ & $a$  &  $d$ & $c$ & $d$ & $a$ \\
$a$ & $d$  &  $c$ &  $d$& $a$ & $c$ 
\end{tabular}\vspace{.1in}

$\mathbf{Q}_1^+$\end{center}\end{minipage}\end{center}
Then where $\mathbf{P}_1$ is the following uniquely weighted and uniquely ranked 96-voter profile such that $\mathbf{P}_1\mathord{\mid_{\{a,b,c\}}}=\mathbf{P}_0$ and $\mathbf{P}_1\mathord{\mid_{\{a,c,d\}}} =\mathbf{Q}_1^+$, $(\star)$ and binary $\gamma$ imply $\{a,d\}\subseteq F(\mathbf{P}_1)$, violating weak quasi-resoluteness:

\begin{center}

\begin{minipage}{2.5in}\begin{center}\begin{tabular}{ccccccccc}
$30$ & $8$ & $6$    &  $26$ & $10$ & $4$ & $2$ & $2$ &  $8$  \\\hline
$d$ & $b$ & $c$ &  $a$ & $a$ & $b$  & $c$ & $c$ & $d$  \\
$c$ & $c$ & $a$ &  $d$ & $b$ & $c$ & $d$ & $b$ & $a$ \\
$a$ & $a$ & $d$ &  $b$ &  $c$& $d$ & $b$ & $d$ &$b$ \\
$b$ & $d$ & $b$ & $c$ & $d$ & $a$ & $a$ &$a$ & $c$
\end{tabular}

\vspace{.1in}

$\mathbf{P}_1$
\end{center}
\end{minipage}\begin{minipage}{2.25in}
\begin{tikzpicture}

\node[circle,fill = medgreen!50,draw, minimum width=0.25in] at (0,0) (a) {$a$}; 
\node[circle,draw,minimum width=0.25in] at (1.5,1.5) (b) {$b$}; 
\node[circle,draw,minimum width=0.25in] at (3,0) (c) {$c$}; 
\node[circle,fill = medgreen!50,draw,minimum width=0.25in] at (1.5,-1.5) (d) {$d$}; 

\path[->,draw,thick,blue] (b) to node[fill=white] {$16$} (c);
\path[->,draw,thick,purple] (c) to node[fill=white,pos=.7] {$8$} (a);
\path[->,draw,thick,blue] (a) to node[fill=white] {$64$} (b);
\path[<-,draw,very thick,black] (b) to node[fill=white,pos=.7] {$48$} (d);
\path[<-,draw,very thick,red] (d) to node[fill=white] {$4$} (a);
\path[->,draw,thick,red] (d) to node[fill=white] {$32$} (c);

\node at (1.5,-2.5)  {$\mathcal{M}(\mathbf{P}_1)$}; 

\end{tikzpicture}\end{minipage}
\end{center}

\noindent \textbf{Case A.2}: $b\in F(\mathbf{P}_0)$. Then by neutrality and homogeneity, we have $d\in F(\mathbf{Q}_2)$ for the following 768-voter profile $\mathbf{Q}_2$:

\begin{center}
\begin{minipage}{2in}\begin{center}
\begin{tabular}{cccc}
$352$ & $96$    &  $288$ & $32$  \\\hline
$b$ & $d$  &  $a$ & $a$  \\
$d$ & $a$  &  $b$ & $d$ \\
$a$ & $b$  &  $d$ &  $b$
\end{tabular}\vspace{.1in}

$\mathbf{Q}_2$\end{center}
\end{minipage}\begin{minipage}{2in}\begin{tikzpicture}

\node[circle, draw,minimum width=0.25in] at (0,0) (a) {$b$}; 
\node[circle,fill = medgreen!50,draw,minimum width=0.25in] at (1.5,1.5) (b) {$d$}; 
\node[circle,draw,minimum width=0.25in] at (3,0) (c) {$a$}; 

\path[->,draw,thick, red] (a) to node[fill=white] {$512$} (b);
\path[->,draw,thick, red] (b) to node[fill=white] {$128$} (c);
\path[->,draw,thick, red] (c) to node[fill=white] {$64$} (a);

\node  at (1.5,-.75) () {$\mathcal{M}(\mathbf{Q}_2)$}; 

\end{tikzpicture}\end{minipage}
\end{center}
Moreover, by cancellativity, $b\in F(\mathbf{P}_0)$ implies $b\in F(\mathbf{P}_0^+)$ for the 768-voter profile $\mathbf{P}_0^+$ obtained from $\mathbf{P}_0$ by adding 112 copies of each linear order over $\{a,b,c\}$:

\begin{center}
\begin{tabular}{cccc | cc}
$156$ & $124$    &  $148$ & $116$ & $112$ & $112$  \\\hline
$a$ & $b$  &  $c$ & $c$ & $a$ & $b$  \\
$b$ & $c$  &  $a$ & $b$ & $c$ & $a$ \\
$c$ & $a$  &  $b$ &  $a$ & $b$ & $c$
\end{tabular}\vspace{.1in}

$\mathbf{P}_0^+$
\end{center}
Then where $\mathbf{P}_2$ is the following uniquely weighted and uniquely ranked 768-voter profile such that $\mathbf{P}_2\mathord{\mid_{\{a,b,c\}}}=\mathbf{P}_0^+$ and $\mathbf{P}_2\mathord{\mid_{\{a,b,d\}}} =\mathbf{Q}_2$, $(\star)$ and binary $\gamma$ imply $\{b,d\}\subseteq F(\mathbf{P}_2)$, violating weak quasi-resoluteness:

\begin{center}
\begin{minipage}{3in}
\begin{center}\begin{tabular}{cccccccc}
$156$ & $124$    &  $96$ & $52$ & $116$ & $80$ & $32$ & $112$  \\\hline
$a$ & $b$  &  $d$ & $c$ & $c$ & $a$ & $a$ & $b$  \\
$b$ & $d$  &  $c$ & $a$ & $b$ & $c$ & $d$ & $d$ \\
$d$ & $c$  &  $a$ & $b$&  $d$ & $b$ & $c$& $a$ \\
$c$ & $a$ &  $b$ & $d$ & $a$ & $d$ &$b$  & $c$
\end{tabular}

\vspace{.1in}

$\mathbf{P}_2$
\end{center}
\end{minipage}\begin{minipage}{3in}
\begin{tikzpicture}

\node[circle,draw, minimum width=0.25in] at (0,0) (a) {$a$}; 
\node[circle,fill = medgreen!50,draw,minimum width=0.25in] at (1.5,1.5) (b) {$b$}; 
\node[circle,draw,minimum width=0.25in] at (3,0) (c) {$c$}; 
\node[circle,fill = medgreen!50,draw,minimum width=0.25in] at (1.5,-1.5) (d) {$d$}; 
\node at (1.5,-2.5)  {$\mathcal{M}(\mathbf{P}_2)$}; 

\path[->,draw,thick,blue] (b) to node[fill=white] {$16$} (c);
\path[->,draw,thick,blue] (c) to node[fill=white,pos=.7] {$8$} (a);
\path[->,draw,thick,purple] (a) to node[fill=white] {$64$} (b);
\path[->,draw,very thick,red] (b) to node[fill=white,pos=.7] {$512$} (d);
\path[->,draw,thick,red] (d) to node[fill=white] {$128$} (a);
\path[->,draw,very thick,black] (d) to node[fill=white] {$272$} (c);

\end{tikzpicture}\end{minipage}
\end{center}

\noindent \textbf{Case A.3}: $c\in F(\mathbf{P}_0)$.  Then by neutrality and homogeneity, we have $d\in F(\mathbf{Q}_3)$ for the following 24-voter profile $\mathbf{Q}_3$:

\begin{center}
\begin{minipage}{1.5in}\begin{center}
\begin{tabular}{cccc}
$11$ & $3$    &  $9$ & $1$  \\\hline
$b$ & $c$  &  $d$ & $d$  \\
$c$ & $d$  &  $b$ & $c$ \\
$d$ & $b$  &  $c$ &  $b$
\end{tabular}\vspace{.1in}

$\mathbf{Q}_3$\end{center}\end{minipage}\begin{minipage}{1.5in} \begin{tikzpicture}

\node[circle, draw,minimum width=0.25in] at (0,0) (a) {$b$}; 
\node[circle,draw,minimum width=0.25in] at (1.5,1.5) (b) {$c$}; 
\node[circle,fill = medgreen!50,draw,minimum width=0.25in] at (3,0) (c) {$d$}; 

\path[->,draw,thick, red] (a) to node[fill=white] {$16$} (b);
\path[->,draw,thick, red] (b) to node[fill=white] {$4$} (c);
\path[->,draw,thick, red] (c) to node[fill=white] {$2$} (a);

\node  at (1.5,-.75) () {$\mathcal{M}(\mathbf{Q}_3)$}; 

\end{tikzpicture}\end{minipage}
\end{center}
Then by cancellativity, $d\in F(\mathbf{Q}_3^+)$ where $\mathbf{Q}_3^+$ is the 96-voter profile obtained from $\mathbf{Q}_3$ by adding 12 copies of each linear order of $\{b,c,d\}$:
\begin{center}
\begin{minipage}{2.5in}
\begin{center}
\begin{tabular}{cccc | cc}
$23$ & $15$    &  $21$ & $13$ & $12$ & $12$  \\\hline
$b$ & $c$  &  $d$ & $d$ & $b$ & $c$  \\
$c$ & $d$  &  $b$ & $c$ & $d$  & $b$ \\
$d$ & $b$  &  $c$ &  $b$ & $c$ & $d$
\end{tabular}\vspace{.1in}

$\mathbf{Q}_3^+$
\end{center}
\end{minipage}
\end{center}
Then where $\mathbf{P}_3$ is the following uniquely weighted and uniquely ranked 96-voter profile such that $\mathbf{P}_3\mathord{\mid_{\{a,b,c\}}}=\mathbf{P}_0$ and $\mathbf{P}_3\mathord{\mid_{\{b,c,d\}}} =\mathbf{Q}_3^+$, $(\star)$ and binary $\gamma$ imply $\{c,d\}\subseteq F(\mathbf{P}_3)$, violating weak quasi-resoluteness:

\begin{center}\begin{minipage}{2.75in}
\begin{center}
\begin{tabular}{cccccccc}
$12$ & $11$ & $15$    &  $21$ & $13$ & $12$ & $8$ & $4$  \\\hline
$b$ &$a$ & $c$  &  $d$ & $d$ & $a$ & $c$ & $c$  \\
$c$ &$b$& $d$  &  $a$ & $c$ & $b$  & $a$ & $b$ \\
$d$ &$c$& $a$  &  $b$ &  $a$ & $d$ & $b$ & $a$ \\
$a$ &$d$ & $b$ & $c$ & $b$ & $c$ & $d$ & $d$
\end{tabular}
\vspace{.1in}

$\mathbf{P}_3$
\end{center}
\end{minipage}\begin{minipage}{2.75in}
\begin{tikzpicture}

\node[circle,draw, minimum width=0.25in] at (0,0) (a) {$a$}; 
\node[circle,draw,minimum width=0.25in] at (1.5,1.5) (b) {$b$}; 
\node[circle,fill = medgreen!50,draw,minimum width=0.25in] at (3,0) (c) {$c$}; 
\node[circle,fill = medgreen!50,draw,minimum width=0.25in] at (1.5,-1.5) (d) {$d$}; 
\node at (1.5,-2.5)  {$\mathcal{M}(\mathbf{P}_3)$}; 

\path[->,draw,very thick,purple] (b) to node[fill=white] {$16$} (c);
\path[->,draw,thick,blue] (c) to node[fill=white,pos=.7] {$8$} (a);
\path[->,draw,thick,blue] (a) to node[fill=white] {$64$} (b);
\path[<-,draw,thick,red] (b) to node[fill=white,pos=.7] {$2$} (d);
\path[->,draw,very thick,black] (d) to node[fill=white] {$26$} (a);
\path[<-,draw,very thick,red] (d) to node[fill=white] {$4$} (c);

\end{tikzpicture}\end{minipage}
\end{center}

\textbf{Case B}: in a two-candidate profile with 4 voters all of whom rank $x$ above $y$, $x$ is not a winner, so $y$ is a winner. Then the proof can be obtained from that of Case A by reversing all ballots and all arrows.\end{proof}

Reflection on the proof of Proposition \ref{Impossibility4b} suggests that without homogeneity and cancellativity, there is an explosion in the number of cases one must consider, so we cannot prove Theorem \ref{Impossibility4} in the same manner.

In the rest of the paper, we will build up to proofs of Theorems \ref{Impossibility1}.\ref{Impossibility1a} and Theorem~\ref{Impossibility4}. As we have seen, these impossibility theorems show that a core principle of individual rational choice, namely (binary) $\gamma$, is inconsistent with weakenings of resoluteness in the setting of anonymous and neutral social choice. Though these impossibility results raise important conceptual questions about which axioms ought to be weakened in response, in this paper we focus only on the proofs of the results. We hope that these proofs may also suggest useful strategies for other impossibility results in social choice.

\section{Proof strategy}\label{ProofStrategy}

In this section, we outline our proofs of Theorems \ref{Impossibility1}.\ref{Impossibility1a} and \ref{Impossibility4}. In principle, there is a straightforward proof strategy coming from mathematical logic. Consider a language $L_{prof}$ of propositional logic \cite[\S~1.1]{Enderton2001} that contains for each profile $\mathbf{P}$ and nonempty $Y\subseteq X(\mathbf{P})$ a propositional variable $A_{\mathbf{P}, Y}$, which we interpret to mean that the set of winning candidates for profile $\mathbf{P}$ is $Y$. Then it is easy to show (cf.~the Appendix) that there is a set $\Sigma$ of formulas of $L_{prof}$ that is satisfiable if and only if there exists a voting method satisfying anonymity, neutrality, binary quasi-resoluteness, binary $\gamma$, and $\alpha$-resoluteness (resp.~anonymity, neutrality, binary $\gamma$, and quasi-resoluteness). Now if $\Sigma$ is unsatisfiable, then by the compactness theorem for propositional logic \cite[\S~1.7]{Enderton2001}, there is a finite subset $\Sigma_0$ of $\Sigma$ that is unsatisfiable. Since $\Sigma_0$ is finite, there are some maximum numbers $n$ and $m$ of voters and candidates, respectively, appearing in any profile in the set $\{\mathbf{P}\mid A_{\mathbf{P},Y}\mbox{ occurs in some formula in }\Sigma_0\}$.  Now there is a formula $\varphi_{n,m}$ of $L_{prof}$ that it satisfiable if and only if there is a voting method satisfying the relevant axioms for profiles with up to $n$ voters and $m$ candidates; and the satisfiability of $\varphi_{n,m}$ can be checked using a SAT solver on a computer. Thus, if it is possible to prove a result like Theorem \ref{Impossibility1}.\ref{Impossibility1a} (resp.~\ref{Impossibility4}) at all, then it is in principle possible to do so by using a SAT solver to show the unsatisfiability of $\varphi_{n,m}$ for sufficiently large $n$,~$m$. 

For processing by the SAT solver, we write $\varphi_{n,m}$ in \textit{conjunctive normal form} (CNF), i.e., a conjunction of disjunctions of propositional variables or their negations: $(\ell_{1,1}\vee \dots \vee\ell_{1,k_1})\wedge\dots \wedge(\ell_{j,1}\vee \dots \vee \ell_{j,k_j})$, where each $\ell$ is of the form $A_{\mathbf{P}, Y}$ or $\neg A_{\mathbf{P}, Y}$.  If the SAT solver finds that $\varphi_{n,m}$ is unsatisfiable, then we may apply algorithms that extract a minimal unsatisfiable set (MUS) of conjuncts from  $\varphi_{n,m}$, i.e., an unsatisfiable set of conjuncts such that removing any conjunct results in a satisfiable formula. If the MUS is sufficiently small, we may be able to reconstruct a human-readable proof of its unsatisfiability (as in, e.g., \cite{Brandt2016,Brandl2018}). Even if a human-readable proof is infeasible, we may be able to formally verify the SAT result using an interactive theorem prover (as in, e.g., \cite{Brandl2018,Brandt2018}). SAT solving has already been used to prove a number of impossibility theorems and other results in computational social choice (see, e.g., \cite{Geist2011,Brandt2014,Brandt2016,Brandt2017,Geist2017,Brandl2018,Brandt2018,Brandl2019,Kluiving2020}).

The difficulty with the strategy above for our purposes here is that for more than a few candidates and voters, the number of profiles---and hence propositional variables $A_{\mathbf{P},Y}$---is too large to feasibly carry out the strategy to prove Theorems \ref{Impossibility1}.\ref{Impossibility1a} and \ref{Impossibility4}. Of course, since we are interested in voting methods satisfying anonymity, it would suffice to restrict attention to equivalence classes of profiles related by swapping voters as in the anonymity axiom, or equivalently, with anonymous profiles (see Definition \ref{GenAnonProf}). In fact, since we are interested in voting methods satisfying anonymity and neutrality, it would suffice to work with equivalence classes of profiles related by swapping voters and/or candidates as in the anonymity and neutrality axioms---so-called anonymous and neutral equivalence classes (ANECs) \cite{Egecioglu2009,Egecioglu2013}. Yet there are still too many ANECs if we have more than a few candidates and voters. Table \ref{NumObjects} illustrates this for a mere 6 voters and up to 5 candidates (the relevant number for the possibility result in Theorem \ref{Impossibility1}.\ref{Impossibility1b}).

In fact, we know that profiles with at least 11 voters are needed to prove Theorem \ref{Impossibility4}, because the Split Cycle voting method satisfies anonymity, neutrality, binary $\gamma$, and quasi-resoluteness on all profiles with fewer than 11 voters. For as shown in \cite{HP2020b},  Split Cycle satisfies anonymity, neutrality, and binary $\gamma$ on all profiles, and it satisfies quasi-resoluteness on profiles with fewer than 4 candidates. Then since no 4-candidate profile with fewer than 11 voters is uniquely weighted, Split Cycle vacuously satisfies quasi-resoluteness on all 4-candidate profiles with fewer than 11 voters. Thus, profiles with at least 11 voters are indeed needed to prove Theorem \ref{Impossibility4}, rendering a purely profile-based SAT strategy infeasible.

\begin{table}[h]
\begin{center}
\begin{tabular} {c|c|c|c|c}
 & &&6-voter& \\
 & unlabeled & 6-voter  & anonymous & 6-voter \\
candidates & weak tournaments & ANECs & profiles & profiles  \\
\hline 

3  & 7 & 83  & 462  & 46,656 \\
4  & 42 & 19,941 & 475,020 & 191,102,976\\
5  & 582 & 39,096,565 & 4,690,625,500 & 2,985,984,000,000\\

\end{tabular}
\end{center}
\caption{Comparison of the numbers of \textit{unlabeled weak tournaments}, \textit{preference profiles}, \textit{anonymous profiles}, and \textit{anonymous and neutral equivalence classes of profiles} (ANECs). We chose 6 voters because for each $n\in \{3,4,5\}$, $6$ is the least number of voters such that all $n$-candidate unlabeled weak tournaments are isomorphic to the majority graph of a profile with at most that number of voters \cite{Stearns1959}.}\label{NumObjects}
\end{table}

Fortunately, there is a way to circumvent the problem posed by too many profiles. This involves switching from profiles to abstract majority graphs and abstract margin graphs, not associated with any particular profiles.

An abstract majority graph is simply an asymmetric directed graph, which we will call a \textit{weak tournament} $T=(X,\to)$.\footnote{Here we use the term `weak tournament' as in \cite{Brandt2007}.} A \textit{tournament} is a weak tournament such that for all $a,b\in X$ with $a\neq b$, we have $a\to b$ or $b\to a$. 

An abstract margin graph is simply a \textit{weighted weak tournament} $\wwt=(T,w)$, where $\wt$ is a weak tournament and $w$ is a function assigning to each edge in $T$ a positive integer weight, satisfying the parity constraint of Remark \ref{Parity}. A \textit{weighted tournament} is a weighted weak tournament $\wwt=(T,w)$ in which $T$ is a tournament. We say that $\wwt$ is \textit{uniquely weighted} if for all $(a,b),(c,d)\in X(\wt)^2$ with $a\neq b$, $c\neq d$, and $(a,b)\neq (c,d)$, we have $w(a,b)\neq w(c,d)$.

We now explain our proof strategy in two steps.\\

\textbf{Step 1 for Theorem \ref{Impossibility1}.\ref{Impossibility1a}}. We prove an a priori weaker version of Theorem~\ref{Impossibility1}.\ref{Impossibility1a} using the additional assumption that the voting method's choice of winners depends only on the majority graph of the profile.

\begin{definition}\label{Invariance} A voting method $F$  is \textit{majoritarian} if for all $\mathbf{P},\mathbf{P}'\in\mathrm{dom}(F)$, 
\[M(\mathbf{P})=M(\mathbf{P}')\mbox{ implies }F(\mathbf{P})=F(\mathbf{P}').\]
\end{definition}

\noindent If $F$ is majoritarian, then we can think of $F$ as a function that assigns to each weak tournament a subset of the nodes. If $F$ is also neutral, then we can be even more economical, using a strategy from \cite{Brandt2016}. For each isomorphism class of weak tournaments, we fix a canonical representative of the class. Then we regard $F$ as a function that assigns to each canonical weak tournament $T$ a subset of the nodes. In fact, it turns out that our impossibility theorem will hold even if we restrict attention to just canonical \textit{tournaments}. To prove there is no majoritarian voting method satisfying the axioms of Theorem \ref{Impossibility1}.\ref{Impossibility1a}, we can follow the SAT solving strategy sketched above but using a propositional language $L_{maj}$ that contains for each canonical tournament $T$ and subset $Y$ of nodes a propositional variable $A_{T,Y}$, interpreted to mean that in $T$, the set of winners is $Y$. Then we  write a CNF formula $\psi_{6}$ that is satisfiable if and only if there is majoritarian voting method on profiles (whose majority graphs are tournaments) up to $6$ candidates satisfying the axioms listed in Theorem \ref{Impossibility1}.\ref{Impossibility1a}. Since there are far fewer canonical tournaments with up to 6 nodes than there are profiles with up to 6 candidates and even  a few voters, the formula $\psi_6$ will have far fewer variables than the formulas $\varphi_{n,6}$ of $L_{prof}$ mentioned above. Indeed, we successfully carry out this strategy in Appendix \ref{SATmajoritarian} to prove the following.

\begin{restatable}{proposition}{fromsatthmone}\label{FromSatThm1} For any finite $Y\subseteq\mathcal{X}$ with $|Y|\geq 6$, there is no \emph{majoritarian} voting method with domain $\{\mathbf{P}\mid X(\mathbf{P})\subseteq Y\}$  satisfying anonymity, neutrality, binary quasi-resoluteness, binary $\gamma$, and $\alpha$-resoluteness.
\end{restatable}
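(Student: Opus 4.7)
Since $F$ is majoritarian, $F(\mathbf{P})$ depends only on the majority graph $M(\mathbf{P})$; by McGarvey's theorem, every weak tournament on a finite subset of $Y$ arises as some $M(\mathbf{P})$, so $F$ induces a well-defined function $f$ from weak tournaments on subsets of $Y$ to nonempty subsets of their vertex sets. It suffices to derive a contradiction using only the restriction of $f$ to \emph{tournaments} on subsets of $Y$ of size at most $6$, since every such tournament arises from some odd-voter profile and restricting the domain only strengthens the nonexistence claim. Neutrality then forces $f$ to be equivariant under graph isomorphism, so $f$ is fully determined by its values on a fixed system of canonical representatives of the isomorphism classes of tournaments on each nonempty subset of $Y$.

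The plan is to build a CNF formula $\psi_6$ over propositional variables $A_{T,Y'}$, one for each canonical tournament $T$ with $|V(T)|\leq 6$ and each nonempty $Y'\subseteq V(T)$, with $A_{T,Y'}$ meaning $f(T) = Y'$. Writing $T_{-y}$ for the subtournament induced on $V(T)\setminus\{y\}$, the clauses encode: (i) for each $T$, exactly one $Y'$ satisfies $A_{T,Y'}$ (well-definedness of $f$); (ii) for each $2$-vertex canonical tournament $T$ with edge $a\to b$, $A_{T,\{a\}}$ or $A_{T,\{b\}}$ holds (binary quasi-resoluteness); (iii) for each $T$, each $y\in V(T)$, and each $x\in V(T_{-y})$, a clause that translates binary $\gamma$ by computing the canonical forms of $T_{-y}$ and of $T|_{\{x,y\}}$ and expressing the implication $x\in f(T_{-y})\wedge f(T|_{\{x,y\}})=\{x\}\Rightarrow x\in f(T)$ through the corresponding $A$-variables; and (iv) an analogous family for $\alpha$-resoluteness, constraining $|f(T)|\leq|f(T_{-y})|$ under the same hypothesis. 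Anonymity holds automatically because $F$ is majoritarian, while neutrality is absorbed into the canonical-form reduction.

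Feeding $\psi_6$ to a modern SAT solver should return UNSAT, which proves the proposition. The bottleneck is not computational---Table~\ref{NumObjects} indicates only a few hundred isomorphism classes of tournaments on at most $6$ vertices---but rather the \emph{correctness} of the encoding, particularly of the canonical-form bookkeeping in clauses (iii) and (iv), where each $T_{-y}$ and each $T|_{\{x,y\}}$ must be mapped to its canonical representative by a chosen isomorphism and the conclusion then interpreted back under that isomorphism. Two safeguards address this risk: formally verifying the CNF generator in the Lean Theorem Prover (as alluded to in the introduction), and extracting a minimal unsatisfiable subset from the solver, which if small enough may yield a human-readable argument in the style of Proposition~\ref{Impossibility1seven}.
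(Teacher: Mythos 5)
Your overall route is the same as the paper's: reduce the majoritarian method to a tournament solution via McGarvey's theorem, exploit neutrality to pass to canonical representatives of isomorphism classes, encode functionality, binary quasi-resoluteness, binary $\gamma$, and $\alpha$-resoluteness as a CNF over variables $A_{T,Y'}$, and conclude from an UNSAT verdict (the paper even confirms your observation that tournaments alone, rather than weak tournaments, suffice, and it handles your transfer step as a separate lemma via the construction $G(T)=F(\mathsf{Mc}(T))$). However, there is a genuine gap in your encoding of neutrality. You write that ``neutrality is absorbed into the canonical-form reduction'' and allow $A_{T,Y'}$ for \emph{every} nonempty $Y'\subseteq V(T)$. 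Canonicalization only captures the part of neutrality that relates distinct isomorphism classes; it misses the constraints coming from \emph{automorphisms} of a single tournament. A neutral solution must satisfy $F(T)=h[F(T)]$ for every automorphism $h$ of $T$, i.e., $F(T)$ must be a union of orbits---the \emph{orbit condition} of Brandt and Geist \cite{Brandt2016}, which the paper builds into the variable set by admitting only $Y\in\mathscr{O}(T)$ (Definition \ref{OrbitDef}). This condition is also what makes your ``chosen isomorphism'' bookkeeping in clauses (iii)--(iv) well defined: without it, different isomorphisms from the canonical form of $T_{-y}$ to $T_{-y}$ pull the winner set back to different subsets (cf.\ Lemma \ref{IsoLem}), and fixing one arbitrary isomorphism yields a strictly weaker constraint than neutrality imposes.

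This is not a cosmetic omission, because the orbit condition is exactly the load-bearing consequence of anonymity plus neutrality in this impossibility: in a $3$-cycle it forces all three candidates to win, and in the human-readable argument of Proposition \ref{Impossibility1seven} the cases are driven by facts of the form ``$\{a,a',a''\}$ is in the choice set entirely or not at all.'' Your weaker formula permits, e.g., a singleton winner in the $3$-cycle (one checks that binary $\gamma$ and $\alpha$-resoluteness impose no hypothesis-satisfying constraints there, since the pairwise winner of any two cycle members loses to the third), so resolute assignments become available throughout and the solver may well return SAT, in which case your method proves nothing. To be clear, your encoding is sound in the one direction needed---a genuine method would induce a satisfying valuation, so UNSAT of your $\psi_6$ \emph{would} establish the proposition---but you have no grounds to expect UNSAT, since the paper's verified UNSAT result (Proposition \ref{CanSatThm}) is for the orbit-constrained formula. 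The fix is small: restrict the variables to $A_{T,Y}$ with $Y\in\mathscr{O}(T)$ (equivalently, add unit clauses $\neg A_{T,Y'}$ for all $Y'$ that are not unions of orbits), after which any choice of isomorphisms in clauses (iii)--(iv) yields constraints equivalent to the paper's, and the instance is indeed unsatisfiable.
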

\noindent By contrast, as stated in Theorem \ref{Impossibility1}.\ref{Impossibility1b}, there is such a voting method on the domain $\{\mathbf{P}\mid |X(\mathbf{P})|\leq 5\}$. After obtaining Proposition \ref{FromSatThm1} by SAT solving, we extracted an MUS from the unsatisfiable CNF formula, but we were unable to extract from it a proof with few enough cases to make a human-readable proof practical. For details, see Appendix~\ref{HumanReadable1}.\\

\textbf{Step 1 for Theorem \ref{Impossibility4}}. Toward proving Theorem \ref{Impossibility4}, we will first prove an a priori weaker version using the additional assumption that the voting method's choice of winners depends only on the margin graph of the profile.

\begin{definition} A voting method $F$ is \textit{pairwise} if for all $\mathbf{P},\mathbf{P}'\in\mathrm{dom}(F)$, \[\mathcal{M}(\mathbf{P})=\mathcal{M}(\mathbf{P}')\mbox{ implies }F(\mathbf{P})=F(\mathbf{P}').\]
\end{definition}

\noindent If $F$ is pairwise, then we can think of $F$ as a function that assigns a set of nodes to each weighted weak tournament satisfying the parity constraint in Remark \ref{Parity}. If $F$ is also neutral, then we can fix for each isomorphism class of such weighted weak tournaments a canonical representative and then regard $F$ as a function that assigns to each canonical weighted weak tournament a subset of the nodes. Then following a SAT strategy like the one sketched above for majoritarian methods, we prove the following in Appendix \ref{SATQM}.

\begin{restatable}{proposition}{fromqmfinal}\label{FromQMFinal}For any $Y\subseteq\mathcal{X}$ with $|Y|\geq 4$, there is no \emph{pairwise} voting method  with domain
\begin{eqnarray*}&&\{\mathbf{P}\in\prof\mid X(\mathbf{P})\subseteq Y,\mathcal{M}(\mathbf{P})\mbox{ is uniquely weighted, }\\
&&\qquad\qquad\;\;\;\mbox{and all positive weights belong to }\{2,4,6,8,10,12\}\}\end{eqnarray*}
satisfying anonymity, neutrality, binary $\gamma$, and quasi-resoluteness.
\end{restatable}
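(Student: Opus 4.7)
The plan is to mirror the SAT-solving strategy used for Proposition \ref{FromSatThm1}, but over isomorphism classes of weighted weak tournaments rather than of tournaments. Since $F$ is pairwise, it factors through the margin graph, and by Debord's theorem every weighted weak tournament satisfying the parity constraint arises as $\mathcal{M}(\mathbf{P})$ for some $\mathbf{P}$. So $F$ induces, and is determined by, a map $\tilde F$ defined on weighted weak tournaments $\mathcal{T}$ with node set contained in $Y$, weights in $\{2,4,6,8,10,12\}$, and satisfying the parity constraint. Anonymity is automatic for a pairwise method (swapping two voters preserves the margin graph), and neutrality is equivalent to $\tilde F$ being equivariant under isomorphisms of weighted weak tournaments, so I may fix a canonical representative of each isomorphism class and view $\tilde F$ as assigning a nonempty winning set to each canonical object. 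With at most $4$ nodes and weights drawn from a fixed finite set, this collection of canonical representatives is finite.

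For the encoding, I would introduce a propositional variable $A_{\mathcal{T},S}$ for each canonical weighted weak tournament $\mathcal{T}$ and each nonempty $S \subseteq X(\mathcal{T})$, with intended meaning $\tilde F(\mathcal{T}) = S$. Functionality of $\tilde F$ becomes a standard \emph{exactly-one-of} clause set. Quasi-resoluteness becomes unit clauses $\neg A_{\mathcal{T},S}$ for every uniquely weighted $\mathcal{T}$ and every $S$ with $|S| \geq 2$. Binary $\gamma$ is the subtle piece: for each canonical $\mathcal{T}$, each $y \in X(\mathcal{T})$, and each $x \in X(\mathcal{T}) \setminus \{y\}$, I would compute the canonical form $\mathcal{T}'$ of $\mathcal{T}_{-y}$ together with the witnessing isomorphism, and the canonical form $\mathcal{T}''$ of $\mathcal{T}|_{\{x,y\}}$ together with its witnessing isomorphism, and then emit clauses asserting that whenever $x$, transported through the first isomorphism, lies in the winning set of $\mathcal{T}'$, and the winning set of $\mathcal{T}''$ is the singleton of $x$ transported through the second isomorphism, then $x$ also lies in the winning set of $\mathcal{T}$.

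Running a SAT solver on the resulting CNF and obtaining UNSAT then proves the proposition. The hard part, I expect, is twofold. First, correctness of the isomorphism bookkeeping: a single off-by-one in how winners are transported through the canonical-form isomorphisms can silently turn a satisfiable encoding into an unsatisfiable one, producing an unsound certificate. This is precisely what motivates formally verifying the CNF generator in a proof assistant such as Lean, as the abstract indicates is done elsewhere in the paper. Second, one must confirm that restricting positive weights to $\{2,4,6,8,10,12\}$ already suffices to force impossibility at $4$ candidates; empirically one would check that the solver returns UNSAT at this bound, and conceptually it is plausible that binary $\gamma$ combined with quasi-resoluteness on uniquely weighted $4$-node margin graphs already clashes once a small number of distinct margin magnitudes are available to build the witness.
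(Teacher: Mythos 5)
Your proposal is correct and follows essentially the same route as the paper: it factors the pairwise method through its margin graph (realizable by Debord's construction), passes via neutrality to canonical representatives of uniquely weighted tournaments (where the orbit-condition issue is vacuous since distinct weights preclude nontrivial automorphisms), and CNF-encodes functionality, quasi-resoluteness, and binary $\gamma$ with restriction/isomorphism bookkeeping before obtaining UNSAT from a SAT solver---exactly the pipeline of the paper's Proposition~\ref{OTImpossibility} and Appendix~\ref{SATQM}, including the Lean verification of the encoding in Appendix~\ref{Verification}. The only cosmetic difference is that you impose quasi-resoluteness via unit clauses excluding non-singleton winning sets, whereas the paper builds it into the variables by allowing only singleton $Y$ in $A_{\wwt,Y}$, which is equivalent.
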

\noindent By contrast, as previously noted, there are such voting methods on the domain $\{\mathbf{P}\mid |X(\mathbf{P})|\leq 3\}$. After obtaining Proposition \ref{FromQMFinal}  by SAT solving, we extracted an MUS from the unsatisfiable CNF formula, but in this case the distance from the MUS to a practical human-readable proof was even greater than for Proposition \ref{FromSatThm1}. For details, see Appendix \ref{HumanReadable2}. To increase confidence in our SAT-based proof of Proposition \ref{FromQMFinal}, we also formally verified a SAT encoding in the Lean Theorem Prover, as described in Appendix \ref{Verification}.\\

\textbf{Step 2 for Theorems \ref{Impossibility1}.\ref{Impossibility1a} and \ref{Impossibility4}}. We extend Propositions \ref{FromSatThm1} and \ref{FromQMFinal} for majoritarian and pairwise voting methods to impossibility theorems for \textit{all} voting methods using the following \textit{transfer lemmas}.

 \begin{restatable}{lemma}{TransferLemOne}\label{transferlemone} For any finite $Y\subseteq\mathcal{X}$, if there is a voting method with domain  $\{\mathbf{P}\in\prof\mid X(\mathbf{P})\subseteq Y\}$ satisfying anonymity, neutrality, binary quasi-resoluteness, binary $\gamma$, and $\alpha$-resoluteness, then there is a \emph{majoritarian} voting method with domain $\{\mathbf{P}\in\prof\mid X(\mathbf{P})\subseteq Y\}$ satisfying anonymity, neutrality, binary quasi-resoluteness, binary $\gamma$, and $\alpha$-resoluteness.
\end{restatable}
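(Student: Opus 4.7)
The plan is to construct $F'$ by composing $F$ with a canonical choice of profile for each majority graph. For each $X\subseteq Y$ and each weak tournament $T$ on $X$, I would fix a canonical profile $\hat{\mathbf{P}}_{X,T}$ realizing $T$ as its majority graph (existence being guaranteed by McGarvey's theorem), chosen in a neutrality-equivariant way so that $\pi_{\ast}\hat{\mathbf{P}}_{X,T} = \hat{\mathbf{P}}_{\pi(X),\pi_{\ast}T}$ for every permutation $\pi$ of $\mathcal{X}$. Then define $F'(\mathbf{P}) = F(\hat{\mathbf{P}}_{X(\mathbf{P}), M(\mathbf{P})})$. By construction $F'$ is majoritarian and anonymous (it depends only on the majority graph, itself an anonymity invariant); neutrality follows by combining the equivariance of the canonical choice with $F$'s neutrality. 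Binary quasi-resoluteness transfers immediately, since for a two-candidate profile with nonzero margin the canonical profile is itself a two-candidate profile with nonzero margin to which $F$'s axiom applies.

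The main obstacle is preserving binary $\gamma$ and $\alpha$-resoluteness, which relate $F'$-values on profiles with different candidate sets. Unfolding, the antecedent $x \in F'(\mathbf{P}_{-y})$ reads $x \in F(\hat{\mathbf{P}}_{X\setminus\{y\}, M(\mathbf{P})|_{-y}})$, while a direct appeal to $F$'s binary $\gamma$ on $\hat{\mathbf{P}}_{X,M(\mathbf{P})}$ would require $x \in F(\hat{\mathbf{P}}_{X,M(\mathbf{P})}|_{-y})$. These two profiles share a majority graph but need not coincide as anonymous profiles, so $F$'s anonymity alone does not bridge them. To close this gap I would arrange the canonical choice to commute with restriction up to anonymous equivalence: $\hat{\mathbf{P}}_{X,T}|_{X'}$ and $\hat{\mathbf{P}}_{X',T|_{X'}}$ should carry the same multiset of ballots for every $X'\subseteq X$.

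My primary attempt would use a uniform McGarvey-type construction: let $\hat{\mathbf{P}}_{X,T}$ consist of a balanced block with $|Y|!/|X|!$ copies of each linear order on $X$, together with a symmetrized McGarvey augmentation for each edge of $T$ (averaging, with suitable uniform multiplicity, over all orderings of the remaining candidates). The scaling of the balanced block ensures that restriction to $X'$ produces exactly $|Y|!/|X'|!$ copies of each linear order on $X'$, matching the balanced block of $\hat{\mathbf{P}}_{X',T|_{X'}}$. The hard part will be showing that the residuals arising from symmetrized augmentations for edges with an endpoint outside $X'$ assemble into a multiset of opposite-order pairs that merges cleanly into a rescaled balanced block, delivering the desired anonymous equivalence. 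Once this combinatorial compatibility is verified, binary $\gamma$ and $\alpha$-resoluteness for $F'$ follow by translating their antecedents to statements about the single profile $\hat{\mathbf{P}}_{X,M(\mathbf{P})}$ (using the compatibility and $F$'s anonymity) and then invoking the corresponding axioms for $F$.
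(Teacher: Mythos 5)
Your proposal is correct and takes essentially the paper's own route: your canonical choice $\hat{\mathbf{P}}_{X,T}$ (a balanced block of $|Y|!/|X|!$ copies of each linear order on $X$ plus symmetrized augmentations for the edges of $T$) is exactly the paper's representation $\mathfrak{P}_Y(\wt)=\mathfrak{L}_Y(\wt)|_{X(\wt)}$ of Definition \ref{RepDef}, your equivariance requirement is Proposition \ref{TransposeProp}, and the restriction-compatibility you flag as the hard part is precisely Proposition \ref{SubProp}. That step is in fact easier than you anticipate: for an edge $a\to b$ with an endpoint outside $X'$ one has $(\mathbf{L}_{ab\,\mbox{-}Y})|_{X'}=(\mathbf{L}_{ba\,\mbox{-}Y})|_{X'}$, so the positive and negative augmentations cancel outright (no merging of residuals into a rescaled balanced block is needed), after which your translation of the binary $\gamma$ and $\alpha$-resoluteness antecedents through $F$'s anonymity is exactly how the paper concludes via Lemma \ref{majoritarianpreserved} and Proposition \ref{quasimajoritarianTransfer}.
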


 \begin{restatable}{lemma}{TransferLemTwo}\label{transferlemtwo} For any finite $Y\subseteq\mathcal{X}$, if there is a voting method with domain $\{\mathbf{P}\in\prof\mid X(\mathbf{P})\subseteq Y\}$ satisfying anonymity, neutrality, binary $\gamma$, and quasi-resoluteness, then for any $m\in\mathbb{Z}^+$, there is a \emph{pairwise} voting method with domain \[\{\mathbf{P}\in\prof\mid X(\mathbf{P})\subseteq Y,\,m\mbox{ is the maximum weight in }\mathcal{M}(\mathbf{P})\}\] satisfying anonymity, neutrality, binary $\gamma$, and quasi-resoluteness.
\end{restatable}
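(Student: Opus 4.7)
My plan is to construct the pairwise voting method $F'$ by composing $F$ with a canonical-profile map. Concretely, for each weighted weak tournament $\mathcal{M}$ on some $X\subseteq Y$ with maximum weight $m$ and parities compatible with Remark \ref{Parity}, I would use the Debord-style construction developed in Section \ref{McGarveyDebord} to fix a canonical profile $\sigma(\mathcal{M})\in\prof$ with $X(\sigma(\mathcal{M}))=X$ and $\mathcal{M}(\sigma(\mathcal{M}))=\mathcal{M}$. The construction should be arranged so that for every permutation $\pi$ of $\mathcal{X}$, $\sigma(\pi(\mathcal{M}))=\pi(\sigma(\mathcal{M}))$ (up to anonymity), which is easy to secure by making the Debord edge-by-edge recipe depend on the edge data alone. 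I then define $F'(\mathbf{P}):=F(\sigma(\mathcal{M}(\mathbf{P})))$, which is pairwise by construction.

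Three of the four target axioms are essentially free. Anonymity is automatic because $F'(\mathbf{P})$ depends only on $\mathcal{M}(\mathbf{P})$, which is determined by the multiset of ballots. Neutrality follows from the symmetry of $\sigma$ together with the neutrality of $F$: $F'(\mathbf{P}_{a\leftrightarrows b})=F(\sigma(\mathcal{M}(\mathbf{P})_{a\leftrightarrows b}))=F(\pi_{a\leftrightarrows b}(\sigma(\mathcal{M}(\mathbf{P}))))=F(\sigma(\mathcal{M}(\mathbf{P})))_{a\leftrightarrows b}=F'(\mathbf{P})_{a\leftrightarrows b}$. Quasi-resoluteness is immediate because $\sigma(\mathcal{M}(\mathbf{P}))$ has the same (possibly uniquely weighted) margin graph as $\mathbf{P}$, so we may invoke quasi-resoluteness of $F$ on $\sigma(\mathcal{M}(\mathbf{P}))$.

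The whole difficulty is binary $\gamma$. Unwinding the definition, suppose $x\in F'(\mathbf{P}_{-y})$ and $F'(\mathbf{P}|_{\{x,y\}})=\{x\}$, and write $\mathcal{M}=\mathcal{M}(\mathbf{P})$. The hypotheses give $x\in F(\sigma(\mathcal{M}|_{X(\mathbf{P})\setminus\{y\}}))$ and $F(\sigma(\mathcal{M}|_{\{x,y\}}))=\{x\}$, and the goal is $x\in F(\sigma(\mathcal{M}))$. The natural move is to apply binary $\gamma$ of $F$ to the profile $\sigma(\mathcal{M})$, but that requires the hypotheses evaluated at $(\sigma(\mathcal{M}))_{-y}$ and at $(\sigma(\mathcal{M}))|_{\{x,y\}}$ rather than at the canonical profiles of the restricted margin graphs. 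These pairs of profiles have the same margin graphs but are generally not equal as profiles: restricting $\sigma(\mathcal{M})$ leaves behind extra ``background'' voters coming from $y$-incident edges of $\mathcal{M}$, which contribute net margin zero on $X(\mathbf{P})\setminus\{y\}$ but do add ballots.

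This coherence failure is the main obstacle, and it is where I would invoke Section \ref{PreservationSection}. The goal there would be a preservation lemma asserting that for profiles in our domain, the $F$-value is unchanged when one passes between two profiles with the same margin graph that differ only by the adjunction of such ``cancellable'' ballot blocks created by the Debord procedure. Such a lemma would have to be proved by exhibiting a sequence of local moves connecting the two profiles, each step being forced to preserve $F$ by some combination of anonymity, neutrality, binary $\gamma$, and quasi-resoluteness applied on suitable auxiliary subprofiles (this is precisely the role I would expect the restriction to maximum weight $m$ to play, since it pins down the voter-count structure of $\sigma$). Once the preservation lemma is in hand, binary $\gamma$ for $F'$ drops out by applying binary $\gamma$ for $F$ to $\sigma(\mathcal{M})$ after rewriting its restrictions using the preservation lemma.
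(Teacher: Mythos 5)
Your architecture---define $F'(\mathbf{P})=F(\sigma(\mathcal{M}(\mathbf{P})))$ for a canonical-profile map $\sigma$, get pairwiseness and anonymity for free, neutrality from equivariance of $\sigma$, and isolate binary $\gamma$ as the hard axiom---matches the paper's, and you have correctly diagnosed where the difficulty lies. But there is a genuine gap at exactly that point, and your intended fix would fail. The ``preservation lemma'' you posit---that $F$ takes the same value on two profiles with the same margin graph that differ by adjoined margin-zero ballot blocks---is essentially the cancellativity condition of Definition \ref{Invariance}, and it does not follow from anonymity, neutrality, binary $\gamma$, and quasi-resoluteness: these axioms relate a profile only to its voter-permutations, candidate-permutations, and candidate-restrictions, never to a margin-equivalent profile with a different ballot multiset, so no ``sequence of local moves'' forced by them can connect $\sigma(\mathcal{M})_{-y}$ to $\sigma(\mathcal{M}_{-y})$. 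Indeed, the paper assumes homogeneity and cancellativity outright in Proposition \ref{Impossibility4b} and remarks immediately afterwards that without these assumptions that style of argument does not extend to Theorem \ref{Impossibility4}; if your preservation lemma were derivable from the four listed axioms, the entire transfer machinery of Sections \ref{RepresentationSection}--\ref{PreservationSection} would be unnecessary. A secondary problem: your claim that equivariance of a Debord-style $\sigma$ under transpositions is ``easy to secure'' is false as stated---Figure \ref{TranspositionFig} exhibits the failure of Debord's construction to commute with transposition even at the level of anonymous profiles, because the recipe enumerates the remaining candidates by the fixed order $L_\star$; repairing this by symmetrizing over all orderings of the tail is precisely the first step toward the paper's construction.

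The missing idea is to choose $\sigma$ so that no preservation lemma is needed because coherence holds on the nose. The paper's representation $\mathfrak{P}_{Y,m}$ (Definition \ref{RepDefWeighted}) starts from $m$ copies of \emph{every} linear order on the full set $Y$ and flips only the top two candidates within blocks; since $(\mathbf{L}_{ab\,\mbox{-}Y})|_{ Z}=(\mathbf{L}_{ba\,\mbox{-}Y})|_{ Z}$ whenever $\{a,b\}\not\subseteq Z$, the flipped blocks for a removed candidate collapse back into the neutral background, and the representation commutes \emph{exactly} with restriction (Proposition \ref{SubProp}) as well as with transposition (Proposition \ref{TransposeProp}). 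Your ``background voters'' thus disappear by construction, and binary $\gamma$ for $F'$ follows by applying binary $\gamma$ for $F$ directly to the canonical profile---formalized in the paper as preservation of variable candidate axioms (Lemma \ref{PSWquasi} and Proposition \ref{quasimajoritarianTransfer}). One residual subtlety the paper also handles and your sketch does not: $\mathfrak{P}_{Y,m}$ realizes not $\mathcal{M}(\mathbf{P})$ but the scaled graph $\psi_Y\,\mathcal{M}(\mathbf{P})$ with $\psi_Y=2\times(|Y|-2)!$, so quasi-resoluteness transfers only because unique weightedness is invariant under uniform scaling of margins---the ``homogeneously pairwise'' condition of Definition \ref{majoritarianpairwiseAx} used in Lemma \ref{majoritarianpreserved}; the paper additionally routes through anonymous profiles on a fixed voter set of size $m|Y|!$ and de-anonymizes at the end.
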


 Combining Propositions \ref{FromSatThm1} and \ref{FromQMFinal} with Lemmas \ref{transferlemone} and \ref{transferlemtwo} establishes Theorems \ref{Impossibility1}.\ref{Impossibility1a} and \ref{Impossibility4}, respectively. Note that to relate Proposition \ref{FromQMFinal} to Lemma \ref{transferlemtwo}, we use the fact that if there is a pairwise voting method $F$ on some domain $D$ satisfying the listed axioms, then for any $D'\subseteq D$, the restriction of $F$ to $D'$ also satisfies the listed axioms.\\

\textbf{Proof strategy for the transfer lemmas}.  Our high-level strategy to prove the transfers lemmas is the following, which we will sketch for Lemma \ref{transferlemone}. Where $\mathsf{VM}$ is the class of all voting methods and $\mathsf{maj}$ is the class of majoritarian voting methods, we want a map
\[\pi_{_\mathsf{maj}}: \mathsf{VM}\to \mathsf{maj}\]
that preserves the axioms in Theorem \ref{Impossibility1}.\ref{Impossibility1a}, i.e., if $F\in \mathsf{VM}$ satisfies the axioms, then so does $\pi_{_\mathsf{maj}}(F)$. Our strategy is to obtain $\pi_{_\mathsf{maj}}$ as a composition of maps. First, we have the map $M$ that sends each profile to its majority graph. Second, we will choose an appropriate map $\phi$ that sends each majority graph to a profile $\mathbf{P}$. Then we will define $\pi_{_\mathsf{maj}}$ such that for each $F\in \mathsf{VM}$, the voting method $\pi_{_\mathsf{maj}}(F)$ is defined by
\[\pi_{_\mathsf{maj}}(F)(\mathbf{P})=F(\phi(M(\mathbf{P}))).\]
It is immediate that if $M(\mathbf{P})=M(\mathbf{P}')$, then $\pi_{_\mathsf{maj}}(F)(\mathbf{P})=\pi_{_\mathsf{maj}}(F)(\mathbf{P}')$, which shows $\pi_{_\mathsf{maj}}(F)\in \mathsf{maj}$. The crucial fact to prove, however, is that $\pi_{_\mathsf{maj}}$ preserves the axioms of the relevant impossibility theorem. This requires a careful choice of $\phi$.

The properties required of $\phi$ depend on the specific axioms in the impossibility theorem. As an example, let us consider the axiom of neutrality. Recall from Section \ref{IndividualToSocial} that a voting method $F$ satisfies neutrality if for any profile $\mathbf{P}$, 
 \[F(\mathbf{P}_{a\leftrightarrows b})=F(\mathbf{P})_{a\leftrightarrows b}.\]
Given a majority graph $G$ and vertices $a$ and $b$, let $G_{a\leftrightarrows b}$ be the isomorphic copy of $G$ with $a$ and $b$ swapped. Then in order to show that $\phi$ preserves neutrality, we would like to show that for any majority graph $G$, we have
\[\phi(G_{a\leftrightarrows b})=\phi(G)_{a\leftrightarrows b}.\]
As we shall see, this commutativity requirement rules out a standard construction of profiles from majority graphs \cite{McGarvey1953} as a candidate for $\phi$.

Though we have sketched the strategy above for proving the majoritarian transfer lemma, Lemma \ref{transferlemone}, an analogous strategy applies to proving the pairwise transfer lemma, Lemma \ref{transferlemtwo}, using an appropriate map $\pi_{\mathrm{pairwise}}$. In the next three sections, we carry out this strategy for proving the transfer lemmas.

\section{Profile representation of weighted weak tournaments}\label{RepresentationSection}

In this section, we introduce a method for representing an abstract majority/margin graph as the majority/margin graph of a profile, which will play the role of the $\phi$ map in our proof strategy outlined at the end of Section \ref{ProofStrategy}.

To prove properties of our representation, it will be convenient to use \textit{anonymous profiles}, which simply record how many voters submit each ranking of the candidates, as reviewed in Section \ref{AnonProf}. In Section \ref{McGarveyDebord}, we then review the standard constructions of (anonymous) profiles from majority and margin graphs due to McGarvey and Debord. We introduce our alternative construction in Section~\ref{AlternativeRep}.

\subsection{Anonymous profiles}\label{AnonProf}

Recall from Section \ref{IndividualToSocial} that $\mathcal{X}$ is an infinite set of candidates and that for $X\subseteq\mathcal{X}$,  $\mathcal{L}(X)$ is the set of all strict linear orders on $X$. 

It will be convenient for certain intermediate calculations to be able to work with a generalization of anonymous profiles allowing the number of voters who submit a certain ballot to be \textit{negative} (cf.~\cite[p.~135]{Saari2008}).

\begin{definition}\label{GenAnonProf} A \textit{generalized anonymous profile} is a function $\mathbf{P}$ such that for some finite $X(\mathbf{P})\subseteq\mathcal{X}$, we have ${\mathbf{P}:\mathcal{L}(X(\mathbf{P}))\to\mathbb{Z}}$. 

 An \textit{anonymous profile} is a generalized anonymous profile ${\mathbf{P}:\mathcal{L}(X(\mathbf{P}))\to\mathbb{N}}$. \end{definition}

Given $\mathbf{P}$ and $\mathbf{P}'$ with $X(\mathbf{P})=X(\mathbf{P}')$, we define $\mathbf{P}+\mathbf{P}'$ with $X(\mathbf{P}+\mathbf{P}')=X(\mathbf{P})$ (resp.~$\mathbf{P}-\mathbf{P}'$ with $X(\mathbf{P}-\mathbf{P}')=X(\mathbf{P})$) such that for all $L\in \mathcal{L}(X(\mathbf{P}))$, \[ (\mathbf{P}+\mathbf{P}')(L)=\mathbf{P}(L)+\mathbf{P}(L')\mbox{ and }(\mathbf{P}-\mathbf{P}')(L)=\mathbf{P}(L)-\mathbf{P}(L').\]
Given $n\in\mathbb{Z}$, we define $n\mathbf{P}$ with $X(n\mathbf{P})=X(\mathbf{P})$ such that for all $L\in \mathcal{L}(X(\mathbf{P}))$, \[(n\mathbf{P})(L)=n \times \mathbf{P}(L).\]

For $a,b\in X(\mathbf{P})$, we define the \textit{margin of $a$ over $b$ in $\mathbf{P}$}, denoted $\mu_{a,b}(\mathbf{P})$, by:
\begin{eqnarray*}
\mu_{a,b}(\mathbf{P})&=& \bigg(\sum_{L\in\mathcal{L}(X(\mathbf{P})):\, aLb}\mathbf{P}(L)\bigg)- \bigg(\sum_{L\in\mathcal{L}(X(\mathbf{P})):\, bLa}\mathbf{P}(L)\bigg).
\end{eqnarray*}
The \textit{majority graph} $M(\mathbf{P})$ and \textit{margin graph} $\mathcal{M}(\mathbf{P})$ of an anonymous profile $\mathbf{P}$ are defined just as for ordinary profiles in Section \ref{IndividualToSocial}. The following is obvious.

\begin{lemma} For generalized anonymous profiles $\mathbf{P},\mathbf{P}'$ with $X(\mathbf{P})=X(\mathbf{P}')$ and ${a,b\in X(\mathbf{P})}$, $\mu_{a,b}(\mathbf{P}+\mathbf{P}')=\mu_{a,b}(\mathbf{P})+\mu_{a,b}(\mathbf{P}')$ and $\mu_{a,b}(\mathbf{P}-\mathbf{P}')=\mu_{a,b}(\mathbf{P})-\mu_{a,b}(\mathbf{P}')$.
\end{lemma}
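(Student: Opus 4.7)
The plan is to prove this by a direct unfolding of the definition of margin, using nothing more than the linearity of finite sums over $\mathbb{Z}$. Since the lemma is flagged as obvious, I do not expect any substantive obstacle; the proof is a one-line calculation for each of the two identities, and the only care required is to keep track of the two sums (over $L$ with $aLb$ and over $L$ with $bLa$) separately.

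Concretely, for the first identity I would start from
\[\mu_{a,b}(\mathbf{P}+\mathbf{P}')=\sum_{L:\,aLb}(\mathbf{P}+\mathbf{P}')(L)-\sum_{L:\,bLa}(\mathbf{P}+\mathbf{P}')(L),\]
then apply the pointwise definition $(\mathbf{P}+\mathbf{P}')(L)=\mathbf{P}(L)+\mathbf{P}'(L)$ and split each sum into two sums. Regrouping the four resulting sums into the two pairs that compute $\mu_{a,b}(\mathbf{P})$ and $\mu_{a,b}(\mathbf{P}')$ yields the claimed equality. The argument for $\mu_{a,b}(\mathbf{P}-\mathbf{P}')$ is identical after replacing the plus signs inside the sums with minus signs; note that the use of generalized anonymous profiles (allowing values in $\mathbb{Z}$) is harmless, since all manipulations take place in $\mathbb{Z}$.

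The only thing worth remarking on is that $\mathbf{P}+\mathbf{P}'$ and $\mathbf{P}-\mathbf{P}'$ are defined only when $X(\mathbf{P})=X(\mathbf{P}')$, which is assumed; this ensures that $\mathcal{L}(X(\mathbf{P}))=\mathcal{L}(X(\mathbf{P}'))$, so the index sets of all the sums agree, and the rearrangement is legitimate. I expect the entire proof to fit in three or four lines of display math with no case analysis.
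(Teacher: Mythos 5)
Your proof is correct and matches the paper exactly: the paper states this lemma with no proof beyond the remark that it is obvious, and the implicit justification is precisely your computation, namely unfolding $\mu_{a,b}$ and using linearity of the finite sums over $\mathcal{L}(X(\mathbf{P}))$, which is legitimate since $X(\mathbf{P})=X(\mathbf{P}')$ forces the index sets to coincide. Your remark about the sums over $\{L : aLb\}$ and $\{L : bLa\}$ being handled separately is the only point of care, and you handle it properly.
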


\subsection{McGarvey and Debord's constructions}\label{McGarveyDebord} The classic constructions of profiles from majority graphs and margin graphs are due to McGarvey \cite{McGarvey1953}  and Debord \cite{Debord1987}. We review these constructions for comparison with our alternative construction in Section \ref{AlternativeRep}.

Henceforth we fix a distinguished strict linear order $L_\star$ on the set $\mathcal{X}$. 

\begin{definition}\label{McGarveyDef} For any weighted weak tournament $\wwt$ such that all weights are even numbers, we define the profile $\mathsf{Deb}(\wwt)$ as follows. For each $a,b\in X(\wt)$ such that $a\to b$ in $\wwt$ with weight $n$, where $x_1,\dots,x_k$ enumerates ${X(\wwt)\setminus\{a,b\}}$ according to the order $L_\star$, we add $\frac{n}{2}$ voters with the linear order $abx_1\dots x_k$ and $\frac{n}{2}$ voters with the linear order $x_k\dots x_1ab$. 

For a weak tournament $\wt$, let $\wt_2$ be the weighted weak tournament based on $\wt$ in which each edge has weight $2$, and define $\mathsf{Mc}(\wt)=\mathsf{Deb}(T_2)$.
\end{definition}

Though it will not play a role in our proofs, we mention how to handle weighted tournaments with weights of odd parity. There is an obvious correspondence between weighted weak tournaments and skew-symmetric matrices. Given weighted weak tournaments $\mathcal{T}_1$ and $\mathcal{T}_2$, let  $\wwt_1-\wwt_2$ be the weighted weak tournament corresponding to the matrix obtained by subtracting the matrix for $\mathcal{T}_2$ from that for~$\mathcal{T}_1$. In addition, for each finite set $X$ of candidates, let profile $\mathbf{P}_{\star,X}$ be the profile with only one voter who submits the restriction to $X$  of $L_\star$.

\begin{definition} For a weighted tournament $\wwt$ with odd weights, we define the profile $\mathsf{Deb}(\wwt)$ as follows. As $\wwt-\mathcal{M}(\mathbf{P}_{\star, X(\wwt)})$ has even number weights, we set \[\mathsf{Deb}(\wwt)=\mathsf{Deb}(\wwt-\mathcal{M}(\mathbf{P}_{\star, X(\wwt)}))+\mathbf{P}_{\star, X(\wwt)}.\] 
\end{definition}

Then it is easy to see that $\mathcal{M}(\mathsf{Deb}(\wwt))=\wwt$, which yields the following.

\begin{theorem}\label{McGarveyThm}  $\,$
\begin{enumerate}
\item\label{McGarveyThm1} \cite{McGarvey1953} Every weak tournament is the majority graph of a profile.
\item \cite{Debord1987} If $\wwt=(T,w)$ is a weighted weak tournament in which all weights have the same parity, which is even if $T$ is not a tournament, then $\wwt$ is the margin graph of a profile.
\end{enumerate}
\end{theorem}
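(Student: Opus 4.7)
The plan is to verify directly that the constructions $\mathsf{Deb}$ and $\mathsf{Mc}$ from Definition \ref{McGarveyDef} produce profiles with the prescribed graphs. The whole theorem reduces to a single core claim: if $\wwt$ is a weighted weak tournament whose weights are all even, then $\mathcal{M}(\mathsf{Deb}(\wwt)) = \wwt$.

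For the core claim, we track the margin contribution of each pair of voter groups added by the construction. Fix an edge $a \to b$ with weight $n$ in $\wwt$, and let $x_1,\dots,x_k$ enumerate $X(\wwt) \setminus \{a,b\}$ in the order $L_\star$. All $n$ voters added for this edge rank $a$ above $b$, so together they contribute $+n$ to $\mu_{a,b}$. For any ordered pair $(c,d)$ of distinct candidates with $\{c,d\} \neq \{a,b\}$---whether both $c,d$ lie among the $x_i$, or one of them is $a$ or $b$ and the other is some $x_i$---the two groups of $n/2$ voters rank $c$ and $d$ in opposite orders, because the second group both reverses the enumeration $x_1,\dots,x_k$ and moves the $ab$ block to the bottom. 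Hence their contributions to $\mu_{c,d}$ cancel. Summing over all edges of $\wwt$ and applying additivity of margins gives $\mathcal{M}(\mathsf{Deb}(\wwt)) = \wwt$.

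From the core claim, part (2) in the even-weight case is immediate. Part (1) follows by applying the core claim to $T_2$ (whose weights are all $2$) and then forgetting weights: $M(\mathsf{Mc}(T)) = M(\mathsf{Deb}(T_2)) = T$. For part (2) when $T$ is a tournament with odd weights, the matrix $\wwt - \mathcal{M}(\mathbf{P}_{\star, X(\wwt)})$ has only even entries, since subtracting $\pm 1$ from an odd integer yields an even integer, so the core claim applies; then by the additivity $\mathcal{M}(\mathbf{Q}+\mathbf{Q}') = \mathcal{M}(\mathbf{Q}) + \mathcal{M}(\mathbf{Q}')$ from Section~\ref{AnonProf},
\[\mathcal{M}(\mathsf{Deb}(\wwt)) \;=\; \bigl(\wwt - \mathcal{M}(\mathbf{P}_{\star, X(\wwt)})\bigr) + \mathcal{M}(\mathbf{P}_{\star, X(\wwt)}) \;=\; \wwt.\]

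The main obstacle is the per-pair bookkeeping in the core claim---a finite case check on whether an ordered pair $(c,d)$ lies entirely among the $x_i$'s or involves $a$ or $b$---and once this is done the rest is automatic. The one mildly delicate point is that in the odd-weight case the intermediate object $\wwt - \mathcal{M}(\mathbf{P}_{\star, X(\wwt)})$ need not be a weighted weak tournament in the strict sense, since some entries can be zero and some can carry the opposite sign from $\wwt$; but one simply interprets $\mathsf{Deb}$ on the underlying skew-symmetric matrix by iterating over its positive entries, consistent with the extended construction in the excerpt.
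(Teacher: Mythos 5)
Your proposal is correct and takes essentially the same route as the paper, which defines exactly these $\mathsf{Deb}$/$\mathsf{Mc}$ constructions (Definition \ref{McGarveyDef} together with the odd-weight extension via $\mathbf{P}_{\star, X(\wwt)}$) and then simply asserts that $\mathcal{M}(\mathsf{Deb}(\wwt))=\wwt$ ``is easy to see''---your per-pair cancellation bookkeeping and the additivity computation in the odd case supply precisely that verification. One small correction that changes nothing: since $T$ is a tournament in the odd case, every off-diagonal entry of the skew-symmetric matrix of $\wwt$ is a nonzero odd integer, so subtracting the $\pm 1$ entries of $\mathcal{M}(\mathbf{P}_{\star, X(\wwt)})$ yields nonnegative even entries---zero entries can appear but sign flips cannot---hence $\wwt-\mathcal{M}(\mathbf{P}_{\star, X(\wwt)})$ is already a genuine even-weight weighted weak tournament (possibly with fewer edges) and your fallback interpretation via positive matrix entries, while consistent, is not needed.
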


\subsection{An alternative representation}\label{AlternativeRep}

The key to proving the first transfer lemma, Lemma \ref{transferlemone}, is an alternative to McGarvey's representation of weak tournaments by profiles. The basic idea of the representation is simple. To realize a given weak tournament $T$, start with a profile consisting of one copy of each linear order over the candidates; then if there is an edge in $T$ from candidate $a$ to candidate $b$, modify the profile such that every ballot with $ba$ on top becomes a ballot with $ab$ on top, thereby creating an edge from $a$ to $b$ in the majority graph of the profile.

In fact, we need to be more careful: in order to ensure that our profile representation commutes with the operation of restricting to a set of candidates, we will realize a given majority graph $T$ \textit{relative to a set $Y$} of candidates, containing all the candidates in $T$. Thus, we implement the idea above starting with a profile consisting of one copy of each linear order over the candidates in $Y$. The following notation will facilitate our proofs about this construction.

\begin{definition} Given a finite $Y\subseteq\mathcal{X}$, let  $\mathbf{L}_Y$ be the profile such that $\mathbf{L}_Y(L)=1$ for each linear order $L$ on $Y$. For $a,b\in Y$, let $\mathbf{L}_{ab\,\mbox{-}Y}$ be the profile such that $\mathbf{L}_{ab\,\mbox{-}Y}(L)=1$ if $L$ is a linear order on $Y$ of the form $aby_1\dots y_n$ for $y_1,\dots,y_n\in Y\setminus\{a,b\}$ and $\mathbf{L}_{ab\,\mbox{-}Y}(L)=0$ otherwise. We call $\mathbf{L}_{ab\,\mbox{-}Y}$ an \textit{$ab$-$\mbox{Y-}$block}.
\end{definition}

Now we officially define our representation of weak tournaments.

\begin{definition}\label{RepDef} Given a finite $Y\subseteq\mathcal{X}$ and weak tournament $\wt$ with $X(\wt)\subseteq Y$, we define the \textit{profile representation of $\wt$ relative to $Y$}, $\mathfrak{P}_Y(\wt)$, as follows. Start with the profile $\mathbf{L}_{Y}$. For each pair $a,b\in X(\wt)$ with $a\to b$ in $\wt$, flip the first two candidates of the \textit{$ba$-$\mbox{Y-}$block} to obtain an additional \textit{$ab$-$\mbox{Y-}$block}, resulting in a profile $\mathfrak{L}_Y(\wt)$. More formally, let

\begin{equation}\mathfrak{L}_{Y}(\wt)= \mathbf{L}_Y + \underset{\mbox{ {\footnotesize $a\to b$ in $T$}}}{\sum} (\mathbf{L}_{ab\,\mbox{-}Y} - \mathbf{L}_{ba\,\mbox{-}Y}).\label{LYEq}\end{equation}

\noindent Then let 
\begin{equation}\mathfrak{P}_Y(\wt)=\mathfrak{L}_Y(\wt)|_{ X(\wt)}.\label{PfromL}\end{equation}
\end{definition}

It is easy to see that like McGarvey's construction, Definition \ref{RepDef} allows us to represent any weak tournament as the majority graph of a profile. Let $\mathbb{W}\mathbb{T}_Y$ be the set of all weak tournaments $\wt$ with $X(\wt)\subseteq Y$.

\begin{proposition}\label{RepProp0}For any finite $Y\subseteq\mathcal{X}$, $\wt\in \mathbb{W}\mathbb{T}_Y$, we have $\wt =M(\mathfrak{P}_Y(\wt))$. 
\end{proposition}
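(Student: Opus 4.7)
The plan is to verify the proposition by a direct margin computation. The starting profile $\mathbf{L}_Y$ contains one copy of every linear order on $Y$, so by symmetry $\mu_{x,y}(\mathbf{L}_Y)=0$ for every pair $x,y\in Y$. The goal is then to show that each summand $\mathbf{L}_{ab\,\mbox{-}Y}-\mathbf{L}_{ba\,\mbox{-}Y}$ in (\ref{LYEq}), arising from an edge $a\to b$ of $\wt$, contributes exactly $2(|Y|-2)!$ to $\mu_{a,b}$ and $0$ to every other margin.

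The contribution calculation splits into three cases according to how $\{x,y\}$ meets $\{a,b\}$, using that each block $\mathbf{L}_{cd\,\mbox{-}Y}$ contains exactly $(|Y|-2)!$ linear orders. In Case 1, $\{x,y\}=\{a,b\}$: taking $x=a$ and $y=b$, the $ab$-block has $a$ above $b$ in every order (contributing $(|Y|-2)!$ to $\mu_{a,b}$) and the $ba$-block has $b$ above $a$ in every order (contributing $-(|Y|-2)!$), giving the net $2(|Y|-2)!$. In Case 2, exactly one of $x,y$ lies in $\{a,b\}$: say $x=a$ and $y\notin\{a,b\}$; then in both blocks $a$ sits above $y$ (since $y$ occupies one of positions $3,\dots,|Y|$), so the two blocks give equal contributions and cancel. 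In Case 3, $\{x,y\}\cap\{a,b\}=\varnothing$: by symmetry over the permutations of $Y\setminus\{a,b\}$ appearing after the initial pair, each block on its own has as many orders with $x$ above $y$ as with $y$ above $x$, so each block contributes $0$ to $\mu_{x,y}$ individually.

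Because $\wt$ is asymmetric, at most one of $a\to b$ and $b\to a$ is an edge, so summing the contributions over all edges of $\wt$ gives $\mu_{a,b}(\mathfrak{L}_Y(\wt))=2(|Y|-2)!>0$ whenever $a\to b$ in $\wt$, and $\mu_{x,y}(\mathfrak{L}_Y(\wt))=0$ for every pair in $Y$ not spanned by an edge of $\wt$. Since restricting an anonymous profile to a subset of candidates preserves the margin of any pair of candidates in that subset, $\mu_{x,y}(\mathfrak{P}_Y(\wt))=\mu_{x,y}(\mathfrak{L}_Y(\wt))$ for all $x,y\in X(\wt)$, and hence $M(\mathfrak{P}_Y(\wt))=\wt$ as directed graphs on vertex set $X(\wt)$. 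The only delicate point is the case analysis in the second paragraph, which relies on exploiting the symmetry of $\mathbf{L}_Y$ over positions $3,\dots,|Y|$; otherwise the argument is a short bookkeeping exercise.
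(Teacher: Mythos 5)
Your proof is correct and is essentially the paper's argument: the paper obtains Proposition \ref{RepProp0} by specializing the weighted statement (via Propositions \ref{RepSpecial} and \ref{RepProp}), and the proof of Proposition \ref{RepProp} performs exactly your margin computation --- zero margins for $\mathbf{L}_Y$, a net contribution of $\psi_Y = 2(|Y|-2)!$ from an edge's own pair of blocks, and cancellation in both the one-element-overlap and disjoint cases --- followed by the fact that restriction commutes with taking margins (Lemma \ref{RestrictCommute}). The only difference is organizational: you instantiate the computation directly in the unweighted case ($k=m=1$) rather than routing through the weighted generalization.
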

\noindent This proposition follows from Propositions \ref{RepSpecial} and \ref{RepProp} proved below.

For the purposes of the second transfer lemma, Lemma \ref{transferlemtwo}, we now define a similar construction for certain weighted weak tournaments. The idea is simply to start with a profile that has sufficiently many copies of each linear order, and then to create an edge from $a$ to $b$ with a certain margin, we turn an appropriate number of $ba$-$Y$ blocks into $ab$-$Y$ blocks.

Note that given a finite $Y\subseteq\mathcal{X}$ and $a,b\in Y$, the number of ballots in an \textit{$ab$-$\mbox{Y-}$block} is $(|Y|-2)!$. Later it will be useful to express margins as multiples of 
\[\psi_Y = 2\times (|Y|-2)!,\]
in which case the number of ballots in an \textit{$ab$-$\mbox{Y-}$block} is $\psi_Y/2$. When $Y$ is clear from context, we will write $\psi$ for $\psi_Y$.

\begin{definition}\label{RepDefWeighted} Given a finite $Y\subseteq\mathcal{X}$, $m\in\mathbb{Z}^+$, and weighted weak tournament $\wwt$ with $X(\wwt)\subseteq Y$ each of whose weights is $k\psi_Y$ for $k\in\mathbb{Z}$ and less than or equal to $m\psi$, we define the \textit{profile representation of $\wt$ relative to $Y$ and $m$}, $\mathfrak{P}_{Y,m}(\wwt)$, as follows. Start with  $m\mathbf{L}_{Y}$, so for each $a,b\in X(\wwt)$, there are $m$ \textit{$ab$-$\mbox{Y-}$}blocks in $m\mathbf{L}_{Y}$. For each pair $a,b$ with $a\to b$ in $\wwt$ with weight $k\psi_Y$, in $k$ \textit{$ba$-$\mbox{Y-}$}blocks in $m\mathbf{L}_{Y}$, flip the first two candidates to obtain an additional $k$ \textit{$ab$-$\mbox{Y-}$}blocks, resulting in a profile $\mathfrak{L}_{Y,m}(\wwt)$.  More formally,
\begin{equation}\mathfrak{L}_{Y,m}(\wwt)= m\mathbf{L}_Y + \underset{\mbox{ {\footnotesize $a\overset{k\psi}{\to} b$ in $\wwt$}}}{\sum} (k\mathbf{L}_{ab\,\mbox{-}Y} - k\mathbf{L}_{ba\,\mbox{-}Y}).\label{LYmEq}\end{equation}

\noindent Then let \begin{equation}\mathfrak{P}_{Y,m}(\wwt)=\mathfrak{L}_{Y,m}(\wwt)|_{ X(\wwt)}.\label{PYmEq}\end{equation}
\end{definition}

\begin{figure}[ht]

\begin{center}\begin{minipage}{.65in}$\wt$\end{minipage}\begin{minipage}{4.5in}\begin{tikzpicture}

\node[circle,draw, minimum width=0.25in] at (0,0) (a) {$a$}; 
\node[circle,draw,minimum width=0.25in] at (3,0) (c) {$c$}; 
\node[circle,draw,minimum width=0.25in] at (1.5,1.5) (b) {$b$}; 

\path[->,draw,thick,medgreen] (b) to node[fill=white] {$8$} (c);
\path[->,draw,thick,blue] (c) to node[fill=white] {$12$} (a);
\path[->,draw,thick,red] (a) to node[fill=white] {$4$} (b);

\end{tikzpicture}
\end{minipage} 

\bigskip

\begin{center}\begin{minipage}{.65in}$6\mathbf{L}_Y$\end{minipage}\begin{minipage}{4.5in} {\setlength{\tabcolsep}{2.5pt}\begin{tabular}{cc!{\color{\gray}\vrule}cc!{\color{\gray}\vrule}cc!{\color{\gray}\vrule}cc!{\color{\gray}\vrule}cc!{\color{\gray}\vrule}cc!{\color{\gray}\vrule}cc!{\color{\gray}\vrule}cc!{\color{\gray}\vrule}cc!{\color{\gray}\vrule}cc!{\color{\gray}\vrule}cc!{\color{\gray}\vrule}cc}
$6$ & $6$ & $6$ & $6$ & $6$ & $6$ & $6$ & $6$ & $6$ & $6$ & $6$ & $6$ & $6$ & $6$ & $6$ & $6$ & $6$ & $6$ & $6$ & $6$ & $6$ & $6$ & $6$ & $6$    \\\hline
$a$ & $a$ &  $a$ & $a$ & $a$ & $a$ & $b$ & $b$ &  $b$ & $b$ & $b$ & $b$ & $c$ & $c$ &  $c$ & $c$ & $c$ & $c$ & $d$ & $d$ &  $d$ & $d$ & $d$ & $d$ \\
$b$ &  $b$ & $c$ & $c$ & $d$ & $d$		&$a$ & $a$ & $c$ & $c$ & $d$ & $d$	&$a$ & $a$ & $b$ & $b$ & $d$ & $d$	& $a$ & $a$ &  $b$ & $b$ & $c$ & $c$ \\
$c$ &  $d$ &  $b$ & $d$ & $b$ & $c$	 & $c$ & $d$ & $a$ & $d$ & $a$ & $c$	&$b$ & $d$ & $a$ & $d$ & $a$ & $b$	& $b$ &  $c$ & $a$ & $c$ & $a$ & $b$ \\
$d$ &  $c$ &  $d$ & $b$ & $c$ & $b$	& $d$ & $c$ & $d$ & $a$ & $c$ & $a$	&$d$ & $b$ & $d$ & $a$ & $b$ & $a$	& $c$ &  $b$ &  $c$ & $a$ & $b$ & $a$  \\
\end{tabular}} \end{minipage} 
\end{center}

\bigskip
\begin{center}\begin{minipage}{.65in}$\mathfrak{L}_{Y,6}(\wt)$\end{minipage}\begin{minipage}{4.5in} {\setlength{\tabcolsep}{2.5pt}\begin{tabular}{cc!{\color{\gray}\vrule}cccc!{\color{\gray}\vrule}cc!{\color{\gray}\vrule}cccc!{\color{\gray}\vrule}cc!{\color{\gray}\vrule}cc!{\color{\gray}\vrule}cc!{\color{\gray}\vrule}cccc!{\color{\gray}\vrule}cc!{\color{\gray}\vrule}cc!{\color{\gray}\vrule}cc!{\color{\gray}\vrule}cc}
$6$ & $6$ & $3$ & $3$ & $3$ & $3$ 					& $6$ & $6$ & $1$ & $1$ & $5$ & $5$ 					& $6$ & $6$ & $6$ & $6$ & $6$ & $6$ 	& $2$ & $2$ & $4$ & $4$ & $6$ & $6$ & $6$&$6$&$6$&$6$&$6$&$6$  \\\hline
$a$ & $a$ &  $\color{blue}{c}$ & $\color{blue}{c}$ &$a$ &$a$ & $a$ & $a$ 	& $\color{red}{a}$ & $\color{red}{a}$ & $b$ & $b$ & $b$ & $b$ & $b$ & $b$ 	& $c$ & $c$ &  $\color{medgreen}{b}$ & $\color{medgreen}{b}$ & $c$ & $c$ & $c$ & $c$ 	& $d$ & $d$ &  $d$ & $d$ & $d$ & $d$ \\
$b$ &  $b$ & $\color{blue}{a}$ & $\color{blue}{a}$ & $c$ & $c$ & $d$ & $d$	&$\color{red}{b}$ & $\color{red}{b}$ & $a$ & $a$ & $c$ & $c$ & $d$ & $d$	&$a$ & $a$ & $\color{medgreen}{c}$ & $\color{medgreen}{c}$ & $b$ & $b$ & $d$ & $d$	& $a$ & $a$ &  $b$ & $b$ & $c$ & $c$ \\
$c$ &  $d$ &  $b$ & $d$ &  $b$ & $d$ & $b$ & $c$		 			& $c$ & $d$ & $c$ & $d$ & $a$ & $d$ & $a$ & $c$					&$b$ & $d$ & $a$ & $d$ & $a$ & $d$ & $a$ & $b$	& $b$ &  $c$ & $a$ & $c$ & $a$ & $b$ \\
$d$ &  $c$ &  $d$ & $b$  &  $d$ & $b$ & $c$ & $b$					& $d$ & $c$  & $d$ & $c$ & $d$ & $a$ & $c$ & $a$					&$d$ & $b$ & $d$ & $a$ & $d$ & $a$ & $b$ & $a$	& $c$ &  $b$ &  $c$ & $a$ & $b$ & $a$  \\
\end{tabular}} \end{minipage} 
\end{center}

\bigskip
\begin{center}\begin{minipage}{.65in}$\mathfrak{P}_{Y,6}(\wt)$\end{minipage}\begin{minipage}{4.5in} {\setlength{\tabcolsep}{2.5pt}\begin{tabular}{cc!{\color{\gray}\vrule}cccc!{\color{\gray}\vrule}cc!{\color{\gray}\vrule}cccc!{\color{\gray}\vrule}cc!{\color{\gray}\vrule}cc!{\color{\gray}\vrule}cc!{\color{\gray}\vrule}cccc!{\color{\gray}\vrule}cc!{\color{\gray}\vrule}cc!{\color{\gray}\vrule}cc!{\color{\gray}\vrule}cc}

$6$ & $6$ & $3$ & $3$ & $3$ & $3$ 					& $6$ & $6$ & $1$ & $1$ & $5$ & $5$ 					& $6$ & $6$ & $6$ & $6$ & $6$ & $6$ 	& $2$ & $2$ & $4$ & $4$ & $6$ & $6$ & $6$&$6$&$6$&$6$&$6$&$6$  \\\hline
$a$ & $a$ &  $\color{blue}{c}$ & $\color{blue}{c}$ & $a$ & $a$ & $a$ & $a$ 	& $\color{red}{a}$ & $\color{red}{a}$ & $b$ & $b$ & $b$ & $b$ & $b$ & $b$ 	& $c$ & $c$ &  $\color{medgreen}{b}$ & $\color{medgreen}{b}$ & $c$ & $c$ & $c$ & $c$ 	& $a$ & $a$ &  $b$ & $b$ & $c$ & $c$ \\
$b$ &  $b$ & $\color{blue}{a}$ & $\color{blue}{a}$ & $c$ & $c$ & $b$ & $c$	&$\color{red}{b}$ & $\color{red}{b}$ & $a$ & $a$ & $c$ & $c$ & $a$ & $c$	&$a$ & $a$ & $\color{medgreen}{c}$ & $\color{medgreen}{c}$ & $b$ & $b$ & $a$ & $b$	& $b$ & $c$ &  $a$ & $c$ & $a$ & $b$ \\
$c$ &  $c$ &  $b$ & $b$ &  $b$ & $b$  & $c$ & $b$		 			& $c$ & $c$ & $c$ & $c$ & $a$ & $a$ & $c$ & $a$					&$b$ & $b$ & $a$ & $a$ & $a$ & $a$ & $b$ & $a$								& $c$ &  $b$ & $c$ & $a$ & $b$ & $a$ \\
\end{tabular}} \end{minipage}
\end{center}

\end{center}
\bigskip

\begin{center}\begin{minipage}{.65in}$\mathsf{Deb}(\wt)$\end{minipage}\begin{minipage}{4.5in} \begin{tabular}{cccccc}
$2$ & $2$ & $4$ & $4$ & $6$ & $6$   \\\hline
$\color{red}{a}$ & $c$ &  $\color{medgreen}{b}$ & $a$ & $\color{blue}{c}$ & $b$ \\
$\color{red}{b}$ &  $\color{red}{a}$ & $\color{medgreen}{c}$ & $\color{medgreen}{b}$ & $\color{blue}{a}$ & $\color{blue}{c}$ \\
$c$ &  $\color{red}{b}$ &  $a$ & $\color{medgreen}{c}$ & $b$ & $\color{blue}{a}$ \\
\end{tabular}\end{minipage} 
\end{center}

\caption{Illustration of Definitions \ref{RepDef} and \ref{McGarveyDef} with $Y=\{a,b,c,d\}$}\label{RepFigure}
\end{figure}

\begin{figure}[ht]

\begin{center}\begin{minipage}{.65in}$\wt$\end{minipage} \begin{minipage}{4.25in}\begin{tikzpicture}

\node[circle,draw, minimum width=0.25in] at (0,0) (a) {$a$}; 
\node[circle,draw,minimum width=0.25in] at (3,0) (c) {$c$}; 
\node[circle,draw,minimum width=0.25in] at (1.5,1.5) (b) {$b$}; 
\node[circle,draw,minimum width=0.25in] at (1.5,-1.5) (d) {$d$}; 

\path[->,draw,thick,medgreen] (b) to node[fill=white] {$8$} (c);
\path[->,draw,thick,blue] (c) to node[fill=white,pos=.7] {$12$} (a);
\path[->,draw,thick,red] (a) to node[fill=white] {$4$} (b);
\path[->,draw,thick,darkyellow] (b) to node[fill=white,pos=.7] {$24$} (d);
\path[->,draw,thick,purple] (d) to node[fill=white] {$16$} (a);
\path[->,draw,thick,orange] (d) to node[fill=white] {$20$} (c);

\end{tikzpicture}
\end{minipage}

\bigskip

\begin{center}\begin{minipage}{.65in}$6\mathbf{L}_Y$\end{minipage} \begin{minipage}{4.25in}
{\setlength{\tabcolsep}{2.75pt}\begin{tabular}{cc!{\color{\gray}\vrule}cc!{\color{\gray}\vrule}cc!{\color{\gray}\vrule}cc!{\color{\gray}\vrule}cc!{\color{\gray}\vrule}cc!{\color{\gray}\vrule}cc!{\color{\gray}\vrule}cc!{\color{\gray}\vrule}cc!{\color{\gray}\vrule}cc!{\color{\gray}\vrule}cc!{\color{\gray}\vrule}cc}
$6$ & $6$ & $6$ & $6$ & $6$ & $6$ & $6$ & $6$ & $6$ & $6$ & $6$ & $6$ & $6$ & $6$ & $6$ & $6$ & $6$ & $6$ & $6$ & $6$ & $6$ & $6$ & $6$ & $6$    \\\hline
$a$ & $a$ &  $a$ & $a$ & $a$ & $a$ & $b$ & $b$ &  $b$ & $b$ & $b$ & $b$ & $c$ & $c$ &  $c$ & $c$ & $c$ & $c$ & $d$ & $d$ &  $d$ & $d$ & $d$ & $d$ \\
$b$ &  $b$ & $c$ & $c$ & $d$ & $d$		&$a$ & $a$ & $c$ & $c$ & $d$ & $d$	&$a$ & $a$ & $b$ & $b$ & $d$ & $d$	& $a$ & $a$ &  $b$ & $b$ & $c$ & $c$ \\
$c$ &  $d$ &  $b$ & $d$ & $b$ & $c$	 & $c$ & $d$ & $a$ & $d$ & $a$ & $c$	&$b$ & $d$ & $a$ & $d$ & $a$ & $b$	& $b$ &  $c$ & $a$ & $c$ & $a$ & $b$ \\
$d$ &  $c$ &  $d$ & $b$ & $c$ & $b$	& $d$ & $c$ & $d$ & $a$ & $c$ & $a$	&$d$ & $b$ & $d$ & $a$ & $b$ & $a$	& $c$ &  $b$ &  $c$ & $a$ & $b$ & $a$  \\
\end{tabular}} 
\end{minipage}
\end{center}

\bigskip

\begin{center}\begin{minipage}{.65in}$\mathfrak{L}_{Y,6}(\wt)$ \\ $\mbox{ }\mbox{ }\mbox{ }\mbox{ }\rotatebox{90}{$=$}$ \\ $\mathfrak{P}_{Y,6}(\wt)$\end{minipage} \begin{minipage}{4.25in}
{\setlength{\tabcolsep}{1.6pt}\begin{tabular}{cc!{\color{\gray}\vrule}cccc!{\color{\gray}\vrule}cccc!{\color{\gray}\vrule}cccc!{\color{\gray}\vrule}cc!{\color{\gray}\vrule}cc!{\color{\gray}\vrule}cc!{\color{\gray}\vrule}cccc!{\color{\gray}\vrule}cccc!{\color{\gray}\vrule}cc!{\color{\gray}\vrule}cccc!{\color{\gray}\vrule}cc}
$6$ & $6$ & $3$ & $3$ & $3$ & $3$ 					& $4$ & $4$ & $2$ & $2$ & $1$ & $1$ 					& $5$ & $5$ & $6$ & $6$ & $6$ & $6$ 	& $6$ & $6$ & $2$ & $2$ & $4$ & $4$ & $5$ & $5$ & $1$ & $1$ & $6$ & $6$ & $6$ & $6$ & $0$ & $0$ & $6$ & $6$   \\\hline
$a$ & $a$ &  $\color{blue}{c}$ & $\color{blue}{c}$ & $a$ & $a$ & $\color{purple}{d}$ & $\color{purple}{d}$ 	& $a$ & $a$ & $\color{red}{a}$ & $\color{red}{a}$ & $b$ & $b$ & $b$ & $b$ & $b$ & $b$ 	& $c$ & $c$ &  $\color{medgreen}{b}$ & $\color{medgreen}{b}$ & $c$ & $c$  & $\color{orange}{d}$ & $\color{orange}{d}$ & $c$ & $c$	& $d$ & $d$ &  $\color{darkyellow}{b}$ & $\color{darkyellow}{b}$ & $d$ & $d$ & $d$ & $d$ \\
$b$ &  $b$ & $\color{blue}{a}$ & $\color{blue}{a}$ & $c$ & $c$ & $\color{purple}{a}$ & $\color{purple}{a}$	& $d$ & $d$ & $\color{red}{b}$ & $\color{red}{b}$ & $a$ &$a$ & $c$ & $c$ & $d$ & $d$	&$a$ & $a$ & $\color{medgreen}{c}$ & $\color{medgreen}{c}$ & $b$& $b$ & $\color{orange}{c}$ & $\color{orange}{c}$ & $d$ & $d$ 	& $a$ & $a$ &  $\color{darkyellow}{d}$ & $\color{darkyellow}{d}$ & $b$ & $b$ & $c$ & $c$ \\
$c$ &  $d$ &  $b$ & $d$ &  $b$ & $d$ & $b$ & $c$	& $b$ & $c$	 			& $c$ & $d$ & $c$ & $d$ & $a$ & $d$ & $a$ & $c$					&$b$ & $d$ & $a$ & $d$ & $a$ & $d$ & $a$ & $b$ & $a$ & $b$ 	& $b$ &  $c$ & $a$ & $c$  & $a$ & $c$& $a$ & $b$ \\
$d$ &  $c$ &  $d$ & $b$ & $d$ & $b$ & $c$ & $b$	& $c$ & $b$			& $d$ & $c$ & $d$ & $c$ & $d$ & $a$ & $c$ & $a$					&$d$ & $b$ & $d$ & $a$ & $d$ & $a$ & $b$ & $a$ & $b$ & $a$	& $c$ &  $b$ &  $c$ & $a$  &  $c$ & $a$ & $b$ & $a$  \\
\end{tabular}} 
\end{minipage}
\end{center}

\end{center}

\bigskip

\begin{center}\begin{minipage}{.65in}$\mathsf{Deb}(\wt)$\end{minipage}\begin{minipage}{4.25in}
{\setlength{\tabcolsep}{2.75pt}\begin{tabular}{cccccccccccc}
$2$ & $2$ & $4$ & $4$ & $6$ & $6$ 					& $8$ & $8$ & $10$ & $10$ & $12$ & $12$ \\\hline 
 $\color{red}{a}$ & $d$ & $\color{medgreen}{b}$ &$d$ & $\color{blue}{c}$ & $d$ & $\color{purple}{d}$ & $c$  & \color{orange}{$d$} & $b$& \color{darkyellow}{$b$}& $c$\\
 $\color{red}{b}$& $c$ & $\color{medgreen}{c}$ &$a$& $\color{blue}{a}$ & $b$ & $\color{purple}{a}$& $b$ & \color{orange}{$c$} & $a$& \color{darkyellow}{$d$}& $a$\\
 $c$& $\color{red}{a}$ & $a$ & $\color{medgreen}{b}$& $b$ & $\color{blue}{c}$& $b$& $\color{purple}{d}$ & $a$ & \color{orange}{$d$} & $a$& \color{darkyellow}{$b$}\\
 $d$& $\color{red}{b}$ & $d$ & $\color{medgreen}{c}$& $d$ & $\color{blue}{a}$& $c$& $\color{purple}{a}$ & $b$ & \color{orange}{$c$}& $c$& \color{darkyellow}{$d$}
\end{tabular}}
\end{minipage}
\end{center}

\caption{Illustration of Definitions \ref{RepDef} and \ref{McGarveyDef} with $Y=\{a,b,c,d\}$}\label{RepFigure2}
\end{figure}

We can regard the construction in Definition \ref{RepDef} as a special case of that in Definition \ref{RepDefWeighted}.

\begin{proposition}\label{RepSpecial} Given a finite $Y\subseteq\mathcal{X}$ and weak tournament $\wt$ with $X(\wt)\subseteq Y$, let $T_\psi$ be the weighted tournament based on $T$ with all weights being $\psi_Y$. Then $\mathfrak{P}_Y(\wt)=\mathfrak{P}_{Y,1}(\wt_\psi)$.
\end{proposition}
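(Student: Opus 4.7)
The plan is to observe that this proposition is essentially a definitional unfolding: it says that the simpler construction of Definition \ref{RepDef} is the $m=1$ special case of the weighted construction of Definition \ref{RepDefWeighted}, when each edge is given the minimum positive weight allowed by the parity constraint (i.e., $\psi_Y$). So I would not expect any real mathematical obstacle here; the entire content is bookkeeping.

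Concretely, I would proceed in two short steps. First, I would unfold $\mathfrak{L}_{Y,1}(\wt_\psi)$ using equation \eqref{LYmEq} with $m=1$. Because $\wt_\psi$ has the same vertex set and edge relation as $\wt$, and every edge of $\wt_\psi$ carries weight $\psi_Y = 1 \cdot \psi_Y$ (so the coefficient $k$ in the formula is $1$ for every summand), the expression collapses to
\[
\mathfrak{L}_{Y,1}(\wt_\psi) \;=\; \mathbf{L}_Y + \sum_{a\to b \text{ in } \wt_\psi}(\mathbf{L}_{ab\text{-}Y} - \mathbf{L}_{ba\text{-}Y}) \;=\; \mathbf{L}_Y + \sum_{a\to b \text{ in } \wt}(\mathbf{L}_{ab\text{-}Y} - \mathbf{L}_{ba\text{-}Y}),
\]
which is precisely $\mathfrak{L}_Y(\wt)$ by equation \eqref{LYEq}.

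Second, I would apply the restriction step. Since $X(\wt_\psi) = X(\wt)$ by construction, equations \eqref{PfromL} and \eqref{PYmEq} give
\[
\mathfrak{P}_{Y,1}(\wt_\psi) \;=\; \mathfrak{L}_{Y,1}(\wt_\psi)\big|_{X(\wt_\psi)} \;=\; \mathfrak{L}_Y(\wt)\big|_{X(\wt)} \;=\; \mathfrak{P}_Y(\wt),
\]
which is the desired identity. The only thing to double-check is that $\wt_\psi$ satisfies the hypotheses of Definition \ref{RepDefWeighted} for $m=1$, namely that each weight has the form $k\psi_Y$ with $k \in \mathbb{Z}$ and $k \leq m = 1$; this holds trivially since every weight is $1 \cdot \psi_Y$. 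No step is a real obstacle, so the write-up can be essentially a one-paragraph verification.
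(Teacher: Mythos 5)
Your proof is correct and follows exactly the paper's own argument: since every weight in $\wt_\psi$ is $\psi_Y$, the coefficient $k$ in Definition \ref{RepDefWeighted} is $1$ throughout, so with $m=1$ equation (\ref{LYmEq}) reduces to (\ref{LYEq}), and the restriction step is immediate from $X(\wt_\psi)=X(\wt)$. You merely spell out the restriction step and the hypothesis check that the paper leaves implicit, which is fine.
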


\begin{proof} Since all weights in $T_\psi$ are $\psi_Y$, the $k$ in Definition \ref{RepDefWeighted} is 1. Then since $m=1$, (\ref{LYmEq}) reduces to (\ref{LYEq}).
\end{proof}

Figures \ref{RepFigure} and \ref{RepFigure2} illustrate our construction from Definition \ref{RepDefWeighted}, contrasting it with Debord's from Definition \ref{McGarveyDef}. This construction allows us to represent any weighted weak tournament in the following class as the margin graph of a profile.

\begin{definition} Given a finite $Y\subseteq\mathcal{X}$ and $m\in\mathbb{Z}^+$, let $\mathbb{W}\mathbb{T}^w_{Y,m}$ be the set of all weighted weak tournaments $\wwt$ with $X(\wwt)\subseteq Y$ such that  each weight in $\wwt$ is of the form  $k\psi_Y$ for some $k\in\mathbb{Z}$ and less than or equal to $m\psi_Y$.\end{definition}

\newpage

\begin{proposition}\label{RepProp} For any finite $Y\subseteq\mathcal{X}$, $m\in\mathbb{Z}^+$, and $\wwt\in \mathbb{W}\mathbb{T}^w_{Y,m}$, we have $\wwt =\mathcal{M}(\mathfrak{P}_{Y,m}(\wwt))$.
\end{proposition}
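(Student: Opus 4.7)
The plan is to compute the margins in $\mathfrak{P}_{Y,m}(\wwt)$ directly from the formulas \eqref{LYmEq} and \eqref{PYmEq} using linearity of $\mu_{a,b}(\cdot)$, and to reduce everything to a single combinatorial calculation of the margins carried by one $ab$-$Y$-block. The key technical step will be to show, for any $a,b\in Y$ with $a\neq b$:
\begin{itemize}
\item[(i)] $\mu_{a,b}(m\mathbf{L}_Y)=0$, by the symmetry that for every linear order $L$ on $Y$ with $aLb$, the reverse of $L$ has $bLa$, and each such order appears exactly $m$ times in $m\mathbf{L}_Y$.
\item[(ii)] $\mu_{a,b}(\mathbf{L}_{ab\text{-}Y})=(|Y|-2)!=\psi_Y/2$ and $\mu_{a,b}(\mathbf{L}_{ba\text{-}Y})=-\psi_Y/2$, since every ranking in an $ab$-$Y$-block places $a$ directly above $b$, and there are exactly $(|Y|-2)!$ such rankings (one for each permutation of $Y\setminus\{a,b\}$ in positions $3,\dots,|Y|$).
\item[(iii)] For $\{x,y\}\neq\{a,b\}$, $\mu_{x,y}(\mathbf{L}_{ab\text{-}Y})=\mu_{x,y}(\mathbf{L}_{ba\text{-}Y})$. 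This is the crucial combinatorial observation: if both $x,y\in\{a,b\}$ the case is trivial; if exactly one of $x,y$ is in $\{a,b\}$, say $x\in\{a,b\}$ and $y\notin\{a,b\}$, then $x$ sits in one of the top two positions and $y$ in a bottom position in every ranking of either block, so the margin is $+\psi_Y/2$ in both cases; if neither $x$ nor $y$ lies in $\{a,b\}$, then permutations of $Y\setminus\{a,b\}$ in positions $3,\dots,|Y|$ split equally between those ranking $x$ above $y$ and those ranking $y$ above $x$, so both margins are~$0$.
\end{itemize}

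Combining (ii) and (iii) yields $\mu_{x,y}(\mathbf{L}_{ab\text{-}Y}-\mathbf{L}_{ba\text{-}Y})=\psi_Y$ when $(x,y)=(a,b)$, equals $-\psi_Y$ when $(x,y)=(b,a)$, and equals $0$ otherwise. Applying linearity of $\mu_{x,y}$ to \eqref{LYmEq} and using (i), it then follows that for each ordered pair $(x,y)$ of distinct elements of $Y$,
\[
\mu_{x,y}(\mathfrak{L}_{Y,m}(\wwt))=\begin{cases} k\psi_Y & \text{if } x\to y \text{ in } \wwt \text{ with weight } k\psi_Y, \\ -k\psi_Y & \text{if } y\to x \text{ in } \wwt \text{ with weight } k\psi_Y, \\ 0 & \text{otherwise,}\end{cases}
\]
where asymmetry of $\wwt$ ensures at most one of the first two cases applies. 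The hypothesis that every weight in $\wwt$ is of the form $k\psi_Y$ with $k\leq m$ also guarantees that $\mathfrak{L}_{Y,m}(\wwt)$ is an honest (non-negative) anonymous profile, since each $ba$-$Y$-block starts with multiplicity $m$ and loses only $k\leq m$ copies.

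The last step is to pass from $\mathfrak{L}_{Y,m}(\wwt)$ to its restriction $\mathfrak{P}_{Y,m}(\wwt)=\mathfrak{L}_{Y,m}(\wwt)|_{X(\wwt)}$. Since the margin $\mu_{x,y}$ between two candidates $x,y\in X(\wwt)$ is unchanged when we delete all other candidates from every ballot (the relative order of $x$ and $y$ is preserved), we have $\mu_{x,y}(\mathfrak{P}_{Y,m}(\wwt))=\mu_{x,y}(\mathfrak{L}_{Y,m}(\wwt))$ for all $x,y\in X(\wwt)$. This directly yields $\mathcal{M}(\mathfrak{P}_{Y,m}(\wwt))=\wwt$.

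I expect the only real obstacle to be verifying (iii) cleanly, because it is the symmetry argument that justifies treating the construction as surgery only on the pair $\{a,b\}$; once (iii) is in hand, the remainder of the proof is a bookkeeping computation. Proposition~\ref{RepProp0} will then follow as the special case $m=1$ via Proposition~\ref{RepSpecial}.
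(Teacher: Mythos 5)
Your proposal is correct and follows essentially the same route as the paper's proof: decompose via linearity of $\mu_{x,y}$ over the sum in \eqref{LYmEq}, show each block difference $\mathbf{L}_{ab\text{-}Y}-\mathbf{L}_{ba\text{-}Y}$ contributes margin $\psi_Y$ exactly on the pair $(a,b)$ and $0$ elsewhere by the same case analysis on how the measured pair overlaps the block pair, and conclude via the fact that restriction to $X(\wwt)$ preserves margins (the paper's Lemma \ref{RestrictCommute}). Your explicit remarks that $k\leq m$ guarantees nonnegativity of $\mathfrak{L}_{Y,m}(\wwt)$ and that the restriction step is needed are welcome additions the paper leaves implicit, but they do not change the argument.
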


\begin{proof} Recall that $\mathfrak{P}_{Y,m}$ is defined by (\ref{LYmEq})--(\ref{PYmEq}). Concerning (\ref{LYmEq}), observe that for any $a,b\in X(\wwt)$:
\begin{enumerate}
\item[(i)] $\mu_{a,b}(m\mathbf{L}_Y)=0$;
\item[(ii)] if $a\overset{k\psi}{\to}b$ in $\wwt$, then $\mu_{a,b}(k\mathbf{L}_{ab\,\mbox{-}Y} - k\mathbf{L}_{ba\,\mbox{-}Y})= k\psi_Y$;
\item[(iii)] for any $(c,d)\neq (a,b)$ and $(d,c)\neq (a,b)$, if $c\overset{k\psi}{\to}d$, then  \\ $\mu_{a,b}(k\mathbf{L}_{cd\,\mbox{-}Y} - k\mathbf{L}_{dc\,\mbox{-}Y})= 0$.
\end{enumerate}
Part (i) is immediate from the definition of $\mathbf{L}_Y$. For (ii), we have:
\begin{eqnarray*}
\mu_{a,b}(k\mathbf{L}_{ab\,\mbox{-}Y} - k\mathbf{L}_{ba\,\mbox{-}Y}) & =&   \mu_{a,b}(k\mathbf{L}_{ab\,\mbox{-}Y}) - \mu_{a,b}(k\mathbf{L}_{ba\,\mbox{-}Y})\\
& =&   k\mu_{a,b}(\mathbf{L}_{ab\,\mbox{-}Y}) - k\mu_{a,b}(\mathbf{L}_{ba\,\mbox{-}Y})\\
&=&  k \frac{\psi_Y}{2 } - ( - k \frac{\psi_Y}{2 })\\ 
& = &  k\psi_Y.
\end{eqnarray*}
For (iii), we have:
\begin{eqnarray*}
\mu_{a,b}(k\mathbf{L}_{cd\,\mbox{-}Y} - k\mathbf{L}_{dc\,\mbox{-}Y}) & = &  k\mu_{a,b}(\mathbf{L}_{cd\,\mbox{-}Y}) - k\mu_{a,b}(\mathbf{L}_{dc\,\mbox{-}Y}).
\end{eqnarray*}
Recall that $(c,d)\neq (a,b)$ and $(d,c)\neq (a,b)$. If $\{c,d\}\cap \{a,b\}=\varnothing$, so $a,b\in Y$, then $\mu_{a,b}(\mathbf{L}_{cd\,\mbox{-}Y})=\mu_{a,b}(\mathbf{L}_{dc\,\mbox{-}Y})=0$, so $\mu_{a,b}(k\mathbf{L}_{cd\,\mbox{-}Y} - k\mathbf{L}_{dc\,\mbox{-}Y})= 0$. Now suppose $|\{c,d\}\cap \{a,b\}|=1$. Then without loss of generality suppose $a\in \{c,d\}$ and $b\in Y$. Then $\mu_{a,b}(\mathbf{L}_{cd\,\mbox{-}Y})=\mu_{a,b}(\mathbf{L}_{dc\,\mbox{-}Y})=\psi_Y/2$, so again $\mu_{a,b}(k\mathbf{L}_{cd\,\mbox{-}Y} - k\mathbf{L}_{dc\,\mbox{-}Y})= 0$.

It follows from (i)--(iii) that for each pair $a,b$ with $a\to b$ in $\wwt$ with weight $k\psi_Y$, we have $\mu_{a,b}(\mathfrak{L}_{Y,m}(\wwt))=k\psi_Y$; moreover, for each pair $a,b$ with neither $a\to b$ nor $b\to a$ in $\wwt$, we have $\mu_{a,b}(\mathfrak{L}_{Y,m}(\wwt))=0$. Thus, $\wwt =\mathcal{M}(\mathfrak{P}_{Y,m}(\wwt))$.\end{proof}

There are two key differences between our construction and those of McGarvey and Debord, namely that representation commutes with restriction to a set of candidates and transposition of candidates, as proved in the following subsections.

\begin{figure}[h]
\begin{center}
\begin{minipage}{2in}\begin{center} {\footnotesize\setlength{\tabcolsep}{.5pt}\begin{tabular}{cccccccccccccccccccccccc}
$1$ & $1$ & $1$ & $1$ & $1$ & $1$ 					& $1$ & $1$ & $1$ & $1$ & $1$ & $1$ 					& $1$ & $1$ & $1$ & $1$ & $1$ & $1$ 	& $1$ & $1$ & $1$ & $1$ & $1$ & $1$    \\\hline
$a$ & $a$ &  $\color{blue}{c}$ & $\color{blue}{c}$ & $\color{purple}{d}$ & $\color{purple}{d}$ 	& $\color{red}{a}$ & $\color{red}{a}$ & $b$ & $b$ & $b$ & $b$ 	& $c$ & $c$ &  $\color{medgreen}{b}$ & $\color{medgreen}{b}$ & $\color{orange}{d}$ & $\color{orange}{d}$ 	& $d$ & $d$ &  $\color{darkyellow}{b}$ & $\color{darkyellow}{b}$ & $d$ & $d$ \\
$b$ &  $b$ & $\color{blue}{a}$ & $\color{blue}{a}$ & $\color{purple}{a}$ & $\color{purple}{a}$	&$\color{red}{b}$ & $\color{red}{b}$ & $c$ & $c$ & $d$ & $d$	&$a$ & $a$ & $\color{medgreen}{c}$ & $\color{medgreen}{c}$ & $\color{orange}{c}$ & $\color{orange}{c}$	& $a$ & $a$ &  $\color{darkyellow}{d}$ & $\color{darkyellow}{d}$ & $c$ & $c$ \\
$c$ &  $d$ &  $b$ & $d$ & $b$ & $c$		 			& $c$ & $d$ & $a$ & $d$ & $a$ & $c$					&$b$ & $d$ & $a$ & $d$ & $a$ & $b$	& $b$ &  $c$ & $a$ & $c$ & $a$ & $b$ \\
$d$ &  $c$ &  $d$ & $b$ & $c$ & $b$					& $d$ & $c$ & $d$ & $a$ & $c$ & $a$					&$d$ & $b$ & $d$ & $a$ & $b$ & $a$	& $c$ &  $b$ &  $c$ & $a$ & $b$ & $a$  \\
\end{tabular}} \end{center} \end{minipage}\qquad\begin{minipage}{.35in} $\overset{(\cdot)|_{ \{a,b,c\}}}{\Longrightarrow}$ \end{minipage}\qquad\begin{minipage}{2in} \begin{center} 

{\footnotesize\setlength{\tabcolsep}{.5pt}\begin{tabular}{cccccccccccccccccccccccc}
$1$ & $1$ & $1$ & $1$ & $1$ & $1$ 					& $1$ & $1$ & $1$ & $1$ & $1$ & $1$ 					& $1$ & $1$ & $1$ & $1$ & $1$ & $1$ 								& $1$ & $1$ & $1$ & $1$ & $1$ & $1$    \\\hline
$a$ & $a$ &  $\color{blue}{c}$ & $\color{blue}{c}$ & $a$ & $a$ 	& $\color{red}{a}$ & $\color{red}{a}$ & $b$ & $b$ & $b$ & $b$ 	& $c$ & $c$ &  $\color{medgreen}{b}$ & $\color{medgreen}{b}$ & $c$ & $c$ 	& $a$ & $a$ &  $b$ & $b$ & $c$ & $c$ \\
$b$ &  $b$ & $\color{blue}{a}$ & $\color{blue}{a}$ & $b$ & $c$	&$\color{red}{b}$ & $\color{red}{b}$ & $c$ & $c$ & $a$ & $c$	&$a$ & $a$ & $\color{medgreen}{c}$ & $\color{medgreen}{c}$ & $a$ & $b$	& $b$ & $c$ &  $a$ & $c$ & $a$ & $b$ \\
$c$ &  $c$ &  $b$ & $b$ & $c$ & $b$		 			& $c$ & $c$ & $a$ & $a$ & $c$ & $a$					&$b$ & $b$ & $a$ & $a$ & $b$ & $a$								& $c$ &  $b$ & $c$ & $a$ & $b$ & $a$ \\
\end{tabular}}  \end{center}\end{minipage}

\end{center}

\vspace{.1in}
\begin{minipage}{2in}\begin{center}$\quad\Uparrow_{\mathfrak{P}_{\{a,b,c,d\},1}(\cdot)}$\end{center}\end{minipage}\qquad\qquad\qquad\; \begin{minipage}{2in}\begin{center}  $\Uparrow_{\mathfrak{P}_{\{a,b,c,d\},1}(\cdot)}$\end{center}\end{minipage}
\vspace{.1in}
\begin{center}
\begin{minipage}{2in}\begin{center}
\begin{tikzpicture}

\node[circle,draw, minimum width=0.25in] at (0,0) (a) {$a$}; 
\node[circle,draw,minimum width=0.25in] at (3,0) (c) {$c$}; 
\node[circle,draw,minimum width=0.25in] at (1.5,1.5) (b) {$b$}; 
\node[circle,draw,minimum width=0.25in] at (1.5,-1.5) (d) {$d$}; 

\path[->,draw,thick,medgreen] (b) to node [fill=white]  {$4$} (c);
\path[->,draw,thick,blue] (c) to node [fill=white,pos=.7]  {$4$} (a);
\path[->,draw,thick,red] (a) to node [fill=white]  {$4$} (b);
\path[->,draw,thick,darkyellow] (b) to node [fill=white,pos=.7]  {$4$} (d);
\path[->,draw,thick,purple] (d) to node [fill=white]  {$4$} (a);
\path[->,draw,thick,orange] (d) to node [fill=white]  {$4$} (c);

\end{tikzpicture}
\end{center}\end{minipage} \qquad \begin{minipage}{.35in} $\overset{(\cdot)|_{ \{a,b,c\}}}{\Longrightarrow}$\end{minipage}\qquad  \begin{minipage}{2in}\begin{center}\begin{tikzpicture}

\node[circle,draw, minimum width=0.25in] at (0,0) (a) {$a$}; 
\node[circle,draw,minimum width=0.25in] at (3,0) (c) {$c$}; 
\node[circle,draw,minimum width=0.25in] at (1.5,1.5) (b) {$b$}; 

\path[->,draw,thick,medgreen] (b) to node [fill=white]  {$4$}  (c);
\path[->,draw,thick,blue] (c) to node [fill=white]  {$4$} (a);
\path[->,draw,thick,red] (a) to node [fill=white]  {$4$}  (b);

\end{tikzpicture}\end{center}\end{minipage}
\end{center}

\vspace{.1in}
\begin{minipage}{2in}\begin{center}$\quad\;\Downarrow_{\mathsf{Deb}(\cdot)}$\end{center}\end{minipage}\qquad\qquad\qquad\; \begin{minipage}{2in}\begin{center}  $\Downarrow_{\mathsf{Deb}(\cdot)}$ \end{center}\end{minipage}
\vspace{.1in}

\begin{center}
\begin{minipage}{2in}\begin{center}\vspace{.75in} {\footnotesize\setlength{\tabcolsep}{2pt}\begin{tabular}{cccccccccccc}
$2$ & $2$ & $2$ & $2$ & $2$ & $2$ 					& $2$ & $2$ & $2$ & $2$ & $2$ & $2$ \\\hline 
 $\color{red}{a}$ & $d$ & $\color{medgreen}{b}$ &$d$ & $\color{blue}{c}$ & $d$ & $\color{purple}{d}$ & $c$  & \color{orange}{$d$} & $b$& \color{darkyellow}{$b$}& $c$\\
 $\color{red}{b}$& $c$ & $\color{medgreen}{c}$ &$a$& $\color{blue}{a}$ & $b$ & $\color{purple}{a}$& $b$ & \color{orange}{$c$} & $a$& \color{darkyellow}{$d$}& $a$\\
 $c$& $\color{red}{a}$ & $a$ & $\color{medgreen}{b}$& $b$ & $\color{blue}{c}$& $b$& $\color{purple}{d}$ & $a$ & \color{orange}{$d$} & $a$& \color{darkyellow}{$b$}\\
 $d$& $\color{red}{b}$ & $d$ & $\color{medgreen}{c}$& $d$ & $\color{blue}{a}$& $c$& $\color{purple}{a}$ & $b$ & \color{orange}{$c$}& $c$& \color{darkyellow}{$d$}
\end{tabular}}\end{center} \end{minipage}\qquad\begin{minipage}{.35in}\vspace{.75in} $\overset{(\cdot)|_{ \{a,b,c\}}}{\Longrightarrow}$ \end{minipage}\qquad\begin{minipage}{2in} \begin{center} 

{\footnotesize\setlength{\tabcolsep}{2pt}\begin{tabular}{cccccc}
$2$ & $2$ & $2$ & $2$ & $2$ & $2$   \\\hline
$\color{red}{a}$ & $c$ &  $\color{medgreen}{b}$ & $a$ & $\color{blue}{c}$ & $b$ \\
$\color{red}{b}$ &  $\color{red}{a}$ & $\color{medgreen}{c}$ & $\color{medgreen}{b}$ & $\color{blue}{a}$ & $\color{blue}{c}$ \\
$c$ &  $\color{red}{b}$ &  $a$ & $\color{medgreen}{c}$ & $b$ & $\color{blue}{a}$ \\
\end{tabular}}
\vspace{.05in}\\$\rotatebox{90}{$\neq$}$\vspace{.05in}\\

{\footnotesize\setlength{\tabcolsep}{2pt}\begin{tabular}{cccccccccccc}
$2$ & $2$ & $2$ & $2$ & $2$ & $2$ 					& $2$ & $2$ & $2$ & $2$ & $2$ & $2$ \\\hline 
 $\color{red}{a}$ & $c$ & $\color{medgreen}{b}$ &$a$ & $\color{blue}{c}$ & $b$ & {\cellcolor{gray!25}$\color{purple}{a}$} & {\cellcolor{gray!25}$c$}  & {\cellcolor{gray!25}\color{orange}{$c$}} & {\cellcolor{gray!25}$b$}& {\cellcolor{gray!25}\color{darkyellow}{$b$}}& {\cellcolor{gray!25}$c$}\\
 $\color{red}{b}$& $\color{red}{a}$ & $\color{medgreen}{c}$ &$\color{medgreen}{b}$& $\color{blue}{a}$ & $\color{blue}{c}$ & {\cellcolor{gray!25}$b$} & {\cellcolor{gray!25}$b$} & {\cellcolor{gray!25}\color{orange}{$a$}} & {\cellcolor{gray!25}$a$}& {\cellcolor{gray!25}\color{darkyellow}{$a$}}& {\cellcolor{gray!25}$a$}\\
 $c$& $\color{red}{b}$ & $a$ & $\color{medgreen}{c}$& $b$ & $\color{blue}{a}$& {\cellcolor{gray!25}$c$} & {\cellcolor{gray!25}$\color{purple}{a}$} & {\cellcolor{gray!25}$b$} & {\cellcolor{gray!25}\color{orange}{$c$}} & {\cellcolor{gray!25}$c$} & {\cellcolor{gray!25}\color{darkyellow}{$b$}}
\end{tabular}} \end{center}\end{minipage}

\end{center}
\caption{Representation commutes with restriction (above), but the McGarvey/Debord construction does not commute with restriction (below).}\label{RestrictionFig}
\end{figure}
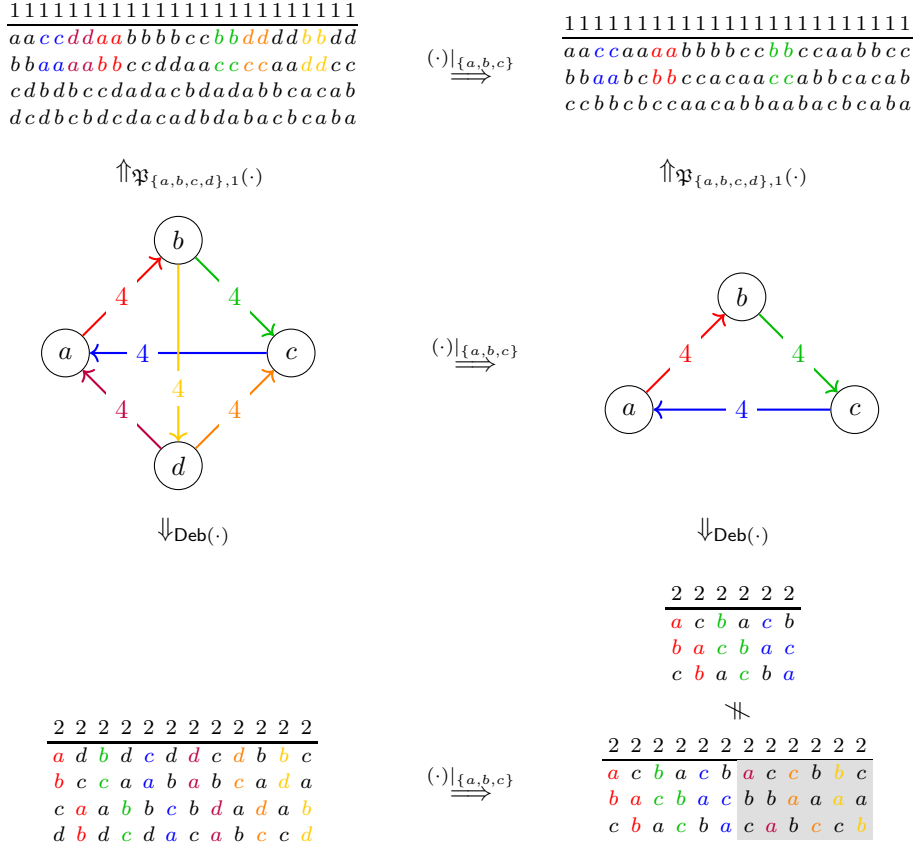

\subsection{Representation commutes with restriction} Given a binary relation $R$ on a set $X$ and $Y\subseteq X$, \textit{the restriction of $R$ to $Y$} is the binary relation $R|_{ Y}$ on $Y$ such that for all $a,b\in Y$, $(a,b)\in R|_{ Y}$ if and only if $(a,b)\in R$. 

Given a generalized anonymous profile $\mathbf{P}$ and $Y\subseteq X(\mathbf{P})$, the \textit{restriction of $\mathbf{P}$ to $Y$} is the profile $\mathbf{P}|_{ Y}$ with $X(\mathbf{P}|_{ Y})=Y$ such that for all $L\in\mathcal{L}(Y)$, \[\mathbf{P}|_{ Y}(L)=\sum_{L'\in\mathcal{L}(X(\mathbf{P})):\, L=L'|_{ Y}}\mathbf{P}(L').\] 

\noindent Given a weak tournament $\wt$ and $Z\subseteq X(\wt)$,  the \textit{restriction of $\wt$ to $Z$} is the directed graph  $\wt|_{ Z}$ whose set of vertices is $Y$ and whose edge relation is $\to\mid_{ Z}$.  The restriction of a weighted weak tournament is defined analogously, with the weight function restricted to $\to\mid_{ Z}$. Clearly restriction commutes with taking the majority or margin graph of a profile and with multiplication.

\begin{lemma}\label{RestrictCommute}Let $\mathbf{P}$ be a generalized anonymous profile and $\wwt$ a weighted weak tournament.
\begin{enumerate}
\item For any $\varnothing\neq Z\subseteq X(\mathbf{P})$,  
$M(\mathbf{P}|_{ Z})= M(\mathbf{P})|_{ Z}$ and $\mathcal{M}(\mathbf{P}|_{ Z})= \mathcal{M}(\mathbf{P})|_{ Z}$.
\item For any $\varnothing\neq Z\subseteq X(\wwt)$ and $n\in\mathbb{Z}^+$, $n(\wwt|_{ Z})=(n\wwt)|_{ Z}$.
\end{enumerate}\end{lemma}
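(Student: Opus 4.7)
The lemma has two parts, both essentially bookkeeping, so the plan is to reduce everything to a single margin-preservation identity and then read off the graph-equalities.

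\textbf{Plan for part (1).} The heart of the matter is the claim that for any $a,b \in Z$, the margin is invariant under restriction:
\[\mu_{a,b}(\mathbf{P}|_Z) \;=\; \mu_{a,b}(\mathbf{P}).\]
To see this, I would unfold the definition of $\mathbf{P}|_Z$ and swap the order of summation. For any $L \in \mathcal{L}(Z)$ and $a,b \in Z$, the condition $aLb$ depends only on the $Z$-restriction, so $\{L' \in \mathcal{L}(X(\mathbf{P})) : a L' b\}$ partitions (via the map $L' \mapsto L'|_Z$) into the disjoint classes indexed by $L \in \mathcal{L}(Z)$ with $aLb$. Hence
\[\sum_{L \in \mathcal{L}(Z):\, aLb} \mathbf{P}|_Z(L) \;=\; \sum_{L \in \mathcal{L}(Z):\, aLb}\,\sum_{L':\, L'|_Z = L}\mathbf{P}(L') \;=\; \sum_{L' \in \mathcal{L}(X(\mathbf{P})):\, aL'b}\mathbf{P}(L'),\]
and similarly with $b,a$ swapped. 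Subtracting gives the identity.

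\textbf{From margins to graphs.} With the identity in hand, both graph statements in part (1) are immediate. For the majority graph: $a \to b$ in $M(\mathbf{P}|_Z)$ iff $\mu_{a,b}(\mathbf{P}|_Z) > 0$ iff $\mu_{a,b}(\mathbf{P}) > 0$ iff $a \to b$ in $M(\mathbf{P})$, which (since $a,b \in Z$) is exactly the condition for $a \to b$ in $M(\mathbf{P})|_Z$. The vertex set is $Z$ on both sides. The argument for $\mathcal{M}$ is the same, additionally noting that the edge-weights coincide because they equal $\mu_{a,b}$ on both sides.

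\textbf{Plan for part (2).} This is purely a matter of unpacking the two operations. Both $n(\wwt|_Z)$ and $(n\wwt)|_Z$ have vertex set $Z$; an edge $a \to b$ appears in either graph iff $a \to b$ in $\wwt$ with $a,b \in Z$; and its weight is $n \cdot w(a,b)$ in both cases, since scaling acts edgewise and restriction simply forgets edges with an endpoint outside $Z$. I would dispatch this in one or two lines by direct comparison of definitions.

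\textbf{Main obstacle.} There is no real obstacle; the only mild subtlety is being explicit about why $aLb$ is well-defined on $L'|_Z$ when $a,b \in Z$, which justifies the summation swap in part (1). Everything else is formal.
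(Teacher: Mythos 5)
Your proof is correct, and since the paper states this lemma without proof (introducing it with ``Clearly restriction commutes\dots''), your argument is precisely the routine verification the authors deem obvious: the margin-invariance identity $\mu_{a,b}(\mathbf{P}|_Z)=\mu_{a,b}(\mathbf{P})$ via partitioning $\mathcal{L}(X(\mathbf{P}))$ into fibers of $L'\mapsto L'|_Z$, from which both graph equalities and the scaling claim in part (2) follow by unfolding definitions. Nothing is missing; your summation-swap justification (that $aL'b$ depends only on $L'|_Z$ when $a,b\in Z$) is exactly the one subtlety worth making explicit.
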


Next we show that representation commutes with restriction, as in Figure \ref{RestrictionFig}.

\begin{proposition}\label{SubProp} For any finite $Y\subseteq\mathcal{X}$, $\wt\in\mathbb{W}\mathbb{T}_Y$, $m\in\mathbb{Z}^+$,  $\wwt\in \mathbb{W}\mathbb{T}^w_{Y,m}$: 
\begin{enumerate}
\item\label{SubProp1} if  ${\varnothing\neq Z\subseteq X(\wt)}$, then  $\mathfrak{P}_Y(\wt|_{ Z})=\mathfrak{P}_Y(\wt)|_{ Z}$;
\item\label{SubProp2} if $\varnothing\neq Z\subseteq X(\wwt)$, then $\mathfrak{P}_{Y,m}(\wwt|_{ Z})=\mathfrak{P}_{Y,m}(\wwt)|_{ Z}$.
\end{enumerate}
\end{proposition}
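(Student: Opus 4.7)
The plan is to reduce part (\ref{SubProp1}) to part (\ref{SubProp2}) via Proposition \ref{RepSpecial}, and then to prove part (\ref{SubProp2}) by unfolding equation (\ref{LYmEq}) and showing that the contributions from edges not fully contained in $Z$ vanish after restriction to $Z$.

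For the reduction of (\ref{SubProp1}) to (\ref{SubProp2}): by Proposition \ref{RepSpecial}, $\mathfrak{P}_Y(\wt) = \mathfrak{P}_{Y,1}(\wt_\psi)$, where $\wt_\psi$ is the weighted tournament based on $\wt$ with every edge weighted $\psi_Y$. Since $(\wt|_Z)_\psi = (\wt_\psi)|_Z$, part (\ref{SubProp1}) follows at once from part (\ref{SubProp2}) applied to $\wt_\psi$ with $m=1$.

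For part (\ref{SubProp2}): since $Z \subseteq X(\wwt) \subseteq Y$, equation (\ref{PYmEq}) gives $\mathfrak{P}_{Y,m}(\wwt)|_Z = \mathfrak{L}_{Y,m}(\wwt)|_Z$ and $\mathfrak{P}_{Y,m}(\wwt|_Z) = \mathfrak{L}_{Y,m}(\wwt|_Z)|_Z$, so it suffices to prove $\mathfrak{L}_{Y,m}(\wwt)|_Z = \mathfrak{L}_{Y,m}(\wwt|_Z)|_Z$. Applying (\ref{LYmEq}) and noting that restriction to $Z$ distributes over sums and scalar multiples of generalized anonymous profiles (immediate from the definitions of $+$ and $n\mathbf{P}$), the difference $\mathfrak{L}_{Y,m}(\wwt)|_Z - \mathfrak{L}_{Y,m}(\wwt|_Z)|_Z$ reduces to
\[\sum_{\substack{a \,\overset{k\psi}{\to}\, b \text{ in } \wwt \\ \{a,b\} \not\subseteq Z}} k\bigl(\mathbf{L}_{ab\,\mbox{-}Y}|_Z - \mathbf{L}_{ba\,\mbox{-}Y}|_Z\bigr),\]
since the edges of $\wwt|_Z$ are precisely those edges of $\wwt$ with both endpoints in $Z$, and the $m\mathbf{L}_Y|_Z$ terms cancel.

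The crux is therefore to show that $\mathbf{L}_{ab\,\mbox{-}Y}|_Z = \mathbf{L}_{ba\,\mbox{-}Y}|_Z$ whenever $\{a,b\} \not\subseteq Z$, which I would argue by cases on how many of $a, b$ lie in $Z$. If neither lies in $Z$, then both a linear order of the form $aby_1\ldots y_n$ and one of the form $bay_1\ldots y_n$ restrict to $Z$ by taking the relative order of the $y_i$ that lie in $Z$; since the number of extensions of any given $L_0 \in \mathcal{L}(Z)$ to a linear order on $Y \setminus \{a,b\}$ depends only on $|Y|$ and $|Z|$, the two blocks produce identical restricted profiles. If exactly one of $a, b$ lies in $Z$ (say $a \in Z$ and $b \notin Z$), then both $aby_1\ldots y_n$ and $bay_1\ldots y_n$ restrict to a linear order on $Z$ with $a$ first, followed by the relative order of the $y_i$ in $Z \setminus \{a\}$, so again the counts agree; the symmetric subcase $b \in Z$, $a \notin Z$ is identical. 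The main obstacle is keeping track of how the leading two letters $a, b$ behave under restriction in each subcase, but once the cases are laid out the combinatorics is routine.
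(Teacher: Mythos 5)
Your proof is correct, and its engine is the same as the paper's: after noting that restriction distributes over sums and cancelling the $m\mathbf{L}_Y$ terms and the contributions from edges inside $Z$, everything reduces to the identity $(\mathbf{L}_{ab\,\mbox{-}Y})|_{Z} = (\mathbf{L}_{ba\,\mbox{-}Y})|_{Z}$ for $\{a,b\}\not\subseteq Z$, which is exactly what the paper calls the ``key observation.'' You depart from the paper in two ways, both defensible. First, the paper proves part (\ref{SubProp1}) directly by a chain of equalities and dismisses part (\ref{SubProp2}) as ``essentially the same but with more notation''; you invert this, proving (\ref{SubProp2}) and deriving (\ref{SubProp1}) via Proposition \ref{RepSpecial} together with $(\wt|_{Z})_\psi = (\wt_\psi)|_{Z}$. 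This decomposition eliminates the duplicated argument entirely, and the side conditions check out: $\wt_\psi\in\mathbb{W}\mathbb{T}^w_{Y,1}$, restriction does not disturb membership in $\mathbb{W}\mathbb{T}^w_{Y,m}$ (weights are unchanged), and $\psi_Y$ depends only on the fixed $Y$, not on the vertex set, so the same $\psi$ appears on both sides. Second, the paper asserts the key observation without proof, whereas your case analysis supplies one, and it is sound; in both cases you could phrase it even more directly as the bijection $aby_1\dots y_n \mapsto bay_1\dots y_n$ (tail fixed) between the two blocks, which preserves restriction to $Z$ whenever $\{a,b\}\not\subseteq Z$ --- this subsumes your extension-counting argument in the $a,b\notin Z$ case. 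In short: the paper's route buys a self-contained displayed computation for part (\ref{SubProp1}); yours buys economy across the two parts and makes fully explicit the one step the paper leaves to the reader.
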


\begin{proof} We give only the proof for (\ref{SubProp1}), as the proof for (\ref{SubProp2}) is essentially the same but with more notation. The key observation is that if $\{a,b\}\not\subseteq Z$, then $(\mathbf{L}_{ab\,\mbox{-}Y})|_{ Z} = (\mathbf{L}_{ba\,\mbox{-}Y})|_{ Z}$. Then we have
{\renewcommand{\arraystretch}{1.7}
\begin{center}
\begin{longtable}{rcll} 
$\mathfrak{P}_Y(\wt)|_{ Z} $ &=& $(\mathfrak{L}_Y(\wt)|_{ X(\wt)})|_{ Z}$ by (\ref{PfromL})\\
&=& $\mathfrak{L}_{Y}(\wt)|_{ Z}$ since $Z\subseteq X(T)$\\
 &=& $\big(\mathbf{L}_Y + \underset{\mbox{ {\footnotesize $a\to b$ in $T$}}}{\sum} (\mathbf{L}_{ab\,\mbox{-}Y} - \mathbf{L}_{ba\,\mbox{-}Y})\big)|_{ Z}$ \\
&=& $\big(\mathbf{L}_Y + \underset{\{a,b\}\subseteq Z}{\underset{\mbox{ {\footnotesize $a\to b$ in $T$}}}{\sum}} (\mathbf{L}_{ab\,\mbox{-}Y} - \mathbf{L}_{ba\,\mbox{-}Y})+$ \\
&& \qquad\quad$\underset{\{a,b\}\not\subseteq Z}{\underset{\mbox{ {\footnotesize $a\to b$ in $T$}}}{\sum}} (\mathbf{L}_{ab\,\mbox{-}Y} - \mathbf{L}_{ba\,\mbox{-}Y}) \big)|_{ Z}$\\
&=& $\big(\mathbf{L}_Y + \underset{\{a,b\}\subseteq Z}{\underset{\mbox{ {\footnotesize $a\to b$ in $T$}}}{\sum}} (\mathbf{L}_{ab\,\mbox{-}Y} - \mathbf{L}_{ba\,\mbox{-}Y})\big)|_{ Z}+$\\
&& \qquad\quad$\underset{\{a,b\}\not\subseteq Z}{\underset{\mbox{ {\footnotesize $a\to b$ in $T$}}}{\sum}} ((\mathbf{L}_{ab\,\mbox{-}Y})|_{ Z} - (\mathbf{L}_{ba\,\mbox{-}Y})|_{ Z})$\\
&=& $\big(\mathbf{L}_Y + \underset{\mbox{ {\footnotesize $a\to b$ in $T|_{ Z}$}}}{\sum} (\mathbf{L}_{ab\,\mbox{-}Y} - \mathbf{L}_{ba\,\mbox{-}Y})\big)|_{ Z}$\\
&=& $\mathfrak{L}_{Y}(\wt|_{ Z})|_{ Z}$ \\
&=& $\mathfrak{L}_Y(\wt|_{ Z})|_{ X(\wt|_{ Z})}$ since $X(\wt|_{ Z})=Z$ \\
&=& $\mathfrak{P}_Y(\wt|_{ Z})$ by (\ref{PfromL}). \qedhere
\end{longtable}
\end{center}}
\end{proof}

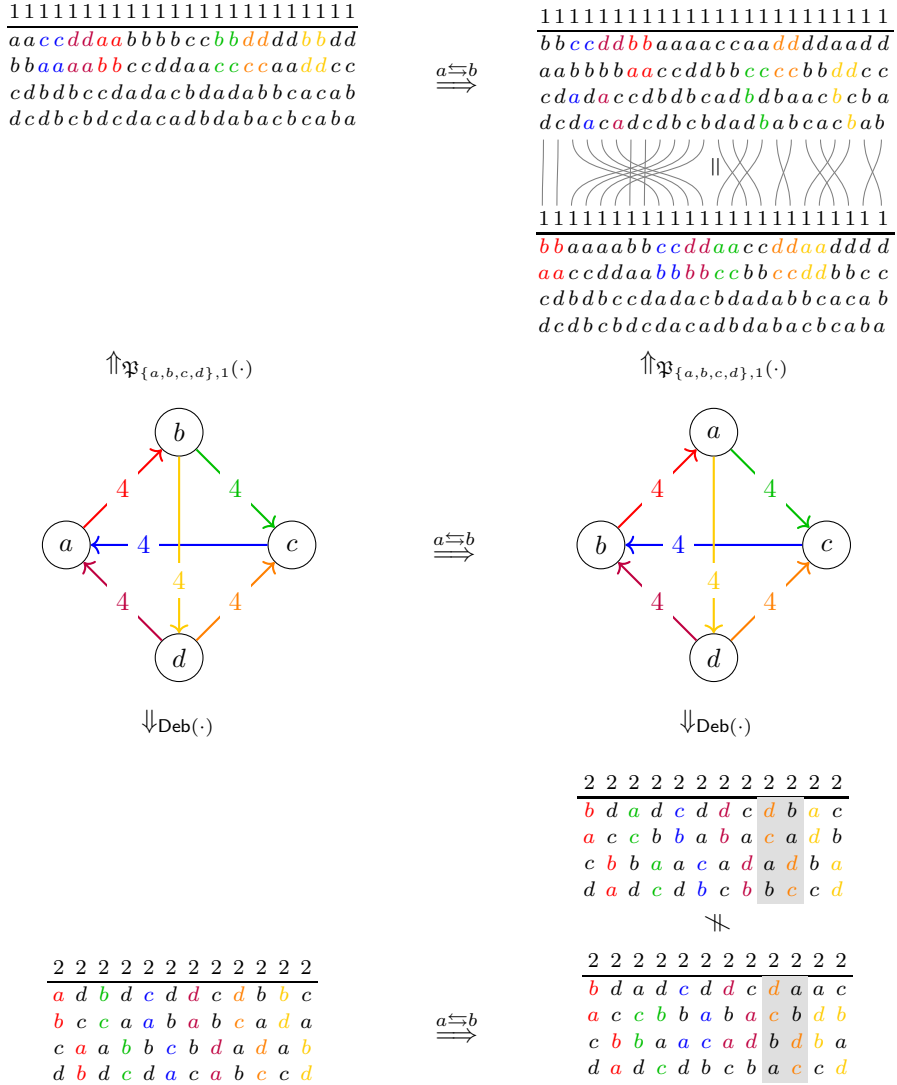
\begin{figure}[ht]

\begin{minipage}{2in}\begin{center}\vspace{-1.05in} {\footnotesize\setlength{\tabcolsep}{.5pt}\begin{tabular}{cccccccccccccccccccccccc}
$1$ & $1$ & $1$ & $1$ & $1$ & $1$ 					& $1$ & $1$ & $1$ & $1$ & $1$ & $1$ 					& $1$ & $1$ & $1$ & $1$ & $1$ & $1$ 	& $1$ & $1$ & $1$ & $1$ & $1$ & $1$    \\\hline
$a$ & $a$ &  $\color{blue}{c}$ & $\color{blue}{c}$ & $\color{purple}{d}$ & $\color{purple}{d}$ 	& $\color{red}{a}$ & $\color{red}{a}$ & $b$ & $b$ & $b$ & $b$ 	& $c$ & $c$ &  $\color{medgreen}{b}$ & $\color{medgreen}{b}$ & $\color{orange}{d}$ & $\color{orange}{d}$ 	& $d$ & $d$ &  $\color{darkyellow}{b}$ & $\color{darkyellow}{b}$ & $d$ & $d$ \\
$b$ &  $b$ & $\color{blue}{a}$ & $\color{blue}{a}$ & $\color{purple}{a}$ & $\color{purple}{a}$	&$\color{red}{b}$ & $\color{red}{b}$ & $c$ & $c$ & $d$ & $d$	&$a$ & $a$ & $\color{medgreen}{c}$ & $\color{medgreen}{c}$ & $\color{orange}{c}$ & $\color{orange}{c}$	& $a$ & $a$ &  $\color{darkyellow}{d}$ & $\color{darkyellow}{d}$ & $c$ & $c$ \\
$c$ &  $d$ &  $b$ & $d$ & $b$ & $c$		 			& $c$ & $d$ & $a$ & $d$ & $a$ & $c$					&$b$ & $d$ & $a$ & $d$ & $a$ & $b$	& $b$ &  $c$ & $a$ & $c$ & $a$ & $b$ \\
$d$ &  $c$ &  $d$ & $b$ & $c$ & $b$					& $d$ & $c$ & $d$ & $a$ & $c$ & $a$					&$d$ & $b$ & $d$ & $a$ & $b$ & $a$	& $c$ &  $b$ &  $c$ & $a$ & $b$ & $a$  \\
\end{tabular}}\end{center} \end{minipage}  \qquad \begin{minipage}{.2in}\vspace{-.95in}$\overset{a\leftrightarrows b}{\Longrightarrow}$ \end{minipage}\qquad \begin{minipage}{2in} \begin{center} {\footnotesize\setlength{\tabcolsep}{.5pt} \begin{tabular}{cccccccccccccccccccccccc}

$1$ & $1$ & $1$ & $1$ & $1$ & $1$ 											& $1$ & $1$ & $1$ & $1$ & $1$ & $1$ 					& $1$ & $1$ & $1$ & $1$ & $1$ & $1$ 														& $1$ & $1$ & $1$ & $1$ & $1$ & $1$    \\\hline
$b$ & $b$ &  $\color{blue}{c}$ & $\color{blue}{c}$ & $\color{purple}{d}$ & $\color{purple}{d}$ 	& $\color{red}{b}$ & $\color{red}{b}$ & $a$ & $a$ & $a$ & $a$ 	& $c$ & $c$ &  $a$ & $a$ & $\color{orange}{d}$ & $\color{orange}{d}$ 								& $d$ & $d$ &  $a$ & $a$ & $d$ & $d$ \\
$a$ &  $a$ & $b$ & $b$ & $b$ & $b$											&$\color{red}{a}$ & $\color{red}{a}$ & $c$ & $c$ & $d$ & $d$	&$b$ & $b$ & $\color{medgreen}{c}$ & $\color{medgreen}{c}$ & $\color{orange}{c}$ & $\color{orange}{c}$	& $b$ & $b$ &  $\color{darkyellow}{d}$ & $\color{darkyellow}{d}$ & $c$ & $c$ \\
$c$ &  $d$ &  $\color{blue}{a}$ & $d$ & $\color{purple}{a}$ & $c$		 				& $c$ & $d$ & $b$ & $d$ & $b$ & $c$					&$a$ & $d$ & $\color{medgreen}{b}$ & $d$ & $b$ & $a$											& $a$ &  $c$ & $\color{darkyellow}{b}$ & $c$ & $b$ & $a$ \\
$d$ &  $c$ &  $d$ & $\color{blue}{a}$ & $c$ & $\color{purple}{a}$						& $d$ & $c$ & $d$ & $b$ & $c$ & $b$					&$d$ & $a$ & $d$ & $\color{medgreen}{b}$ & $a$ & $b$											& $c$ &  $a$ &  $c$ & $\color{darkyellow}{b}$ & $a$ & $b$ \vspace{-.1in}   \\

 \tikzmark{u1}$\,$ &  \tikzmark{u2}$\,$ &  \tikzmark{u3}$\,$ &\tikzmark{u4}$\,$&\tikzmark{u5}$\,$&\tikzmark{u6}$\,$&\tikzmark{u7}$\,$&\tikzmark{u8}$\,$&\tikzmark{u9}$\,$&\tikzmark{u10}$\,$&\tikzmark{u11}$\,$&\tikzmark{u12}$\,$&\tikzmark{u13}$\,$&\tikzmark{u14}$\,$&\tikzmark{u15}$\,$&\tikzmark{u16}$\,$&\tikzmark{u17}$\,$&\tikzmark{u18}$\,$&\tikzmark{u19}$\,$&\tikzmark{u20}$\,$&\tikzmark{u21}$\,$&\tikzmark{u22}$\,$&\tikzmark{u23}$\,$&\tikzmark{u24}$\,$
\end{tabular}\,}\vspace{.05in}\\$\rotatebox{90}{=}$\vspace{.05in} \\ {\footnotesize\setlength{\tabcolsep}{.5pt} \begin{tabular}{cccccccccccccccccccccccc}
  \tikzmark{d1}$\,$ &  \tikzmark{d2}$\,$ &  \tikzmark{d3}$\,$& \tikzmark{d4}$\,$& \tikzmark{d5}$\,$& \tikzmark{d6}$\,$& \tikzmark{d7}$\,$& \tikzmark{d8}$\,$& \tikzmark{d9}$\,$&\tikzmark{d10}$\,$&\tikzmark{d11}$\,$&\tikzmark{d12}$\,$&\tikzmark{d13}$\,$&\tikzmark{d14}$\,$&\tikzmark{d15}$\,$&\tikzmark{d16}$\,$&\tikzmark{d17}$\,$&\tikzmark{d18}$\,$&\tikzmark{d19}$\,$&\tikzmark{d20}$\,$&\tikzmark{d21}$\,$&\tikzmark{d22}$\,$&\tikzmark{d23}$\,$&\tikzmark{d24}$\,$\vspace{-.05in}\\

$1$ & $1$ & $1$ & $1$ & $1$ & $1$ 						& $1$ & $1$ & $1$ & $1$ & $1$ & $1$ 			& $1$ & $1$ & $1$ & $1$ & $1$ & $1$ 				& $1$ & $1$ & $1$ & $1$ & $1$ & $1$    \\\hline
$\color{red}{b}$ & $\color{red}{b}$ &  $a$ & $a$ & $a$ & $a$ 		& $b$ & $b$ &  $\color{blue}{c}$ & $\color{blue}{c}$ & $\color{purple}{d}$ & $\color{purple}{d}$ 			& $\color{medgreen}{a}$ & $\color{medgreen}{a}$ &  $c$ & $c$ & $\color{orange}{d}$ & $\color{orange}{d}$ 				& $\color{darkyellow}{a}$ & $\color{darkyellow}{a}$ &  $d$ & $d$ & $d$ & $d$ \\
$\color{red}{a}$ &  $\color{red}{a}$ & $c$ & $c$ & $d$ & $d$		&$a$ & $a$ & $\color{blue}{b}$ & $\color{blue}{b}$ & $\color{purple}{b}$ & $\color{purple}{b}$			&$\color{medgreen}{c}$ & $\color{medgreen}{c}$ & $b$ & $b$ & $\color{orange}{c}$ & $\color{orange}{c}$				& $\color{darkyellow}{d}$ & $\color{darkyellow}{d}$ &  $b$ & $b$ & $c$ & $c$ \\
$c$ &  $d$ &  $b$ & $d$ & $b$ & $c$						 & $c$ & $d$ & $a$ & $d$ & $a$ & $c$			&$b$ & $d$ & $a$ & $d$ & $a$ & $b$				& $b$ &  $c$ & $a$ & $c$ & $a$ & $b$ \\
$d$ &  $c$ &  $d$ & $b$ & $c$ & $b$						& $d$ & $c$ & $d$ & $a$ & $c$ & $a$			&$d$ & $b$ & $d$ & $a$ & $b$ & $a$				& $c$ &  $b$ &  $c$ & $a$ & $b$ & $a$ \vspace{-.1in}  \\
 
\end{tabular}$\,$}  \end{center}\begin{tikzpicture}
    [
      remember picture,
      overlay,
      -latex,
      color=gray,
      yshift=1ex,
      shorten >=1pt,
      shorten <=1pt,
    ]
    \draw[-] ({pic cs:u1}) to [out=270, in=90] ({pic cs:d1});
    \draw[-]  ({pic cs:u2}) to [out=270, in=90] ({pic cs:d2});
     \draw[-]  ({pic cs:u3}) to [out=270, in=90] ({pic cs:d9});
          \draw[-]  ({pic cs:u4}) to [out=270, in=90] ({pic cs:d10});
               \draw[-]  ({pic cs:u5}) to [out=270, in=90] ({pic cs:d11});
          \draw[-]  ({pic cs:u6}) to [out=270, in=90] ({pic cs:d12});
                         \draw[-]  ({pic cs:u7}) to [out=270, in=90] ({pic cs:d7});
          \draw[-]  ({pic cs:u8}) to [out=270, in=90] ({pic cs:d8});
          
                         \draw[-]  ({pic cs:u9}) to [out=270, in=90] ({pic cs:d3});
          \draw[-]  ({pic cs:u10}) to [out=270, in=90] ({pic cs:d4});
                         \draw[-]  ({pic cs:u11}) to [out=270, in=90] ({pic cs:d5});
          \draw[-]  ({pic cs:u12}) to [out=270, in=90] ({pic cs:d6});
          
                                   \draw[-]  ({pic cs:u13}) to [out=270, in=90] ({pic cs:d15});
          \draw[-]  ({pic cs:u14}) to [out=270, in=90] ({pic cs:d16});
          
                                             \draw[-]  ({pic cs:u15}) to [out=270, in=90] ({pic cs:d13});
          \draw[-]  ({pic cs:u16}) to [out=270, in=90] ({pic cs:d14});
          
                                                       \draw[-]  ({pic cs:u17}) to [out=270, in=90] ({pic cs:d18});
          \draw[-]  ({pic cs:u18}) to [out=270, in=90] ({pic cs:d17});
          
                                                                 \draw[-]  ({pic cs:u19}) to [out=270, in=90] ({pic cs:d21});
          \draw[-]  ({pic cs:u20}) to [out=270, in=90] ({pic cs:d22});
          
                                                                           \draw[-]  ({pic cs:u21}) to [out=270, in=90] ({pic cs:d19});
          \draw[-]  ({pic cs:u22}) to [out=270, in=90] ({pic cs:d20});
          
                                                                                     \draw[-]  ({pic cs:u23}) to [out=270, in=90] ({pic cs:d24});
          \draw[-]  ({pic cs:u24}) to [out=270, in=90] ({pic cs:d23});
          
  \end{tikzpicture} \end{minipage}

\vspace{.1in}
\begin{minipage}{2.2in}\begin{center}$\Uparrow_{\mathfrak{P}_{\{a,b,c,d\},1}(\cdot)}\;\;$ \end{center}\end{minipage}\qquad\quad\; \begin{minipage}{2.2in}\begin{center}  $\;\;\;$$\Uparrow_{\mathfrak{P}_{\{a,b,c,d\},1}(\cdot)}$ \end{center}\end{minipage}
\vspace{.1in}

\begin{center}
\begin{minipage}{2in}
\begin{tikzpicture}

\node[circle,draw, minimum width=0.25in] at (0,0) (a) {$a$}; 
\node[circle,draw,minimum width=0.25in] at (3,0) (c) {$c$}; 
\node[circle,draw,minimum width=0.25in] at (1.5,1.5) (b) {$b$}; 
\node[circle,draw,minimum width=0.25in] at (1.5,-1.5) (d) {$d$}; 

\path[->,draw,thick,medgreen] (b) to node [fill=white]  {$4$} (c);
\path[->,draw,thick,blue] (c) to node [fill=white,pos=.7]   {$4$} (a);
\path[->,draw,thick,red] (a) to node [fill=white]  {$4$} (b);
\path[->,draw,thick,darkyellow] (b) to node [fill=white,pos=.7]  {$4$} (d);
\path[->,draw,thick,purple] (d) to node [fill=white]  {$4$} (a);
\path[->,draw,thick,orange] (d) to node [fill=white]  {$4$} (c);

\end{tikzpicture}\end{minipage} \begin{minipage}{.475in}$\overset{a\leftrightarrows b}{\Longrightarrow}$\end{minipage}\qquad \begin{minipage}{1.475in}\begin{tikzpicture}

\node[circle,draw, minimum width=0.25in] at (0,0) (a) {$b$}; 
\node[circle,draw,minimum width=0.25in] at (3,0) (c) {$c$}; 
\node[circle,draw,minimum width=0.25in] at (1.5,1.5) (b) {$a$}; 
\node[circle,draw,minimum width=0.25in] at (1.5,-1.5) (d) {$d$}; 

\path[->,draw,thick,medgreen] (b) to node [fill=white]  {$4$}  (c);
\path[->,draw,thick,blue] (c) to node [fill=white,pos=.7]  {$4$}  (a);
\path[->,draw,thick,red] (a) to node [fill=white]  {$4$}  (b);
\path[->,draw,thick,darkyellow] (b) to node [fill=white,pos=.7]  {$4$}  (d);
\path[->,draw,thick,purple] (d) to node [fill=white]  {$4$}  (a);
\path[->,draw,thick,orange] (d) to node [fill=white]  {$4$} (c);

\end{tikzpicture}\end{minipage}
\end{center}

\vspace{.1in}
\begin{minipage}{2in}\begin{center} $\;\;\;\Downarrow_{\mathsf{Deb}(\cdot)}$ \end{center}\end{minipage}\qquad\qquad\qquad\; \begin{minipage}{2in}\begin{center}  $\Downarrow_{\mathsf{Deb}(\cdot)}\;\;$ \end{center}\end{minipage}
\vspace{.1in}

\begin{center}
\begin{minipage}{2in}\begin{center}\vspace{1in}{\footnotesize {\setlength{\tabcolsep}{2pt}\begin{tabular}{cccccccccccc}
$2$ & $2$ & $2$ & $2$ & $2$ & $2$ 					& $2$ & $2$ & $2$ & $2$ & $2$ & $2$ \\\hline 
 $\color{red}{a}$ & $d$ & $\color{medgreen}{b}$ &$d$ & $\color{blue}{c}$ & $d$ & $\color{purple}{d}$ & $c$  & \color{orange}{$d$} & $b$& \color{darkyellow}{$b$}& $c$\\
 $\color{red}{b}$& $c$ & $\color{medgreen}{c}$ &$a$& $\color{blue}{a}$ & $b$ & $\color{purple}{a}$& $b$ & \color{orange}{$c$} & $a$& \color{darkyellow}{$d$}& $a$\\
 $c$& $\color{red}{a}$ & $a$ & $\color{medgreen}{b}$& $b$ & $\color{blue}{c}$& $b$& $\color{purple}{d}$ & $a$ & \color{orange}{$d$} & $a$& \color{darkyellow}{$b$}\\
 $d$& $\color{red}{b}$ & $d$ & $\color{medgreen}{c}$& $d$ & $\color{blue}{a}$& $c$& $\color{purple}{a}$ & $b$ & \color{orange}{$c$}& $c$& \color{darkyellow}{$d$}
\end{tabular}}}
\end{center} \end{minipage}  \qquad \begin{minipage}{.2in}\vspace{1.05in}$\overset{a\leftrightarrows b}{\Longrightarrow}$ \end{minipage}\qquad \begin{minipage}{2in} \begin{center} {\footnotesize {\setlength{\tabcolsep}{2pt}\begin{tabular}{cccccccccccc}
$2$ & $2$ & $2$ & $2$ & $2$ & $2$ 					& $2$ & $2$ & $2$ & $2$ & $2$ & $2$ \\\hline 
$\color{red}{b}$& $d$& $\color{medgreen}{a}$ & $d$& $\color{blue}{c}$ & $d$& \color{purple}{$d$} & $c$&  {\cellcolor{gray!25}$\color{orange}{d}$} & {\cellcolor{gray!25}$b$} & $\color{darkyellow}{a}$ & $c$\\
$\color{red}{a}$& $c$& $\color{medgreen}{c}$& $b$& $\color{blue}{b}$& $a$ & \color{purple}{$b$}& $a$& {\cellcolor{gray!25}$\color{orange}{c}$}& {\cellcolor{gray!25}$a$} & $\color{darkyellow}{d}$& $b$\\
$c$& $\color{red}{b}$& $b$& $\color{medgreen}{a}$& $a$& $\color{blue}{c}$& $a$& \color{purple}{$d$}& {\cellcolor{gray!25}$a$}& {\cellcolor{gray!25}$\color{orange}{d}$} & $b$& $\color{darkyellow}{a}$ \\
$d$& $\color{red}{a}$& $d$& $\color{medgreen}{c}$& $d$& $\color{blue}{b}$& $c$& \color{purple}{$b$} & {\cellcolor{gray!25}$b$}& {\cellcolor{gray!25}$\color{orange}{c}$} & $c$& $\color{darkyellow}{d}$
\end{tabular}}} \vspace{.03in}\\$\rotatebox{90}{$\neq$}$\vspace{.07in}\\{\footnotesize {\setlength{\tabcolsep}{2pt}\begin{tabular}{cccccccccccc}
$2$ & $2$ & $2$ & $2$ & $2$ & $2$ 					& $2$ & $2$ & $2$ & $2$ & $2$ & $2$ \\\hline 
 $\color{red}{b}$ & $d$ & $a$ &$d$ & $\color{blue}{c}$ & $d$ & $\color{purple}{d}$ & $c$  & {\cellcolor{gray!25}\color{orange}{$d$}} & {\cellcolor{gray!25}$a$} & $a$& $c$\\
 $\color{red}{a}$& $c$ & $\color{medgreen}{c}$ &$\color{medgreen}{b}$& $b$ & $\color{blue}{a}$ & $b$& $\color{purple}{a}$ & {\cellcolor{gray!25}\color{orange}{$c$}} & {\cellcolor{gray!25}$b$}& \color{darkyellow}{$d$}& \color{darkyellow}{$b$}\\
 $c$& $\color{red}{b}$ & $\color{medgreen}{b}$ & $a$& $\color{blue}{a}$ & $\color{blue}{c}$& $\color{purple}{a}$& $\color{purple}{d}$ & {\cellcolor{gray!25}$b$} & {\cellcolor{gray!25}\color{orange}{$d$}} & \color{darkyellow}{$b$} & $a$\\
 $d$& $\color{red}{a}$ & $d$ & $\color{medgreen}{c}$& $d$ & $b$& $c$& $b$ & {\cellcolor{gray!25}$a$} & {\cellcolor{gray!25}\color{orange}{$c$}}& $c$& \color{darkyellow}{$d$}
\end{tabular}}}  \end{center} \end{minipage}
\end{center}
\caption{Representation commutes with transposition (above), but the McGarvey/Debord construction does not commute with transposition (below).}\label{TranspositionFig}
\end{figure} \newpage

\subsection{Representation commutes with transposition} Given $a,b\in\mathcal{X}$, a linear order $L$, and $X\subseteq\mathcal{X}$, recall the definitions of $L_{a\leftrightarrows b}$ and $X_{a\leftrightarrows b}$ from Section \ref{IndividualToSocial}. Given a generalized anonymous profile $\mathbf{P}$ and $a,b\in \mathcal{X}$, we define $\mathbf{P}_{a\leftrightarrows b}$ such that for any linear order $L$, $\mathbf{P}_{a\leftrightarrows b}(L)=\mathbf{P}(L_{a\leftrightarrows b})$.

Given a weak tournament $\wt=(X(\wt),\to)$ and $a,b\in \mathcal{X}$, we define the weak tournament $\wt_{a\leftrightarrows b}=(X(\wt)_{a\leftrightarrows b},\to_{_{a\leftrightarrows b}})$ such that for all $x,y\in X(\wt)$, we have 
\[\mbox{$x \to_{_{a\leftrightarrows b}}y$ in $\wt_{a\leftrightarrows b}$  if $\pi_{_{a\leftrightarrows b}}(x)\to\pi_{_{a\leftrightarrows b}}(y)$ in $\wt$.}\] 
Similarly, for a weighted weak tournament $\wwt={(T,w)}$, we define $\wwt_{a\leftrightarrows b}={(T_{a\leftrightarrows b},w_{_{a\leftrightarrows b}})}$ where if $x \to_{_{a\leftrightarrows b}}y$ in $\wt_{a\leftrightarrows b}$, then \[w_{_{a\leftrightarrows b}}(x,y)=w(\pi_{_{a\leftrightarrows b}}(x),\pi_{_{a\leftrightarrows b}}(y)).\]

Later we will use the obvious fact that transposition commutes with taking the majority or margin graph of a profile, with restriction, and with multiplication.

\begin{lemma}\label{TransposeCommute} Let $\mathbf{P}$ be a generalized anonymous profile, $\mathcal{T}$ a weighted weak tournament, and $a,b\in\mathcal{X}$.
\begin{enumerate}
\item\label{TransposeCommute1} $M(\mathbf{P}_{a\leftrightarrows b})=M(\mathbf{P})_{a\leftrightarrows b}$ and  $\mathcal{M}(\mathbf{P}_{a\leftrightarrows b})=\mathcal{M}(\mathbf{P})_{a\leftrightarrows b}$.
\item\label{TransposeCommute2} For any $\varnothing\neq Z\subseteq X(\mathbf{P})$, then $(\mathbf{P}_{\mid Z})_{a\leftrightarrows b}=(\mathbf{P}_{a\leftrightarrows b})_{\mid \pi_{a\leftrightarrows b}[Z]}$.
\item\label{TransposeCommute3} For any  $n\in\mathbb{Z}^+$, $n(\wwt_{a\leftrightarrows b})=(n\wwt)_{a\leftrightarrows b}$.
\end{enumerate}
\end{lemma}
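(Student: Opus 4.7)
The plan is to prove each part by straightforward unfolding of the definitions, exploiting two elementary facts throughout: first, the transposition operation is an involution, so $L \mapsto L_{a\leftrightarrows b}$ is a bijection on linear orders; second, applying this involution gives the equivalence $(x,y) \in L$ iff $(\pi_{a\leftrightarrows b}(x), \pi_{a\leftrightarrows b}(y)) \in L_{a\leftrightarrows b}$, which follows immediately from the paper's definition of $L_{a\leftrightarrows b}$.

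For Part \ref{TransposeCommute1}, I would compute $\mu_{x,y}(\mathbf{P}_{a\leftrightarrows b})$ directly from the definition, substitute $L' = L_{a\leftrightarrows b}$ in the two sums (a bijection on $\mathcal{L}(X(\mathbf{P})_{a\leftrightarrows b})$), and use the equivalence above to rewrite $xLy$ as $\pi_{a\leftrightarrows b}(x) L' \pi_{a\leftrightarrows b}(y)$. This yields $\mu_{x,y}(\mathbf{P}_{a\leftrightarrows b}) = \mu_{\pi_{a\leftrightarrows b}(x), \pi_{a\leftrightarrows b}(y)}(\mathbf{P})$, which by definition of $\mathcal{M}(\mathbf{P})_{a\leftrightarrows b}$ is exactly the weight of the edge $x \to y$ in $\mathcal{M}(\mathbf{P})_{a\leftrightarrows b}$. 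The claim for $M$ then follows by taking the sign of the margin.

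For Part \ref{TransposeCommute2}, both sides are profiles with candidate set $\pi_{a\leftrightarrows b}[Z]$. For a fixed $L \in \mathcal{L}(\pi_{a\leftrightarrows b}[Z])$, I would unfold each side using the definitions of restriction and transposition for generalized anonymous profiles. The left-hand side is $\sum_{L' : L_{a\leftrightarrows b} = L'|_Z} \mathbf{P}(L')$, and the right-hand side is $\sum_{L'' : L = L''|_{\pi_{a\leftrightarrows b}[Z]}} \mathbf{P}(L''_{a\leftrightarrows b})$. Reindexing the right-hand sum via the bijection $L' = L''_{a\leftrightarrows b}$ (so $L'' = L'_{a\leftrightarrows b}$), the matching of index sets reduces to the identity $(L'_{a\leftrightarrows b}|_{\pi_{a\leftrightarrows b}[Z]})_{a\leftrightarrows b} = L'|_Z$, which is verified directly from the definitions by a short membership chase using the involutive property of $\pi_{a\leftrightarrows b}$.

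For Part \ref{TransposeCommute3}, the argument is essentially definitional: both sides share the underlying weak tournament $T_{a\leftrightarrows b}$ (which does not depend on weights), and both assign to an edge $x \to_{a\leftrightarrows b} y$ the value $n \cdot w(\pi_{a\leftrightarrows b}(x), \pi_{a\leftrightarrows b}(y))$, so the weight functions coincide. I expect this part to be immediate; the only mildly delicate case-handling lies in Part \ref{TransposeCommute2}, where correctly keeping track of which set the restriction acts on and applying the involution on the right side is the main obstacle—though still entirely routine.
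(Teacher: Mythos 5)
Your proposal is correct and matches the intended argument: the paper states Lemma \ref{TransposeCommute} without proof, introducing it as ``the obvious fact'' that transposition commutes with taking majority/margin graphs, with restriction, and with multiplication, and the definitional unfolding you give (the change of variables $L' = L_{a\leftrightarrows b}$ using the involutivity of $\pi_{a\leftrightarrows b}$, the computation $\mu_{x,y}(\mathbf{P}_{a\leftrightarrows b})=\mu_{\pi_{a\leftrightarrows b}(x),\pi_{a\leftrightarrows b}(y)}(\mathbf{P})$, and the index-set identity $((L'_{a\leftrightarrows b})|_{\pi_{a\leftrightarrows b}[Z]})_{a\leftrightarrows b}=L'|_{Z}$ for part \ref{TransposeCommute2}) is exactly the routine verification the authors leave to the reader. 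All three parts check out, including the correct handling of the contravariant definition $\mathbf{P}_{a\leftrightarrows b}(L)=\mathbf{P}(L_{a\leftrightarrows b})$, so there is nothing to fix.
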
 

As illustrated in Figure \ref{TranspositionFig}, unlike the constructions of McGarvey and Debord, representation commutes with transposition.

\begin{proposition}\label{TransposeProp} For any finite $Y\subseteq\mathcal{X}$, $\wt\in \mathbb{W}\mathbb{T}_Y$,  $m\in\mathbb{Z}^+$,  $\wwt\in \mathbb{W}\mathbb{T}^w_{Y,m}$, and ${a,b\in Y}$:
\begin{enumerate}
\item\label{TransposeProp1} $\mathfrak{P}_Y(\wt_{a\leftrightarrows b})= \mathfrak{P}_Y(\wt)_{a\leftrightarrows b}$;
\item\label{TransposeProp2} $\mathfrak{P}_{Y,m}(\wwt_{a\leftrightarrows b})= \mathfrak{P}_{Y,m}(\wwt)_{a\leftrightarrows b}$.
\end{enumerate}
\end{proposition}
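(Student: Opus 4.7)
The plan is to unfold the definition of $\mathfrak{P}_Y$ (resp.\ $\mathfrak{P}_{Y,m}$) through $\mathfrak{L}_Y$ (resp.\ $\mathfrak{L}_{Y,m}$) and show that transposition distributes through the construction step by step. Since part (\ref{TransposeProp2}) is proved by essentially the same argument as part (\ref{TransposeProp1}), just with extra weight bookkeeping, I will focus the proof on (\ref{TransposeProp1}).

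First I would establish three elementary sub-lemmas about how transposition interacts with the building blocks of the construction. Writing $\pi=\pi_{a\leftrightarrows b}$:
\textbf{(i)} $(\mathbf{L}_Y)_{a\leftrightarrows b}=\mathbf{L}_Y$, because $a,b\in Y$ implies that $L\mapsto L_{a\leftrightarrows b}$ is a bijection of $\mathcal{L}(Y)$, so each order still gets count $1$.
\textbf{(ii)} $(\mathbf{L}_{xy\text{-}Y})_{a\leftrightarrows b}=\mathbf{L}_{\pi(x)\pi(y)\text{-}Y}$, because $L_{a\leftrightarrows b}$ begins with $xy$ iff $L$ begins with $\pi(x)\pi(y)$.
\textbf{(iii)} The operation $\mathbf{Q}\mapsto \mathbf{Q}_{a\leftrightarrows b}$ on generalized anonymous profiles (with fixed candidate set containing $a,b$) is linear: it distributes over $+$ and $-$ and commutes with integer scalar multiplication. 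This is immediate from the pointwise definition $\mathbf{Q}_{a\leftrightarrows b}(L)=\mathbf{Q}(L_{a\leftrightarrows b})$.

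Next I would apply these to the defining formula (\ref{LYEq}) for $\mathfrak{L}_Y(\wt)$. Using (iii) to push the transposition inside the sum, and (i), (ii) to evaluate the terms, I get
\[(\mathfrak{L}_Y(\wt))_{a\leftrightarrows b}=\mathbf{L}_Y+\!\!\!\sum_{x\to y\text{ in }\wt}\!\!\bigl(\mathbf{L}_{\pi(x)\pi(y)\text{-}Y}-\mathbf{L}_{\pi(y)\pi(x)\text{-}Y}\bigr).\]
Now I reindex the sum using the key observation about the transposed tournament: by definition of $\wt_{a\leftrightarrows b}$, the map $(x,y)\mapsto(\pi(x),\pi(y))$ is a bijection between the edge set of $\wt$ and the edge set of $\wt_{a\leftrightarrows b}$ (this uses only that $\pi$ is an involution on $\mathcal{X}$). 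Substituting $(u,v)=(\pi(x),\pi(y))$ converts the sum into $\sum_{u\to v\text{ in }\wt_{a\leftrightarrows b}}(\mathbf{L}_{uv\text{-}Y}-\mathbf{L}_{vu\text{-}Y})$, which is exactly $\mathfrak{L}_Y(\wt_{a\leftrightarrows b})-\mathbf{L}_Y$. Thus $(\mathfrak{L}_Y(\wt))_{a\leftrightarrows b}=\mathfrak{L}_Y(\wt_{a\leftrightarrows b})$.

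Finally I would pass from $\mathfrak{L}$ to $\mathfrak{P}$ using the already-noted fact that transposition commutes with restriction (Lemma \ref{TransposeCommute}(\ref{TransposeCommute2})), plus the obvious identity $\pi[X(\wt)]=X(\wt_{a\leftrightarrows b})$:
\[\mathfrak{P}_Y(\wt)_{a\leftrightarrows b}=\bigl(\mathfrak{L}_Y(\wt)|_{X(\wt)}\bigr)_{a\leftrightarrows b}=(\mathfrak{L}_Y(\wt))_{a\leftrightarrows b}\bigl|_{\pi[X(\wt)]}=\mathfrak{L}_Y(\wt_{a\leftrightarrows b})|_{X(\wt_{a\leftrightarrows b})}=\mathfrak{P}_Y(\wt_{a\leftrightarrows b}).\]
The weighted case (\ref{TransposeProp2}) proceeds identically starting from (\ref{LYmEq}), with the extra observation that the weight $w(x,y)=k\psi_Y$ transfers to $w_{a\leftrightarrows b}(\pi(x),\pi(y))=k\psi_Y$ by the very definition of $w_{a\leftrightarrows b}$, so the integer coefficients $k$ in the sum are preserved under the reindexing. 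The main subtlety, and really the only step that requires care, is the reindexing of the sum: one must check that both directions of each edge get accounted for correctly, which is precisely why I formulate the key bijection $(x,y)\mapsto(\pi(x),\pi(y))$ at the level of ordered pairs rather than unordered edges.
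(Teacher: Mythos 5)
Your proof is correct and takes essentially the same route as the paper's: the same key observations ($(\mathbf{L}_Y)_{a\leftrightarrows b}=\mathbf{L}_Y$ since $a,b\in Y$, the block identity $(\mathbf{L}_{xy\mbox{-}Y})_{a\leftrightarrows b}=\mathbf{L}_{\pi(x)\pi(y)\mbox{-}Y}$, linearity of transposition over sums), the same reindexing of the edge sum via the involution, and the same appeal to Lemma \ref{TransposeCommute}.\ref{TransposeCommute2} to commute transposition with restriction. The only differences are cosmetic---you prove the identity $(\mathfrak{L}_Y(T))_{a\leftrightarrows b}=\mathfrak{L}_Y(T_{a\leftrightarrows b})$ first and restrict afterward, whereas the paper pulls the restriction through at the start of a single chain of equalities, and your deferral of part (\ref{TransposeProp2}) as ``identical modulo weight bookkeeping'' matches the paper's own treatment.
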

\begin{proof}  We give only the proof for (\ref{SubProp1}), as the proof for (\ref{SubProp2}) is essentially the same but with more notation. Define $\rho: Y\to Y$ by $\rho(a)=b$, $\rho(b)=a$, and $\rho(c)=c$ for all $c\in Y\setminus \{a,b\}$. The key observations are that $(\mathbf{L}_Y)_{a\leftrightarrows b} = \mathbf{L}_Y$, since $a,b\in Y$, and for any $c,d\in X(\wt)$, $(\mathbf{L}_{cd\,\mbox{-}Y})_{a\leftrightarrows b}=\mathbf{L}_{\rho(c)\rho(d)\,\mbox{-}Y}$. Then we have
{\renewcommand{\arraystretch}{1.7}
\begin{center}
\begin{longtable}{rcll} 
$\mathfrak{P}_Y(\wt)_{a\leftrightarrows b} $ &=& $(\mathfrak{L}_Y(\wt)|_{ X(\wt)})_{a\leftrightarrows b}$ by (\ref{PfromL})\\
&=& $(\mathfrak{L}_Y(\wt)_{a\leftrightarrows b})|_{ X(\wt)_{a\leftrightarrows b}}$ \\
&=& $\big(\big(\mathbf{L}_Y + \underset{\mbox{ {\footnotesize $c\to d$ in $T$}}}{\sum} (\mathbf{L}_{cd\,\mbox{-}Y} - \mathbf{L}_{dc\,\mbox{-}Y})\big)_{a\leftrightarrows b}\big)|_{ X(\wt)_{a\leftrightarrows b}}$\\
&=& $\big((\mathbf{L}_Y)_{a\leftrightarrows b} + \underset{\mbox{ {\footnotesize $c\to d$ in $T$}}}{\sum} ((\mathbf{L}_{cd\,\mbox{-}Y})_{a\leftrightarrows b} - (\mathbf{L}_{dc\,\mbox{-}Y})_{a\leftrightarrows b})\big)|_{ X(\wt)_{a\leftrightarrows b}}$ \\
&=& $\big(\mathbf{L}_Y + \underset{\mbox{ {\footnotesize $c\to d$ in $T$}}}{\sum} (\mathbf{L}_{\rho(c)\rho(d)\,\mbox{-}Y} - \mathbf{L}_{\rho(d)\rho(c)\,\mbox{-}Y})\big)|_{ X(\wt)_{a\leftrightarrows b}}$ \\
&=& $\big(\mathbf{L}_Y + \underset{\mbox{ {\footnotesize $\rho(c)\to \rho(d)$ in $T_{a\leftrightarrows b}$}}}{\sum} (\mathbf{L}_{\rho(c)\rho(d)\,\mbox{-}Y} - \mathbf{L}_{\rho(d)\rho(c)\,\mbox{-}Y})\big)|_{ X(\wt)_{a\leftrightarrows b}}$ \\
&=& $\big(\mathbf{L}_Y + \underset{\mbox{ {\footnotesize $c'\to d'$ in $T_{a\leftrightarrows b}$}}}{\sum} (\mathbf{L}_{c'd'\,\mbox{-}Y} - \mathbf{L}_{d'c'\,\mbox{-}Y})\big)|_{ X(\wt)_{a\leftrightarrows b}}$ \\
&=& $\mathfrak{L}_Y(\wt_{a\leftrightarrows b})|_{ X(\wt)_{a\leftrightarrows b}}$ \\
&=& $\mathfrak{L}_Y(\wt_{a\leftrightarrows b})|_{ X(\wt_{a\leftrightarrows b})}$\\
&=& $\mathfrak{P}_Y(\wt_{a\leftrightarrows b} )$ by (\ref{PfromL}).\qedhere
\end{longtable}
\end{center}}
\end{proof}

\section{Majoritarian and pairwise projection}\label{ProjectionSection}

Using the representation of weak tournaments in Definition \ref{RepDef}, the next key step is to associate with each voting method $F$ and finite $Y\subseteq\mathcal{X}$ a voting method that we call its \textit{majoritarian projection} relative to $Y$. In fact, for convenience we will first do this for \textit{anonymous} voting methods defined on anonymous profiles and then for ordinary voting methods in our proof of the transfer lemma, Lemma \ref{transferlemone}.

\begin{definition} An \textit{anonymous voting method} is a function $F$ whose domain is a set of anonymous profiles such that for any  $\mathbf{P}\in\mathrm{dom}(F)$, $\varnothing\neq F(\mathbf{P})\subseteq X(\mathbf{P})$.
\end{definition}
\noindent Every axiom we have defined for voting methods applies in an obvious way to anonymous voting methods, so we will not repeat the definitions here.

\begin{definition}\label{majoritarianProjDef} Given an anonymous voting method $F$ and finite $Y\subseteq\mathcal{X}$, define the \textit{majoritarian projection of $F$ relative to $Y$}, $F_{maj,Y}$, as the anonymous voting method with 
\[\mathrm{dom}(F_{maj,Y})=\{\mathbf{P}\in\anonprof\mid X(\mathbf{P})\subseteq Y\mbox{ and }\mathfrak{P}_Y(M(\mathbf{P}))\in\mathrm{dom}(F)\} \] and \[F_{maj,Y}(\mathbf{P})=F(\mathfrak{P}_Y(M(\mathbf{P}))).\]
\end{definition}

We define an analogous notion of pairwise projection using Definition \ref{RepDefWeighted}.

\begin{definition}\label{PairProjDef} Given an anonymous  voting method $F$, finite $Y\subseteq\mathcal{X}$, and $m\in\mathbb{Z}^+$, define the \textit{pairwise projection of $F$ relative to $Y$ and $m$}, $F_{pair,Y,m}$, as the anonymous voting method with 
\begin{eqnarray*}\mathrm{dom}(F_{pair,Y,m})&=&\{\mathbf{P}\in\anonprof\mid  X(\mathbf{P})\subseteq Y,\mbox{ all weights in }\mathcal{M}(\mathbf{P})\mbox{ are $\leq m$, }\\
&&\qquad\qquad\qquad\quad\,\,\mbox{and }\mathfrak{P}_{Y,m}(\psi\mathcal{M}(\mathbf{P}))\in\mathrm{dom}(F)\} \end{eqnarray*} and \[F_{pair,Y,m}(\mathbf{P})=F(\mathfrak{P}_{Y,m}(\psi\mathcal{M}(\mathbf{P})))\] where $\psi=2\times (|Y|-2)!$.
\end{definition}
Note that multiplying $\mathcal{M}(\mathbf{P})$ by $\psi$ gives us a weighted tournament suitable for representation by Definition \ref{RepDefWeighted}. Though the resulting profile $\mathfrak{P}_{Y,m}(\psi\mathcal{M}(\mathbf{P}))$ does not represent $\mathcal{M}(\mathbf{P})$, but instead $\psi \mathcal{M}(\mathbf{P})$, this does not matter, since we only use $\mathfrak{P}_{Y,m}(\psi\mathcal{M}(\mathbf{P}))$ to select winners according to $F$.

It is easy to see that the majoritarian (resp.~pairwise) projection of $F$ is indeed  majoritarian (resp.~pairwise). Moreover, majoritarian and pairwise projection preserve neutrality thanks to the fact in Proposition \ref{TransposeProp} that representation commutes with transposition (in contrast to the McGarvey/Debord construction) plus Lemma \ref{TransposeCommute}. Recall that given $Y\subseteq \mathcal{X}$, the set $Y_{a\leftrightarrows b}$ is the image of $Y$ under the permutation of $\mathcal{X}$ that only swaps $a$ and $b$.

\begin{lemma}\label{Ismajoritarian} Given an anonymous voting method $F$, finite $Y\subseteq\mathcal{X}$, and $m\in\mathbb{Z}^+$:
\begin{enumerate}
\item\label{Ismajoritariana} $F_{maj,Y}$ is majoritarian;
\item\label{Ismajoritarianb} $F_{pair,Y,m}$ is pairwise;
\item\label{Ismajoritarianc} if $F$ is neutral, then $F_{maj,Y}$ and $F_{pair,Y,m}$ are neutral.
\end{enumerate}
\end{lemma}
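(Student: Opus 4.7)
My plan is to verify each clause by unpacking Definitions \ref{majoritarianProjDef} and \ref{PairProjDef} and invoking the commutativity results of Section \ref{RepresentationSection}. Parts \ref{Ismajoritariana} and \ref{Ismajoritarianb} are essentially immediate. For \ref{Ismajoritariana}, if $\mathbf{P},\mathbf{P}'\in\mathrm{dom}(F_{maj,Y})$ with $M(\mathbf{P})=M(\mathbf{P}')$, then by definition $F_{maj,Y}(\mathbf{P}) = F(\mathfrak{P}_Y(M(\mathbf{P}))) = F(\mathfrak{P}_Y(M(\mathbf{P}'))) = F_{maj,Y}(\mathbf{P}')$. An analogous one-line chain with $\mathcal{M}$ and $\mathfrak{P}_{Y,m}(\psi\cdot)$ in place of $M$ and $\mathfrak{P}_Y$ settles \ref{Ismajoritarianb}.

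The substantive content is in part \ref{Ismajoritarianc}. Fix a neutral $F$, a profile $\mathbf{P}\in\mathrm{dom}(F_{maj,Y})$, and $a,b\in\mathcal{X}$ with $\mathbf{P}_{a\leftrightarrows b}\in\mathrm{dom}(F_{maj,Y})$. First I would reduce to the case $a,b\in Y$: from $X(\mathbf{P})\subseteq Y$ and $X(\mathbf{P}_{a\leftrightarrows b})=\pi_{a\leftrightarrows b}[X(\mathbf{P})]\subseteq Y$, a short case split on whether $a$ or $b$ lies in $X(\mathbf{P})$ shows that either both $a,b\in Y$ or else neither $a$ nor $b$ appears in $\mathbf{P}$, in which latter case $\mathbf{P}_{a\leftrightarrows b}=\mathbf{P}$ and the claim is trivial. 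Assuming $a,b\in Y$, I would then compute
\begin{align*}
F_{maj,Y}(\mathbf{P}_{a\leftrightarrows b})
&= F(\mathfrak{P}_Y(M(\mathbf{P}_{a\leftrightarrows b}))) \\
&= F(\mathfrak{P}_Y(M(\mathbf{P})_{a\leftrightarrows b})) \\
&= F(\mathfrak{P}_Y(M(\mathbf{P}))_{a\leftrightarrows b}) \\
&= F(\mathfrak{P}_Y(M(\mathbf{P})))_{a\leftrightarrows b} = F_{maj,Y}(\mathbf{P})_{a\leftrightarrows b},
\end{align*}
where the second line uses Lemma \ref{TransposeCommute}.\ref{TransposeCommute1}, the third uses Proposition \ref{TransposeProp}.\ref{TransposeProp1}, and the fourth uses neutrality of $F$ (whose domain hypothesis that $\mathfrak{P}_Y(M(\mathbf{P}))_{a\leftrightarrows b}\in\mathrm{dom}(F)$ follows by reading the preceding equalities in reverse and using that $\mathbf{P}_{a\leftrightarrows b}\in\mathrm{dom}(F_{maj,Y})$).

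For $F_{pair,Y,m}$ the argument is essentially identical: replace $M$ by $\psi\mathcal{M}$, $\mathfrak{P}_Y$ by $\mathfrak{P}_{Y,m}$, and Proposition \ref{TransposeProp}.\ref{TransposeProp1} by Proposition \ref{TransposeProp}.\ref{TransposeProp2}, additionally invoking Lemma \ref{TransposeCommute}.\ref{TransposeCommute3} to push the scalar $\psi$ through the transposition. One also notes that transposition does not alter edge weights, so the weight-bound condition defining $\mathrm{dom}(F_{pair,Y,m})$ is preserved. I do not foresee any real obstacle here: the whole point of engineering $\mathfrak{P}_Y$ and $\mathfrak{P}_{Y,m}$ to commute with transposition (in contrast to the standard McGarvey/Debord constructions, cf.~Figure \ref{TranspositionFig}) was precisely to make this transfer of neutrality automatic.
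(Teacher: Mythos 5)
Your proof is correct and follows essentially the same route as the paper's: parts (\ref{Ismajoritariana}) and (\ref{Ismajoritarianb}) immediate from Definitions \ref{majoritarianProjDef} and \ref{PairProjDef}, and part (\ref{Ismajoritarianc}) via the identical chain of equalities using Lemma \ref{TransposeCommute} and Proposition \ref{TransposeProp} together with the neutrality of $F$ (and Lemma \ref{TransposeCommute}.\ref{TransposeCommute3} in the pairwise case). If anything, you are slightly more careful than the paper, which simply asserts $a,b\in Y$, whereas you dispose of the degenerate case in which neither candidate occurs in $\mathbf{P}$ and explicitly verify the domain side-condition needed to invoke the neutrality of $F$.
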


\begin{proof} Parts \ref{Ismajoritariana} and \ref{Ismajoritarianb} are immediate from Definitions \ref{majoritarianProjDef} and \ref{PairProjDef}, respectively. For part \ref{Ismajoritarianc}, assuming $F$ is neutral, given any $\mathbf{P}\in\mathrm{dom}(F_{maj,Y})$ and $\mathbf{P}_{a\leftrightarrows b}\in\mathrm{dom}(F)$, we have $a,b\in Y$, and
\begin{center}
\begin{tabular}{rcll}
$F_{maj,Y}(\mathbf{P}_{a\leftrightarrows b})$ & = & $F(\mathfrak{P}_Y(M(\mathbf{P}_{a\leftrightarrows b})))$ & by Definition \ref{majoritarianProjDef} \\
&=&  $F(\mathfrak{P}_Y(M(\mathbf{P})_{a\leftrightarrows b}))$  & by Lemma \ref{TransposeCommute}.\ref{TransposeCommute1} \\ 
&=&  $F(\mathfrak{P}_Y(M(\mathbf{P}))_{a\leftrightarrows b})$  & by Proposition \ref{TransposeProp}.\ref{TransposeProp1} since $a,b\in Y$\\ 
&=&  $F(\mathfrak{P}_Y(M(\mathbf{P})))_{a\leftrightarrows b}$  & since $F$ is neutral \\ 
&=&  $F_{maj,Y}(\mathbf{P})_{a\leftrightarrows b}$  & by Definition \ref{majoritarianProjDef},\\ 
\end{tabular}
\end{center}
so $F_{maj,Y}$ is neutral. The proof for $F_{pair,Y,m}$ is analogous,  using Definition \ref{PairProjDef}, Proposition \ref{TransposeProp}.\ref{TransposeProp2}, and also Lemma \ref{TransposeCommute}.\ref{TransposeCommute3}. \end{proof}

\section{Preservation of axioms}\label{PreservationSection}

We now turn to the key question of which axioms on voting methods are preserved by majoritarian and pairwise projection, besides neutrality (Lemma \ref{Ismajoritarian}.\ref{Ismajoritarianc}). These preservation results will be key to proving the transfer lemmas from Section~\ref{ProofStrategy}.

\subsection{Axioms as sets} To prove general results about the preservation of axioms by majoritarian and pairwise projections, we will make a move that is unusual in social choice theory but standard in mathematical logic, by treating axioms on voting methods as mathematical objects. Thinking syntactically, we could introduce a formal language and define a \textit{universal axiom} as a formula of the form  \[\forall\mathbf{P}_1\dots\forall\mathbf{P}_n \varphi(\mathbf{P}_1,\dots,\mathbf{P}_n)\] where $\varphi$ contains no profile quantifiers (but possibly quantifiers over voters and candidates). Instead, we will view universal axioms semantically as follows.\footnote{For those familiar with mathematical logic (see, e.g., \cite{Chang1990}), consider a two-sorted model $\mathfrak{A}$ whose domains contains all profiles and all nonempty finite sets of candidates. Where $\mathrm{P}_i$ is a variable for a profile and $\mathrm{X}_i$ is a variable for a set of candidates, a formula $\varphi(\mathrm{P}_1,\dots,\mathrm{P}_n,\mathrm{X}_1,\dots,\mathrm{X}_n)$ defines the set of all tuples $\langle \mathbf{P}_1,\dots,\mathbf{P}_n,X_1,\dots, X_n\rangle$ such that 
\[ \mathfrak{A}\vDash \varphi(\mathrm{P}_1,\dots,\mathrm{P}_n,\mathrm{X}_1,\dots,\mathrm{X}_n) [\mathbf{P}_1,\dots,\mathbf{P}_n,X_1,\dots, X_n].\]
We repackage $\langle \mathbf{P}_1,\dots,\mathbf{P}_n,X_1,\dots, X_n\rangle$ as $\langle \langle \mathbf{P}_1, X_1\rangle,\dots,\langle \mathbf{P}_n,X_n\rangle  \rangle$ in Definition \ref{UniAxioms}.} Let $\mathscr{S}$ be the set of all pairs $\langle\mathbf{P},Y\rangle$ where $\mathbf{P}$ is an anonymous profile and $Y\subseteq X(\mathbf{P})$.

\begin{definition}\label{UniAxioms} An \textit{$n$-ary universal axiom} is a subset $Ax\subseteq \mathscr{S}^n$. An anonymous voting method $F$ \textit{satisfies} $Ax$ if $\{\langle \mathbf{P},F(\mathbf{P})\rangle \mid \mathbf{P}\in\mathrm{dom}(F)\}^n\subseteq Ax$. \end{definition}

We illustrate this concept by formalizing some of the axioms discussed in Section \ref{ImpossSection}.

\begin{example}\label{ExampleAxioms} $\,$
\begin{enumerate}
\item Given an anonymous profile $\mathbf{P}$, \textit{a Condorcet winner} is a candidate $a\in X(\mathbf{P})$ such that for all $b\in X(\mathbf{P})\setminus\{a\}$, the margin of $a$ over $b$ in $\mathbf{P}$ is positive. A voting method $F$ is \textit{Condorcet consistent} if for all $\mathbf{P}\in \mathrm{dom}(F)$, if $\mathbf{P}$ has a Condorcet winner, then $F(\mathbf{P})$ contains only that candidate. Condorcet consistency corresponds to the unary axiom 
\begin{eqnarray*}&&\mathsf{CC}=\{\langle \mathbf{P},X\rangle\mid \mbox{if $\mathbf{P}$ has a Condorcet winner,}\\
&&\qquad\qquad\qquad\;\;\;\mbox{then $X$ contains only that candidate}\}.\end{eqnarray*}

\item majoritarianism corresponds to the binary axiom 
\[\mathsf{majoritarian}=\{\langle \langle \mathbf{P},X\rangle, \langle \mathbf{P}',X'\rangle\rangle\mid \mbox{if $M(\mathbf{P})=M(\mathbf{P}')$, then $X=X'$}\}.\]
\item neutrality  corresponds to the binary axiom
\[\mathsf{N}=\{\langle \langle \mathbf{P},X\rangle, \langle \mathbf{P}',X'\rangle\rangle\mid \mbox{if $\mathbf{P}'=\mathbf{P}_{a\leftrightarrows b}$, then $X'=X_{a\leftrightarrows b}$}\}.\]
\item binary quasi-resoluteness  corresponds to the axiom
\[\qquad\mathsf{BQR}=\{\langle \mathbf{P},X\rangle\mid \mbox{if $M(\mathbf{P})$ is a tournament with 2 nodes, then $|X|=1$}\}.\]
\item  quasi-resoluteness corresponds to the axiom 
\[\mathsf{QR}=\{\langle \mathbf{P},X\rangle\mid \mbox{if $\mathcal{M}(\mathbf{P})$ is uniquely weighted, then $|X|=1$}\}.\]
\end{enumerate}
\end{example}

\subsection{Majoritarian and homogeneously pairwise axioms}

\noindent Inspired by the notions of majoritarian/pairwise voting methods, we will define the notions of majoritarian/pairwise axioms. In fact, we are interested in what we will call \textit{homogeneously} pairwise axioms, inspired by the notion of a voting method $F$ being homogeneous (recall Definition \ref{Invariance}).

\begin{definition}\label{majoritarianpairwiseAx} Let $Ax$ be an $n$-ary universal axiom. We say that $Ax$ is \textit{majoritarian} (resp.~\textit{homogeneously pairwise}) if for all anonymous profiles $\mathbf{P}_1,\dots,\mathbf{P}_n$ and $\mathbf{Q}_1,\dots,\mathbf{Q}_n$ such that for $1\leq i\leq n$, $M(\mathbf{P}_i)=M(\mathbf{Q}_i)$ (resp.~for which there is a $k\in \mathbb{Z}^+$ such that for $1\leq i\leq n$, $k\mathcal{M}(\mathbf{P}_i)=\mathcal{M}(\mathbf{Q}_i)$), we have that  for all $X_1\subseteq X(\mathbf{P}_1),\dots , X_n\subseteq X(\mathbf{P}_n)$, 
 \[\mbox{$\langle \langle \mathbf{P}_1,X_1\rangle,\dots, \langle \mathbf{P}_n,X_n\rangle\rangle \in Ax$ if and only if $\langle \langle \mathbf{Q}_1,X_1\rangle,\dots, \langle \mathbf{Q}_n,X_n\rangle\rangle\in Ax$.}\]\end{definition}
 
 \noindent Note that being majoritarian implies being homogeneously pairwise.
 
 The axioms from Example \ref{ExampleAxioms} can be classified using Definition \ref{majoritarianpairwiseAx} as follows.  
 \begin{example}\label{majoritarianvsNonmajoritarian} $\,$
 \begin{enumerate}
 \item Condorcet consistency is majoritarian because if $\langle \mathbf{P},X\rangle\in \mathsf{CC}$ and $M(\mathbf{P})=M(\mathbf{Q})$, then clearly $\langle \mathbf{Q},X\rangle\in \mathsf{CC}$.\footnote{Other examples of majoritarian axioms include the Condorcet loser criterion \cite{Nurmi1987} and Smith criterion \cite{Smith1973}.}
 \item The majoritarian axiom is obviously majoritarian.
  \item Neutrality is not homogeneously pairwise, because there are profiles $\mathbf{P}$, $\mathbf{P}'$, $\mathbf{Q}$, $\mathbf{Q}'$ such that $\mathbf{P}'=\mathbf{P}_{a\leftrightarrows b}$, $\mathcal{M}(\mathbf{P})=\mathcal{M}(\mathbf{Q})$, and $\mathcal{M}(\mathbf{P}')=\mathcal{M}(\mathbf{Q}')$, but $\mathbf{Q}'\neq\mathbf{Q}_{a\leftrightarrows b}$. Then for $X,X'\subseteq X(\mathbf{P})$ such that $X'\neq X_{a\leftrightarrows b}$, we have $\langle \langle \mathbf{P},X\rangle, \langle \mathbf{P}',X'\rangle\rangle\not\in\mathsf{N}$ but trivially $\langle \langle \mathbf{Q},X\rangle, \langle \mathbf{Q}',X'\rangle\rangle\in\mathsf{N}$.
 \item $\mathsf{BQR}$ is obviously majoritarian.
 \item $\mathsf{QR}$ is homogeneously pairwise (for $\mathcal{M}(\mathbf{P})$ is uniquely weighted iff $k\mathcal{M}(\mathbf{P})$ is), but it is not majoritarian, because there are profiles $\mathbf{P},\mathbf{P}'$ such that $M(\mathbf{P})=M(\mathbf{P}')$, $\mathcal{M}(\mathbf{P})$ is uniquely weighted, but $\mathcal{M}(\mathbf{P}')$ is not. Then for $X\subseteq X(\mathbf{P})$ with $|X|>1$, $\langle \mathbf{P},X\rangle\not\in \mathsf{QR}$ but trivially $\langle \mathbf{P}',X\rangle\in \mathsf{QR}$.
 \end{enumerate}
 \end{example}

We now verify that all majoritarian (resp.~homogeneously pairwise) universal axioms are preserved by the operation of majoritarian (resp.~pairwise) projection from Section \ref{ProjectionSection}.

\begin{lemma}\label{majoritarianpreserved} Let $F$ be an anonymous voting method. For any majoritarian \textnormal{(}resp.~homogeneously pairwise\textnormal{)} universal axiom $Ax$, if $F$ satisfies $Ax$, then $F_{maj,Y}$ \textnormal{(}resp.~$F_{pair,Y,m}$\textnormal{)} satisfies Ax.
\end{lemma}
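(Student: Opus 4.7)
The plan is to exploit the defining property of the representation maps $\mathfrak{P}_Y$ and $\mathfrak{P}_{Y,m}$—namely that they produce profiles whose majority/margin graphs are exactly the graphs one asks them to represent (Propositions \ref{RepProp0} and \ref{RepProp})—together with the invariance built into the definitions of majoritarian and homogeneously pairwise axioms.

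I would begin with the majoritarian case. Fix any $\mathbf{P}_1,\dots,\mathbf{P}_n\in \mathrm{dom}(F_{maj,Y})$ and set $\mathbf{Q}_i = \mathfrak{P}_Y(M(\mathbf{P}_i))$. By Definition \ref{majoritarianProjDef}, each $\mathbf{Q}_i$ lies in $\mathrm{dom}(F)$ and $F_{maj,Y}(\mathbf{P}_i)=F(\mathbf{Q}_i)$. Since $F$ satisfies $Ax$, the tuple $\langle\langle \mathbf{Q}_1,F(\mathbf{Q}_1)\rangle,\dots,\langle \mathbf{Q}_n,F(\mathbf{Q}_n)\rangle\rangle$ belongs to $Ax$. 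Proposition \ref{RepProp0} gives $M(\mathbf{Q}_i)=M(\mathbf{P}_i)$ for each $i$, so the majoritarian clause of Definition \ref{majoritarianpairwiseAx} allows me to swap each $\mathbf{Q}_i$ for $\mathbf{P}_i$ while preserving membership in $Ax$. This yields $\langle\langle \mathbf{P}_i,F_{maj,Y}(\mathbf{P}_i)\rangle\rangle_{i=1}^n\in Ax$, which is what is needed.

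The homogeneously pairwise case is entirely parallel: I take $\mathbf{Q}_i=\mathfrak{P}_{Y,m}(\psi\mathcal{M}(\mathbf{P}_i))$ and apply Proposition \ref{RepProp} to conclude $\mathcal{M}(\mathbf{Q}_i)=\psi\mathcal{M}(\mathbf{P}_i)$. The key point—and the one piece of the argument that deserves highlighting—is that the scaling constant $k=\psi=2\times(|Y|-2)!$ depends only on $Y$, not on $i$, so the same $k$ works simultaneously for every index. This uniformity is exactly what the word ``homogeneously'' encodes in Definition \ref{majoritarianpairwiseAx}, so I can invoke that clause to transfer membership in $Ax$ from the $\mathbf{Q}_i$-tuple (which holds because $F$ satisfies $Ax$) to the $\mathbf{P}_i$-tuple.

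There is essentially no obstacle once the representation propositions are in hand; both halves are a couple of lines. The only point to flag is the common-$k$ issue just noted: if the definition of ``homogeneously pairwise'' instead allowed a separate $k_i$ for each coordinate, the argument would still go through, but if one weakened the notion further to require only that $\mathcal{M}(\mathbf{Q}_i)$ and $\mathcal{M}(\mathbf{P}_i)$ have the same \emph{sign pattern} (i.e., ordinary pairwise-ness), the transfer would fail, since $\mathfrak{P}_{Y,m}$ scales all margins uniformly by $\psi$ rather than leaving them unchanged. The definition in Section \ref{PreservationSection} is thus precisely tuned so that this lemma holds and can be applied, in the next section, to each axiom appearing in the two impossibility theorems.
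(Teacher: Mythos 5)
Your proof is correct and follows essentially the same route as the paper's: both set $\mathbf{Q}_i=\mathfrak{P}_Y(M(\mathbf{P}_i))$ (resp.\ $\mathfrak{P}_{Y,m}(\psi\mathcal{M}(\mathbf{P}_i))$), invoke Propositions \ref{RepProp0} and \ref{RepProp} to identify the majority/margin graphs, and then use the invariance clause of Definition \ref{majoritarianpairwiseAx} to swap the $\mathbf{Q}_i$-tuple for the $\mathbf{P}_i$-tuple. Your explicit remark that the single scaling constant $k=\psi$ works uniformly across all coordinates is exactly the point the paper uses implicitly when it says ``setting $\mathbf{Q}_i=\mathfrak{P}_{Y,m}(\psi\mathcal{M}(\mathbf{P}_i))$,'' so nothing is missing.
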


\begin{proof} Suppose $Ax$ is a majoritarian $n$-ary universal axiom. To show that $F_{maj,Y}$ satisfies $Ax$, consider any $\mathbf{P}_1,\dots,\mathbf{P}_n\in \mathrm{dom}(F_{maj,Y})$. We must show that  \[\langle \langle \mathbf{P}_1,F_{maj,Y}(\mathbf{P}_1)\rangle,\dots, \langle \mathbf{P}_n,F_{maj,Y}(\mathbf{P}_n)\rangle\rangle\in Ax.\] By definition of $F_{maj,Y}$ (Definition \ref{majoritarianProjDef}), this is equivalent to \[\langle \langle \mathbf{P}_1,F(\mathfrak{P}_Y(M(\mathbf{P}_1)))\rangle,\dots, \langle \mathbf{P}_n,F(\mathfrak{P}_Y(M(\mathbf{P}_n)))\rangle\rangle\in Ax.\] By Lemma \ref{RepProp0}, for all $i$, $M(\mathbf{P}_i)=M(\mathfrak{P}_Y(M(\mathbf{P}_i)))$. Then since $Ax$ is majoritarian, setting $\mathbf{Q}_i=\mathfrak{P}_Y(M(\mathbf{P}_i))$ in Definition \ref{majoritarianpairwiseAx}, the previous  membership statement is equivalent to 
\[\langle \langle \mathfrak{P}_Y(M(\mathbf{P}_1)),F(\mathfrak{P}_Y(M(\mathbf{P}_1)))\rangle,\dots, \langle \mathfrak{P}_Y(M(\mathbf{P}_n)),F(\mathfrak{P}_Y(M(\mathbf{P}_n)))\rangle\rangle\in Ax,\] which holds by our assumption that $F$ satisfies $Ax$.

Now suppose $Ax$ is a homogeneously pairwise $n$-ary universal axiom. To show that $F_{pair,Y,m}$ satisfies $Ax$, consider any $\mathbf{P}_1,\dots,\mathbf{P}_n\in \mathrm{dom}(F_{pair,Y,m})$. We must show that  \[\langle \langle \mathbf{P}_1,F_{pair,Y,m}(\mathbf{P}_1)\rangle,\dots, \langle \mathbf{P}_n,F_{pair,Y,m}(\mathbf{P}_n)\rangle\rangle\in Ax.\] By definition of $F_{pair,Y,m}$ (Definition \ref{PairProjDef}), this is equivalent to \[\langle \langle \mathbf{P}_1,F(\mathfrak{P}_{Y,m}(\psi\mathcal{M}(\mathbf{P}_1)))\rangle,\dots, \langle \mathbf{P}_n,F(\mathfrak{P}_{Y,m}(\psi\mathcal{M}(\mathbf{P}_n)))\rangle\rangle\in Ax.\] By Lemma \ref{RepProp}, for all $i$, $\psi \mathcal{M}(\mathbf{P}_i)=\mathcal{M}(\mathfrak{P}_{Y,m}(\psi\mathcal{M}(\mathbf{P}_i)))$. Then since $Ax$ is homogeneously pairwise, setting $\mathbf{Q}_i= \mathfrak{P}_{Y,m}(\psi\mathcal{M}(\mathbf{P}_i))$ in Definition \ref{majoritarianpairwiseAx}, the previous  membership statement is equivalent to 
\begin{eqnarray*}&&\langle \langle \mathfrak{P}_{Y,m}(\psi\mathcal{M}(\mathbf{P}_1)),F(\mathfrak{P}_{Y,m}(\psi\mathcal{M}(\mathbf{P}_1)))\rangle,\dots, \\
&& \;\,\langle \mathfrak{P}_{Y,m}(\psi\mathcal{M}(\mathbf{P}_n)),F(\mathfrak{P}_{Y,m}(\psi\mathcal{M}(\mathbf{P}_n)))\rangle\rangle\in Ax,\end{eqnarray*} which holds by our assumption that $F$ satisfies $Ax$. \end{proof}

Lemma \ref{majoritarianpreserved} provides a method of extending impossibility theorems proved for majoritarian or pairwise methods to all voting methods. However, it does not go far enough for us, because the variable candidate axioms of binary $\gamma$ and $\alpha$-resoluteness are not homogeneously pairwise axioms (see Lemma \ref{PSWquasi}).

\subsection{Variable candidate axioms}

Next we extend Lemma \ref{majoritarianpreserved} to handle variable candidate axioms---axioms that concern adding or removing a candidate from each voter's ballot.

\begin{definition}\label{quasimajoritarian} Let $Ax$ be an $n$-ary universal axiom. We say that $Ax$ is a \textit{variable candidate axiom}  if for all profiles $\mathbf{P}_1,\dots,\mathbf{P}_n$ and $\mathbf{Q}_1,\dots,\mathbf{Q}_n$ such that 
\begin{enumerate}
\item\label{quasimajoritarianA}  for $1\leq i\leq n$,  $X(\mathbf{P}_i)=X(\mathbf{Q}_i)$, and 
\item\label{quasimajoritarianB} for $1\leq i,j\leq n$, if $\mathbf{P}_i=(\mathbf{P}_j)|_{ X(\mathbf{P}_i)}$, then $\mathbf{Q}_i=(\mathbf{Q}_j)|_{ X(\mathbf{Q}_i)}$, 
\end{enumerate}
 we have that for all ${X_1\subseteq X(\mathbf{P}_1)}$, \dots , $X_n\subseteq X(\mathbf{P}_n)$, 
 \begin{itemize}
 \item[(3)] if $\langle \langle \mathbf{Q}_1,X_1\rangle,\dots, \langle \mathbf{Q}_n,X_n\rangle\rangle\in Ax$, then $\langle \langle \mathbf{P}_1,X_1\rangle,\dots, \langle \mathbf{P}_n,X_n\rangle\rangle \in Ax$. 
 \end{itemize}
\end{definition}
The intuition is that if the $\mathbf{Q}_i$'s are related to each other in terms of restriction in all of the ways that the $\mathbf{P}_i$'s are, and  $X_1,\dots,X_n$ are acceptable choices for the $\mathbf{Q}_i$'s according to the relevant variable candidate axiom (e.g., binary $\gamma$ or $\alpha$-resoluteness), then they are also acceptable choices for the $\mathbf{P}_i$'s. 

\begin{example} Binary $\gamma$ and $\alpha$-resoluteness correspond to the following axioms:
\begin{eqnarray*}
&&\mathsf{BG}=\{\langle \langle \mathbf{P},X\rangle, \langle \mathbf{P}',X'\rangle,\langle \mathbf{P}'',X''\rangle\rangle\mid \mbox{if $\mathbf{P}'=\mathbf{P}_{-b}$ for some $b\in X(\mathbf{P})$,}  \\
&&\qquad\qquad\qquad\qquad\qquad\qquad\qquad\qquad \mathbf{P}''=\mathbf{P}|_{ \{a,b\}}\mbox{ for }a\in X(\mathbf{P}'), \\
&&\qquad\qquad\qquad\qquad\qquad\qquad\qquad\qquad a\in X'\mbox{, and }X''=\{a\}, \\
&&\qquad\qquad\qquad\qquad\qquad\qquad\qquad\qquad\mbox{then $a\in X$}  \};
\end{eqnarray*} 
\begin{eqnarray*}
&&\mathsf{AR}=\{\langle \langle \mathbf{P},X\rangle, \langle \mathbf{P}',X'\rangle,\langle \mathbf{P}'',X''\rangle\rangle\mid \mbox{if $\mathbf{P}'=\mathbf{P}_{-b}$ for some $b\in X(\mathbf{P})$,}  \\
&&\qquad\qquad\qquad\qquad\qquad\qquad\qquad\qquad \mathbf{P}''=\mathbf{P}|_{ \{a,b\}}\mbox{ for }a\in X(\mathbf{P}'), \\
&&\qquad\qquad\qquad\qquad\qquad\qquad\qquad\qquad a\in X'\mbox{, and }X''=\{a\}, \\
&&\qquad\qquad\qquad\qquad\qquad\qquad\qquad\qquad\mbox{then $|X|\leq |X'|$}  \}.
\end{eqnarray*} 
\end{example}

\begin{lemma}\label{PSWquasi} $\mathsf{BG}$ and $\mathsf{AR}$ are variable candidate axioms but neither majoritarian nor homogeneously pairwise axioms.
\end{lemma}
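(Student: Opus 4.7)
The plan is to split the lemma into its two halves and treat them in parallel. For the positive half---that $\mathsf{BG}$ and $\mathsf{AR}$ are variable candidate axioms---I will verify Definition \ref{quasimajoritarian} directly. For the negative half---that neither is majoritarian or homogeneously pairwise---I will exhibit a single explicit counterexample that, by using profiles with identical margin graphs, falsifies both properties simultaneously (since the condition ``$k\mathcal{M}(\mathbf{P}_i) = \mathcal{M}(\mathbf{Q}_i)$ with $k=1$ for all $i$'' is strictly stronger than ``$M(\mathbf{P}_i) = M(\mathbf{Q}_i)$'').

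For the positive half, $\mathsf{BG}$ and $\mathsf{AR}$ share the same hypothesis ($\mathbf{P}_2 = (\mathbf{P}_1)_{-b}$, $\mathbf{P}_3 = \mathbf{P}_1|_{\{a,b\}}$, $a \in X_2$, $X_3 = \{a\}$), and their conclusions ($a \in X_1$ and $|X_1| \leq |X_2|$, respectively) depend only on the $X_i$'s. So I will run a single argument: given profiles $\mathbf{P}_i, \mathbf{Q}_i$ satisfying conditions (1)--(2) of Definition \ref{quasimajoritarian}, and given that the $\mathbf{Q}$ tuple lies in the axiom, I rewrite the restriction equations in the $\mathbf{P}$-hypothesis as $\mathbf{P}_i = \mathbf{P}_1|_{X(\mathbf{P}_i)}$ for $i=2,3$ and invoke condition (2) with $j=1$ to transfer these to $\mathbf{Q}_i = \mathbf{Q}_1|_{X(\mathbf{Q}_i)}$---that is, $\mathbf{Q}_2 = (\mathbf{Q}_1)_{-b}$ and $\mathbf{Q}_3 = \mathbf{Q}_1|_{\{a,b\}}$---using condition (1) to identify vertex sets. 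The $X$-conditions $a \in X_2$ and $X_3 = \{a\}$ transfer trivially, so the axiom's hypothesis holds for the $\mathbf{Q}$ tuple, and its conclusion---being purely about the $X_i$'s---transfers back to the $\mathbf{P}$ tuple.

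For the negative half, I will use a cyclic profile. Let $\mathbf{P}_1$ be the anonymous profile on $\{a,b,c\}$ with one voter each for $abc$, $bca$, $cab$, and set $\mathbf{P}_2 = (\mathbf{P}_1)_{-b}$ and $\mathbf{P}_3 = \mathbf{P}_1|_{\{a,b\}}$. Take $\mathbf{Q}_1 = \mathbf{P}_1$, $\mathbf{Q}_3 = \mathbf{P}_3$, and let $\mathbf{Q}_2$ be a distinct anonymous profile on $\{a,c\}$ with the same margin graph as $\mathbf{P}_2$---for example, the one obtained by adding one extra $ac$-voter and one extra $ca$-voter to $\mathbf{P}_2$. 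Then $\mathcal{M}(\mathbf{P}_i) = \mathcal{M}(\mathbf{Q}_i)$ for every $i$, yet $\mathbf{Q}_2 \neq (\mathbf{Q}_1)_{-b} = \mathbf{P}_2$. Choosing $X_1 = \{c\}$, $X_2 = \{a,c\}$, $X_3 = \{a\}$ then makes the $\mathbf{P}$ tuple violate $\mathsf{BG}$'s conclusion while the $\mathbf{Q}$ tuple satisfies $\mathsf{BG}$ vacuously (its hypothesis fails at the second coordinate); choosing $X_1 = \{a,b,c\}$, $X_2 = \{a\}$, $X_3 = \{a\}$ with the same profiles does the analogous job for $\mathsf{AR}$.

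The main obstacle, modest as it is, is the bookkeeping around Definition \ref{quasimajoritarian}: I need to track which index plays which role in condition (2) and to verify that the padded $\mathbf{Q}_2$ is genuinely a different \emph{anonymous} profile from $\mathbf{P}_2$ despite having identical margin graph. The conceptual core of the counterexample is simply the observation that the hypothesis of a variable-candidate axiom depends on the profiles themselves rather than on their margin graphs, so two profiles with the same margin graph can trigger the hypothesis on one side while failing to on the other---precisely the slack that majoritarianism and homogeneous pairwiseness would forbid.
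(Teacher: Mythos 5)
Your proposal is correct and follows essentially the same route as the paper: the positive half is the paper's direct transfer argument (use conditions (1)--(2) of Definition \ref{quasimajoritarian} to push the hypothesis $\mathbf{P}_2=(\mathbf{P}_1)_{-b}$, $\mathbf{P}_3=\mathbf{P}_1|_{\{a,b\}}$ over to the $\mathbf{Q}$-tuple and pull the conclusion, which concerns only the $X_i$'s, back), and your negative half is exactly the counterexample the paper sketches---profiles with $\mathcal{M}(\mathbf{P}_i)=\mathcal{M}(\mathbf{Q}_i)$ but $\mathbf{Q}_2\neq(\mathbf{Q}_1)|_{X(\mathbf{Q}_2)}$, so the $\mathbf{Q}$-tuple lies in $\mathsf{BG}$ (resp.~$\mathsf{AR}$) vacuously while the $\mathbf{P}$-tuple does not---which you merely instantiate concretely (and your cyclic-profile instance, with the padded $\mathbf{Q}_2$ having $5$ voters versus $3$, does verify). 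Your observation that one counterexample with $k=1$ refutes both the majoritarian and homogeneously pairwise properties matches the paper's remark that majoritarian implies homogeneously pairwise.
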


\begin{proof} We give only the proof for $\mathsf{BG}$, as the proof for $\mathsf{AR}$ is almost the same. Suppose (i) that $\mathbf{P},\mathbf{P}',\mathbf{P}''$ and $\mathbf{Q},\mathbf{Q}',\mathbf{Q}''$ are related as in Definition \ref{quasimajoritarian}.\ref{quasimajoritarianA}-\ref{quasimajoritarianB}. Further suppose (ii) that $\langle \langle \mathbf{Q},X\rangle, \langle\mathbf{Q}',X'\rangle,\langle\mathbf{Q}'',X''\rangle  \rangle\in\mathsf{BG}$. Finally, suppose (iii) that $\mathbf{P}'=\mathbf{P}_{-b}$ for some $b\in X(\mathbf{P})$, $\mathbf{P}''=\mathbf{P}|_{ \{a,b\}}$ for $a\in X(\mathbf{P}')$, $a\in X'$, and $X''=\{a\}$.  By (i) and (iii), we have that $\mathbf{Q}'=\mathbf{Q}_{-b}$ for $b\in X(\mathbf{Q})=X(\mathbf{P})$, $\mathbf{Q}''=\mathbf{Q}|_{ \{a,b\}}$ for $a\in X(\mathbf{Q}')=X(\mathbf{P}')$, $a\in X'$, and $X''=\{a\}$. Then by (ii), we have $a\in X$. Thus, we conclude that $\langle \langle \mathbf{P},X\rangle, \langle\mathbf{P}',X'\rangle,\langle\mathbf{P}'',X''\rangle  \rangle\in\mathsf{BG}$.

To see that $\mathsf{BG}$ is not homogeneously pairwise, it is easy to find  $\langle \langle \mathbf{P}_1,X_1\rangle, \langle \mathbf{P}_2,X_2\rangle, \langle \mathbf{P}_3,X_3\rangle\rangle$ and $\langle \langle \mathbf{Q}_1,X_1\rangle, \langle \mathbf{Q}_2,X_2\rangle, \langle \mathbf{Q}_3,X_3\rangle\rangle$ such that while $\mathcal{M}(\mathbf{P}_i)=\mathcal{M}(\mathbf{Q}_i)$ and $\mathbf{P}_2=(\mathbf{P}_1)_{-b}$, we have $\mathbf{Q}_2\neq (\mathbf{Q}_1)|_{ X(\mathbf{Q}_2)}$; then even if $\langle \langle \mathbf{P}_1,X_1\rangle, \langle \mathbf{P}_2,X_2\rangle, \langle \mathbf{P}_3,X_3\rangle\rangle \not\in \mathsf{BG}$, from $\mathbf{Q}_2\neq (\mathbf{Q}_1)|_{ X(\mathbf{Q}_2)}$ we trivially have $\langle \langle \mathbf{Q}_1,X_1\rangle, \langle \mathbf{Q}_2,X_2\rangle, \langle \mathbf{Q}_3,X_3\rangle\rangle \in \mathsf{BG}$.\end{proof}

Fortunately, not only majoritarian (resp.~homogeneously pairwise) axioms but also variable candidate axioms are preserved by our operation of majoritarian (resp.~pairwise) projection, thanks to the fact that our representation of weak (resp.~weighted weak) tournaments commutes with restriction, in contrast to McGarvey and Debord's constructions. \\

\begin{proposition}\label{quasimajoritarianTransfer} Let $F$ be an anonymous voting method satisfying a variable candidate universal axiom $Ax$. Then for any finite $Y\subseteq\mathcal{X}$, $F_{maj,Y}$ \textnormal{(}resp.~$F_{pair,Y,m}$\textnormal{)} satisfies $Ax$.
\end{proposition}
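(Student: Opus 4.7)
The plan is to imitate the proof of Lemma~\ref{majoritarianpreserved}, but with extra care in verifying the hypotheses of Definition~\ref{quasimajoritarian}. Fix a variable candidate $n$-ary axiom $Ax$ and profiles $\mathbf{P}_1,\dots,\mathbf{P}_n\in\mathrm{dom}(F_{maj,Y})$, and set $\mathbf{Q}_i:=\mathfrak{P}_Y(M(\mathbf{P}_i))$. By construction (see (\ref{PfromL})), $X(\mathbf{Q}_i)=X(M(\mathbf{P}_i))=X(\mathbf{P}_i)$, so clause~(\ref{quasimajoritarianA}) of Definition~\ref{quasimajoritarian} holds.

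The core of the argument is clause~(\ref{quasimajoritarianB}): if $\mathbf{P}_i=(\mathbf{P}_j)|_{X(\mathbf{P}_i)}$, I want $\mathbf{Q}_i=(\mathbf{Q}_j)|_{X(\mathbf{Q}_i)}$. This is precisely where the custom representation pays off. From $\mathbf{P}_i=(\mathbf{P}_j)|_{X(\mathbf{P}_i)}$ and Lemma~\ref{RestrictCommute}, $M(\mathbf{P}_i)=M(\mathbf{P}_j)|_{X(\mathbf{P}_i)}$. Applying Proposition~\ref{SubProp}.\ref{SubProp1} to $\mathfrak{P}_Y$ then gives
\begin{equation*}
\mathbf{Q}_i=\mathfrak{P}_Y(M(\mathbf{P}_i))=\mathfrak{P}_Y\bigl(M(\mathbf{P}_j)|_{X(\mathbf{P}_i)}\bigr)=\mathfrak{P}_Y(M(\mathbf{P}_j))|_{X(\mathbf{P}_i)}=\mathbf{Q}_j|_{X(\mathbf{Q}_i)},
\end{equation*}
as desired. (It is exactly this commutativity that McGarvey's construction would not provide; that is why we built $\mathfrak{P}_Y$ in Section~\ref{RepresentationSection}.) With clauses~(\ref{quasimajoritarianA})--(\ref{quasimajoritarianB}) in hand, the hypothesis that $F$ satisfies $Ax$ yields
\begin{equation*}
\bigl\langle\langle\mathbf{Q}_1,F(\mathbf{Q}_1)\rangle,\dots,\langle\mathbf{Q}_n,F(\mathbf{Q}_n)\rangle\bigr\rangle\in Ax,
\end{equation*}
and applying clause~(3) of Definition~\ref{quasimajoritarian} with $X_i:=F(\mathbf{Q}_i)$, which by Definition~\ref{majoritarianProjDef} equals $F_{maj,Y}(\mathbf{P}_i)\subseteq X(\mathbf{P}_i)$, transports membership back to the $\mathbf{P}_i$'s, showing $F_{maj,Y}$ satisfies $Ax$.

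The pairwise case runs in parallel, with $\mathbf{Q}_i:=\mathfrak{P}_{Y,m}(\psi\mathcal{M}(\mathbf{P}_i))$. Clause~(\ref{quasimajoritarianA}) holds by the same reasoning. For clause~(\ref{quasimajoritarianB}), from $\mathbf{P}_i=(\mathbf{P}_j)|_{X(\mathbf{P}_i)}$ one uses parts~(1)--(2) of Lemma~\ref{RestrictCommute} to get $\psi\mathcal{M}(\mathbf{P}_i)=(\psi\mathcal{M}(\mathbf{P}_j))|_{X(\mathbf{P}_i)}$, and then Proposition~\ref{SubProp}.\ref{SubProp2} gives $\mathbf{Q}_i=\mathbf{Q}_j|_{X(\mathbf{Q}_i)}$; the rest is identical.

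The argument is mostly bookkeeping: the only nontrivial move is the substitution step above invoking Proposition~\ref{SubProp}, and the only place where one could trip up is verifying that $F_{maj,Y}(\mathbf{P}_i)$ is indeed a legitimate choice of $X_i\subseteq X(\mathbf{P}_i)$ (so that clause~(3) is applicable), which follows immediately from $X(\mathbf{Q}_i)=X(\mathbf{P}_i)$. Conceptually, all the work has already been absorbed into the commutativity-with-restriction property of $\mathfrak{P}_Y$ and $\mathfrak{P}_{Y,m}$.
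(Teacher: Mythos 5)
Your proof is correct and follows essentially the same route as the paper's: setting $\mathbf{Q}_i=\mathfrak{P}_Y(M(\mathbf{P}_i))$ (resp.\ $\mathfrak{P}_{Y,m}(\psi\mathcal{M}(\mathbf{P}_i))$), verifying the hypotheses of Definition~\ref{quasimajoritarian} via Lemma~\ref{RestrictCommute} and Proposition~\ref{SubProp}, and then applying clause~(3) with $X_i=F(\mathbf{Q}_i)=F_{maj,Y}(\mathbf{P}_i)$. If anything, you are slightly more explicit than the paper, which leaves the pairwise case (where you correctly invoke Lemma~\ref{RestrictCommute} to commute the $\psi$-multiplication and margin-graph operations with restriction) as ``analogous.''
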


\begin{proof} Suppose $Ax$ is a majoritarian $n$-ary universal axiom. To show that $F_{maj,Y}$ satisfies $Ax$, consider any $\mathbf{P}_1,\dots,\mathbf{P}_n\in \mathrm{dom}(F_{maj,Y})$. We must show that  \[\langle \langle \mathbf{P}_1,F_{maj,Y}(\mathbf{P}_1)\rangle,\dots, \langle \mathbf{P}_n,F_{maj,Y}(\mathbf{P}_n)\rangle\rangle\in Ax.\] By definition of $F_{maj,Y}$, this is equivalent to 
\[\langle \langle \mathbf{P}_1,F(\mathfrak{P}_Y(M(\mathbf{P}_1)))\rangle,\dots, \langle \mathbf{P}_n,F(\mathfrak{P}_Y(M(\mathbf{P}_n)))\rangle\rangle\in Ax.\] 
By Proposition \ref{SubProp} and Lemma \ref{RestrictCommute}, we have  
\begin{eqnarray*}\mathfrak{P}_Y(M(\mathbf{P}_j))|_{ X(\mathfrak{P}_Y(M(\mathbf{P}_i)))} &=& \mathfrak{P}_Y(M(\mathbf{P}_j)|_{ X(\mathfrak{P}_Y(M(\mathbf{P}_i)))} ) \\
&=& \mathfrak{P}_Y(M((\mathbf{P}_j)|_{ X(\mathfrak{P}_Y(M(\mathbf{P}_i)))})) \\
&=& \mathfrak{P}_Y(M((\mathbf{P}_j)|_{ X(\mathbf{P}_i)})).\end{eqnarray*}
Thus, for $1\leq i,j\leq n$, we have that
\[\mbox{ if }\mathbf{P}_i=(\mathbf{P}_j)|_{ X(\mathbf{P}_i)}\mbox{, then }\mathfrak{P}_Y(M(\mathbf{P}_i))=\mathfrak{P}_Y(M(\mathbf{P}_j))|_{ X(\mathfrak{P}_Y(M(\mathbf{P}_i)))}.\]
Then since $Ax$ is a variable candidate axiom, setting $\mathbf{Q}_i= \mathfrak{P}_Y(M(\mathbf{P}_i))$ in Definition~\ref{quasimajoritarian}, the  membership statement displayed above is a consequence of
\[\langle \langle \mathfrak{P}_Y(M(\mathbf{P}_1)),F(\mathfrak{P}_Y(M(\mathbf{P}_1)))\rangle,\dots, \langle \mathfrak{P}_Y(M(\mathbf{P}_n)),F(\mathfrak{P}_Y(M(\mathbf{P}_n)))\rangle\rangle\in Ax,\] which holds by our assumption that $F$ satisfies $Ax$. The proof for the pairwise case is  analogous.\end{proof}

 Proposition \ref{quasimajoritarianTransfer} is the key to extending impossibility theorems involving variable candidate axioms proved for majoritarian or pairwise methods to all voting methods, as we show next.
 
 \subsection{Proof of the transfer lemmas} We are now in a position to prove the transfer lemmas introduced in Section \ref{ProofStrategy}.
 
\TransferLemOne*

\begin{proof} Fix a finite $Y\subseteq\mathcal{X}$, and let $F$ be a voting method satisfying the assumption. For any profile $\mathbf{P}$, let $a(\mathbf{P})$ be the anonymous profile such that $a(\mathbf{P})(L)=|\{i\in V(\mathbf{P})\mid \mathbf{P}_i=L\}|$ for $L\in \mathcal{L}(X(\mathbf{P}))$. Note that if $F$ is anonymous, then for any $\mathbf{P},\mathbf{P}'\in\mathrm{dom}(F)$, if $a(\mathbf{P})=a(\mathbf{P}')$, then $F(\mathbf{P})=F(\mathbf{P}')$. Also note that for any $a,b\in\mathcal{X}$, $a(\mathbf{P})_{a\leftrightarrows b} = a(\mathbf{P}_{a\leftrightarrows b} )$.

Fix a set $V_0\subseteq\mathcal{V}$ of voters with $|V|= |Y|!$, and consider the following restricted domain of profiles: \[D=\{\mathbf{P}\in\prof\mid X(\mathbf{P})\subseteq Y, V(\mathbf{P})=V_0\}.\] 
Now define the following domain of anonymous profiles:
\[D_A = \{\mathbf{P}\in\anonprof\mid X(\mathbf{P})\subseteq Y, \underset{L\in \mathcal{L}(X(\mathbf{P}))}{\sum} \mathbf{P}(L) = |Y|! \}.\]
For each anonymous profile $\mathbf{P}\in D_A$, fix a de-anonymized profile $d(\mathbf{P})\in D$ such that $a(d(\mathbf{P}))=\mathbf{P}$. Now define an anonymous voting method $G$ on $D_A$ by
\[G(\mathbf{P})= F(d(\mathbf{P})).\]
Then since $F$ satisfies anonymity, neutrality, binary quasi-resoluteness, binary $\gamma$, and $\alpha$-resoluteness, it is easy to see that the anonymous method $G$ satisfies neutrality, binary quasi-resoluteness, binary $\gamma$, and $\alpha$-resoluteness. For instance, for neutrality, since $\mathbf{P}=a(d(\mathbf{P}))$, we have  $\mathbf{P}_{a\leftrightarrows b} =  a(d(\mathbf{P}))_{a\leftrightarrows b}=a(d(\mathbf{P})_{a\leftrightarrows b})$, which with $\mathbf{P}_{a\leftrightarrows b}=a(d(\mathbf{P}_{a\leftrightarrows b}))$ implies $a(d(\mathbf{P}_{a\leftrightarrows b}))=a(d(\mathbf{P})_{a\leftrightarrows b})$. Then
\begin{eqnarray*}
G(\mathbf{P}_{a\leftrightarrows b}) &=&F(d(\mathbf{P}_{a\leftrightarrows b}))\\
&=& F(d(\mathbf{P})_{a\leftrightarrows b})  \mbox{ since }a(d(\mathbf{P}_{a\leftrightarrows b}))=a(d(\mathbf{P})_{a\leftrightarrows b})\mbox{ and }F\mbox{ is anonymous}\\
&=& F(d(\mathbf{P}))_{a\leftrightarrows b}\mbox{ by the neutrality of }F\\
&=& G(\mathbf{P})_{a\leftrightarrows b}.
\end{eqnarray*}
The proofs that $G$ satisfies the other axioms involve similar reasoning. Now recall from Definition \ref{majoritarianProjDef} that
\[\mathrm{dom}(G_{maj,Y})=\{\mathbf{P}\in\anonprof\mid X(\mathbf{P})\subseteq Y\mbox{ and }\mathfrak{P}_Y(M(\mathbf{P}))\in\mathrm{dom}(G)\}. \]
Since $\mathrm{dom}(G)=D_A$ and by Definition \ref{RepDef}, $\mathfrak{P}_Y(M(\mathbf{P}))\in D_A$ for all $\mathbf{P}$ with $X(\mathbf{P})\subseteq Y$, it follows that $\mathrm{dom}(G_{maj,Y})=\{\mathbf{P}\in\anonprof\mid X(\mathbf{P})\subseteq Y\}$. Moreover,  $G_{maj,Y}$ is an anonymous voting method that is majoritarian and neutral by Lemma \ref{Ismajoritarian}, satisfies binary quasi-resoluteness by Lemma \ref{majoritarianpreserved} (for majoritarian axioms), and satisfies binary $\gamma$ and $\alpha$-resoluteness by Lemma \ref{PSWquasi} and Proposition \ref{quasimajoritarianTransfer}. Finally, define a voting method $H$ such that for any profile $\mathbf{P}$, we have $H(\mathbf{P})=G_{maj,Y}(a(\mathbf{P}))$. Then it is easy to see that $H$ is a majoritarian  voting method with domain $\{\mathbf{P}\in\prof\mid X(\mathbf{P})\subseteq Y\}$ satisfying anonymity, neutrality, binary quasi-resoluteness, binary $\gamma$, and $\alpha$-resoluteness. \end{proof}

\TransferLemTwo*

\begin{proof} The proof is analogous to that of Lemma \ref{transferlemone} above, only now we fix a set $V_0\subseteq\mathcal{V}$ of voters with $|V|= m|Y|!$, since the profile $\mathfrak{P}_{Y,m}(\mathcal{M}(\mathbf{P}))$ from Definition \ref{RepDefWeighted} has $m|Y|!$ voters, and we move from the anonymous method $G$ satisfying neutrality, quasi-resoluteness, and binary $\gamma$ to $G_{pair,Y,m}$ being an anonymous voting method on the domain \[\{\mathbf{P}\in\anonprof\mid X(\mathbf{P})\subseteq Y,\,m\mbox{ is the maximum weight in }\mathcal{M}(\mathbf{P})\}\]  that is pairwise and neutral by Lemma \ref{Ismajoritarian}, quasi-resolute by Lemma \ref{majoritarianpreserved} (for pairwise axioms), and satisfies binary $\gamma$ by Lemma \ref{PSWquasi} and Proposition \ref{quasimajoritarianTransfer}. \end{proof}

This completes the proofs of Theorems \ref{Impossibility1}.\ref{Impossibility1a} and \ref{Impossibility4} as in Section \ref{ProofStrategy}.

\section{Conclusion}\label{Conclusion}

In this paper, we have proved impossibility theorems illustrating the tradeoff between binary expansion consistency and the power of voting methods to narrow down the set of winners. For further research, a natural refinement of these theorems would optimize not only the number of candidates but also the number of \textit{voters} needed to obtain the impossibility results, in the spirit of \cite{Brandt2017,Brandt2021}.\footnote{Inspection of our proofs shows that we use $6!=760$ voters to generate the profiles in our proof of Theorem \ref{Impossibility1}.\ref{Impossibility1a} and $12\times 4!=288$ voters to generate the profiles in our proof of Theorem~\ref{Impossibility4}.} As for the practical consequences of these theorems for the design of voting methods, a natural response is to seek variants of binary expansion consistency, still aimed at mitigating spoiler effects, that are consistent with the quasi-resoluteness property studied in this paper (see, e.g., \cite{HP2022}). 

Another avenue for further research concerns applying the proof strategy of this paper to other impossibility theorems in voting theory. Though the proof strategy may appear tailored to the specific axioms appearing in our impossibility theorems, in fact it applies more broadly. Our majoritarian and pairwise projections preserve neutrality (Lemma \ref{Ismajoritarian}.\ref{Ismajoritarianc}), all majoritarian axioms and  homogeneously pairwise axioms (Lemma \ref{majoritarianpreserved}), and all variable candidate universal axioms (Proposition~\ref{quasimajoritarianTransfer}). Furthermore, they preserve other well-known axioms such as monotonicity\ and Maskin monotonicity.\footnote{Monotonicity states that if $x\in F(\mathbf{P})$ and $\mathbf{P}'$ is obtained from $\mathbf{P}$ by moving $x$ up in one voter's ballot, keeping the relative order of all other candidates the same, then $x\in F(\mathbf{P}')$. Maskin monotonicity (for non-anonymous profiles) states that if $x\in F(\mathbf{P})$ and $\mathbf{P}'$ is a profile with the same voters such that if a voter ranks $x$ above $y$ in $\mathbf{P}$, then the voter still ranks $x$ above $y$ in $\mathbf{P}'$, then $x\in F(\mathbf{P}')$ (which can be suitably adapted to anonymous profiles). To see that these axioms are preserved, one can check that if profiles $\mathbf{P}$ and $\mathbf{P}'$ are related by a monotonic (resp.~Maskin monotonic) improvement of a candidate $x$, then $\mathfrak{P}_Y(M(\mathbf{P}))$ and $\mathfrak{P}_Y(M(\mathbf{P}'))$ are related in the same way, as are $\mathfrak{P}_{Y,m}(\mathcal{M}(\mathbf{P}))$ and $\mathfrak{P}_{Y,m}(\mathcal{M}(\mathbf{P}'))$, so the monotonicity (resp.~Maskin monotonicity) of  $F$ implies the monotonicity (resp.~Maskin monotonicity) of $F_{maj,Y}$ and $F_{pair,Y,m}$.} Thus, impossibility theorems involving monotonicity (resp.~Maskin monotonicity) together with members of the other classes of axioms above fall under the scope of our proof strategy. We leave it to future work to explore impossibility theorems of such forms.

\subsection*{Acknowledgements}

For helpful comments, we thank the anonymous referee and Jennifer Wilson. We are also grateful for the opportunity to present versions of this work at the Logic@IITK seminar in February 2021, the AMS Special Session on The Mathematics of Decisions, Elections and Games at the Joint Mathematics Meeting in April 2022, and the UMD Logic Seminar in October 2022.

\appendix

\section{SAT solving for majoritarian methods}\label{SATmajoritarian}

In this Appendix and the next, we explain how the SAT solving approach of Brandt and Geist \cite{Brandt2016} can be applied to prove our Propositions \ref{FromSatThm1} and \ref{FromQMFinal} for majoritarian and pairwise voting methods, respectively. 

Jupyter notebooks with our SAT formalization are available at \\ \href{https://github.com/szahedian/binary-gamma}{https://github.com/szahedian/binary-gamma}.

\subsection{Results}\label{SATResults}

Instead of working with profiles and majoritarian methods, our first SAT approach works with tournaments and tournament solutions.

\begin{definition} A \textit{weak tournament solution} (resp.~\textit{tournament solution}) is a function $F$ on a set of weak tournaments  (resp.~tournaments) such that for all $\wt\in\mathrm{dom}(F)$, we have $\varnothing\neq F(\wt)\subseteq X(\wt)$, where $X(\wt)$ is the set of nodes of $T$.
\end{definition}

The axioms of neutrality, binary quasi-resoluteness, binary $\gamma$, and $\alpha$-resoluteness from Section \ref{IndividualToSocial} apply in the obvious way to weak tournament solutions.

\begin{definition}\label{TournAx} Let $F$ be a weak tournament solution.
\begin{enumerate}
\item $F$ is \textit{neutral} if for all $T,T'\in\mathrm{dom}(F)$, $F(T')=h[F(T)]$ for any  isomorphism $h:T\to T'$.
\item $F$ satisfies \textit{binary quasi-resoluteness} if for any tournament $\wt\in \mathrm{dom}(F)$ with $|X(\wt)|=2$, we have $|F(\wt)|=1$.
\item $F$ satisfies \textit{binary $\gamma$} if for every $\wt\in \mathrm{dom}(F)$ and $a,b\in X(\wt)$, if $a\in F(\wt_{-b})$ and $F(\wt|_{ \{a,b\}})=\{a\}$, then $a\in F(\wt)$.
\item\label{TournAx2} $F$ satisfies \textit{$\alpha$-resoluteness} if for any $\wt\in \mathrm{dom}(F)$ and $a,b\in X(\wt)$, if $a\in F(\wt_{-b})$ and $ F(\wt|_{ \{a,b\}})=\{a\}$, then $|F(\wt)|\leq |F(\wt_{-b})|$.
\end{enumerate}
\end{definition}

We used SAT to find the dividing line in terms of number of candidates between the consistency and inconsistency of these axioms.

\begin{proposition}\label{SatThm0}  $\,$
\begin{enumerate}
\item\label{SatThm02} For any finite $Y\subseteq\mathcal{X}$ with $|Y|\geq 6$, there is no neutral tournament solution  satisfying binary quasi-resoluteness, binary $\gamma$, and $\alpha$-resoluteness on the domain $\{T\in \mathbb{T}\mid X(T)\subseteq Y\}$.
\item\label{SatThm01} There is a neutral weak tournament solution satisfying binary \\ quasi-resoluteness, binary $\gamma$, and $\alpha$-resoluteness on the \\ domain $\{\wt\in \mathbb{WT}\mid |X(\wt)|\leq 5\}$.
\end{enumerate}
\end{proposition}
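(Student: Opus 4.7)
The plan is to prove both parts via SAT solving, in the spirit of Brandt and Geist. For each part we introduce, for every canonical representative $T$ of an isomorphism class of (weak) tournaments on at most the allowed number of nodes and every nonempty $S \subseteq X(T)$, a propositional variable $A_{T,S}$ with intended meaning ``$F(T)=S$.'' Because we work with canonical representatives, neutrality (clause 1 of Definition \ref{TournAx}) is absorbed into the encoding: whenever the axioms refer to $F(T_{-b})$ or $F(T|_{\{a,b\}})$, we first compute a canonical form of the induced subtournament together with a witnessing isomorphism to it, and translate any reference to a set of nodes along that isomorphism before using the corresponding variable.

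For the impossibility in part (\ref{SatThm02}), it suffices to fix a single $Y$ with $|Y|=6$: any tournament solution on a larger $Y$ restricts to one on the tournaments whose node set is a chosen 6-element subset, and all axioms of Definition \ref{TournAx} descend to this restriction. The CNF encoding then consists of (a) exactly-one clauses stipulating that each $T$ has a unique, nonempty winning set; (b) for each 2-node tournament with $a\to b$, the unit clause $A_{T,\{a\}}$ expressing binary quasi-resoluteness; (c) for each $T$, each $b\in X(T)$, and each $a\in X(T)\setminus\{b\}$ with $a\to b$ in $T$, clauses realizing binary $\gamma$, namely $A_{T_{-b},S'}\wedge A_{T,S}\to a\in S$ whenever $a\in S'$, finitely expanded over all candidate values of $S,S'$; and (d) analogous clauses for $\alpha$-resoluteness asserting $|S|\leq |S'|$ under the same antecedent. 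Running a SAT solver on this CNF should return UNSAT, yielding part (\ref{SatThm02}). For part (\ref{SatThm01}), we apply the same encoding but to weak tournaments with up to 5 nodes; a satisfying assignment returned by the solver then defines an explicit weak tournament solution which can be independently checked to satisfy all the axioms, establishing consistency.

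The main obstacle is keeping the encoding tractable. Table \ref{NumObjects} shows that the count of isomorphism classes grows rapidly, so we need an efficient canonicalization procedure both to enumerate representatives of (weak) tournaments with at most six nodes and, crucially, to compute for each $T$ and each $b\in X(T)$ (and each pair $\{a,b\}$) the canonical form of the induced subtournament together with a witnessing isomorphism, since the binary-$\gamma$ and $\alpha$-resoluteness clauses must link $A_{T,\cdot}$ to the variables attached to these canonical subtournaments. A secondary concern is correctness of the translation from axioms to clauses: a single bug in expanding clause schema (c) or (d) would invalidate the argument. We address this by restricting part (\ref{SatThm02}) to tournaments rather than weak tournaments (which only strengthens the impossibility), by cross-checking the generated clause set against small hand-computed cases, and, if greater assurance is desired, by formally verifying the encoding in an interactive theorem prover along the lines described in Appendix \ref{Verification} for the companion Proposition \ref{FromQMFinal}.
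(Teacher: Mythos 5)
There is a genuine gap: your claim that ``neutrality is absorbed into the encoding'' by canonicalizing subtournaments and translating along \emph{one} witnessing isomorphism is false. Neutrality in the sense of Definition \ref{TournAx} quantifies over \emph{all} isomorphisms, including automorphisms of a tournament with itself, so it forces $F(\wt)$ to be a union of orbits of $\mathrm{Aut}(\wt)$. This is exactly what the paper's \emph{orbit condition} (Definition \ref{OrbitDef}) encodes by restricting the variables to $A_{\wt,Y}$ with $Y\in\mathscr{O}(\wt)$; Lemma \ref{IsoLem} shows that only under this restriction is the extension $F^*(\wt)=h_\wt[F(\wt_C)]$ independent of the choice of isomorphism $h_\wt$, which underlies the whole correspondence in Lemmas \ref{CanLem} and \ref{CanProbLem}. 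Your encoding, which allows arbitrary nonempty $S\subseteq X(T)$ as winner sets, omits this intra-tournament symmetry constraint entirely, and it cannot be recovered from your clause schemas (c) and (d): for the $3$-cycle $C_3$ with $0\to 1\to 2\to 0$, every removal of a node leaves a $2$-node tournament whose winner loses head-to-head to the removed node, so the binary-$\gamma$ and $\alpha$-resoluteness antecedents never fire, and your CNF happily permits $S(C_3)=\{0\}$ --- whereas neutrality forces all three nodes to win, a fact the paper's human-readable argument for Proposition \ref{Impossibility1seven} uses repeatedly.

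The consequences cut both ways. For part (\ref{SatThm01}), a satisfying assignment of your CNF does \emph{not} ``define an explicit weak tournament solution'' satisfying the axioms: extending via the fixed isomorphisms yields a solution that in general violates neutrality (precisely because well-definedness and neutrality of $F^*$ require the orbit condition), so the possibility claim is unproved. For part (\ref{SatThm02}), your CNF is sound but strictly weaker than the paper's, and there is good reason to expect it is \emph{satisfiable} rather than UNSAT: the paper's remark after Theorem \ref{Impossibility1} notes that dropping neutrality (while keeping anonymity) admits a majoritarian method satisfying all the remaining axioms, and the extra freedom your encoding grants --- singleton winners in symmetric tournaments like $C_3$ --- is exactly the freedom such non-neutral methods exploit. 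If the solver returns SAT, your argument yields nothing. A secondary symptom of the same omission: the paper's formulas $\mathsf{bg}_\mathbb{D}$ and $\mathsf{ar}_\mathbb{D}$ quantify over \emph{all} embeddings $e:\wt'\hookrightarrow\wt$ and $f:\wt''\hookrightarrow\wt$, not a single chosen isomorphism per induced subtournament; under the orbit condition these instances collapse to one constraint (again Lemma \ref{IsoLem}), but without it your single-witness translation misses constraints even between distinct tournaments. The fix is to adopt the paper's setup: variables $A_{\wt,Y}$ only for $Y\in\mathscr{O}(\wt)$, clauses ranging over all embeddings, and the correspondence Lemmas \ref{CanLem}, \ref{CanProbLem}, and \ref{ValCor} to pass between satisfying valuations and neutral (weak) tournament solutions, whence Proposition \ref{CanSatThm} gives both parts.
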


This proposition is a consequence of Proposition \ref{CanSatThm} and Lemma \ref{CanProbLem} below.

\subsection{SAT encoding: canonical tournaments and solutions}\label{SAT1}

For computational feasibility, following Brandt and Geist \cite{Brandt2016} we work with tournament solutions on a smaller domain of \textit{canonical} tournaments up to a certain number of candidates.

\begin{definition} Given $\wt,\wt'\in \mathbb{WT}$, let $\wt\cong \wt'$ if $\wt$ and $\wt'$ are isomorphic. For each $\wt\in \mathbb{WT}$, define the equivalence class $[\wt]=\{\wt'\in \mathbb{WT}\mid \wt\cong \wt'\}$ and pick a \textit{canonical representative} $\wt_C\in [\wt]$. Let $\mathbb{WT}_{\!\!C} =\{\wt_C\mid T\in \mathbb{WT}\}$ and $\mathbb{T}_C =\{T_C\mid T\in \mathbb{T}\}$ be the sets of \textit{canonical weak tournaments} and \textit{canonical tournaments}, respectively. A \textit{canonical weak tournament solution} (resp.~\textit{canonical tournament solution}) is a weak tournament solution (resp.~tournament solution) whose domain is a subset of $\mathbb{WT}_{\!\!C}$ (resp.~$\mathbb{T}_C$).\end{definition}

Next we consider analogues for canonical weak tournament solutions of the axioms on weak tournament solutions from Definition \ref{TournAx}. Binary quasi-resoluteness is the most straightforward.

\begin{definition} A canonical weak tournament solution $F$ satisfies \textit{binary quasi-resoluteness} if for every $T\in \mathbb{T}_C$ with $|X(T)|=2$, we have $|F(T)|=1$.
\end{definition}

As for neutrality, there is a one-to-one correspondence between neutral weak tournament solutions and canonical weak tournament solutions satisfying what Brandt and Geist call the \textit{orbit condition}. Given $\wt\in\mathbb{WT}_C$ and $a,b\in X(\wt)$, let $a\sim_T b$ if there exists an automorphism $h$ of $\wt$ such that $h(a)=b$. The \textit{orbit of $a$ in $\wt$} is  $\{b\in \wt\mid a\sim_T b\}$. Let $\mathcal{O}_{\wt}$ be the set of all orbits of elements of $\wt$.

\begin{definition}\label{OrbitDef} Given a canonical weak tournament $\wt$ and $Y\subseteq X(\wt)$, we say that $(\wt,Y)$ satisfies the \textit{orbit condition} if for all $O\in \mathcal{O}_\wt$, we have $O\subseteq Y$ or $O\cap Y=\varnothing$. A canonical weak tournament solution $F$ satisfies the \textit{orbit condition} if for all $\wt\in \mathrm{dom}(F)$, $(\wt,F(\wt))$ satisfies the orbit condition. Let \[\mathscr{O}(\wt)=\{Y\subseteq X(\wt)\mid Y\neq\varnothing\mbox{ and } (\wt,Y)\mbox{ satisfies the orbit condition}\}.\]
\end{definition}

To state the one-to-one correspondence, we need the following preliminary lemma.

\begin{lemma}\label{IsoLem}  Let $F$ be a canonical weak tournament solution satisfying the orbit condition. Then for any $\wt\in\mathbb{WT}$ and isomorphisms $g:\wt_C\to \wt$ and $h:\wt_C\to \wt$, we have $g[F(\wt_C)]=h[F(\wt_C)]$.
\end{lemma}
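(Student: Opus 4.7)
The plan is to reduce the statement about two isomorphisms $g,h: \wt_C\to \wt$ to a statement about a single automorphism of $\wt_C$. Specifically, since both $g$ and $h$ are isomorphisms between the same pair of weak tournaments, the composition $\sigma = h^{-1}\circ g$ is well-defined and is an automorphism of $\wt_C$.

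Given this, I would prove the inclusion $g[F(\wt_C)]\subseteq h[F(\wt_C)]$ directly, with the reverse inclusion following by symmetry (swapping the roles of $g$ and $h$, using $g^{-1}\circ h$ instead). Take any $a\in F(\wt_C)$. Then $\sigma(a) = (h^{-1}\circ g)(a)$ lies in the orbit of $a$ under the automorphism group of $\wt_C$, that is, $a\sim_{\wt_C}\sigma(a)$. By the orbit condition applied to $(\wt_C, F(\wt_C))$, the entire orbit of $a$ is contained in $F(\wt_C)$; hence $\sigma(a)\in F(\wt_C)$. Consequently $g(a) = h(\sigma(a))\in h[F(\wt_C)]$, which gives the desired inclusion.

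No step here should be an obstacle; the argument is essentially a three-line chase once the key observation is made that $h^{-1}\circ g$ is an automorphism of the canonical representative $\wt_C$. The orbit condition is being used in its intended form: it guarantees that $F(\wt_C)$ is closed under $\mathrm{Aut}(\wt_C)$, which is precisely what is needed to make the image of $F(\wt_C)$ under an isomorphism into $\wt$ independent of the choice of isomorphism. The lemma will then serve in the sequel to well-define a neutral weak tournament solution on all of $\mathbb{WT}$ from a canonical one satisfying the orbit condition by setting $F^\star(\wt) := g[F(\wt_C)]$ for any isomorphism $g:\wt_C\to \wt$.
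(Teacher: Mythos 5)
Your proof is correct and takes essentially the same route as the paper's: both hinge on the observation that $h^{-1}\circ g$ is an automorphism of $\wt_C$, so that the orbit condition forces $F(\wt_C)$ to be closed under it, yielding $g[F(\wt_C)]\subseteq h[F(\wt_C)]$ with the reverse inclusion by symmetry. The only cosmetic difference is that the paper chases a point $a\in g[F(\wt_C)]$ via $g^{-1}(a)$, whereas you apply $\sigma=h^{-1}\circ g$ directly to a point of $F(\wt_C)$; the arguments are otherwise identical.
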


\begin{proof} Suppose $a\in g[F(\wt_C)]$, so $g^{-1}(a)\in F(\wt_C)$. Since $h^{-1}\circ g$ is an automorphism of $\wt_C$ such that $h^{-1}(g(g^{-1}(a)))=h^{-1}(a)$, we have $g^{-1}(a)\sim_{T_C} h^{-1}(a)$. Then since $F$ satisfies the orbit condition, $g^{-1}(a)\in F(\wt_C)$ implies $h^{-1}(a)\in F(\wt_C)$, so $a\in h[F(\wt_C)]$. The other direction is analogous. \end{proof}

Lemma \ref{IsoLem} ensures that the function $F^*$ in the following is well defined.

\begin{lemma}\label{CanLem} $\,$
\begin{enumerate}
\item\label{CanLem1} Given a canonical weak tournament solution $F$ satisfying the orbit condition, the function $F^*$ on $\{\wt\in \mathbb{WT}\mid \wt_C\in \mathrm{dom}(F)\}$ defined by $F^*(\wt)= h_\wt[F(\wt_C)]$ for an isomorphism $h_\wt:\wt_C\to \wt$ is a neutral weak tournament solution. 
\item\label{CanLem2} Given a neutral weak tournament solution $F$, the function $F_*$ on $\{\wt_C\mid \wt\in\mathrm{dom}(F)\}$ defined by $F_*(\wt_C)=F(\wt_C)$ is a canonical weak tournament solution satisfying the orbit condition.
\item For any canonical weak tournament solution $F$ satisfying the orbit condition, $(F^*)_*=F$; and for any neutral weak tournament solution $F$, $(F_*)^*=F$.
\end{enumerate}
\end{lemma}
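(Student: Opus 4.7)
The plan is to verify the three claims in order, with Lemma \ref{IsoLem} doing the heavy lifting for well-definedness and the orbit condition doing the heavy lifting for the round-trip identities.

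For part (\ref{CanLem1}), the well-definedness of $F^*$ is exactly Lemma \ref{IsoLem}: any two isomorphisms $\wt_C\to \wt$ send $F(\wt_C)$ to the same subset of $X(\wt)$. Then $F^*(\wt)=h_\wt[F(\wt_C)]$ is a nonempty subset of $X(\wt)$ because $h_\wt$ is a bijection and $F(\wt_C)\subseteq X(\wt_C)$ is nonempty, so $F^*$ is a weak tournament solution. To check neutrality, let $g:\wt\to\wt'$ be an isomorphism between $\wt,\wt'\in\mathrm{dom}(F^*)$. Then $g\circ h_\wt:\wt_C\to\wt'$ is an isomorphism, so applying Lemma \ref{IsoLem} to the pair $g\circ h_\wt$ and $h_{\wt'}$ yields $g[F^*(\wt)]=(g\circ h_\wt)[F(\wt_C)]=h_{\wt'}[F(\wt_C)]=F^*(\wt')$.

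For part (\ref{CanLem2}), that $F_*$ is a canonical weak tournament solution is immediate. For the orbit condition, fix $\wt_C\in\mathrm{dom}(F_*)$, $a\in F_*(\wt_C)$, and $b$ with $a\sim_{\wt_C}b$; then some automorphism $h$ of $\wt_C$ sends $a$ to $b$. By the neutrality of $F$, $F(\wt_C)=h[F(\wt_C)]$, so $b=h(a)\in F(\wt_C)=F_*(\wt_C)$. Hence $F_*(\wt_C)$ is a union of $\sim_{\wt_C}$-orbits, which is the orbit condition.

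For part (3), consider $(F^*)_*(\wt_C)=F^*(\wt_C)$. By definition, $F^*(\wt_C)=h_{\wt_C}[F(\wt_C)]$ for some isomorphism $h_{\wt_C}$ from the canonical representative of $\wt_C$ (which is $\wt_C$ itself) to $\wt_C$, i.e., an automorphism of $\wt_C$. Since $F$ satisfies the orbit condition, $F(\wt_C)$ is a union of orbits and is therefore stable under every automorphism, giving $h_{\wt_C}[F(\wt_C)]=F(\wt_C)$; hence $(F^*)_*=F$. Conversely, $(F_*)^*(\wt)=h_\wt[F_*(\wt_C)]=h_\wt[F(\wt_C)]$, and since $F$ is neutral with $h_\wt:\wt_C\to\wt$ an isomorphism, this equals $F(\wt)$, so $(F_*)^*=F$. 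The only substantive obstacle is keeping track of which isomorphisms are chosen at each step and invoking Lemma \ref{IsoLem} (for well-definedness and neutrality) or the orbit condition (for the round-trip identity $(F^*)_*=F$) at the right moment; no new combinatorial idea is needed beyond those already packaged in Lemma \ref{IsoLem}.
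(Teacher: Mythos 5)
Your proof is correct, and it follows exactly the route the paper intends: the paper states Lemma \ref{CanLem} without proof, signaling only (via the remark preceding it) that Lemma \ref{IsoLem} secures well-definedness of $F^*$, and your verification—Lemma \ref{IsoLem} for well-definedness and neutrality, neutrality at automorphisms for the orbit condition of $F_*$, and orbit-stability under automorphisms plus neutrality for the two round-trip identities—is precisely the routine argument being suppressed. The only (harmless) point you share with the paper rather than resolve is the implicit assumption that $\mathrm{dom}(F)$ contains the canonical representatives of its members (e.g., so that $F(\wt_C)$ in part (\ref{CanLem2}) and the neutrality application in $(F_*)^*=F$ make sense), together with the tacit use of $\wt_C=\wt'_C$ for isomorphic $\wt,\wt'$ in your neutrality step.
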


Next we state the analogues of binary $\gamma$ and $\alpha$-resoluteness for canonical weak tournament solutions. Given weak tournaments $\wt$ and $\wt'$, an \textit{embedding of $\wt'$ into $\wt$} is an injective function $e:X(\wt')\to X(\wt)$ such that for all $a,b\in X(\wt')$, we have $a\to b$ in $\wt'$ if and only if $e(a)\to e(b)$ in $\wt$, in which case we write $e: \wt'\hookrightarrow \wt$.

\begin{definition} Let $F$ be a canonical weak tournament solution.
\begin{enumerate}
\item $F$ satisfies \textit{canonical binary $\gamma$} if for any $\wt,\wt',\wt''\in \mathrm{dom}(F)$, if $|X(\wt)| = |X(\wt')| + 1$, $|X(\wt'')|=2$, and there are embeddings $e:\wt'\hookrightarrow \wt$ and $f:\wt''\hookrightarrow \wt$ such that $X(\wt)\setminus e[X(\wt')]\subseteq f[X(\wt'')]$, then for $a\in e[X(\wt')]\cap  f[X(\wt'')]$, if $e^{-1}(a)\in F(\wt')$ and $F(\wt'')=\{f^{-1}(a)\}$, then $a\in F(\wt)$.

\item $F$ satisfies \textit{canonical $\alpha$-resoluteness} if for any $\wt,\wt',\wt''\in \mathrm{dom}(F)$, if $|X(\wt)| = |X(\wt')| + 1$, $|X(\wt'')|=2$, and there are embeddings $e:\wt'\hookrightarrow \wt$ and $f:\wt''\hookrightarrow \wt$ such that $X(\wt)\setminus e[X(\wt')]\subseteq f[X(\wt'')]$, then for $a\in e[X(\wt')]\cap  f[X(\wt'')]$, if $e^{-1}(a)\in F(\wt')$ and $F(\wt'')=\{f^{-1}(a)\}$, then $|F(\wt)|\leq |F(\wt')|$.
\end{enumerate}
\end{definition}

The correspondence in Lemma \ref{CanLem} preserves the relevant axioms as follows.

\begin{lemma}\label{CanProbLem}$\,$
\begin{enumerate}
\item If $F$ is a canonical weak tournament solution satisfying the orbit condition and canonical binary $\gamma$ (resp.~canonical $\alpha$-resoluteness, binary quasi-resoluteness), then $F^*$ defined as in Lemma \ref{CanLem}.\ref{CanLem1} is a neutral weak tournament solution satisfying binary $\gamma$ \textnormal{(}resp.~$\alpha$-resoluteness, binary quasi-resoluteness\textnormal{)}.
\item If $F$ is a neutral weak tournament solution satisfying binary $\gamma$ \textit{(}resp.~$\alpha$-resoluteness, binary quasi-resoluteness\textnormal{)}, then $F_*$ defined as in Lemma \ref{CanLem}.\ref{CanLem2} is a canonical weak tournament solution satisfying the orbit condition and canonical binary $\gamma$ \textnormal{(}resp.~canonical $\alpha$-resoluteness, binary quasi-resoluteness\textnormal{)}.
\end{enumerate}
\end{lemma}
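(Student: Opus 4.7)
\medskip

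\noindent\textbf{Proof proposal.} The strategy is to translate between the two formulations of the axioms, using isomorphisms in one direction and embeddings in the other, while exploiting Lemma~\ref{IsoLem} to ensure well-definedness of $F^*$ and the orbit condition to ensure that canonical choices lift consistently.

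For the upward direction (from a canonical solution $F$ to $F^*$), binary quasi-resoluteness is immediate: for $T\in\mathbb{T}$ with $|X(T)|=2$, the canonical representative $T_C$ also has two nodes, so $|F^*(T)|=|h_T[F(T_C)]|=|F(T_C)|=1$. For binary $\gamma$, suppose $a\in F^*(T_{-b})$ and $F^*(T|_{\{a,b\}})=\{a\}$. Set $T'=T_{-b}$ and $T''=T|_{\{a,b\}}$, and fix isomorphisms $h:T_C\to T$, $h':T'_C\to T'$, $h'':T''_C\to T''$. Define the embeddings $e=h^{-1}\circ h':T'_C\hookrightarrow T_C$ and $f=h^{-1}\circ h'':T''_C\hookrightarrow T_C$, where the inclusions $T'\subseteq T$ and $T''\subseteq T$ are suppressed. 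Then $X(T_C)\setminus e[X(T'_C)]=\{h^{-1}(b)\}\subseteq f[X(T''_C)]$, and for $a':=h^{-1}(a)$ one checks $e^{-1}(a')=h'^{-1}(a)\in F(T'_C)$ (using Lemma~\ref{IsoLem} on $h'$ and any other choice of isomorphism) and $F(T''_C)=\{h''^{-1}(a)\}=\{f^{-1}(a')\}$. Canonical binary $\gamma$ then yields $a'\in F(T_C)$, hence $a\in h[F(T_C)]=F^*(T)$. The proof for $\alpha$-resoluteness is identical in structure, replacing the conclusion $a'\in F(T_C)$ with $|F(T_C)|\leq|F(T'_C)|$ and noting that $|F^*(T)|=|F(T_C)|$ and $|F^*(T')|=|F(T'_C)|$ since isomorphisms preserve cardinality.

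For the downward direction (from neutral $F$ to $F_*$), the orbit condition is automatic: if $a\in F(T_C)$ and $a\sim_{T_C} b$ via automorphism $h$, then by neutrality $F(T_C)=h[F(T_C)]$, so $b=h(a)\in F(T_C)$. For canonical binary $\gamma$, suppose $T,T',T''\in\mathrm{dom}(F_*)$ with embeddings $e:T'\hookrightarrow T$ and $f:T''\hookrightarrow T$ satisfying $X(T)\setminus e[X(T')]\subseteq f[X(T'')]$, and let $a\in e[X(T')]\cap f[X(T'')]$ with $e^{-1}(a)\in F(T')$ and $F(T'')=\{f^{-1}(a)\}$. Let $b$ be the unique element of $X(T)\setminus e[X(T')]$, so $b\in f[X(T'')]$ and $a\neq b$. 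Then $e$ is an isomorphism from $T'$ onto $T_{-b}$, and $f$ is an isomorphism from $T''$ onto $T|_{\{a,b\}}$. Neutrality of $F$ gives $F(T_{-b})=e[F(T')]\ni a$ and $F(T|_{\{a,b\}})=f[F(T'')]=\{a\}$; then binary $\gamma$ for $F$ yields $a\in F(T)=F_*(T)$. Canonical $\alpha$-resoluteness is proved in the same manner, transporting the bound $|F(T)|\leq|F(T_{-b})|$ back through the isomorphisms. Binary quasi-resoluteness for $F_*$ transfers trivially since canonical two-node tournaments coincide with ordinary two-node tournaments up to isomorphism.

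The bulk of the argument is bookkeeping rather than conceptual, so the main obstacle is simply verifying that the composed isomorphisms $h^{-1}\circ h'$ and $h^{-1}\circ h''$ genuinely land where claimed and that $e^{-1}$, $f^{-1}$ match $h'^{-1}$, $h''^{-1}$ on the relevant elements; once this is set up, each axiom transfers by a single application of the corresponding axiom on the other side. A subtle but routine point worth flagging is that Lemma~\ref{IsoLem} is needed implicitly throughout the upward direction, because the choice of isomorphism $h_T$ in the definition $F^*(T)=h_T[F(T_C)]$ is not canonical and must be shown not to affect membership of $a$ in $F^*(T)$; the orbit condition on $F$ guarantees this.
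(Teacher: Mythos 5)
Your proof is correct: the paper states Lemma \ref{CanProbLem} without proof, treating the verification as routine, and your argument supplies exactly the intended details---transporting each axiom through the isomorphisms $h,h',h''$ in the upward direction (with Lemma \ref{IsoLem} guaranteeing that membership in $F^*(\wt)=h_\wt[F(\wt_C)]$ is independent of the choice of isomorphism) and recognizing the embeddings $e,f$ as isomorphisms onto $\wt_{-b}$ and $\wt|_{\{a,b\}}$ in the downward direction, where neutrality transports the winning sets. The only point left tacit, in your write-up as in the paper's own setup, is that the domains involved must be closed under restriction and isomorphism for the hypotheses of binary $\gamma$ and $\alpha$-resoluteness to be applicable, which holds for the domains actually used in Proposition \ref{SatThm0}.
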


\subsection{SAT encoding: formulas}\label{SAT2} We assume familiarity with the basics of propositional logic  \cite[\S~1.1]{Enderton2001}, using  $\neg$ for `not', $\wedge$ for `and', and $\vee$ for `or'.  In this section, we explain how the axioms in Section \ref{SAT1} can be encoded as formulas of propositional logic for the purposes of SAT solving. Following \cite{Brandt2016}, our propositional variables are of the form $A_{\wt, Y}$, where $\wt$ is a canonical weak tournament, $\varnothing\neq Y \subseteq X(\wt)$, and $Y\in \mathscr{O}(\wt)$ (Definition \ref{OrbitDef}). The intended interpretation is that  $A_{\wt_C, Y}$ evaluates to true if and only if $Y$ is assigned as the set of winners of $\wt$. Next we introduce propositional formulas, written in  conjunctive normal form for processing by the SAT solver, which formalize the relevant axioms. 

Fix a finite set $\mathbb{D}$ of weak tournaments. The  following formula $\mathsf{func}_\mathbb{D}$ ensures that the assignment of sets of winners to canonical weak tournaments in $\mathbb{D}$ yields a function:
\[\bigwedge_{\wt\in \mathbb{D}}  \big(\bigvee_{Y\in \mathscr{O}(\wt)} A_{\wt, Y}\big) \wedge \bigwedge_{\wt\in \mathbb{D}}\,  \bigwedge_{Y,Z\in \mathscr{O}(\wt): Y \neq Z} (\lnot A_{\wt, Y} \lor \lnot A_{\wt, Z}).\] 
The first part of the formula (before the small `$\wedge$') says that for each weak tournament $T\in\mathbb{D}$, there is some subset $Y$ of $T$ such that $(T,Y)$ satisfies the orbit condition and $Y$ is assigned as the set of winners of $\wt$. Since $\mathbb{D}$ is finite, this is a finite formula. Similarly, the second part says that  for each weak tournament $T\in\mathbb{D}$ and distinct subsets $Y,Z$ of $T$ such that $(T,Y)$ and $(T,Z)$ both satisfy the orbit condition, either $Y$ is not assigned as the set of winners of $\wt$ or $Z$ is not assigned as the set of winners of $\wt$, i.e., it is not the case that both $Y$ and $Z$ are assigned as the set of winners. The formulas to follow can be read similarly.

To formalize binary quasi-resoluteness, where $\wt_2$ is the canonical tournament with two nodes, say $0$ and $1$, we define $\mathsf{bqr}_\mathbb{D}$ by:
 \[A_{\wt_2, \{0\}}\vee A_{\wt_2, \{1\}}.\]

To formalize binary $\gamma$ and $\alpha$-resoluteness, for any $T\in\mathbb{D}$, define:
\begin{eqnarray*}
E(\wt)&=& \{(\wt',e)\mid e: \wt' \hookrightarrow \wt\mbox{ and } |X(\wt)|=|X(\wt')|+1\} \\
F(\wt) &=& \{(\wt'',f)\mid f: \wt'' \hookrightarrow \wt\mbox{ and }|X(\wt'')|=2\}.
\end{eqnarray*}
Then let $\textsf{bg}_\mathbb{D}$ be the formula:
\[\bigwedge_{\wt\in\mathbb{D}} \,\bigwedge_{(\wt',e) \in E(\wt)}\, \bigwedge_{(\wt'',f) \in F(\wt):\, X(\wt)\setminus e[X(\wt')]\subseteq f[X(\wt'')]} \, \varphi\]
with $\varphi$ defined as follows, where $e[X(\wt')] \cap f[X(\wt'')]=\{a\}$:\footnote{Since $|X(\wt'')|=2$, we have $|f[X(\wt'')]| =2$. Then since $|X(\wt)|=|X(\wt')|+1$, we have $|X(\wt)\setminus e[X(\wt')]|=1$, which with $X(\wt)\setminus e[X(\wt')]\subseteq f[X(\wt'')]$ implies $|e[X(\wt')] \cap f[X(\wt'')]|=1$.}
\[\varphi = \underset{Y''\in \mathscr{O}(\wt''):\, Y''=\{f^{-1}(a)\}}{\bigwedge_{Y'\in \mathscr{O}(\wt'):\, e^{-1}(a)\in Y'}} (\neg A_{\wt', Y'}\vee \neg A_{\wt'', Y''}\vee   \bigvee_{Y\in \mathscr{O}(\wt):\, a\in Y }\, A_{\wt,Y}) .\]
The formula $\textsf{ar}_\mathbb{D}$ is defined in the same way but with the last disjunction replaced by 
\[ \bigvee_{Y\in \mathscr{O}(\wt):\, |Y|\leq |Y'|}\, A_{\wt,Y}.\]

Propositional valuations $v:\{A_{\wt, Y}\mid \wt\in\mathbb{D}, Y\in \mathscr{O}(\wt)\}\to \{0,1\}$ satisfying the formula $\mathsf{func}_\mathbb{D}\wedge \mathsf{bqr}_\mathbb{D}\wedge \mathsf{bg}_\mathbb{D}\wedge\mathsf{ar}_\mathbb{D}$ correspond to canonical weak tournament solutions on the domain $\mathbb{D}$ satisfying the relevant axioms.

\newpage

\begin{lemma}\label{ValCor}$\,$
\begin{enumerate}
\item If a valuation $v$ satisfies $\mathsf{func}_\mathbb{D}\wedge \mathsf{bqr}_\mathbb{D}\wedge \mathsf{bg}_\mathbb{D}\wedge\mathsf{ar}_\mathbb{D}$, then \[\{ (\wt,Y) \mid \wt\in\mathbb{D}, Y\in \mathscr{O}(\wt),  v(A_{\wt, Y})=1 \}\] is a canonical weak tournament solution on $\mathbb{D}$ satisfying the orbit condition, binary quasi-resoluteness, canonical binary $\gamma$, and canonical $\alpha$-resoluteness.
\item Given a canonical weak tournament solution $F$ on $\mathbb{D}$ satisfying the orbit condition, binary quasi-resoluteness, canonical binary $\gamma$, and canonical $\alpha$-resoluteness, the valuation $v$ defined on $\{A_{\wt, Y}\mid \wt\in\mathbb{D}, Y\in \mathscr{O}(\wt)\}$ by $v(A_{\wt, Y})=1$ if and only if $F(\wt)=Y$ satisfies $\mathsf{func}_\mathbb{D}\wedge \mathsf{bqr}_\mathbb{D}\wedge \mathsf{bg}_\mathbb{D}\wedge\mathsf{ar}_\mathbb{D}$.
\end{enumerate}
\end{lemma}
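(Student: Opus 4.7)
The plan is to verify both directions by translating between the syntactic structure of each conjunct in $\mathsf{func}_\mathbb{D}\wedge \mathsf{bqr}_\mathbb{D}\wedge \mathsf{bg}_\mathbb{D}\wedge\mathsf{ar}_\mathbb{D}$ and the corresponding semantic condition on canonical weak tournament solutions. Since the propositional variables $A_{\wt,Y}$ are indexed only over pairs with $Y\in\mathscr{O}(\wt)$, the orbit condition is essentially built into the encoding, and the main work is just careful bookkeeping.

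For the first direction, assume $v$ satisfies the conjunction. I would define $F_v=\{(\wt,Y)\mid \wt\in\mathbb{D},\, Y\in\mathscr{O}(\wt),\, v(A_{\wt,Y})=1\}$ and use the two parts of $\mathsf{func}_\mathbb{D}$ to show this is a function: the first disjunction guarantees at least one $Y$ per $\wt$, and the pairwise negations guarantee uniqueness. Since every such $Y$ belongs to $\mathscr{O}(\wt)$ by construction, $F_v$ satisfies the orbit condition and is a canonical weak tournament solution on $\mathbb{D}$. Binary quasi-resoluteness follows immediately from $\mathsf{bqr}_\mathbb{D}$, observing that the only singleton orbit-respecting subsets of $X(\wt_2)$ are $\{0\}$ and $\{1\}$. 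For canonical binary $\gamma$, I would fix $\wt,\wt',\wt''\in\mathbb{D}$ and embeddings $e,f$ with the cardinality conditions from the definition, noting that the condition $X(\wt)\setminus e[X(\wt')]\subseteq f[X(\wt'')]$ is precisely the one built into $\mathsf{bg}_\mathbb{D}$. Given the hypothesis that $e^{-1}(a)\in F_v(\wt')$ and $F_v(\wt'')=\{f^{-1}(a)\}$, the corresponding disjunct of $\mathsf{bg}_\mathbb{D}$ forces $v(A_{\wt,Y})=1$ for some $Y\in\mathscr{O}(\wt)$ with $a\in Y$, giving $a\in F_v(\wt)$. The argument for canonical $\alpha$-resoluteness is identical except that the final disjunction ranges over $Y$ with $|Y|\leq |Y'|$.

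For the second direction, given a canonical weak tournament solution $F$ satisfying the listed axioms, I would define $v_F(A_{\wt,Y})=1$ iff $F(\wt)=Y$ and verify each conjunct by reversing the above translation. The first part of $\mathsf{func}_\mathbb{D}$ holds because $F(\wt)\in\mathscr{O}(\wt)$ by the orbit condition; the second part holds because $F$ is a function. Binary quasi-resoluteness of $F$ gives either $F(\wt_2)=\{0\}$ or $F(\wt_2)=\{1\}$, satisfying $\mathsf{bqr}_\mathbb{D}$. For $\mathsf{bg}_\mathbb{D}$ and $\mathsf{ar}_\mathbb{D}$, I would fix an arbitrary disjunct, corresponding to some choice of $\wt,\wt',\wt'',e,f,Y',Y''$; if either $v_F(A_{\wt',Y'})=0$ or $v_F(A_{\wt'',Y''})=0$, the disjunct is already satisfied by one of the negated literals, and otherwise the hypotheses of the relevant axiom are met, so the conclusion of the axiom supplies the required positive literal.

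No step is genuinely difficult; the main obstacle is purely notational, namely keeping the two embeddings $e:\wt'\hookrightarrow\wt$ and $f:\wt''\hookrightarrow\wt$ aligned with the intended semantic reading of canonical binary $\gamma$ and canonical $\alpha$-resoluteness (where $\wt'$ plays the role of $\mathbf{P}_{-y}$, $\wt''$ plays the role of $\mathbf{P}|_{\{x,y\}}$, and the cardinality constraints ensure that the ``new candidate'' $a$ lies in $e[X(\wt')]\cap f[X(\wt'')]$). Once this dictionary is set up at the start of the proof, every clause of the CNF formula can be read off as an instance of the corresponding axiom and vice versa, making both implications essentially mechanical.
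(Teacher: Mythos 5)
Your proof is correct and is exactly the routine clause-by-clause correspondence the paper intends: Lemma \ref{ValCor} is stated in the paper without proof, being regarded as a mechanical translation between valuations and canonical weak tournament solutions, and your verification (functionality from $\mathsf{func}_\mathbb{D}$, the orbit condition built into the variable set, and each $\mathsf{bqr}/\mathsf{bg}/\mathsf{ar}$ clause matching the corresponding axiom instance in both directions) fills in precisely that argument. The only blemish is the parenthetical at the end calling $a$ the ``new candidate'': $a$ is the \emph{retained} candidate in $e[X(\wt')]\cap f[X(\wt'')]$, while the added/removed candidate is the element of $X(\wt)\setminus e[X(\wt')]$; this is harmless since your actual proof steps use $a$ correctly.
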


Using a SAT solver and Lemma \ref{ValCor}, we obtain the following. 

\begin{proposition}\label{CanSatThm}$\,$  
\begin{enumerate}
\item\label{SatThm02} For any finite $Y\subseteq\mathcal{X}$ with $|Y|\geq 6$, is no canonical tournament solution satisfying the orbit condition, binary quasi-resoluteness, canonical binary $\gamma$, and canonical $\alpha$-resoluteness on the domain $\{\wt\in \mathbb{T}_C\mid X(\wt)\subseteq Y\}$.
\item\label{SatThm01} There is a canonical weak tournament solution satisfying the orbit condition, binary quasi-resoluteness, canonical binary $\gamma$, and canonical $\alpha$-resoluteness on the domain $\{\wt\in \mathbb{WT}_{\!\!C}\mid |\wt|\leq 5\}$.
\end{enumerate}
\end{proposition}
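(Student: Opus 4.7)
The plan is to deduce both parts directly from Lemma \ref{ValCor} by invoking a SAT solver on appropriate finite CNF instances. The framework already in place reduces each part to checking the (un)satisfiability of the formula $\mathsf{func}_\mathbb{D}\wedge\mathsf{bqr}_\mathbb{D}\wedge\mathsf{bg}_\mathbb{D}\wedge\mathsf{ar}_\mathbb{D}$ on a finite domain $\mathbb{D}$ of canonical (weak) tournaments, so the combinatorial work is essentially to choose $\mathbb{D}$, construct the CNF, and interpret the solver's answer.

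For part (\ref{SatThm02}), I would first reduce to the case $|Y|=6$. If $F$ is a canonical tournament solution satisfying the four axioms on $\{T\in\mathbb{T}_C\mid X(T)\subseteq Y\}$ for some $Y$ with $|Y|>6$, fix a six-element $Y_0\subseteq Y$; the restriction of $F$ to $\{T\in\mathbb{T}_C\mid X(T)\subseteq Y_0\}$ still satisfies the axioms, since all four are stated purely in terms of canonical tournaments and their induced subtournaments of the same or smaller size, and the smaller domain is closed under this operation. Then, taking $\mathbb{D}$ to be the canonical tournaments on at most $6$ vertices, I would enumerate $\mathbb{D}$ using a canonical-form routine (e.g.\ nauty), compute the automorphism group and orbits of each $T\in\mathbb{D}$ to obtain $\mathscr{O}(T)$, and construct the clauses of $\mathsf{bg}_\mathbb{D}$ and $\mathsf{ar}_\mathbb{D}$ by iterating over embeddings $e:T'\hookrightarrow T$ and $f:T''\hookrightarrow T$ as in Section \ref{SAT2}. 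Feeding the resulting CNF to a SAT solver (CaDiCaL, Kissat) should return \textsc{unsat}, and Lemma \ref{ValCor}, applied to $\mathbb{D}\subseteq\mathbb{T}_C$ (for which the formulas reduce verbatim), then yields the impossibility.

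For part (\ref{SatThm01}), I would run the same pipeline with $\mathbb{D}=\{T\in\mathbb{WT}_C\mid |X(T)|\leq 5\}$. The expectation is that the SAT solver returns a satisfying valuation $v$, and Lemma \ref{ValCor}(2) (read in the converse direction) immediately converts $v$ into the required canonical weak tournament solution. As a sanity check, one could then also lift $v$ to a genuine neutral weak tournament solution via Lemma \ref{CanLem} and Lemma \ref{CanProbLem}, recovering the form of the statement in Proposition \ref{SatThm0}.

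The main obstacle is twofold. Computationally, the CNF generation for part (\ref{SatThm02}) must correctly enumerate all embeddings of smaller canonical tournaments into each $6$-vertex tournament and correctly identify the fixed point $a\in e[X(T')]\cap f[X(T'')]$ in each $\varphi$-instance; any mistake here can flip the verdict. More seriously, the proof's conclusiveness depends on trusting the SAT solver's answer. The natural remedies, which I would pursue, are to extract and independently verify a DRAT unsatisfiability certificate for part (\ref{SatThm02}), and/or to formally verify the encoding in a proof assistant, as is done later in Appendix \ref{Verification} for the pairwise case. Extracting a minimal unsatisfiable subset could in principle produce a human-readable proof, though the discussion in Appendix \ref{HumanReadable1} suggests this route is unlikely to succeed here.
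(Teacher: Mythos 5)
Your proposal takes essentially the same route as the paper, whose proof of this proposition consists exactly of generating the CNF $\mathsf{func}_\mathbb{D}\wedge\mathsf{bqr}_\mathbb{D}\wedge\mathsf{bg}_\mathbb{D}\wedge\mathsf{ar}_\mathbb{D}$ of Section \ref{SAT2} over the canonical tournaments with at most six nodes (resp.\ canonical weak tournaments with at most five nodes), obtaining unsatisfiability (resp.\ a satisfying valuation) from the solver, and transferring via Lemma \ref{ValCor}; your restriction argument to $|Y|=6$ is the implicit monotonicity the paper relies on, and your certification remarks match what it does in Appendices \ref{HumanReadable1} and \ref{Verification}. One small correction: converting the satisfying valuation into the desired solution in part (\ref{SatThm01}) is Lemma \ref{ValCor}(1) as stated, not a converse reading of part (2).
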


Finally, combining Proposition \ref{CanSatThm} and Lemma \ref{CanProbLem} establishes the results for weak tournament solutions in Proposition \ref{SatThm0} from Section \ref{SATResults}.

\subsection{From tournament solutions to majoritarian voting methods}\label{TournamentTomajoritarian}

Having obtained Proposition \ref{SatThm0} for (weak) tournament solutions, it is straightforward using McGarvey's theorem (Theorem \ref{McGarveyThm}) to obtain an analogous statement for majoritarian voting methods. 

\fromsatthmone*

\begin{proof} Suppose for contradiction there is such a method $F$.  We define a canonical tournament solution $G$ on the domain $\{\wt\in \mathbb{T}_C\mid X(\wt)\subseteq Y\}$ as follows: for $T\in \mathrm{dom}(G)$, let $G(T)= F(\mathsf{Mc}(T))$. Since $F$ is majoritarian and satisfies the relevant axioms, it is not difficult to show that $G$ satisfies the tournament solution versions of the relevant axioms, contradicting Proposition \ref{SatThm0}.\ref{SatThm02}.
\end{proof}

\begin{proposition} There is an anonymous, neutral, and majoritarian voting method satisfying binary quasi-resoluteness, binary $\gamma$, and $\alpha$-resoluteness on the domain $\{\mathbf{P}\mid |X(\mathbf{P})|\leq 5\}$.
\end{proposition}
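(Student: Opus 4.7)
The plan is to lift the witness weak tournament solution provided by Proposition \ref{SatThm0}.\ref{SatThm01} to a voting method by composition with the majority-graph operator $M$. Concretely, let $G$ be a neutral weak tournament solution on $\{\wt\in\mathbb{WT}\mid |X(\wt)|\le 5\}$ satisfying binary quasi-resoluteness, binary $\gamma$, and $\alpha$-resoluteness (Definition \ref{TournAx}). I would then define a voting method $F$ with domain $\{\mathbf{P}\mid |X(\mathbf{P})|\le 5\}$ by
\[
F(\mathbf{P}) \;=\; G(M(\mathbf{P})).
\]
Since the codomain axioms of $G$ ensure $\varnothing\neq G(M(\mathbf{P}))\subseteq X(\mathbf{P})$, this is a well-defined voting method. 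Majoritarianism of $F$ is immediate from the definition, and anonymity is immediate as well because $M(\mathbf{P})$ is invariant under permutations of the voters.

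For neutrality, I would use Lemma \ref{TransposeCommute}.\ref{TransposeCommute1}, which yields $M(\mathbf{P}_{a\leftrightarrows b})=M(\mathbf{P})_{a\leftrightarrows b}$. Since $\pi_{a\leftrightarrows b}$ restricts to an isomorphism $M(\mathbf{P})\to M(\mathbf{P})_{a\leftrightarrows b}$ of weak tournaments, the neutrality of $G$ (Definition \ref{TournAx}(1)) gives $G(M(\mathbf{P})_{a\leftrightarrows b})=\pi_{a\leftrightarrows b}[G(M(\mathbf{P}))]=G(M(\mathbf{P}))_{a\leftrightarrows b}$, i.e.\ $F(\mathbf{P}_{a\leftrightarrows b})=F(\mathbf{P})_{a\leftrightarrows b}$.

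To transfer binary quasi-resoluteness, binary $\gamma$, and $\alpha$-resoluteness, I would use the standard fact that taking majority graphs commutes with restriction to a candidate subset (Lemma \ref{RestrictCommute}(1)), so
\[
M(\mathbf{P}_{-y})=M(\mathbf{P})_{-y}, \qquad M(\mathbf{P}|_{\{x,y\}})=M(\mathbf{P})|_{\{x,y\}}.
\]
For binary quasi-resoluteness: if $|X(\mathbf{P})|=2$ and the margin is nonzero, then $M(\mathbf{P})$ is a two-node tournament, so $|F(\mathbf{P})|=|G(M(\mathbf{P}))|=1$ by Definition \ref{TournAx}(2). For binary $\gamma$: if $x\in F(\mathbf{P}_{-y})$ and $F(\mathbf{P}|_{\{x,y\}})=\{x\}$, the commutativity equalities above reduce the hypotheses of the tournament-solution version of binary $\gamma$ for $G$ on the weak tournament $M(\mathbf{P})$, yielding $x\in G(M(\mathbf{P}))=F(\mathbf{P})$. $\alpha$-resoluteness is verified identically, using Definition \ref{TournAx}(4) to conclude $|F(\mathbf{P})|\le |F(\mathbf{P}_{-y})|$.

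The proof is essentially routine once Proposition \ref{SatThm0}.\ref{SatThm01} is in hand; I do not anticipate any substantive obstacle. The only pitfall to watch for is that the hypothesis $F(\mathbf{P}|_{\{x,y\}})=\{x\}$ in the voting-method axioms allows $M(\mathbf{P}|_{\{x,y\}})$ to fail to be a tournament (when the margin of $x$ over $y$ is zero), so one must apply Definition \ref{TournAx}(3)--(4) to the genuinely weak tournament $M(\mathbf{P}|_{\{x,y\}})$ rather than assuming an edge is present---but this is already built into the statement of $G$'s axioms, so nothing further is required.
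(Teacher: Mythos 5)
Your proposal is correct and is exactly the paper's proof: the paper defines $F(\mathbf{P})=G(M(\mathbf{P}))$ with $G$ the weak tournament solution from Proposition \ref{SatThm0}.\ref{SatThm01} and leaves the verification implicit. Your detailed checks via Lemmas \ref{RestrictCommute} and \ref{TransposeCommute}, including the observation about the edgeless two-node case, are the routine details the paper omits, and they are all sound.
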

\begin{proof} Define the method $F$ as follows: where $G$ is the weak tournament solution from Proposition \ref{SatThm0}.\ref{SatThm01}, given a profile $\mathbf{P}$, let $F(\mathbf{P})= G(M(\mathbf{P}))$.\end{proof}

\subsection{Prospects for a human-readable proof}\label{HumanReadable1}

After obtaining Proposition \ref{CanSatThm}.\ref{SatThm02} using SAT solving, we investigated the prospects of a human-readable proof. First, a canonical tournament solution satisfying canonical $\alpha$-resoluteness either (i) picks only the majority winner in the two-candidate canonical tournament $T_2$ or (ii) picks only the minority winner in $T_2$. Thus, we can simply prove the impossibility assuming (i), as the proof in case (ii) then follows by reversing all edges in tournaments. With this simplification, the CNF formula used for Proposition \ref{CanSatThm}.\ref{SatThm02} contains 132,626 conjuncts.  To obtain an MUS (recall Section~\ref{ProofStrategy}), following Brandt and Geist \cite{Brandt2016}, we used the PicoMUS tool from PicoSAT \cite{Biere2008}, which produced an MUS with 266 conjuncts. By analyzing the MUS, we identified \textit{nine} canonical tournaments ranging from size two to six---down from 75 canonical tournaments total from size two to six, as shown in Table \ref{NumCanTourns}---such that no canonical tournament solution on the domain of those nine canonical tournaments satisfies the relevant axioms. Nine canonical tournaments initially sounds promising for a human readable proof, but the conjuncts in the MUS contain 20 $\textsf{bg}$ conjuncts and 30 $\textsf{ar}$ conjuncts, involving a total of 19 embeddings between canonical tournaments. Naively, this means a human-readable proof might contain 20 appeals to binary $\gamma$ and  30 appeals to $\alpha$-resoluteness, involving the consideration of 19 ways a smaller tournament can embed into a larger one. Of course, it is possible that several such steps could be handled at once. It is also possible that there is a smaller MUS more amenable to use for a human-readable proof.\footnote{Following Brandt and Geist \cite{Brandt2016}, we tried applying the CAMUS tool \cite{Liffiton2008} to find a smallest MUS, but it did not terminate in a reasonable amount of time.} 

\begin{table}
\begin{center}
\begin{tabular}{c|c|c}
candidates & canonical   & canonical   \\
 & tournaments from  & tournaments from   \\
 &initial set & MUS \\
 \hline 
2 & 1  & 1  \\
3 & 2 & 1   \\
4 & 4 & 2  \\
5 & 12 & 3  \\
6 &  56 & 2  \\
2 to 6 & 75 & 9   
\end{tabular}
\end{center}
\caption{Number of canonical tournaments (i) from size two to six (second column) and (ii) sufficient for an impossibility theorem obtained from an MUS (third column).}\label{NumCanTourns}
\end{table}

\section{SAT solving for pairwise methods}\label{SATQM}

\subsection{Results}

Using SAT we also obtained an impossibility theorem for the conjunction of binary $\gamma$, quasi-resoluteness, and pairwiseness, assuming anonymity and neutrality. However, the relevant SAT approach works not with profiles and pairwise voting methods but with weighted tournaments and weighted tournament solutions.

\begin{definition} A \textit{weighted tournament solution} is a function $F$ on a set of weighted tournaments such that for all $\wwt\in\mathrm{dom}(F)$, we have $\varnothing\neq F(\wwt)\subseteq X(\wwt)$, where $X(\wwt)$ is the set of nodes in the underlying tournament of $\mathcal{T}$.
\end{definition}

The binary $\gamma$ and quasi-resoluteness axioms from Section \ref{IndividualToSocial} apply in the obvious ways to weighted tournament solutions.

\begin{definition} Let $F$ be a weighted tournament solution.
\begin{enumerate}
\item $F$ satisfies \textit{binary $\gamma$} if for every $\wwt\in \mathrm{dom}(F)$ and $a,b\in X(\wwt)$, if $a\in F(\wwt_{-b})$ and $F(\wwt|_{ \{a,b\}})=\{a\}$, then $a\in F(\wwt)$.
\item $F$ satisfies \textit{quasi-resoluteness} if for every uniquely weighted $\wwt\in\mathrm{dom}(F)$, we have ${|F(\wwt)|=1}$.
\end{enumerate}
\end{definition}

The following result explains the phenomenon noted in Section \ref{IndividualToSocial} that no known (anonymous and neutral) pairwise voting method satisfies binary $\gamma$ and quasi-resoluteness. Note that since the SAT formalization requires a finite domain, we fix a finite set of weights for our weighted tournaments.

\begin{proposition}\label{OTImpossibility} There is no neutral weighted tournament solution satisfying quasi-resoluteness and binary $\gamma$ on the domain \begin{eqnarray*}&&\{\wwt\in\mathbb{T}^w\mid |X(\wwt)|\leq 4\mbox{, $\wwt$ uniquely weighted, }\\
&&\;\;\qquad\qquad\mbox{and all positive weights belong to } \{2,4,6,8,10,12\}\}.\end{eqnarray*}
\end{proposition}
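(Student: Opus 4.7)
The plan is to adapt the SAT-solving strategy from Appendix \ref{SATmajoritarian} (which established Proposition \ref{CanSatThm} for majoritarian tournament solutions) to the pairwise, weighted setting. First I would reduce to canonical weighted tournaments by picking, for each isomorphism class in
\[\mathbb{D} = \{\wwt \mid |X(\wwt)|\leq 4,\ \wwt\text{ uniquely weighted, positive weights in }\{2,4,6,8,10,12\}\},\]
a canonical representative $\wwt_C$, and form the set $\mathbb{D}_C$. Crucially, $\mathbb{D}$ is closed under the restriction operations $\wwt \mapsto \wwt_{-b}$ and $\wwt \mapsto \wwt|_{\{a,b\}}$: unique weighting and the weight constraint are preserved. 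Then I would prove the straightforward analogues of Lemmas \ref{CanLem} and \ref{CanProbLem}, yielding a bijection between neutral weighted tournament solutions on $\mathbb{D}$ and canonical weighted tournament solutions on $\mathbb{D}_C$ satisfying the orbit condition of Definition \ref{OrbitDef}, with binary $\gamma$ and quasi-resoluteness preserved on both sides.

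Next I would encode the existence of such a canonical solution as a CNF formula over propositional variables $A_{\wwt, Y}$ for $\wwt\in\mathbb{D}_C$ and $Y\in\mathscr{O}(\wwt)$, with the intended meaning that $Y$ is the set of winners chosen for $\wwt$. The functionality conjunct $\mathsf{func}_{\mathbb{D}_C}$ is exactly as in Section \ref{SAT2}. Since every $\wwt\in\mathbb{D}$ is uniquely weighted, quasi-resoluteness becomes the simple family of unit clauses $\neg A_{\wwt, Y}$ for every $Y\in\mathscr{O}(\wwt)$ with $|Y|\geq 2$. Binary $\gamma$ is encoded as in Section \ref{SAT2} but with weight-preserving embeddings $e\colon\wwt' \hookrightarrow \wwt$ and $f\colon\wwt'' \hookrightarrow \wwt$: for each triple $(\wwt,\wwt',\wwt'')\in\mathbb{D}_C^{\,3}$ with $|X(\wwt)|=|X(\wwt')|+1$, $|X(\wwt'')|=2$, and $X(\wwt)\setminus e[X(\wwt')]\subseteq f[X(\wwt'')]$, and for the unique $a\in e[X(\wwt')]\cap f[X(\wwt'')]$, add the clause
\[\neg A_{\wwt', Y'} \vee \neg A_{\wwt'',\{f^{-1}(a)\}} \vee \bigvee_{Y\in\mathscr{O}(\wwt):\, a\in Y} A_{\wwt, Y}\]
for every $Y'\in\mathscr{O}(\wwt')$ with $e^{-1}(a)\in Y'$.

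I would then feed this CNF formula to a SAT solver. If it reports unsatisfiability, the analogue of Lemma \ref{ValCor} combined with the canonicalization correspondence yields Proposition \ref{OTImpossibility}. To guard against encoding bugs, I would also extract a minimal unsatisfiable core with a tool such as PicoMUS, and, paralleling Appendix \ref{Verification}, formally verify correctness of the CNF encoding in the Lean Theorem Prover.

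The main obstacle I expect is not the SAT call itself but the combinatorial bookkeeping around the encoding. With up to $4$ candidates, the $6$ edges of a uniquely weighted complete tournament must receive exactly the six distinct weights $2,4,6,8,10,12$, and each edge carries an independent direction; taking into account isomorphism, the resulting number of canonical $4$-vertex tournaments, together with their $2$- and $3$-vertex sub-tournaments, is modest enough for a SAT solver. The subtler work is to correctly enumerate the weight-preserving embeddings $e,f$ (which, unlike in the majoritarian case, must match edge weights as well as directions) and to canonicalize each restriction $\wwt_{-b}$ and $\wwt|_{\{a,b\}}$ back into $\mathbb{D}_C$; getting that plumbing right, and proving the orbit-condition correspondence for weighted tournaments, is where the real effort will lie.
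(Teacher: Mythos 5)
Your proposal is correct and follows essentially the same route as the paper's Appendix B proof: restrict to canonical uniquely weighted tournaments, encode functionality and binary $\gamma$ (via weight-preserving embeddings) in CNF, obtain unsatisfiability from a SAT solver, and back the encoding with MUS extraction and Lean verification. The one place you overestimate the work is the orbit-condition correspondence: since every tournament in the domain is uniquely weighted, it has no nontrivial automorphisms, so all orbits are singletons and the condition is vacuous---which is why the paper simply builds quasi-resoluteness into the encoding by using only singleton variables $A_{\wwt,\{x\}}$, a step to which your unit clauses reduce after unit propagation.
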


\subsection{SAT encoding}

The SAT encoding strategy for weighted tournament solutions follows the strategy for tournament solutions in Sections \ref{SAT1} and \ref{SAT2}: for computational feasibility, we work with weighted tournament solutions defined on a smaller domain of \textit{canonical} weighted tournaments. In fact, we restrict our domain $\mathbb{D}$ to only those canonical weighted tournaments in which all weights are distinct. Moreover, we build quasi-resoluteness into our encoding, by taking only propositional variables of the form $A_{\wwt,Y}$ where $\wwt\in\mathbb{D}$  and $Y$ is a singleton. Note that for $\mathcal{T}\in\mathbb{D}$, there are no non-trivial automorphisms of $\mathcal{T}$, since all weights are distinct, so we no longer need to check the orbit condition. Functionality and binary $\gamma$ are formalized as formulas $\mathsf{func}_\mathbb{D}$ and $\mathsf{bg}_\mathbb{D}$ as in Section \ref{SAT2} but using canonical weighted tournaments $\wwt$ instead of canonical tournaments $T$, setting  $\mathscr{O}(T)=\{\{x\}\mid x\in X(\mathcal{T})\}$, and using weighted tournament embeddings\footnote{I.e., tournament embeddings $f:T\to T'$ for which $w(x,y)=w'(f(x),f(y))$, where $\mathcal{T}=(T,w)$ and $\mathcal{T}'=(T',w')$.} instead of tournament embeddings for binary $\gamma$. The unsatisfiability of $\mathsf{func}_\mathbb{D}\wedge\mathsf{bg}_\mathbb{D}$ according to the SAT solver yields Proposition \ref{OTImpossibility} by reasoning analogous to that in Section~\ref{SAT1}.

\subsection{From weighted tournament solutions to pairwise voting methods}\label{PairwiseImpossibility}

It is a short step to transfer the impossibility result in Proposition \ref{OTImpossibility} from weighted tournament solutions to pairwise voting methods in Proposition \ref{FromQMFinal}. The proof of Proposition \ref{FromQMFinal} is analogous to that of Proposition \ref{FromSatThm1}, except that we use Debord's representation of weighted tournaments instead of McGarvey's representation of tournaments. 

\fromqmfinal*

Recall that Proposition \ref{FromQMFinal} together with the transfer lemma, Lemma \ref{transferlemtwo}, yields the desired impossibility result in Theorem \ref{Impossibility4}.

\subsection{Prospects for a human-readable proof}\label{HumanReadable2}

After obtaining Proposition \ref{OTImpossibility} using the SAT solving strategy, we investigated the prospects of a human-readable proof.  There are 2,086 canonical uniquely weighted tournaments from size two to four with edge weights from $\{2,4,6,8,10,12\}$ (see the second column of Table \ref{NumCanTourns2}), yielding 8,172 propositional variables. Then our CNF formula used to prove Proposition \ref{OTImpossibility} contains 38,092 conjuncts. To obtain an MUS (recall Sections \ref{ProofStrategy} and \ref{HumanReadable1}), we again used PicoMUS, which produced an MUS with 9,294 conjuncts.\footnote{By contrast, in Brandt and Geist's \cite{Brandt2016} application of SAT to an impossibility theorem in social choice, PicoMUS returned an MUS with 55 conjuncts from an initial formula with over three million.} By analyzing the MUS, we identified 1,650 canonical weighted tournaments (see the third column of Table \ref{NumCanTourns2}) from the original 2,086 such that no canonical tournament solution on the domain of those 1,650 satisfies the relevant axioms. In fact, we were able to do better by a naive approach. In the same spirit as MUS extraction, we tried to find a smallest subset of the 2,086 canonical weighted tournaments for which SAT still produces an impossibility result, simply by iterating over these weighted tournaments (ordered by descending number of candidates and in an arbitrary order for weighted tournaments with the same number of candidates) and retaining a weighted tournament if and only if removing it leads to a satisfiability result.\footnote{We also tried this approach for Proposition \ref{CanSatThm}.\ref{SatThm02}, but in that case the naive approach yielded more canonical tournaments than the approach based on PicoMUS.} The smallest set found so far contains 716 canonical weighted tournaments (see the fourth column of Table \ref{NumCanTourns2}), which is unfortunately still too many for human inspection. Recall, by contrast, that in our proof assuming homogeneity and cancellativity for Proposition \ref{Impossibility4b}, we only had to consider three 4-candidate margin graphs, along with some of their subgraphs.

\begin{table}
\begin{center}
\begin{tabular}{c|c|c|c}
candidates & canonical weighted  & canonical weighted &  canonical weighted \\
 & tournaments from  & tournaments from & tournaments from   \\
 &initial set & MUS & naive approach\\
 \hline 
2 & 6 & 6 & 6 \\
3 & 160 & 160 & 140 \\
4 & 1,920 & 1,484 & 570\\
2 to 4 & 2,086 & 1,650 & 716
\end{tabular}
\end{center}
\caption{Number of canonical uniquely weighted tournaments (i) from size two to four with edges weights from $\{2,4,6,8,10,12\}$ (second column), (ii) sufficient for an impossibility theorem obtained from an MUS (third column), and (iii) sufficient for an impossibility theorem obtained from the naive approach described in the main text (fourth column).}\label{NumCanTourns2}
\end{table}

\section{Lean verification of SAT encoding}\label{Verification}

To increase confidence in the SAT-based Proposition \ref{OTImpossibility}, we produced a formally verified SAT encoding in the Lean Theorem Prover (Lean 3), building on recent work of Codel \cite{Codel2022}. The Lean verification is available at \href{https://github.com/chasenorman/verified-encodings-social-choice}{https://github.com/chasenorman/verified-encodings-social-choice}. In this appendix, we explain what exactly we verified. A similar approach could be used for Proposition \ref{CanSatThm}.\ref{SatThm02}, but we concentrated our efforts on Proposition \ref{OTImpossibility}.

Let $\mathbb{G}$ be the set of uniquely weighted tournaments $T=(X(T),w)$ such that $X(T)=\{0,\dots,n\}$ for $n\in \{0,\dots,3\}$ and for each $a,b\in X(T)$ with $a\neq b$, we have $|w(a,b)|\in \{2,4,6,8,10,12\}$. Given $T\in \mathbb{G}$ and $a,b\in X(T)$, let $T_{a,b}$ be the weighted tournament with $X(T_{a,b})=\{0,1\}$ and $w(0,1)= |w(a,b)|$.

To prove Proposition  \ref{OTImpossibility}, it suffices to show that for some set $\mathcal{E}$ of weighted tournament embeddings between elements of $\mathbb{G}$, there is no resolute weighted tournament solution satisfying the following condition of  \textit{binary $\gamma$ with respect to $\mathcal{E}$}.

\begin{definition} Let  $F$ be a function on $\mathbb{G}$ such that for each $T\in\mathbb{G}$, $F(T)\in X(T)$. Let $\mathcal{E}$ be a family of maps $e:T\hookrightarrow T'$ where $T,T'\in \mathbb{G}$. Then $F$ satisfies \textit{binary $\gamma$ with respect to $\mathcal{E}$} if for any $\wt,\wt'\in \mathbb{G}$ and map $e\in \mathcal{E}$ from $\wt'$ to $\wt$, if
\begin{enumerate} 
\item\label{AddOne} $|X(\wt)| = |X(\wt')| + 1$,
\item\label{BeforeB}  $a\in e[X(\wt')]$, 
\item\label{Newcomer} $b\in X(\wt)\setminus e[X(\wt')]$, 
\item\label{WinsWithoutB} $F(\wt')=e^{-1}(a)$,  and
\item\label{WinsHeadToHead} $F(T_{a,b})=0$ if $w(a,b)>0$ and $F(T_{a,b})=1$  otherwise, 
\end{enumerate}
 then $F(\wt)=a$.
\end{definition}
Modulo the switching between images and pre-images of the embedding, the items of the definition capture the assumptions of binary $\gamma$, namely that we are considering removing from a situation $T$ one candidate to obtain $T'$, as in (\ref{AddOne}), where $a$ is one of the candidates not removed, as in (\ref{BeforeB}), whereas $b$ is the candidate removed, as in (\ref{Newcomer}), such that $a$ wins without $b$, as in (\ref{WinsWithoutB}), and $a$ wins against $b$ head-to-head, as in (\ref{WinsHeadToHead}). Note that to capture the condition that $a$ wins head-to-head against $b$, we check in (\ref{WinsHeadToHead}) whether the counterpart of $a$ in $T_{a,b}$ wins.

The repository linked above contains Lean code that for a given $\mathbb{D}\subseteq\mathbb{G}$ and set $\mathcal{E}$ of maps generates a propositional formula $\mathsf{func}_\mathbb{D}\wedge \mathsf{bg}_{\mathcal{E}}$ in CNF. We then give a formal proof in Lean of the following result, showing that  $\mathsf{func}_\mathbb{D}\wedge \mathsf{bg}_{\mathcal{E}}$ correctly encodes the existence of a function  on $\mathbb{D}$ satisfying binary $\gamma$ with respect to $\mathcal{E}$.

\begin{lemma}\label{KeyLem} Let $F$ be a function on $\mathbb{G}$ such that for each $T\in \mathbb{G}$, $F(T)\in X(T)$. Let $\mathbb{D}\subseteq \mathbb{G}$, and let $\mathcal{E}$ be a family of maps $e:T\hookrightarrow T'$ where $T,T'\in \mathbb{G}$. If $F$ satisfies  binary $\gamma$ with respect to $\mathcal{E}$, then the valuation $v$ defined on \[\{A_{T, c}\mid T\in\mathbb{D}, c\in X(T)\}\] by 
\[\mbox{$v(A_{T, c})=1$ if and only if $F(T)=c$}\] 
satisfies $\mathsf{func}_\mathbb{D}\wedge \mathsf{bg}_{\mathcal{E}}$.\end{lemma}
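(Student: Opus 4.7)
The proof is essentially an unpacking of the two conjuncts of the encoding and a verification that each is satisfied by the valuation $v$ induced by $F$. I would proceed clause-by-clause, in the same order the formula is built up in Section \ref{SAT2}. First I would verify that $v$ satisfies $\mathsf{func}_{\mathbb{D}}$, whose conjuncts come in two flavors: an ``at-least-one'' disjunction $\bigvee_{c \in X(T)} A_{T,c}$ for each $T \in \mathbb{D}$, and an ``at-most-one'' clause $\neg A_{T,c} \vee \neg A_{T,c'}$ for each $T \in \mathbb{D}$ and distinct $c,c' \in X(T)$. The first holds because $F$ is total with $F(T) \in X(T)$, so $v(A_{T,F(T)}) = 1$; the second holds because $F(T)$ is a single element, so $v(A_{T,c}) = v(A_{T,c'}) = 1$ cannot occur for $c \neq c'$.

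Next I would verify $\mathsf{bg}_{\mathcal{E}}$. Each conjunct of $\mathsf{bg}_{\mathcal{E}}$ is indexed by a tuple $(T,T',e,a,b)$ where $e : T' \hookrightarrow T$ lies in $\mathcal{E}$, $|X(T)| = |X(T')|+1$, $a \in e[X(T')]$, and $b$ is the unique element of $X(T) \setminus e[X(T')]$; the clause has the form
\[
\neg A_{T', e^{-1}(a)} \;\vee\; \neg A_{T_{a,b}, c_{a,b}} \;\vee\; A_{T,a},
\]
where $c_{a,b} = 0$ if $w(a,b) > 0$ and $c_{a,b} = 1$ otherwise. Suppose for contradiction that $v$ falsifies this clause. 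Then $v(A_{T', e^{-1}(a)}) = 1$, $v(A_{T_{a,b}, c_{a,b}}) = 1$, and $v(A_{T,a}) = 0$. By the definition of $v$, the first two equalities translate to $F(T') = e^{-1}(a)$ and $F(T_{a,b}) = c_{a,b}$, which are exactly conditions (4) and (5) in the definition of binary $\gamma$ with respect to $\mathcal{E}$; the other conditions (1)--(3) hold by the indexing of the clause and the fact that $e \in \mathcal{E}$. Hence binary $\gamma$ with respect to $\mathcal{E}$ forces $F(T) = a$, which gives $v(A_{T,a}) = 1$, a contradiction.

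\textbf{Main obstacle.} Mathematically this is almost tautological: the encoding was designed precisely so that its clauses mirror the quantifier structure of the axiom. The only real work is bookkeeping — making sure that the pre-image/image direction of $e$ is handled consistently (the clause mentions $e^{-1}(a)$ in the $T'$ slot and $a$ itself in the $T$ slot) and that the two orientations in condition (5), corresponding to $w(a,b) > 0$ versus $w(a,b) < 0$, are matched with the correct literal $A_{T_{a,b},0}$ or $A_{T_{a,b},1}$. These are exactly the places where an informal argument is easy to get wrong but where a Lean proof enforces correctness, which is precisely the point of carrying out the verification in the first place. I would therefore structure the Lean proof so that the definition of the clause and the case analysis on $\mathrm{sign}(w(a,b))$ are performed simultaneously, so that the two cases of condition (5) and the two choices of $c_{a,b}$ are discharged by a single pattern match.
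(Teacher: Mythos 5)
Your proof is correct and follows essentially the same approach as the paper's: the paper offers no prose argument for this lemma, deferring entirely to the Lean formalization, and that formal proof is precisely the clause-by-clause verification you describe — totality and single-valuedness of $F$ discharge the at-least-one and at-most-one conjuncts of $\mathsf{func}_\mathbb{D}$, and each $\mathsf{bg}_{\mathcal{E}}$ clause is discharged by contraposition against the binary $\gamma$ axiom, with conditions (1)--(3) supplied by the clause's indexing and (4)--(5) by the two falsified literals. Your attention to the $e^{-1}(a)$ versus $a$ bookkeeping and the case split on the sign of $w(a,b)$ matches the points the Lean verification is designed to enforce.
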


We then defined in Lean a set $\mathbb{D}$   and a set $\mathcal{E}$ of maps for which we formally verified the following:
\begin{itemize}
\item[(i)] $\mathbb{D}$ is indeed a subset of $\mathbb{G}$;
\item[(ii)] each map in $\mathcal{E}$ is indeed a weighted tournament embedding between elements of $\mathbb{G}$.
\end{itemize}
In fact, $\mathbb{D}$ contains the 716 canonical weighted tournaments obtained as described in Section~\ref{HumanReadable2}, and $\mathcal{E}$ was obtained using Python code to find a minimal set of embeddings sufficient for the impossibility theorem. However, the provenance of $\mathbb{D}$ and $\mathcal{E}$ are irrelevant, thanks to the Lean verification of (i) and (ii). There is no need to trust our Python code.

The final step was to test the satisfiability of the formula $\mathsf{func}_\mathbb{D}\wedge \mathsf{bg}_{\mathcal{E}}$ produced by Lean using a SAT solver. As Lean does not include a SAT solver (as of this writing), we had to leave the Lean environment for this last step. Standard SAT solvers returned unsatisfiability for the formula $\mathsf{func}_\mathbb{D}\wedge \mathsf{bg}_{\mathcal{E}}$. Thus,  in order to trust our SAT-based proof of Proposition \ref{OTImpossibility}, one need only trust that it is not the case that multiple standard SAT solvers contain bugs producing an incorrect unsatisfiability result for our formula.

\bibliographystyle{amsplain}
\bibliography{extending}

\end{document}